\newcommand{\Real}{\ensuremath{\mathbb{R}}}
\newcommand{\Plane}{\ensuremath{\mathbb{R}^2}}
\newcommand{\FVD}{\ensuremath{\mathsf{FVD}}}
\newcommand{\FCVD}{\ensuremath{\mathsf{FCVD}}}
\newcommand{\CVD}{\ensuremath{\mathsf{CVD}}}
\newcommand{\mCVD}{\ensuremath{\overline{\mathsf{CVD}}}}
\newcommand{\HVD}{\ensuremath{\mathsf{HVD}}}
\newcommand{\VD}{\ensuremath{\mathsf{VD}}}
\newcommand{\VR}{\ensuremath{R}}
\newcommand{\mVR}{\ensuremath{\overline{R}}}
\newcommand{\conf}{\mathcal{F}}
\DeclareMathOperator*{\E}{\mathbf{E}}
\newcommand{\arr}{\mathcal{A}}
\newcommand{\dual}[1]{\ensuremath{#1^\star}}
\newcommand{\lift}[1]{\ensuremath{#1^\cup}}
\newcommand{\changed}[1]{{#1}}
\newcommand{\added}[1]{{#1}}
\let\geq\geqslant
\let\leq\leqslant
\newcommand{\Sfplus}{\ensuremath{{S}^{+}_{f}}}
\newcommand{\Sfplusm}{\ensuremath{{S}^{+}_{f'}}}
\newcommand{\Jfplus}{\ensuremath{{J}^{+}_{f}}}
\title{Higher-Order Color Voronoi Diagrams and the Colorful Clarkson--Shor Framework} 
\titlerunning{Higher-Order Color Voronoi Diagrams and the Colorful Clarkson--Shor Framework} 
\author{{Sang Won} Bae}
{Division of AI Computer Science and Engineering, Kyonggi University, Suwon, Republic of Korea}{swbae@kgu.ac.kr}
{https://orcid.org/0000-0002-8802-4247}
{Supported by the National Research Foundation of Korea(NRF) grant funded by the Korea government(MSIT) (No. RS-2023-00251168),
and in part by 
the University of Bayreuth Centre of International Excellence ``Alexander von Humboldt''
(Senior Fellowship 2023).}
\author{Nicolau Oliver}
{Faculty of Informatics, Universit\`{a} della Svizzera italiana, Lugano, Switzerland}
{nicolau.oliver.burwitz@usi.ch}
{https://orcid.org/0009-0004-8901-451X}
{}
\author{Evanthia Papadopoulou}
{Faculty of Informatics, Universit\`{a} della Svizzera italiana, Lugano, Switzerland}
{evanthia.papadopoulou@usi.ch}
{https://orcid.org/0000-0003-0144-7384}
{Supported in part by the Swiss National Science Foundation (SNF), 
project 200021E-201356.}
\authorrunning{S.W. Bae, N. Oliver, and E. Papadopoulou} 
\keywords{%
higher-order Voronoi diagrams,
color Voronoi diagrams,
Hausdorff Voronoi diagrams,
colored $j$-facets,
arrangements,
Clarkson--Shor technique} 
\begin{document}

\maketitle

\begin{abstract}
Given a set~$S$ of $n$~colored sites, each $s\in S$ associated with
a distance-to-site function~$\delta_s \colon \Real^2 \to \Real$,
we consider two distance-to-color functions for each color:
one takes the minimum of $\delta_s$ for sites~$s\in S$ in that color
and the other takes the maximum.
These two sets of distance functions induce two families of 
higher-order Voronoi diagrams for colors in the plane,
namely, the minimal and maximal order-$k$ color Voronoi diagrams,
which include various well-studied Voronoi diagrams as special cases.
In this paper,
we derive an exact upper bound $4k(n-k)-2n$
on the total number of vertices in
both the minimal and maximal order-$k$ color diagrams
for a wide class of distance functions~$\delta_s$ that satisfy certain conditions,
including the case \changed{of} 
point sites~$S$ under
\changed{convex distance functions and 
the $L_p$~metric for any~$1\leq p \leq\infty$.}
\changed{For the $L_1$ (or, $L_\infty$) metric,  and other convex
  polygonal metrics,}
we show that the order-$k$ minimal diagram \changed{of point sites} has $O(\min\{k(n-k), (n-k)^2\})$ complexity,
while its maximal \changed{counterpart} 
has $O(\min\{k(n-k), k^2\})$ complexity.
To obtain these combinatorial results,
we extend the Clarkson--Shor framework to colored objects,
and demonstrate its \changed{application} 
to several fundamental geometric structures,
including higher-order color Voronoi diagrams,
colored $j$-facets, and
levels in the arrangements of piecewise linear/algebraic curves/surfaces.
We also present an iterative approach to compute higher-order color Voronoi diagrams.
\end{abstract}

\section{Introduction} \label{sec:intro}

Let $S$ be a set of $n$~\emph{sites},
each of which is assigned a color from a set $K=\{1,\ldots, m\}$ of
$m$~colors. 
Let $S_i\subseteq S$ be the set of sites in color~$i\in K$.
We consider two \emph{distance-to-color} functions from any point $x\in\Plane$ to each color~$i \in K$:
\[ d_i(x) := \min_{s\in S_i} \delta_s(x) \quad \text{ and } \quad
  \bar{d}_i(x) := \max_{s\in S_i} \delta_s(x),\]
where $\delta_s(x)$ denotes the prescribed distance-to-site function from $x\in \Plane$ to
site~$s\in S$.

Based on the two sets of point-to-color distances,
we can define two different Voronoi diagrams of $m$~colors in~$K$.
For each~$1\leq k \leq m-1$ and subset~$H \subseteq K$ with~$|H|=k$,
define
\begin{align*}
 \VR_k(H; S) & := \{ x\in \Plane \mid d_i(x) < d_j(x)
    \text{ for all $i \in H$ and $j \in K\setminus H$} \} \\
 \mVR_k(H; S) & := \{ x\in \Plane \mid \bar{d}_i(x) > \bar{d}_j(x)
    \text{ for all $i \in H$ and $j \in K\setminus H$} \},
\end{align*}
called the \emph{minimal} and \emph{maximal color Voronoi regions} of~$H$
with respect to~$S$.
We then define
the \emph{order-$k$ minimal color Voronoi diagram} of~$S$, $\CVD_k(S)$,
to be the partition of~$\Plane$
into the minimal regions~$\VR_k(H; S)$ for all~$H \subset K$ with $|H| = k$;
the \emph{order-$k$ maximal color Voronoi diagram} of~$S$, $\mCVD_k(S)$,
to be the partition of~$\Plane$
into the maximal regions~$\mVR_k(H; S)$.
In other words, $\CVD_k(S)$ partitions~$\Plane$ 
by $k$~\emph{nearest colors} under the minimal distances~$\{d_i\}_{i\in K}$,
while $\mCVD_k(S)$ partitions~$\Plane$
by $k$~\emph{farthest colors} under the maximal distances~$\{\bar{d}_i\}_{i\in K}$.

\begin{figure}[t]
\begin{center}
\includegraphics[width=\textwidth]{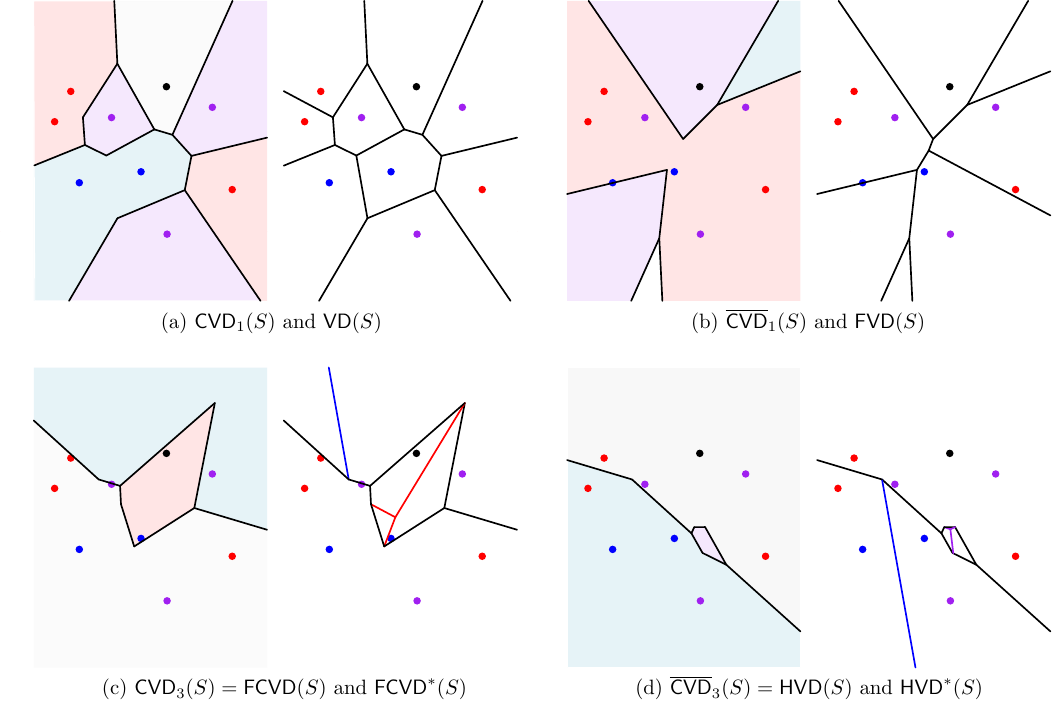}
\end{center}
\caption{
 Special cases of $\CVD_k(S)$ and $\mCVD_k(S)$
 for 
 colored points~$S$ 
 in the Euclidean plane.
}
\label{fig:cvd_special}
\end{figure}

\changed{These two families of color Voronoi diagrams generalize various
conventional counterparts.}
\begin{itemize} 
\item
 For $k=1$, 
 $\CVD_1(S)$ and $\mCVD_1(S)$ correspond to the ordinary nearest-site
 and farthest-site Voronoi diagrams of~$S$, $\VD(S)$ and $\FVD(S)$, respectively,
 where adjacent faces of a common color are merged.
 See \figurename~\ref{fig:cvd_special}(a) and~(b).
 \item If $m=n$, that is, each site in~$S$ has a distinct color,
 then $\CVD_k(S) = \VD_k(S)$, the ordinary order-$k$ Voronoi diagram 
 without colors~\cite{l-knnvdp-82,lpl-knnvdr-15,bcklpz-choavd-15,pz-hovdls-16}.
 In this case, it holds that $\mCVD_k(S)=\VD_{n-k}(S)$,
 also known as the \emph{order-$k$ farthest-site Voronoi diagram} or
 the \emph{order-$k$ maximal Voronoi diagram}~\cite{a-pd:paa-87}.
 \item 
 The \emph{farthest color Voronoi diagram}~$\FCVD(S)$
 \cite{hks-uevsia-93,ahiklmps-fcvdrp-06,mpss-fcvd:ca-20,b-tbiafcvdls-14} 
 partitions the plane~$\Plane$
 into regions of colors~$i\in K$ that consist of all points~$p\in \Plane$
 whose \emph{farthest} color is~$i$ with respect to the minimal distances~$\{d_i\}_{i\in K}$.
 The \emph{Hausdorff Voronoi diagram}~$\HVD(S)$,
 also known as the \emph{min-max} or 
 \emph{cluster} Voronoi diagram~\cite{egs-ueplf:aa-89,p-cacmmdVLSIc-01,p-hvdpcp-04},
 similarly partitions~$\Plane$ 
 into regions of \emph{nearest} colors with respect to the maximal distances~$\{\bar{d}_i\}_{i \in K}$.
 We thus have $\FCVD(S) = \CVD_{m-1}(S)$ and $\HVD(S) =
 \mCVD_{m-1}(S)$.
 These two diagrams are often refined so that the region of each color~$i\in K$ 
 is subdivided into subregions of
 sites in~$S_i$ that
 determine the distance-to-color~$d_i$ or~$\bar{d}_i$.
 We denote these refined versions by~$\FCVD^*(S)$ and~$\HVD^*(S)$, respectively.  
 See \figurename~\ref{fig:cvd_special}(c) and~(d).
\end{itemize}

In this paper,
we initiate the study of higher-order color Voronoi diagrams, $\CVD_k(S)$ and $\mCVD_k(S)$,
with general distance functions~$\delta_s$ for~$s\in S$.
Our main results are as follows:
\begin{enumerate}[(1)] 
 \item
 We prove an exact upper bound~$4k(n-k)-2n$ on
 the total number of vertices of both order-$k$ diagrams $\CVD_k(S)$ and $\mCVD_k(S)$
 for each $1\leq k \leq m-1$,
 when the sites $s\in S$ are points 
 and $\delta_s(x)$ is given as
 the \emph{convex distance} from~$x\in \Plane$ to~$s$ 
 based on any convex and compact subset of~$\Plane$,
 which includes the $L_p$~distance for any $1 \leq p \leq \infty$.
 This result is in fact a corollary of our more general result:
 we derive an exact equation 
 under certain conditions on functions~$\delta_s$,
 which only requires relations on the numbers
 of vertices and unbounded edges in~$\VD(S')$ and~$\FVD(S')$ for~$S'\subseteq S$.
 The \changed{bound~$4k(n-k)-2n$} 
 can be realized, for example, when $m=n$,
 so \changed{it} is tight and best possible.
 \item
 Under the $L_1$ or the $L_\infty$ metric,
 we prove that $\CVD_k(S)$ and $\mCVD_k(S)$ consist of at most
 $\min\{4k(n-k)-2n, 4(n-k)^2\}$ and $\min\{4k(n-k)-2n, 2k^2\}$ vertices, respectively.
 Similar bounds are derived for
 \changed{any convex distance function based on a convex polygon.}
\item
  We present an iterative algorithm that computes  
  color Voronoi diagrams of order~$1$ 
  to~$k$ in $O(k^2n + n \log n)$ expected or 
  $O(k^2n\log n)$ worst-case time,
  provided that $S$ and $\{\delta_s\}_{s\in S}$ satisfy 
  the requirements of \changed{abstract Voronoi
    diagrams~\cite{k-cavd-89} and}
  an additional condition (see Section~\ref{sec:alg}),
  which includes colored points $S$
  under any smooth convex distance function.
For points~$S$ in the Euclidean plane,
it can be reduced to $O(k^2n + n \log n)$ worst-case time.
\end{enumerate}

Our combinatorial results generalize
previously known bounds for the ordinary higher-order Voronoi diagrams~$\VD_k(S)$
of uncolored sites~$S$,
which is a special case of $m=n$ in our setting.
The asymptotically tight bound~$O(k(n-k))$ on the complexity of~$\VD_k(S)$
has been proved not only
for points~\cite{l-knnvdp-82,lpl-knnvdr-15} under the $L_p$~metric
but also for line segments~\cite{pz-hovdls-16}
and even in the abstract setting~\cite{bcklpz-choavd-15}.
Under the $L_1$ (or $L_\infty$) metric,
the complexity of $\VD_k(S)$ is known to be $O(\min\{k(n-k),(n-k)^2\})$
for a set~$S$ of points or line segments~\cite{lpl-knnvdr-15,pz-hovdls-16}.

In particular, 
the same exact number $4k(n-k)-2n$ can be derived 
for the ordinary order-$k$ diagram~$\VD_k(S)$ and order-$(n-k)$ diagram~$\VD_{n-k}(S)$
under the Euclidean metric from previous results~\cite{e-acg-87,l-knnvdp-82,cs-arscgII-89}.
In his book~\cite[Chapter 13]{e-acg-87},
Edelsbrunner showed that the number of vertices of~$\VD_k(S)$
is \emph{exactly} $(4k-2)n - 2k^2 - e_k - 2 \sum_{i=1}^{k-1} e_{i}$
if $S$ is in general position,
where $e_k$ denotes the number of
\changed{unbounded edges in~$\VD_k(S)$ (or, equivalently, \emph{$k$-sets} in~$S$).}
One can easily verify that the total number of vertices in~$\VD_k(S)$ and~$\VD_{n-k}(S)$
is exactly $4k(n-k) - 2n$
by using the identities: $e_k = e_{n-k}$ and $\sum_{k=1}^{n-1} e_k = n(n-1)$.
This can also be verified from the inductive approach of Lee~\cite{l-knnvdp-82}.
As a recent result related to ours, 
Biswas et al.~\cite{bmes-ccokvtr3mt-21} derived exact relations
for the complexity of 3D Euclidean higher-order Voronoi diagrams with Morse theory,
extending the inductive argument by Lee.

Another remarkable special case of our results is when $k = m-1$,
which yields the $O(m(n-m+1))$ bound 
for the farthest color Voronoi diagram~$\FCVD(S) = \CVD_{m-1}(S)$ 
and the Hausdorff Voronoi diagram~$\HVD(S)=\mCVD_{m-1}(S)$.
The worst-case complexity of~$\FCVD(S)$ and~$\HVD(S)$ is known to be
$\Theta(mn)$ or $\Theta(n^2)$, if $S$ is
a set of points or line segments under the Euclidean 
or $L_1$ metric~\cite{egs-ueplf:aa-89,hks-uevsia-93,ahiklmps-fcvdrp-06,p-hvdpcp-04,b-tbiafcvdls-14}.
While the upper bounds $O(mn)$ and $O(n^2)$ are shown to be tight 
by matching lower bound constructions~\cite{egs-ueplf:aa-89,hks-uevsia-93,ahiklmps-fcvdrp-06},
they become significantly
loose when $n$ is close to~$m$;
if $n=m$, we have
$\FCVD(S) = \FVD(S)$ and $\HVD(S)=\VD(S)$, where both have
linear complexity.
Hence, our new results not only prove \changed{tight}
upper bounds simultaneously
on both diagrams, 
but also formally reveal a smooth extension
upon the previous knowledge about the ordinary Voronoi diagrams for
any~\changed{$m\leq n$.}
Prior to our results, it was known that
the complexity of~$\FCVD(S)$ and $\HVD(S)$ can range from $\Theta(n)$ to $\Theta(m(n-m+1))$
\changed{expressed} in terms of geometric parameters, 
called \emph{straddles} for~$\FCVD(S)$~\cite{MPSW24} 
and \emph{crossings} for~$\HVD(S)$~\cite{p-hvdpcp-04}.
Conditions under which the diagrams have linear complexity
have also been
\changed{discussed~\cite{p-hvdpcp-04,MPSW24,b-lsfcvd-12}.}

Our combinatorial results are based on a color-augmented extension of
the Clarkson--Shor framework~\cite{cs-arscgII-89},
so-called the \emph{colorful Clarkson--Shor framework},
which \changed{has}
its own interest with various applications.
Our notions of colored objects and configurations are naturally inherited
from any
\changed{set} system that fits in the original Clarkson--Shor framework,
and yield a systematic scheme to deal with \changed{objects that are
  collections  of primitive elements.}
Our new framework provides a unifying approach to derive the
complexity of higher-order color Voronoi diagrams
under general distances~$\delta_s$, 
\changed{including the well-studied diagrams $\FCVD(S)$, $\HVD(S)$, and the ordinary higher-order
diagram~$\VD_k(S)$.}
\changed{
  In deriving our results, we make use of a close relation
  between the color diagrams~$\CVD_k(S)$ and $\mCVD_k(S)$
  and \emph{colored} $k$-facets of $S$.}
  An analogous relation for the Euclidean ordinary case (without
  colors) has been shown 
 by Clarkson and Shor~\cite{cs-arscgII-89} and 
 Edelsbrunner~\cite{e-acg-87}. 
\changed{We also} derive lower and upper bounds on the number of colored $k$-facets
in~$\Real^2$.

We demonstrate more
applications of the colorful Clarkson--Shor framework
\changed{that result
in several new bounds on levels of arrangements of 
piecewise linear or algebraic curves and surfaces.}
More specifically,
let $\Gamma$ be a collection of $m$~piecewise algebraic surfaces in~$\Real^d$
and $n$ be their total complexity, counting all algebraic pieces including their boundary elements.
Let $\arr=\arr(\Gamma)$ be their arrangement.
\begin{enumerate}[(1)]
 \setcounter{enumi}{3}
 \item
  If each $\gamma \in \Gamma$ is a convex and monotone polyhedral surface,
  the complexity of the $(\leq k)$-level in~$\arr$
  is shown to be $O(m^{\lfloor d/2\rfloor - 1} k^{\lceil d/2 \rceil} n^{\lfloor d/2 \rfloor})$
  in general, or $k^{\lceil d/2 \rceil} n^{\lfloor d/2 \rfloor}$
  if the number of linear pieces of any two in $\Gamma$ differ at most a constant.
  This extends one of the first results by Clarkson and Show~\cite{cs-arscgII-89} 
  for the arrangement of $n$~hyperplanes,
  and so is tight for~$d \geq 3$.
  Also, note that colored $j$-facets correspond to vertices of~$\arr$
  by the standard point-to-hyperplane duality~\cite{e-acg-87},
  so the same asymptotic upper bounds apply to
  the number of colored $(\leq k)$-facets in a set of $n$ colored points in~$\Real^d$.
 \item If each $\gamma\in \Gamma$ is piecewise linear and monotone,
  then the complexity of the $(\leq k)$-level in~$\arr$ is
  $O(km^{d-2}n^{d-1}\alpha(n/k))$ in general and $O(kn^{d-1}\alpha(n/k))$
  if the number of linear pieces of any two in $\Gamma$ differ at most a constant.
  These bounds are based on the known bound $O(n^{d-1}\alpha(n))$ 
  on the lower envelope of piecewise linear functions~\cite{sa-dsstga-95,e-ueplf:tbnf-89,t-ntasapls-96}.
  In particular, for $d=2$, the bound is reduced to $O(kn\alpha(m/k))$
  by Har-Peled~\cite{h-mcl-99}.
 \item
  Analogously, we obtain upper bounds
  $O(m^{d-2}k^{1-\epsilon}n^{d-1+\epsilon})$ and $O(k^{1-\epsilon}n^{d-1+\epsilon})$
  on the complexity of the $(\leq k)$-level in~$\arr$
  based on the $O(n^{d-1+\epsilon})$ bound on the lower envelope
  of algebraic surface patches 
  under reasonable conditions~\cite{sa-dsstga-95,hs-nblwtd-94,s-atublehd-94}.
  In particular, for~$d=2$, more refined upper bounds 
  of the form $O(kn\beta(n/k))$ or $O(kn\beta(m/k))$
  are obtained,
  where $\beta(\cdot)$ denotes an extremely slowly growing function,
  based on the previous results on the arrangement of Jordan curves~\cite{sa-dsstga-95,h-mcl-99}.
 \item
  Given $m$ convex polygons with a total of $n$ sides in~$\Plane$,
  the \emph{depth} of each point in~$\Plane$ is 
  the number of polygons whose interior contains it.
  Based on results by Aronov and Sharir~\cite{as-cecpp-97},
  we show that the number of vertices of~$\arr$ whose depth is at most~$k$
  is $O((k+1)n\alpha(m/(k+1)) + m^2)$ in general and $O((k+1)n\alpha(m/(k+1)))$
  if the common exterior of any subset of the $m$ polygons is connected.
  Similarly, given $m$ convex polyhedra with a total of $n$~faces in~$\Real^3$,
  we obtain bounds $O((k+1)mn\alpha(m/(k+1)) + m^3)$ and
  $O((k+1)^{1-\epsilon}m^{1+\epsilon}n)$, respectively,
  based on the results on the common exterior of convex polyhedra in~$\Real^3$
  by Aronov et al.~\cite{ast-ucptd-97} and Ezra and Sharir~\cite{es-scacp-07}.
\end{enumerate}

\changed{
Our algorithm follows the principle of iteratively computing all
order-$i$ Voronoi diagrams for $1\leq i\leq k$,
and compares to the $O(k^2n\log n)$-time counterpart of  Lee~\cite{l-knnvdp-82}
for computing $\VD_1(S),\ldots, \VD_k(S)$ for points in the Euclidean metric.}
After a series of improvements~\cite{agss-ltacvdcp-89,as-soriachovd-92,m-lavd-91,adms-clahovd-98,c-rshrrcltd-00,r-rrrsklc-99,ct-oda2d3dsc-16},
the first optimal $O(n\log n + kn)$-time algorithm that computes~$\VD_k(S)$
for points~$S$ in the Euclidean plane
was eventually presented in 2024 by Chan et al.~\cite{ccz-oahovdp:usn-24}.
Efficient algorithms of different approaches are also known
for computing $\VD_k(S)$ of line segments~$S$~\cite{pz-hovdls-16}
or under the model of abstract Voronoi diagrams~\cite{bkl-erahoavd-19,blpz-rdcahoavd-16};
and for computing~$\FCVD^*(S)$~\cite{hks-uevsia-93,ahiklmps-fcvdrp-06,b-tbiafcvdls-14,MPSW24}
\changed{and $\HVD^*(S)$~\cite{egs-ueplf:aa-89,p-hvdpcp-04,AP19}.}

This paper is organized as follows.
In Section~\ref{sec:cvda}, we introduce some preliminary concepts and
make basic observations for~$\CVD_k(S)$ and $\mCVD_k(S)$ in terms of levels of an arrangement of surfaces in~$\Real^3$.
We introduce the colorful Clarkson--Shor framework in Section~\ref{sec:framework}
and prove the complexity bound on the Euclidean higher-order color Voronoi diagrams.
The complexity of higher-order color Voronoi diagrams under general distance functions 
is discussed in Section~\ref{sec:kcvd_general},
and Section~\ref{sec:alg} is devoted to 
our algorithm
to compute higher-order color Voronoi diagrams iteratively.
More applications of the colorful Clarkson--Shor framework
are given in Section~\ref{sec:appl_CS}.
We conclude the paper with some remarks in Section~\ref{sec:conclusion}.

\section{Color Voronoi diagrams and arrangements} \label{sec:cvda}

\changed{Let} $S$ be a set of $n$~\emph{sites}, which can be any abstract objects,
and $\delta_s \colon \Plane \to \Real$ for~$s\in S$
be a given continuous function.
\changed{We assume that the functions $\delta_s$ are}
in \emph{general position},
similarly \changed{to~\cite{kmrss-dpvdgdftaa-20}.}
More precisely,
let $\gamma_s$ be the graph of $\delta_s$, that is,
the $xy$-monotone surface $\{(p, \delta_s(p)) \mid p\in \Plane\}$ in~$\Real^3$,
and $\Gamma := \{\gamma_s \mid s\in S\}$.
\changed{By the general position of} functions~$\delta_s$, we mean \changed{the following:}
no more than three surfaces in~$\Gamma$ meet at a common point,
no more than two surfaces in~$\Gamma$ meet at a one-dimensional curve,
no two surfaces in~$\Gamma$ are tangent to each other, and
none of the surfaces in~$\Gamma$ is tangent to the intersection curve of two others in~$\Gamma$.

We denote the minimization diagram of $\Gamma$ by $\VD(S)$,
the \emph{nearest-site Voronoi diagram} of $S$, and
the maximization diagram of $\Gamma$ by $\FVD(S)$,
the \emph{farthest-site Voronoi diagram} of $S$.
It is well known that the ordinary (uncolored) higher-order Voronoi diagrams~$\VD_k(S)$ 
of~$S$ are determined by
levels of the arrangement~$\arr(\Gamma)$ of~$\Gamma$
\changed{as established in}
earlier work~\cite{es-vda-86,a-pd:paa-87}.
In the following, we discuss an analogous relation between order-$k$ color Voronoi
diagrams and levels of certain surfaces in~$\Real^3$.

Let us assume that the sites in~$S$ are colored with $m$~colors
in~$K = \{1,\ldots, m\}$.
Let $\kappa \colon S \to K$ denote \changed{a}
color assignment such that
the color of~$s\in S$ is $\kappa(s) \in K$.
Let $S_i := \{s\in S \mid \kappa(s) = i\}$
and $\Gamma_i := \{\gamma_s \mid s \in S_i\}$  for $i\in K$.
Define $E_i$ and $\overline{E}_i$ 
to be the lower and upper envelopes of surfaces in~$\Gamma_i$, respectively,
and consider the two arrangements $\arr_\Gamma = \arr(\{E_1,\ldots, E_m\})$ and
$\overline{\arr}_\Gamma = \arr(\{\overline{E}_1,\ldots, \overline{E}_m\})$.
Note that $E_i$ is the graph of the minimal distance function~$d_i$ of color~$i\in K$,
and $\overline{E}_i$ is the graph of the maximal distance~$\bar{d}_i$.
We then consider the levels of the arrangements~$\arr_\Gamma$ and $\overline{\arr}_\Gamma$.
For $1 \leq k \leq m$,
let $L_k$ be the $k$-level of $\arr_\Gamma$ \emph{from below}
and $\overline{L}_k$ be the $k$-level of $\overline{\arr}_\Gamma$ \emph{from above}.
So, $L_1$ is the lower envelope of
$E_1, \ldots, E_m$,
and $\overline{L}_1$ is the upper envelope of
$\overline{E}_1, \ldots, \overline{E}_m$.
Thus, projecting $L_1$ and $\overline{L}_1$ down onto~$\Plane$ yields
$\VD(S)$ and $\FVD(S)$, respectively.
On the other hand, $L_m$ is the upper envelope of the lower envelopes $E_1, \ldots, E_m$, and
$\overline{L}_m$  is the lower envelope of the upper envelopes
$\overline{E}_1, \ldots, \overline{E}_m$.
Projecting $L_m$ and $\overline{L}_m$ down onto~$\Plane$ yields
the refined \changed{diagrams
$\FCVD^*(S)$ and $\HVD^*(S)$,}
\changed{respectively~\cite{hks-uevsia-93,egs-ueplf:aa-89}.}

\begin{figure}[p]
\begin{center}
\includegraphics[width=.95\textwidth]{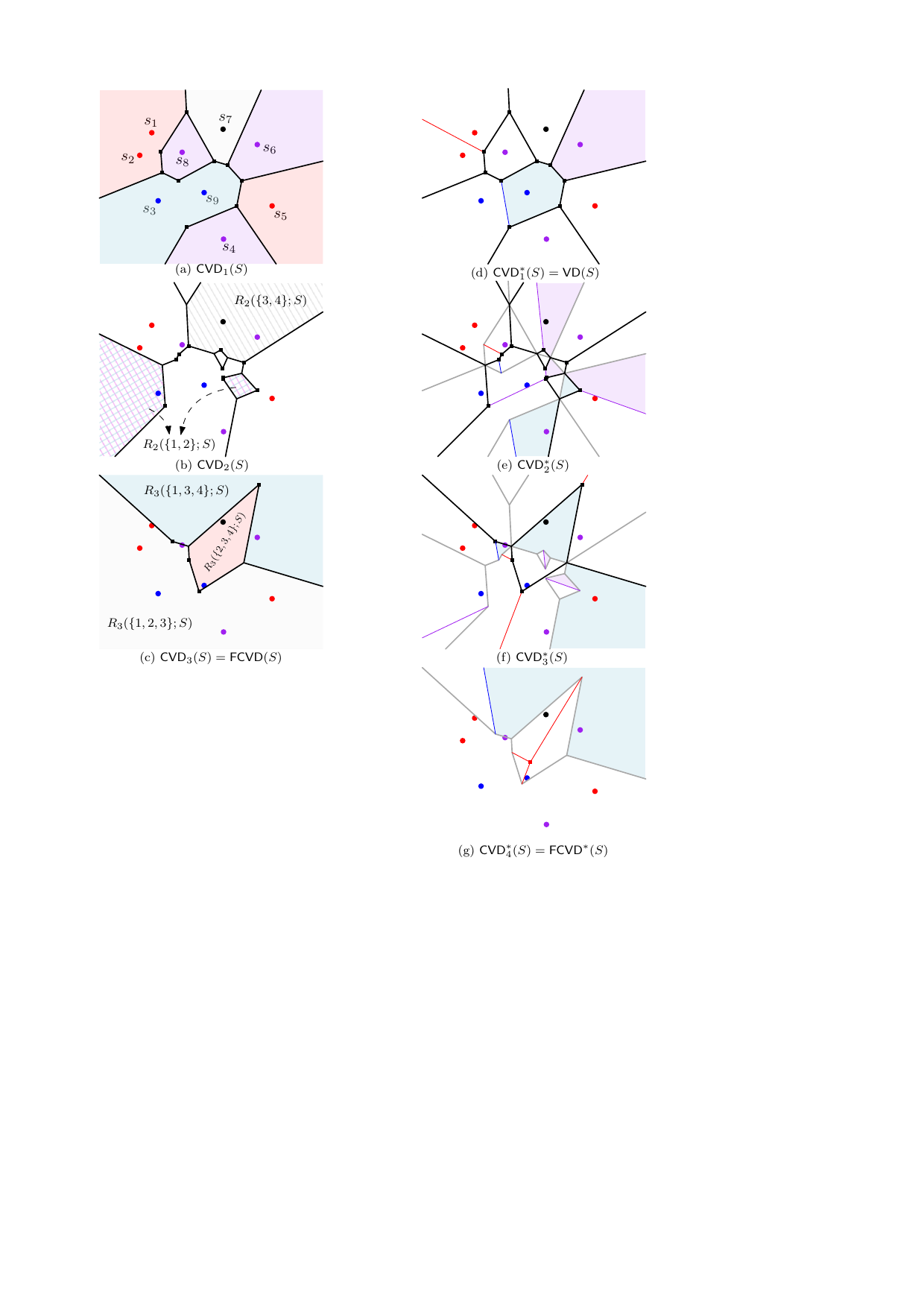}
\end{center}
\caption{The minimal color Voronoi diagrams~$\CVD_k(S)$ and 
 the refined diagrams~$\CVD^*_k(S)$.
 }
\label{fig:cvd}
\end{figure}

\begin{figure}[p]
\begin{center}
\includegraphics[width=.95\textwidth]{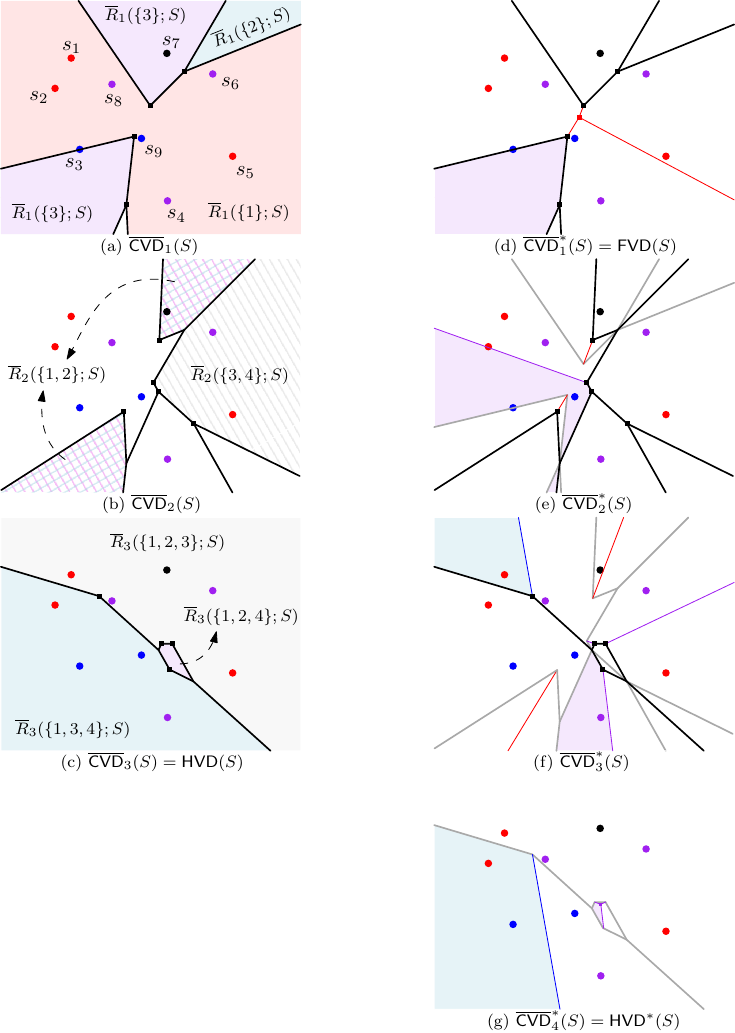}
\end{center}
\caption{The maximal color diagrams~$\mCVD_k(S)$ and 
  the refined diagrams~$\mCVD^*_k(S)$.
 }
\label{fig:mcvd}
\end{figure}

From the viewpoint of~\cite{es-vda-86,a-pd:paa-87},
we observe the following.

\begin{itemize}
\item The order-$k$ color Voronoi diagrams, $\CVD_k(S)$ and $\mCVD_k(S)$, for~$1\leq k\leq m-1$
 are the projections of $L_k\cap L_{k+1}$ and of $\overline{L}_k\cap \overline{L}_{k+1}$,
 respectively, onto~$\Plane$.
\item For each $1\leq k \leq m$,
 let $\CVD^*_k(S)$ denote the planar map obtained
 by projecting $L_k$ down onto~$\Plane$;
 analogously, let $\mCVD^*_k(S)$ be the \changed{map}
 obtained by projecting $\overline{L}_k$ onto~$\Plane$.
 By definition, 
 $\CVD^*_k(S)$ and $\mCVD^*_k(S)$ refine $\CVD_k(S)$ and $\mCVD_k(S)$,
 respectively.
 Each face~$f$ of~$\CVD^*_k(S)$ (or of~$\mCVD^*_k(S)$)
 is \emph{associated with} a site~$s\in S_i$ such that, for any point~$p\in f$, 
 $d_i(p) = \delta_s(p)$ 
 and $i\in K$ is the \emph{$k$-th nearest color} from~$p$ with respect to~$\{d_i\}_{i\in K}$
 (resp. $\bar{d}_i(p) = \delta_s(p)$ and $i\in K$ is the \emph{$k$-th farthest color} from~$p$
 with respect to~$\{\bar{d}_i\}_{i\in K}$).
 This way, the refined diagrams~$\CVD^*_k(S)$ and $\mCVD^*_k(S)$ partition the plane~$\Plane$ by 
 the $k$-th nearest and \changed{$k$-th} farthest colors, respectively.
\end{itemize}

From the construction,
it is clear that $\CVD^*_1(S) = \VD(S)$ and $\mCVD^*_1(S) = \FVD(S)$.
Note also that $\CVD^*_m(S)=\FCVD^*(S)$ and $\mCVD^*_m(S)=\HVD^*(S)$, 
whereas $\FCVD(S) = \CVD_{m-1}(S)$ and $\HVD(S) =\mCVD_{m-1}(S)$.

\changed{\figurename s~\ref{fig:cvd} and~\ref{fig:mcvd} illustrate an
  example under the Euclidean metric, where $S$ consists of
  $n=9$~points and $m=4$~colors:
  }
$S_1 = \{s_1,s_2,s_5\}$, $S_2 =\{s_3,s_9\}$, $S_3 =\{s_4,s_6,s_8\}$, and $S_4 =\{s_7\}$,
in red, blue, purple, and black, respectively;
selected regions of~$\CVD_k(S)$ and~$\mCVD_k(S)$ are labeled in (a)--(c); 
faces associated with~$s_6\in S_3$ and those with~$s_9\in S_2$ in~$\CVD^*_k(S)$ and~$\mCVD^*_k(S)$
are shaded in purple and blue, respectively, in (d)--(g).

Now, consider the vertices and edges of the arrangements~$\arr_\Gamma$ and $\overline{\arr}_\Gamma$.
By the general position assumption,
each vertex~$v$ of~$\arr_\Gamma$ or of~$\overline{\arr}_\Gamma$
is determined by exactly three sites $s,s',s''\in S$
in such a way that $v$ is a common intersection of
three surfaces~$\gamma_{s}, \gamma_{s'}, \gamma_{s''} \in \Gamma$.
Such a vertex~$v$ is called \emph{$c$-chromatic} for~$c\in \{1,2,3\}$
if $|\{\kappa(s), \kappa(s'), \kappa(s'')\}| = c$.
(Note that any $1$-chromatic vertex is a vertex of some single envelope~$E_i$ or $\overline{E}_i$.)
Similarly, each edge of~$\arr_\Gamma$ and of~$\overline{\arr}_\Gamma$ is
determined by exactly two sites in~$S$, and is
either $1$- or $2$-chromatic according to the number of
involved colors.
We identify each vertex or edge of $\CVD^*_k(S)$ or of $\mCVD^*_k(S)$ by
its original lifted copy in~$\arr_\Gamma$ or in $\overline{\arr}_\Gamma$.
Observe that any $c$-chromatic vertex or edge of $\arr_\Gamma$ or of $\overline{\arr}_\Gamma$
appears in $c$~consecutive levels,
as it lies in the intersection of $c$ surfaces from $\{E_i\}_{i\in K}$ 
or from $\{\overline{E}_i\}_{i\in K}$, respectively.
Thus, $c$-chromatic vertices appear in $c$~consecutive orders of the refined diagrams.
We call a vertex or an edge of~$\CVD^*_k(S)$ or of~$\mCVD^*_k(S)$
\emph{new}
if it does not appear in~$\CVD^*_{k-1}(S)$ or in~$\mCVD^*_{k-1}(S)$,
respectively; \changed{and}
\emph{old}, otherwise.
By definition, 
the vertices and edges of~$\CVD^*_1(S)$ and~$\mCVD^*_1(S)$ are all new.
Note that every edge of $\CVD_k(S)$ (or, of $\mCVD_k(S)$) is $2$-chromatic and new,
and appears both in~$\CVD^*_k(S)$ and~$\CVD^*_{k+1}(S)$ 
(in $\mCVD^*_k(S)$ and $\mCVD^*_{k+1}(S)$, resp.), being first new and then old.
See~\figurename s~\ref{fig:cvd} and~\ref{fig:mcvd}, where
new $2$-chromatic edges are in black,
old $2$-chromatic edges in gray,
$1$-chromatic edges in their own color, and
new vertices marked by small squares.
 
We define $v_{c,j}=v_{c,j}(S)$ for~$1\leq c \leq 3$ and~$0\leq j \leq m-1$ 
to be the number of $c$-chromatic vertices in~$\arr_\Gamma$
below which there are exactly $j$~surfaces from~$\{E_i\}_{i\in K}$;
$\bar{v}_{c,j}=\bar{v}_{c,j}(S)$
to be the number of  $c$-chromatic vertices in~$\overline{\arr}_\Gamma$
above which there are exactly $j$~surfaces from~$\{\overline{E}_i\}_{i\in K}$.

\begin{lemma} \label{lem:CVD_v}
 For any $1\leq c\leq 3$ and $1\leq k\leq m$, the following hold: 
 \begin{enumerate}[(i)] 
 \item 
 The number of new $c$-chromatic vertices of~$\CVD^*_{k}(S)$ 
 is exactly~$v_{c,k-1}$, and
 the number of new $c$-chromatic vertices of~$\mCVD^*_k(S)$ is exactly~$\bar{v}_{c,k-1}$.
 \item The number of vertices of~$\CVD^*_k(S)$ is exactly
 $v_{3,k-1} + v_{3,k-2} + v_{3,k-3} + v_{2,k-1} + v_{2,k-2} + v_{1,k-1}$,
 and the number of vertices of~$\mCVD^*_k(S)$ is exactly
 $\bar{v}_{3,k-1}+\bar{v}_{3,k-2}+\bar{v}_{3,k-3} + \bar{v}_{2,k-1}+\bar{v}_{2,k-2} + \bar{v}_{1,k-1}$,
 where $v_{c,j} = \bar{v}_{c,j} = 0$ for $j < 0$.
 \item  The number of vertices of~$\CVD_k(S)$ is exactly
 $v_{3,k-1} + v_{3,k-2} + v_{2,k-1}$, and
 the number of vertices of~$\mCVD_k(S)$ is exactly
 $\bar{v}_{3,k-1} + \bar{v}_{3,k-2} + \bar{v}_{2,k-1}$.
 \end{enumerate}
\end{lemma}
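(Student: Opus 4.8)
The three parts are all about counting vertices in the refined and unrefined color diagrams by locating them in the level structure of $\arr_\Gamma$ (resp.\ $\overline{\arr}_\Gamma$), so the plan is to prove (i) first and derive (ii) and (iii) from it as essentially bookkeeping. The key structural fact, already noted in the text, is that a $c$-chromatic vertex $v$ of $\arr_\Gamma$ lies on exactly $c$ of the envelopes $E_i$, and therefore appears in precisely $c$ consecutive levels $L_j$; if $j-1$ is the number of envelopes strictly below $v$ (so $v$ is counted by $v_{c,j-1}$), then $v$ appears in $\CVD^*_j(S), \CVD^*_{j+1}(S), \dots, \CVD^*_{j+c-1}(S)$ and in no other refined diagram. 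In particular $v$ is \emph{new} exactly in $\CVD^*_j(S)$, which is the smallest order in which it appears. This is the whole content of (i): the new $c$-chromatic vertices of $\CVD^*_k(S)$ are exactly those $c$-chromatic vertices of $\arr_\Gamma$ that first appear at level $k$, i.e.\ those with exactly $k-1$ envelopes below them, and there are $v_{c,k-1}$ of them by definition. The maximal case is symmetric, reading "below" as "above" and $L_j$ as $\overline{L}_j$; I would state it once and say "dually" for $\mCVD$.

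For part (ii): by the appearance pattern above, a $c$-chromatic vertex counted by $v_{c,j-1}$ belongs to $\CVD^*_k(S)$ iff $j \le k \le j+c-1$, i.e.\ iff $j-1 \in \{k-c,\dots,k-1\}$. Hence the number of $c$-chromatic vertices in $\CVD^*_k(S)$ is $v_{c,k-1}+v_{c,k-2}+\dots+v_{c,k-c}$, with the convention $v_{c,j}=0$ for $j<0$ handling the boundary orders. Summing over $c=1,2,3$ gives exactly $v_{3,k-1}+v_{3,k-2}+v_{3,k-3}+v_{2,k-1}+v_{2,k-2}+v_{1,k-1}$, and the dual expression for $\mCVD^*_k(S)$. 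The only thing to check here is that every vertex of $\CVD^*_k(S)$ is of this form — i.e.\ that the projection $L_k \to \Plane$ does not create spurious vertices and does not identify two distinct vertices of $\arr_\Gamma$; this follows from the general position assumption on $\{\delta_s\}$ (and the $xy$-monotonicity of the surfaces $\gamma_s$, which makes each $E_i$, and hence each level, $xy$-monotone), together with the convention stated in the text that we identify a vertex of $\CVD^*_k(S)$ with its lifted copy.

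For part (iii): a vertex of $\CVD_k(S)$ is, by the observation preceding the lemma, a point of (the projection of) $L_k \cap L_{k+1}$, equivalently a vertex that appears in both $\CVD^*_k(S)$ and $\CVD^*_{k+1}(S)$. Using the appearance interval $[j, j+c-1]$ for a vertex counted by $v_{c,j-1}$, this two-order membership forces $j \le k$ and $k+1 \le j+c-1$, i.e.\ $k-c+2 \le j-1+1$, so $j-1 \in \{k-c+1,\dots,k-1\}$; this range is empty for $c=1$, equals $\{k-1\}$ for $c=2$, and equals $\{k-1,k-2\}$ for $c=3$. Summing, the number of vertices of $\CVD_k(S)$ is $v_{3,k-1}+v_{3,k-2}+v_{2,k-1}$, and dually for $\mCVD_k(S)$. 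One small point to pin down is that every vertex of $\CVD_k(S)$ really does arise this way — i.e.\ the vertices of the unrefined diagram are exactly the points common to two consecutive refined diagrams, not merely contained among them; since $\CVD^*_k(S)$ refines $\CVD_k(S)$, a vertex of $\CVD_k(S)$ is where at least three regions $\VR_k(H;S)$ meet, which by the level description corresponds to $L_k$ and $L_{k+1}$ both being non-smooth there, hence to a vertex of $L_k \cap L_{k+1}$.

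The only genuine obstacle I anticipate is the careful handling of boundary orders ($k$ close to $1$ or to $m$) and of the interplay with $1$-chromatic vertices: a $1$-chromatic vertex of $\arr_\Gamma$ is a vertex of a single envelope $E_i$ and appears in exactly one level, so it is always "new" and never survives to the next order — this is why $v_{1,\cdot}$ appears in (ii) but not in (iii), and the plan must make that asymmetry explicit rather than let it fall out of an undifferentiated sum. Everything else is a direct translation between "level at which a vertex first/last appears" and the definitions of $v_{c,j}$ and $\bar v_{c,j}$; I would present (i) with the appearance-interval argument, then state (ii) and (iii) as immediate corollaries, doing the maximal case by the duality $E_i \leftrightarrow \overline E_i$, "below" $\leftrightarrow$ "above".
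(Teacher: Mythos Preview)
Your proposal is correct and follows essentially the same approach as the paper: establish (i) via the observation that a $c$-chromatic vertex lies in exactly $c$ consecutive levels, then read off (ii) and (iii) by intersecting the appearance interval $[j,j+c-1]$ with $\{k\}$ and $\{k,k+1\}$ respectively. One small correction: your justification for (iii), that ``a vertex of $\CVD_k(S)$ is where at least three regions $\VR_k(H;S)$ meet'', is not right for $2$-chromatic vertices---those are points where only two order-$k$ regions meet but the separating bisector breaks (the nearest site within one color changes); the correct justification, which the paper uses implicitly from the discussion preceding the lemma, is simply that the edges of $\CVD_k(S)$ are exactly the new $2$-chromatic edges common to $\CVD^*_k(S)$ and $\CVD^*_{k+1}(S)$, so the vertices of $\CVD_k(S)$ are exactly the vertices common to both refined diagrams.
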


\begin{proof}
Consider any new vertex~$v$ in~$\CVD^*_k(S)$ defined by three distinct sites~$s, s', s''\in S$.
Each of the colors~$\kappa(s), \kappa(s'), \kappa(s'')$
is the $k$-th nearest color at~$v\in \Plane$.
Hence, there are exactly $k-1$ additional colors~$i \in K\setminus \{\kappa(s),\kappa(s'),\kappa(s'')\}$
such that the following strict inequality holds:
\[ d_i(v) < d_{\kappa(s)}(v) = d_{\kappa(s')}(v) = d_{\kappa(s'')}(v). \]
Analogously,
for each new vertex~$v$ of~$\mCVD^*_k(S)$ defined by $s, s', s''\in S$,
there are exactly $k-1$ additional colors~$i \in K\setminus \{\kappa(s),\kappa(s'),\kappa(s'')\}$ 
such that
\[ \bar{d}_i(v) > \bar{d}_{\kappa(s)}(v) = \bar{d}_{\kappa(s')}(v) = \bar{d}_{\kappa(s'')}(v).\]
Hence,
there is a one-to-one correspondence between
 new $c$-chromatic vertices in~$\CVD^*_{k}(S)$ and
 $c$-chromatic vertices of~$\arr_\Gamma$, below which there are exactly $k-1$~surfaces
 from~$\{E_i\}_{i\in K}$.
 Analogously, there is another correspondence between
 new $c$-chromatic vertices in~$\mCVD^*_{k}(S)$ and
 $c$-chromatic vertices of~$\overline{\arr}_\Gamma$, above which there are exactly $k-1$~surfaces
 from~$\{\overline{E}_i\}_{i\in K}$.
This proves claim~(i).

Since $c$-chromatic vertex or edge is contained in $c$ consecutive levels in~$\arr_\Gamma$
or in~$\overline{\arr}_\Gamma$, we know that
$\CVD^*_k(S)$ (or $\mCVD^*_k(S)$) consists of 
new $3$-chromatic vertices in order-$k$, order-$(k-1)$, or order-$(k-2)$;
new $2$-chromatic vertices in order-$k$ or order-$(k-1)$; and
new $1$-chromatic vertices in order-$k$.
This implies claim~(ii).

From the above discussions,
we know that the vertices of~$\CVD_k(S)$ (resp. of~$\mCVD_k(S)$) are those that appear commonly 
in~$\CVD^*_k(S)$ and~$\CVD^*_{k+1}(S)$ (resp. in~$\mCVD^*_k(S)$ and~$\mCVD^*_{k+1}(S)$).
Therefore,
$\CVD_k(S)$ (resp. $\mCVD_k(S)$) consists of
new $3$-chromatic vertices in order-$k$ or order-$(k-1)$ and
new $2$-chromatic vertices in order-$k$,
so claim~(iii) follows.
Note that $v_{3,m-1} = v_{3,m-2} = v_{2,m-1} = 0$ by definition,
so claim~(iii) concludes that $\CVD_m(S)$ and $\mCVD_m(S)$ have zero vertex,
which is certainly true
since both $\CVD_m(S)$ and $\mCVD_m(S)$ consist of a single face
$\VR_m(K; S) = \mVR_m(K; S) = \Plane$.
\end{proof}

\section{The colorful Clarkson--Shor framework} \label{sec:framework}

The Clarkson--Shor technique~\cite{cs-arscgII-89} is based on
a general framework dealing with so-called configurations or ranges 
defined by a set of objects.
Specifically, the following three ingredients are supposed to be given
with a constant integer parameter~$d\geq 1$, see also Sharir~\cite{s-cstre-03}:
\begin{itemize} 
 \item A set~$S$ of $n$~\emph{objects}.
 \item A set~$\conf(S)$ of \emph{configurations},
  each of which is defined by exactly $d$~objects in~$S$.
 \item A \emph{conflict relation}~$\chi \subseteq S \times \conf(S)$
  between objects~$s\in S$ and configurations~$f \in \conf(S)$
  with the requirement that none of the $d$~objects defining~$f$ are in conflict with~$f$.
\end{itemize}
In the original framework, the number of objects that define a configuration
does not have to be exactly~$d$, but at most~$d$.
This restriction can be achieved by adding dummy objects to~$S$.

Let us call such a \changed{triplet}
$(S, \conf(S), \chi)$ a \emph{CS-structure}.
Given any CS-structure~$(S, \conf(S), \chi)$ with parameter~$d$,
we now impose a \emph{color} assignment
$\kappa \colon S \to K$ to the objects in~$S$,
where $K = \{1,2,\ldots, m\}$ denotes the set of $m$ colors with $m\leq n$.
For each color $i\in K$, let $S_i := \{s \in S \mid \kappa(s) = i\}$.
For~$f\in \conf(S)$ and
set $D_f \subseteq S$ of $d$~objects defining~$f$,
$\kappa(D_f) = \{ \kappa(s) \mid s\in D_f\}$ is called a
set of colors defining~$f$.
We build a \emph{color-to-configuration conflict relation} $\chi_\kappa \subseteq K \times \conf(S)$
such that a color $i\in K$ is in conflict with a configuration $f\in \conf(S)$
if an object $s\in S_i$ is in conflict with $f$, 
that is, $(i,f) \in \chi_\kappa$ if and only if $(s, f) \in \chi$ for some $s\in S_i$.
We are then interested in those configurations~$f \in \conf(S)$
such that 
none of its defining colors in $\kappa(D_f)$ are in conflict \changed{with~$f$.}
Let $\conf(S,\kappa) \subseteq \conf(S)$ be the set of these configurations,
called \emph{colored configurations with respect to $\kappa$}.
We call $f\in \conf(S,\kappa)$ \emph{$c$-chromatic} if $|\kappa(D_f)|
= c$ for $1\leq c \leq d$,
\changed{and let the \emph{weight} of~$f$ be} 
the number of colors in~$K$ that are in conflict with~$f$.
Let $\conf_{c,j}(S,\kappa)\subseteq \conf(S,\kappa)$ be
the set of $c$-chromatic weight-$j$ colored configurations \changed{in~$\conf(S,\kappa)$.}

\changed{Considering the colors in~$K$ as new \emph{objects}, each of which is a collection of objects in~$S$,}  
observe that this color-augmented structure $(K, \bigcup_{j}\conf_{c,j}(S,\kappa), \chi_\kappa)$
for each $1\leq c\leq d$ is again a CS-structure with parameter~$c$.
Therefore, the main lemma of Clarkson and Shor~\cite[Lemma~2.1]{cs-arscgII-89}
automatically implies the following.
\begin{lemma}\label{lem:CS_general_lb}
 With the above notations, 
 let $1\leq c\leq d$, $r \geq 0$, and $a \geq 0$ be integers,
 and $R \subseteq K$ be a random subset of $r$ colors.
 Then,
  \[ \binom{m}{r} \E[ |\conf_{c,a}(S_R, \kappa_R)| ] \geq
      \sum_{j=0}^{m-c} |\conf_{c,j}(S, \kappa)| \binom{j}{a}\binom{m-c-j}{r-c-a},\]
 where $S_R = \bigcup_{i\in R}S_i$ and
 $\kappa_R \colon S_R \to R$ denotes the restriction of $\kappa$ to $S_R$.
 The equality holds if each configuration in~$\conf(S, \kappa)$ is defined 
 by a unique set of $d$~objects in~$S$.
\end{lemma}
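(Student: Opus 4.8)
The plan is to obtain the statement by a single, direct application of the Clarkson--Shor sampling lemma~\cite[Lemma~2.1]{cs-arscgII-89} (see also~\cite{s-cstre-03}) to the color-augmented CS-structure identified just before the statement. Since that observation has already reduced everything to the original framework, the only real work is to confirm that this structure is legitimate, to instantiate the random-sampling estimate on it, and to match the indices with the displayed inequality.

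First I would verify the three CS-axioms for $(K,\,\bigcup_{j}\conf_{c,j}(S,\kappa),\,\chi_\kappa)$ with parameter~$c$: its objects are the $m$~colors of~$K$; each configuration $f\in\conf_{c,j}(S,\kappa)$ is defined by exactly the $c$ colors of~$\kappa(D_f)$ (this is precisely why the structure is split according to the value of~$c$, and why the ``pad to exactly $d$'' convention of the original framework has to be re-enacted here as a ``pad to exactly $c$'' convention); and, by the very definition of a colored configuration, none of these $c$ defining colors is in conflict with~$f$ under~$\chi_\kappa$. Moreover the weight of~$f$ in this structure, namely the number of colors of~$K$ in conflict with~$f$, is exactly the index~$j$, so $\conf_{c,j}(S,\kappa)$ is the set of weight-$j$ configurations of this CS-structure. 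Hence the Clarkson--Shor machinery applies to it verbatim.

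Next I would instantiate the estimate with a uniformly random $r$-subset $R\subseteq K$, whose associated restricted structure is the one obtained by running the same construction on~$S_R$ with the induced coloring~$\kappa_R$. For a fixed $c$-chromatic weight-$j$ colored configuration~$f$, the event that $f$ is counted by $|\conf_{c,a}(S_R,\kappa_R)|$ requires all $c$ colors defining~$f$ to lie in~$R$, exactly $a$ of the $j$ colors in conflict with~$f$ to lie in~$R$, and the remaining $r-c-a$ members of~$R$ to be drawn from the $m-c-j$ colors that neither define nor conflict with~$f$; this has probability $\binom{j}{a}\binom{m-c-j}{r-c-a}/\binom{m}{r}$. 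Summing over all $c$-chromatic colored configurations, grouped by weight~$j$ from~$0$ to~$m-c$, and using linearity of expectation yields the right-hand side after multiplying both sides by~$\binom{m}{r}$. The relation is an inequality rather than an equality because this probability computation tacitly pins down one defining $d$-set per configuration, so a configuration of $\conf(S_R,\kappa_R)$ that survives through a \emph{different} defining $d$-set of~$S$ is not charged on the right-hand side; when every configuration of $\conf(S,\kappa)$ has a unique defining $d$-set, hence a unique defining color set, this slack vanishes and the bound becomes the stated identity.

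The step I expect to demand the most care is the bookkeeping of \emph{restriction}: one must check that the color-to-configuration conflict relation~$\chi_\kappa$ is insensitive to whether a conflicting object is sought inside~$S$ or inside~$S_R$ (so that the restricted relation really is the restriction of~$\chi_\kappa$), that a $c$-chromatic colored configuration stays $c$-chromatic and retains the ``no defining color in conflict'' property under restriction, that the configurations definable from~$S_R$ are exactly those of $\conf(S,\kappa)$ whose defining colors lie in~$R$, and that the two padding conventions (to exactly $d$ objects in~$S$, to exactly $c$ colors in~$K$) are mutually compatible. None of these points is deep, but each is a place where a subtle mismatch would invalidate the direct appeal to~\cite[Lemma~2.1]{cs-arscgII-89}; once they are dispatched, the lemma is immediate.
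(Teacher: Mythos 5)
Your proposal is correct and matches the paper's (implicit) proof: the paper observes that $(K, \bigcup_j \conf_{c,j}(S,\kappa), \chi_\kappa)$ is itself a CS-structure with parameter~$c$ and then declares the lemma to follow ``automatically'' from Clarkson and Shor's Lemma~2.1 without spelling out the instantiation. Your blind reconstruction—verifying the CS-axioms for the color-augmented structure, computing the survival probability $\binom{j}{a}\binom{m-c-j}{r-c-a}/\binom{m}{r}$ for a fixed $c$-chromatic weight-$j$ configuration, applying linearity of expectation, and tracing the source of the $\geq$ (and of the equality case) to possible multiplicity of defining $d$-sets—is exactly the content of that appeal.
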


In probabilistic arguments dealing with CS-structures,
it is usually necessary to have an upper bound on
the number of weight-$0$ configurations.
For any subset $S'\subseteq S$,
let $\conf_0(S') \subseteq \conf(S')$ be the set of (uncolored)
configurations~$f$ \changed{of weight 0},
that is, $(s,f)\notin \chi$ for all~$s\in S'$.
Let $T_0(n')$ be 
any nondecreasing function with $T_0(0) = 0$ that upper bounds
the number $|\conf_0(S')|$ of \changed{these}
configurations for any set $S'$ of $n'$
uncolored objects.
The following is obvious by definition.
\begin{lemma} \label{lem:CS_general_ub}
 Let $S' \subseteq S$ be any subset and $\kappa'$ be any color assignment for $S'$.
 Then,
 \[ \sum_{c=1}^d |\conf_{c,0}(S', \kappa')| = |\conf_0(S')| \leq T_0(|S'|).\]
\end{lemma}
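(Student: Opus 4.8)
The plan is to prove the claimed identity purely by unwinding the definitions of the color-to-configuration conflict relation $\chi_{\kappa'}$ and of weight-$0$ configurations, and then to invoke the definition of $T_0$. The crux is the observation that, for a configuration $f\in\conf(S')$, being of weight $0$ with respect to $\kappa'$ is exactly the same condition as being an uncolored weight-$0$ configuration, i.e., $f\in\conf_0(S')$.

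First I would establish this correspondence. By construction, a color $i\in K$ is in conflict with $f$ precisely when some object $s\in S'_i$ is in conflict with $f$; hence no color is in conflict with $f$ if and only if no object of $S'$ is in conflict with $f$, which is exactly the defining property of $\conf_0(S')$. In particular, if $f\in\conf_0(S')$ then none of its defining colors $\kappa'(D_f)$ is in conflict with $f$, so $f$ is a colored configuration; conversely every weight-$0$ colored configuration lies in $\conf_0(S')$. Thus $\conf_0(S')$ coincides with the set of weight-$0$ members of $\conf(S',\kappa')$. Next, the attribute ``$c$-chromatic'' partitions $\conf(S',\kappa')$, since each colored $f$ is $c$-chromatic for the single value $c=|\kappa'(D_f)|$, and $1\leq|\kappa'(D_f)|\leq d$ because $D_f$ consists of exactly $d$ objects. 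Restricting this partition to the weight-$0$ members gives $\conf_0(S')=\bigcup_{c=1}^{d}\conf_{c,0}(S',\kappa')$ as a disjoint union, which yields $\sum_{c=1}^{d}|\conf_{c,0}(S',\kappa')|=|\conf_0(S')|$.

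Finally, $(S',\conf(S'),\chi)$ is itself a CS-structure obtained from $(S,\conf(S),\chi)$ by restriction, so $|\conf_0(S')|$ is the number of weight-$0$ configurations over a set of $|S'|$ (uncolored) objects, and this is at most $T_0(|S'|)$ by the defining property of $T_0$ together with its monotonicity. I do not expect any genuine obstacle here; the only point warranting a moment's care is checking that a weight-$0$ configuration is automatically a colored configuration, which holds because conflict-freeness of all colors of $K$ is a stronger requirement than conflict-freeness of the defining colors $\kappa'(D_f)$ alone.
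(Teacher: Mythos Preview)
Your proposal is correct and matches the paper's approach exactly: the paper provides no proof for this lemma, simply declaring it ``obvious by definition,'' and your argument is precisely the careful unwinding of definitions that justifies this claim.
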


In many applications, 
such an upper bound function~$T_0$ is either known from previous work
or relatively easy to obtain.
Once we have~$T_0$, 
we can derive general upper bounds
on the number of corresponding colored configurations
of weight at most~$k$ in such a procedural way as done for uncolored
cases~\cite{sa-dsstga-95,cs-arscgII-89,m-ldg-02}.

\added{
The following observation will be useful for this purpose.
\begin{lemma} \label{lem:sums}
 Let $A$ be a finite set of real numbers and $r$ be an integer.
 Then, it holds that
 \[ \sum_{R \subseteq A, |R| = r} \sum_{a\in R} a = \binom{|A|-1}{r-1} \sum_{a \in A} a. \]
\end{lemma}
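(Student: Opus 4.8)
The plan is to establish this identity by a double-counting argument: exchange the order of summation on the left-hand side and count, for each fixed element $a\in A$, how many $r$-element subsets of $A$ contain it.

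First I would rewrite the left-hand side as a sum over incidences. The expression $\sum_{R\subseteq A,\,|R|=r}\sum_{a\in R}a$ is a sum of terms indexed by pairs $(R,a)$ with $R\subseteq A$, $|R|=r$, and $a\in R$, each pair contributing the value~$a$. Grouping these pairs by their second coordinate gives
\[ \sum_{R\subseteq A,\,|R|=r}\ \sum_{a\in R}a \;=\; \sum_{a\in A} a\cdot\bigl|\{\,R\subseteq A : |R|=r,\ a\in R\,\}\bigr|. \]
Next I would evaluate the inner count. Fixing $a\in A$, an $r$-subset $R$ containing $a$ is obtained by choosing the remaining $r-1$ elements of $R$ from the $|A|-1$ elements of $A\setminus\{a\}$; hence there are exactly $\binom{|A|-1}{r-1}$ such subsets, and this number does not depend on the choice of~$a$. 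Substituting into the previous display and pulling the constant factor out of the outer sum yields
\[ \sum_{R\subseteq A,\,|R|=r}\ \sum_{a\in R}a \;=\; \binom{|A|-1}{r-1}\sum_{a\in A}a, \]
which is the claimed equality.

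I expect no genuine obstacle here; the only point requiring a word is the behaviour at the degenerate ranges of~$r$, which is covered by the usual binomial conventions. If $r=0$, the left-hand side is the single term coming from $R=\emptyset$ with an empty inner sum, hence $0$, and the right-hand side is $0$ since $\binom{|A|-1}{-1}=0$; if $r>|A|$, there are no $r$-subsets of $A$, so the left-hand side is an empty sum and $\binom{|A|-1}{r-1}=0$ as well. This completes the proof.
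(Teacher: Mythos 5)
Your proof is correct and takes essentially the same double-counting approach as the paper: swap the order of summation and count, for each fixed $a\in A$, the $\binom{|A|-1}{r-1}$ $r$-subsets containing it. The extra remarks on the degenerate ranges of $r$ are harmless and consistent with the usual binomial conventions.
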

\begin{proof}
Note that the sum runs over all $r$-subsets of~$A$.
For any fixed $a\in A$,
we observe that out of $\binom{|A|}{r}$ $r$-subsets of~$A$,
exactly $\binom{|A|-1}{r-1}$ subsets contain $a$.
Hence, the lemma follows.
\end{proof}
}

\begin{theorem} \label{thm:CS_general}
 With the above notations,
 suppose $T_0$ is a convex function. 
 For each $1\leq c\leq d$,
 the total number of $(\leq c)$-chromatic colored configurations is bounded by
 \[ \sum_{b=1}^c\sum_{j=0}^{m-b} \binom{m-b-j}{c-b}|\conf_{b,j}(S,\kappa)| \leq
   \binom{m}{c}\cdot\frac{1}{m} \sum_{i\in K} T_0(c|S_i|)
   = O(m^{c-1}\cdot T_0(cn)).\]
 Also, for each $2\leq c\leq d$ and $0\leq k \leq \lfloor \frac{m}{c} \rfloor - 1$,
it holds that
 \[ \sum_{j=0}^k |\conf_{c,j}(S,\kappa)| =
  O\left(\frac{(k+1)^c}{m}\cdot\sum_{i\in K} T_0\left(\frac{m |S_i|}{k+1}\right)\right)
  = O\left(\frac{(k+1)^c}{m}\cdot T_0\left(\frac{mn}{k+1}\right)\right).
 \]
\end{theorem}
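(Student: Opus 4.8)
The plan is to mimic the classical Clarkson--Shor upper-bound argument (as in~\cite{cs-arscgII-89,sa-dsstga-95,m-ldg-02}), but carried out on the \emph{colored} structure $(K, \bigcup_j \conf_{c,j}(S,\kappa), \chi_\kappa)$, using Lemma~\ref{lem:CS_general_lb} as the sampling inequality and Lemma~\ref{lem:CS_general_ub} as the weight-$0$ bound. Fix $c$. First I would handle the statement about $(\le c)$-chromatic configurations. Take $r = c$ in Lemma~\ref{lem:CS_general_lb}: for each $1\le b\le c$ and $a=0$, the term $\binom{m-b-j}{r-b-a}$ becomes $\binom{m-b-j}{c-b}$, so the right-hand side is exactly $\sum_{j=0}^{m-b}\binom{m-b-j}{c-b}|\conf_{b,j}(S,\kappa)|$, which is what appears on the left of the theorem (summed over $b$). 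The left-hand side of Lemma~\ref{lem:CS_general_lb} is $\binom{m}{c}\,\E[\,|\conf_{b,0}(S_R,\kappa_R)|\,]$; summing over $b=1,\dots,c$ and invoking Lemma~\ref{lem:CS_general_ub} with $S' = S_R$ gives $\sum_b |\conf_{b,0}(S_R,\kappa_R)| = |\conf_0(S_R)| \le T_0(|S_R|)$. So the whole sum is at most $\binom{m}{c}\,\E[T_0(|S_R|)]$. Now $|S_R| = \sum_{i\in R}|S_i|$, and by convexity of $T_0$ (using $T_0(0)=0$, so $T_0$ is superadditive along the "chord from the origin", i.e. $T_0(x) \le (x/y)T_0(y)$ for $0<x\le y$) together with Lemma~\ref{lem:sums}, one bounds $\E_{|R|=c}[T_0(\sum_{i\in R}|S_i|)]$: split $\sum_{i\in R}|S_i|$ as a sum of $c$ terms, distribute $T_0$ over the sum using convexity in the form $T_0(\sum_{i\in R} x_i) \le \sum_{i\in R} T_0(c x_i) \cdot \frac{1}{c}\cdot c = \sum_{i\in R}T_0(cx_i)$ --- more carefully, $T_0(\sum_{i\in R} x_i) = T_0\big(\frac1c\sum_{i\in R} c x_i\big) \le \frac1c\sum_{i\in R} T_0(c x_i)$ by Jensen --- and then average over $R$ via Lemma~\ref{lem:sums} to get $\E[T_0(|S_R|)] \le \frac{1}{c}\cdot\frac{\binom{m-1}{c-1}}{\binom{m}{c}}\sum_{i\in K} T_0(c|S_i|) = \frac{1}{m}\sum_{i\in K}T_0(c|S_i|)$. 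Multiplying by $\binom{m}{c}$ yields the first displayed bound, and the $O(m^{c-1}T_0(cn))$ form follows from monotonicity ($T_0(c|S_i|)\le T_0(cn)$) and $\binom{m}{c} = O(m^c)$.

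Next, for the weight-$\le k$ bound I would take $r = \lfloor m/(k+1)\rfloor$ (roughly $m/(k+1)$) in Lemma~\ref{lem:CS_general_lb}, with $a=0$ and $b=c$. The idea, exactly as in the uncolored case, is that a weight-$j$ colored configuration of $\conf(S,\kappa)$ survives into the random $r$-colored subsample $S_R$ as a weight-$0$ configuration with probability $\binom{m-c-j}{r-c}/\binom{m-c}{r-c} \approx (1-r/m)^j$, which is $\Omega(1)$ precisely when $j = O(m/r) = O(k+1)$. Concretely, for $0 \le j \le k$ one shows $\binom{m-c-j}{r-c}\big/\binom{m}{r} \ge c_0 \cdot m^{-c}$ for an absolute constant $c_0>0$ depending only on $c$ (standard binomial estimates, using $r \approx m/(k+1)$ and $j\le k$). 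Plugging into Lemma~\ref{lem:CS_general_lb} (keeping only the $a=0$ term on the right, which is legitimate since all terms are nonnegative),
\[
\binom{m}{r}\E[\,|\conf_{c,0}(S_R,\kappa_R)|\,] \;\ge\; \sum_{j=0}^{m-c}|\conf_{c,j}(S,\kappa)|\binom{m-c-j}{r-c} \;\ge\; c_0\,m^{-c}\binom{m}{r}\sum_{j=0}^{k}|\conf_{c,j}(S,\kappa)|.
\]
Dividing by $\binom{m}{r}$ and bounding $\E[\,|\conf_{c,0}(S_R,\kappa_R)|\,] \le \E[T_0(|S_R|)]$ via Lemma~\ref{lem:CS_general_ub}, then estimating $\E[T_0(|S_R|)] \le \frac{1}{r}\cdot\frac{\binom{m-1}{r-1}}{\binom{m}{r}}\sum_{i\in K}T_0(r|S_i|) = \frac{1}{m}\sum_{i\in K}T_0(r|S_i|)$ exactly as above (Jensen plus Lemma~\ref{lem:sums}), gives $\sum_{j=0}^k |\conf_{c,j}(S,\kappa)| = O\big(m^c \cdot \frac{1}{m}\sum_{i\in K}T_0(r|S_i|)\big)$. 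Substituting $r = \Theta(m/(k+1))$ and noting $m^c/m \cdot (1) $ should rather be tracked as $m^{c-1}$ times the average --- here one rewrites $m^{c-1}\sum_{i}T_0(m|S_i|/(k+1)) = \frac{m^c}{m}\sum_i T_0(\cdot)$, and since $\frac{m^c}{(k+1)^{-c}} $ is not what we want, I recheck: the surviving-probability lower bound $c_0 m^{-c}$ should in fact be $c_0 (k+1)^{-c} m^{-c}\cdot m^c$-type --- the correct standard estimate is $\binom{m-c-j}{r-c}/\binom{m}{r} = \Theta\big((r/m)^c (1-r/m)^j\big) = \Theta\big((k+1)^{-c}\big)$ for $j\le k$, giving the factor $(k+1)^c$ rather than $m^c$; with this the final bound reads $O\big(\frac{(k+1)^c}{m}\sum_{i\in K}T_0(m|S_i|/(k+1))\big)$, and the $O\big(\frac{(k+1)^c}{m}T_0(mn/(k+1))\big)$ form follows from monotonicity of $T_0$.

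The main obstacle is getting the binomial survival estimates clean and uniform in the regime $0\le k \le \lfloor m/c\rfloor - 1$, in particular verifying that the constant hidden in $\Theta((k+1)^{-c})$ genuinely depends only on $c$ (not on $m$ or $k$) and that the floor in the choice $r = \lfloor m/(k+1)\rfloor$ does not spoil the estimate when $k+1$ is comparable to $m$; this is routine but needs the side condition $k+1 \le m/c$ (equivalently $r \ge c$) so that the configurations can even appear in an $r$-subsample. The convexity/Jensen step combining $T_0$ with Lemma~\ref{lem:sums} is the other place to be careful, since it is exactly where the hypothesis that $T_0$ is convex with $T_0(0)=0$ is used; everything else is bookkeeping inherited verbatim from the classical uncolored Clarkson--Shor analysis, now applied to the CS-structure $(K,\bigcup_j\conf_{c,j}(S,\kappa),\chi_\kappa)$.
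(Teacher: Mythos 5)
Your proposal follows essentially the same route as the paper's proof: set $r=c$ (resp.\ $r=\lfloor m/(k+1)\rfloor$) in Lemma~\ref{lem:CS_general_lb} with $a=0$, bound $\E[T_0(|S_R|)]$ from above via Jensen's inequality for convex $T_0$ combined with Lemma~\ref{lem:sums}, and bound the binomial ratio $\binom{m-c-j}{r-c}/\binom{m}{r}$ from below by $\Theta((k+1)^{-c})$ for $j\le k$ using the standard Matou\v{s}ek-style estimate. The mid-proposal self-correction on the survival probability (from $\binom{m-c-j}{r-c}/\binom{m-c}{r-c}$ and $m^{-c}$ to the correct ratio and $(k+1)^{-c}$) lands in the right place; the final accounting and the use of $T_0$ convex with $T_0(0)=0$ to merge $\sum_i T_0(m|S_i|/(k+1))\le T_0(mn/(k+1))$ are the same as in the paper.
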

\begin{proof}
Let $1\leq r \leq m$ be an integer parameter
and $R\subseteq K$ be a random subset of $r$ colors.
We start by showing an upper bound on $\E[|\conf_{c,0}(S_R, \kappa_R)|]$.
By Lemma~\ref{lem:CS_general_ub}, observe that
\[ \binom{m}{r} \sum_{c=1}^d \E[ |\conf_{c,0}(S_R, \kappa_R)| ] \leq \binom{m}{r} \E[T_0(|S_R|)] =
     \sum_{R'\subseteq K, |R'|=r} T_0(|S_{R'}|).\]
Since $T_0(n)$ is a convex function,
we have Jensen's inequality: 
\[ T_0\left(\sum_{a\in A} a\right) \leq \frac{1}{|A|}\cdot \sum_{a\in A} T_0(|A|\cdot a)\]
for any finite set~$A$ of positive real numbers.
We thus have
\begin{align*}
 \binom{m}{r} \sum_{c=1}^d \E[ |\conf_{c,0}(S_R, \kappa_R)| ] &\leq
  \sum_{R'\subseteq K, |R'|=r} T_0(|S_{R'}|) 
  = \sum_{R'\subseteq K, |R'|=r} T_0\left( \sum_{i\in R'} |S_i|\right) \\
  &\leq \sum_{R'\subseteq K, |R'|=r} \sum_{i\in R'} T_0(r\cdot |S_i|)/r\\
  & = \binom{m-1}{r-1} \frac{1}{r} \sum_{i\in K} T_0(r\cdot |S_i|)  \\
  & = \binom{m}{r} \frac{1}{m} \sum_{i\in K} T_0(r\cdot |S_i|)
\end{align*}
by Lemma~\ref{lem:sums}.
Hence, we have
\[
 \sum_{c=1}^d \E[|\conf_{c,0}(S_R, \kappa_R)|] \leq \frac{1}{m} \sum_{i\in K} T_0(r\cdot |S_i|).\]

On the other hand, Lemma~\ref{lem:CS_general_lb} (for~$a=0$) implies
\[ \E[|\conf_{c,0}(S_R, \kappa_R)|] \geq \sum_{j=0}^{m-c} |\conf_{c,j}(S, \kappa)|
  \cdot \binom{m-c-j}{r-c} \Big/ \binom{m}{r},\]
for $1\leq c \leq d$.

To obtain the first bound,
we fix $c$ with $1\leq c\leq d$ and set $r=c$.
We then have
\[ \sum_{b=1}^c \sum_{j=0}^{m-b} |\conf_{b,j}(S, \kappa)| \cdot \binom{m-b-j}{c-b} \Big/ \binom{m}{c}
   \leq \sum_{b=1}^c \E[|\conf_{b,0}(S_R, \kappa_R)|]  \leq 
   \frac{1}{m} \sum_{i\in K} T_0(r\cdot |S_i|)\]
by plugging the above lower and upper bounds. 

For the second bound,
let $2\leq c\leq d$ and $k \leq \lfloor \frac{m}{c} \rfloor - 1$ be fixed,
and set $r = \lfloor \frac{m}{k+1} \rfloor$.
The factor $\binom{m-c-j}{r-c} \Big/ \binom{m}{r}$ is then lower bounded as follows.
(Almost the same derivation can be found in Matou\v{s}ek~\cite[Lemma 6.3.2]{m-ldg-02}.)
\begin{align*}
  \binom{m-c-j}{r-c}  \Big/ \binom{m}{r}
 & = \frac{r(r-1)\cdots(r-c+1)}{m(m-1)\cdots(m-c+1)} \cdot
  \frac{m-c-j}{m-c}\cdot \frac{m-c-1-j}{m-c-1}\cdots\frac{m-r+1-j}{m-r+1} \\
 & = \frac{r(r-1)\cdots(r-c+1)}{m(m-1)\cdots(m-c+1)} \cdot
   \left(1-\frac{j}{m-c}\right)\left(1-\frac{j}{m-c-1}\right)\cdots\left(1-\frac{j}{m-r+1}\right)\\
 & \geq \frac{r(r-1)\cdots(r-c+1)}{m(m-1)\cdots(m-c+1)} \cdot
   \left(1-\frac{k}{m-r+1}\right)^r. \\
\end{align*}
Since $\frac{k}{m-r+1} \leq \frac{k+1}{m}$ and
$1-x \geq (\frac{c-1}{c})^{cx}$ for $0\leq x \leq \frac{1}{c}$,
we have
 \[ \left(1-\frac{k}{m-r+1}\right)^r \geq \left(1-\frac{k+1}{m}\right)^r
    \geq \left(\frac{c-1}{c}\right)^{cr\frac{k+1}{m}}
    \geq \left(\frac{c-1}{c}\right)^{c},\]
as $\frac{k+1}{m} \leq \frac{1}{c}$ and $r = \lfloor \frac{m}{k+1} \rfloor$.
We thus conclude that
 \[ \E[|\conf_{c,0}(S_R, \kappa_R)|] \geq \sum_{j=0}^{k} |\conf_{c,j}(S, \kappa)| \cdot
 \left(\frac{c-1}{c}\right)^{c} \cdot \frac{r(r-1)\cdots(r-c+1)}{m(m-1)\cdots(m-c+1)}.\]
Combining this with the above upper bound 
 \[ \E[|\conf_{c,0}(S_R, \kappa_R)|] \leq \sum_{b=1}^{d} \E[|\conf_{b,0}(S_R, \kappa_R)|]
  \leq \frac{1}{m}\sum_{i\in K} T_0(r\cdot |S_i|)\]
results in 
 \[ \sum_{j=0}^k |\conf_{c,j}(S, \kappa)| 
  = O\left(\frac{(k+1)^c}{m} \cdot \sum_{i\in K} T_0\left(\frac{m |S_i|}{k+1}\right)\right)\]
since $r = \lfloor \frac{m}{k+1}\rfloor$.
As $T_0$ is a convex, nondecreasing function with~$T_0(0) = 0$ and $\sum_{i\in K} |S_i| = n$,
we have $\sum_{i\in K} T_0\left(\frac{m |S_i|}{k+1}\right) \leq T_0\left(\frac{mn}{k+1}\right)$,
so the second bound holds for each $2\leq c\leq d$.
\end{proof}

Remark that the left-hand side of the first bound can be seen as
\changed{a ``weighted''} count of $(\leq c)$-chromatic colored configurations,
and that the second bound in Theorem~\ref{thm:CS_general} for~$n=m$ implies
Clarkson and Shor's original bound, $O((k+1)^c\cdot T_0(n/(k+1)))$,
for uncolored cases~\cite[Theorem~3.1]{cs-arscgII-89}.
Note that
Theorem~\ref{thm:CS_general} still implies the same Clarkson--Shor bound
if $T_0$ is linear.
Moreover, if the colors are assigned in a favorably uniform way,
we can derive similar bounds as well without assuming the convexity of~$T_0$.

\begin{theorem} \label{thm:CS_general_uniform}
 With the above notations, 
 suppose 
 $|S_i| \leq \rho \cdot \frac{n}{m}$ for every~$i\in K$ for some~$\rho \geq 1$.
 Then, for $1\leq c\leq d$,
 \[ \sum_{b=1}^c\sum_{j=0}^{m-b} \binom{m-b-j}{c-b}|\conf_{b,j}(S,\kappa)|
  \leq \binom{m}{c} T_0\left(c \rho \cdot \frac{n}{m} \right),\]
 and for $2\leq c\leq d$ and $0\leq k\leq \lfloor\frac{m}{c}\rfloor - 1$,
 \[ \sum_{j=0}^k |\conf_{c,j}(S,\kappa)| 
  = O\left((k+1)^c\cdot T_0\left( \rho \cdot \frac{n}{k+1} \right)\right).\]
\end{theorem}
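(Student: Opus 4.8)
The plan is to reuse, essentially verbatim, the probabilistic argument from the proof of Theorem~\ref{thm:CS_general}, with exactly one change: the single place where convexity of~$T_0$ was invoked---Jensen's inequality, used to bound $T_0$ on the random union~$S_R$ when the color classes~$S_i$ have very different sizes---is replaced by the elementary observation that, under the hypothesis $|S_i|\leq\rho\cdot\frac{n}{m}$, every $r$-subset $R'\subseteq K$ satisfies $|S_{R'}|=\sum_{i\in R'}|S_i|\leq r\rho\cdot\frac{n}{m}$, so by monotonicity of~$T_0$ we get $T_0(|S_{R'}|)\leq T_0(r\rho\cdot\frac{n}{m})$. There is no genuine obstacle here; the only point that requires care is to verify that in the proof of Theorem~\ref{thm:CS_general} convexity entered \emph{only} through Jensen applied to $|S_R|$, so that the uniformity bound is a clean drop-in replacement and everything downstream carries over unchanged.

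Concretely, fix an integer $1\leq r\leq m$ and let $R\subseteq K$ be a uniformly random $r$-subset of colors. Applying Lemma~\ref{lem:CS_general_ub} to each outcome of~$R$ and averaging gives $\binom{m}{r}\sum_{c=1}^{d}\E[|\conf_{c,0}(S_R,\kappa_R)|]\leq\sum_{R'\subseteq K,\,|R'|=r}T_0(|S_{R'}|)\leq\binom{m}{r}\,T_0(r\rho\cdot\frac{n}{m})$, and hence $\sum_{c=1}^{d}\E[|\conf_{c,0}(S_R,\kappa_R)|]\leq T_0(r\rho\cdot\frac{n}{m})$. This is the uniform analogue of the key upper bound $\sum_{c=1}^d\E[|\conf_{c,0}(S_R,\kappa_R)|]\leq\frac1m\sum_{i\in K}T_0(r|S_i|)$ derived inside the proof of Theorem~\ref{thm:CS_general}, and it is the only ingredient that needs the uniformity assumption.

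For the first inequality, take $r=c$. Lemma~\ref{lem:CS_general_lb} with $a=0$ gives $\binom{m}{c}\,\E[|\conf_{b,0}(S_R,\kappa_R)|]\geq\sum_{j=0}^{m-b}|\conf_{b,j}(S,\kappa)|\binom{m-b-j}{c-b}$ for each $1\leq b\leq c$; summing over $b$ from~$1$ to~$c$ (and discarding the nonnegative terms with $b>c$ on the left of the upper bound above) yields $\sum_{b=1}^{c}\sum_{j=0}^{m-b}\binom{m-b-j}{c-b}|\conf_{b,j}(S,\kappa)|\leq\binom{m}{c}\sum_{b=1}^c\E[|\conf_{b,0}(S_R,\kappa_R)|]\leq\binom{m}{c}\,T_0(c\rho\cdot\frac{n}{m})$, which is precisely the claimed bound.

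For the second inequality, fix $2\leq c\leq d$ and $0\leq k\leq\lfloor\frac{m}{c}\rfloor-1$, set $r=\lfloor\frac{m}{k+1}\rfloor$ (so $r\geq c$ since $k+1\leq\frac mc$), and reuse verbatim the purely combinatorial estimate from the proof of Theorem~\ref{thm:CS_general}, $\binom{m-c-j}{r-c}\big/\binom{m}{r}\geq\bigl(\frac{c-1}{c}\bigr)^{c}\cdot\frac{r(r-1)\cdots(r-c+1)}{m(m-1)\cdots(m-c+1)}$ for all $0\leq j\leq k$, which uses only $r=\lfloor\frac{m}{k+1}\rfloor$ and $\frac{k+1}{m}\leq\frac1c$, not convexity. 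Plugging this into Lemma~\ref{lem:CS_general_lb} with $a=0$ and combining with $\E[|\conf_{c,0}(S_R,\kappa_R)|]\leq\sum_{b=1}^{d}\E[|\conf_{b,0}(S_R,\kappa_R)|]\leq T_0(r\rho\cdot\frac{n}{m})$ gives $\sum_{j=0}^{k}|\conf_{c,j}(S,\kappa)|=O\!\left(\frac{m^{c}}{r(r-1)\cdots(r-c+1)}\,T_0(r\rho\cdot\frac{n}{m})\right)$. Finally, $r=\lfloor\frac{m}{k+1}\rfloor$ together with $k+1\leq\frac mc$ forces $r\geq\frac{m}{2(k+1)}$, so $\frac mr=O(k+1)$ and $r\rho\cdot\frac{n}{m}\leq\rho\cdot\frac{n}{k+1}$; using monotonicity of~$T_0$ once more, the right-hand side becomes $O\!\left((k+1)^{c}\,T_0(\rho\cdot\frac{n}{k+1})\right)$, completing the proof.
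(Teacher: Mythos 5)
Your proposal is correct and follows essentially the same approach as the paper: replace the Jensen step (which required convexity of $T_0$) with the direct bound $|S_{R'}|\le r\rho\cdot\frac{n}{m}$ and monotonicity of $T_0$, and then reuse the rest of the argument from Theorem~\ref{thm:CS_general} verbatim. The paper's own proof states exactly this replacement and then says "the theorem follows by almost the same arguments"; you have simply filled in those downstream details, including the binomial-ratio estimate and the observation that $r\ge\frac{m}{2(k+1)}$, all of which check out.
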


\begin{proof}
By Lemma~\ref{lem:CS_general_ub}, we have
\begin{align*}
 \sum_{c=1}^{d} \E[ |\conf_{c,0}(S_R, \kappa_R)| ]  
  &\leq \E[T_0(|S_R|)] \\
  &\leq T_0\left(r \cdot \rho \cdot \frac{n}{m} \right)
\end{align*}
since $T_0$ is nondecreasing and $|S_{R'}| \leq r\cdot \rho \frac{n}{m}$
by the assumption.
The theorem follows by almost the same arguments as in the proof of Theorem~\ref{thm:CS_general},
exploiting the lower bound shown in Lemma~\ref{lem:CS_general_lb}.
\end{proof}

\subsection{Colored \texorpdfstring{$j$-facets}{j-facets}}
Let $S$ be a set of $n$ points in~$\Real^d$.
\changed{
A \emph{$j$-facet} in~$S$ is an oriented $(d-1)$-simplex~$\sigma$ with its vertices chosen from~$S$
such that the open half-space on its positive side contains exactly $j$~points of~$S$.}
Among the first applications of the original Clarkson--Shor framework
was the tight upper bound $O(k^{\lceil d/2\rceil}n^{\lfloor d/2 \rfloor})$
on the number of $(\leq k)$-facets~\cite{cs-arscgII-89},
implying the same asymptotic bound on the $(\leq k)$-level
in the arrangement of hyperplanes via the point-to-hyperplane duality~\cite{e-acg-87}.
Many variants of $j$-facets have been discussed
in the literature;
we refer to a survey article by Wagner~\cite{w-kskf-08}.

\begin{figure}[htb]
\begin{center}
\includegraphics[width=.55\textwidth]{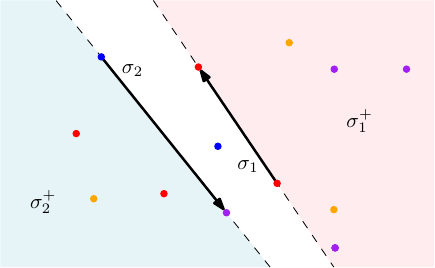}
\end{center}
\caption{Colored $j$-facets in colored points in~$\Plane$:
a $1$-chromatic $2$-facet~$\sigma_1$  
and a $2$-chromatic $2$-facet~$\sigma_2$ are shown, 
which choose half-planes~$\sigma_1^+$ and $\sigma_2^+$ on their right side.
}
\label{fig:j-facets}
\end{figure}

Now, we assume that the points in~$S$ are colored 
by a color assignment~$\kappa$ with $m$ colors~$K$.
For any subset $A \subset \Real^d$,
we shall say that $A$ \emph{intersects} a color $i \in K$,
\changed{if $A$ contains some $s\in S_i$.}
According to our notion of colored configurations,
a \emph{colored $j$-facet}~$\sigma$ in~$S$ with respect to~$\kappa$ 
is an oriented simplex defined by $d$~points $D_\sigma\subseteq S$
such that exactly $j$ colors,
but none of the defining colors in~$\kappa(D_\sigma)$,
are intersected by~$\sigma^+$.
See \figurename~\ref{fig:j-facets}.
Notice that
colored $j$-facets correspond to vertices of the arrangement
of $m$~lower/upper envelopes of hyperplanes dual to~$S_i$ for $i\in K$;
see \changed{Section~\ref{subsec:envelopes_hyperplanes}}
for \changed{a more detailed discussion.}

For $1\leq c\leq d$ and $j \geq 0$,
let $e_{c,j}(S)$ be the number of $c$-chromatic $j$-facets in \changed{$S$}
and $e_j(S) := \sum_{c} e_{c,j}(S)$ be the number of all $j$-facets.
Katoh and Tokuyama~\cite[Proposition 15]{kt-klcs-02} proved that $e_k(S) = O(k^{1/3}n)$ in~$\Real^2$
and $e_k(S) = O(k^{2/3}n^2)$ in~$\Real^3$
based on a generalized Lov\'{a}sz's Lemma.
Theorems~\ref{thm:CS_general} and~\ref{thm:CS_general_uniform} directly 
imply the following bounds.
\begin{corollary} \label{coro:j-facets_general} 
 For a set~$S$ of $n$~colored points in~$\Real^d$ with $m$ colors 
 and any~$0\leq k\leq \lfloor \frac{m}{d} \rfloor -1$,
 the number of $(\leq k)$-facets in~$S$ is
 $\sum_{j=0}^k e_j(S) = O(m^{\lfloor d/2 \rfloor -1} k^{\lceil d/2 \rceil} n^{\lfloor d/2 \rfloor})$.
 If there is a constant $\rho \geq 1$ such that 
 $|S_i| \leq \rho\cdot\frac{n}{m}$ for every~$i\in K$,
 then the bound is improved to
 $\sum_{j=0}^k e_j(S) = O(k^{\lceil d/2 \rceil} n^{\lfloor d/2 \rfloor})$.
\end{corollary}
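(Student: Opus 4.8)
The plan is to instantiate the colorful Clarkson--Shor machinery of Section~\ref{sec:framework} with the CS-structure whose objects are the points of~$S$ and whose configurations are the $j$-facets. Concretely, I would take $d$ to be the ambient dimension, let $\conf(S)$ be the set of all oriented $(d-1)$-simplices spanned by $d$~points of~$S$, and declare an object $s\in S$ to be in conflict with a simplex~$\sigma$ exactly when $s$ lies in the open half-space~$\sigma^+$. Then a weight-$j$ colored configuration is precisely a colored $j$-facet, and $|\conf_{c,j}(S,\kappa)| = e_{c,j}(S)$, so $\sum_{j=0}^k e_j(S) = \sum_{c=1}^d \sum_{j=0}^k |\conf_{c,j}(S,\kappa)|$.

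The only remaining ingredient is an upper bound function $T_0$ on the number of weight-$0$ (uncolored) configurations, i.e.\ the number of $0$-facets of an $n$-point set in~$\Real^d$. A $0$-facet is a facet of the convex hull (with the outward orientation), so the number of $0$-facets of $n$ points in~$\Real^d$ is $O(n^{\lfloor d/2\rfloor})$ by the Upper Bound Theorem; hence I may take $T_0(n') = c_d\, (n')^{\lfloor d/2\rfloor}$ for a suitable constant~$c_d$. This $T_0$ is convex and nondecreasing with $T_0(0)=0$, so Theorem~\ref{thm:CS_general} applies: for each fixed $c$ with $2\le c\le d$ and each $0\le k\le \lfloor m/c\rfloor - 1$,
\[
 \sum_{j=0}^k |\conf_{c,j}(S,\kappa)|
 = O\!\left(\frac{(k+1)^c}{m}\cdot T_0\!\left(\frac{mn}{k+1}\right)\right)
 = O\!\left((k+1)^c \cdot m^{\lfloor d/2\rfloor - 1}\, n^{\lfloor d/2\rfloor}\, (k+1)^{-\lfloor d/2\rfloor}\right).
\]
I would then observe that the dominant term in $\sum_{c=2}^d$ is $c=d$ — since $k+1\le m/d = O(m)$ the factor $(k+1)^c$ is increasing in~$c$ up to the constraint, and more importantly $c=d$ gives exponent $d - \lfloor d/2\rfloor = \lceil d/2\rceil$ on $(k+1)$, which beats $c<d$. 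For $c=1$ the $1$-chromatic $(\le k)$-facets are handled by the first (weighted-count) bound of Theorem~\ref{thm:CS_general}, which contributes $O(m^{d-1} T_0(dn)) = O(m^{d-1} n^{\lfloor d/2\rfloor})$; this is absorbed into the stated bound whenever $k = \Omega(1)$, and for the boundary case it can be checked directly. Summing over $c$ gives $\sum_{j=0}^k e_j(S) = O(m^{\lfloor d/2\rfloor - 1} (k+1)^{\lceil d/2\rceil} n^{\lfloor d/2\rfloor})$, which is the first claim; absorbing $(k+1)$ into $k$ when $k\ge 1$ (and noting the bound for $k=0$ counts hull facets, already $O(n^{\lfloor d/2\rfloor})$) yields the stated form with $k^{\lceil d/2\rceil}$.

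For the improved bound under the balance hypothesis $|S_i|\le \rho\cdot n/m$, I would run the identical argument through Theorem~\ref{thm:CS_general_uniform} instead: its second inequality gives $\sum_{j=0}^k |\conf_{c,j}(S,\kappa)| = O((k+1)^c\, T_0(\rho n/(k+1))) = O((k+1)^{c-\lfloor d/2\rfloor}\, n^{\lfloor d/2\rfloor})$ with the extra factor $m^{\lfloor d/2\rfloor-1}$ gone, and its first inequality disposes of the $1$-chromatic facets as $O(m\, T_0(d\rho n/m)) = O(m^{1-\lfloor d/2\rfloor} n^{\lfloor d/2\rfloor})$, which is $O(n^{\lfloor d/2\rfloor})$ for $d\ge 2$. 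Taking $c=d$ for the dominant term and summing yields $\sum_{j=0}^k e_j(S) = O((k+1)^{\lceil d/2\rceil} n^{\lfloor d/2\rfloor})$, as claimed. The one point that needs a little care — and the only place I expect any real work — is the bookkeeping that shows the $c=d$ summand dominates the other values of $c$ and that the $c=1$ term (governed by the first, weighted bound rather than the second) is genuinely subsumed; this is a routine comparison of exponents using $k+1 = O(m)$, but it is worth stating explicitly so the reader sees why no spurious $m$-dependence survives in the balanced case.
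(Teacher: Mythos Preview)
Your approach is exactly the paper's: set up the CS-structure for oriented simplices, take $T_0(n')=C_d(n')^{\lfloor d/2\rfloor}$ from the Upper Bound Theorem, and invoke Theorems~\ref{thm:CS_general} and~\ref{thm:CS_general_uniform}. The paper's own proof is a two-sentence reference to these theorems, so you have actually written out more than it does.

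There is one slip in your bookkeeping for the $1$-chromatic facets in the unbalanced case. You apply the first bound of Theorem~\ref{thm:CS_general} with $c=d$, obtaining $O(m^{d-1}T_0(dn))=O(m^{d-1}n^{\lfloor d/2\rfloor})$, and then claim this is absorbed into $O(m^{\lfloor d/2\rfloor-1}k^{\lceil d/2\rceil}n^{\lfloor d/2\rfloor})$. It is not: that would require $m^{\lceil d/2\rceil}\le k^{\lceil d/2\rceil}$, i.e.\ $k\ge m$, contradicting $k\le m/d-1$. The fix is the one you already use in the balanced case: apply the first bound with $c=1$, which gives $\sum_{j}|\conf_{1,j}(S,\kappa)|\le \binom{m}{1}\cdot\tfrac{1}{m}\sum_i T_0(|S_i|)=O(T_0(n))=O(n^{\lfloor d/2\rfloor})$, and this \emph{is} absorbed for $k\ge 1$. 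With that correction everything goes through.
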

\begin{proof}
The number of $0$-facets in $n$ (uncolored) points in~$\Real^d$
is exactly the number of facets of the convex hull of the $n$~points,
so we can take $T_0(n) = C_d n^{\lfloor d/2\rfloor}$ for
some constant~$C_d$ depending only on~$d$.
Hence, Theorems~\ref{thm:CS_general} and~\ref{thm:CS_general_uniform}
imply the claimed bounds.
\end{proof}

Note that for large $k$ with $k \geq \lfloor \frac{m}{d} \rfloor$,
the above bound on
the number of $(\leq k)$-facets is asymptotically the same as
the total number of configurations
(see Theorems~\ref{thm:CS_general} and~\ref{thm:CS_general_uniform}).

We continue our discussion for the case of $d=3$.
Lemma~\ref{lem:CS_general_ub} implies that
$e_0(S)$ counts the number of facets of the convex hull of~$S$ in~$\Real^3$.
Using this fact, we observe the following.
\begin{lemma} \label{lem:j-facets_sampling_ub} 
 Let $2\leq r\leq m$
 and $R\subseteq K$ be a random set of $r$ colors.
 It holds that
 \[ \binom{m}{r}\E[e_0(S_R)] \leq 2\binom{m-1}{r-1}n - 4\binom{m}{r}\]
 with equality when the points in $S$ are in convex and general
 position.
\end{lemma}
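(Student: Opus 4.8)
The plan is to compute $\E[e_0(S_R)]$ by summing over all $r$-subsets $R' \subseteq K$ the quantity $e_0(S_{R'})$, which by Lemma~\ref{lem:CS_general_ub} (applied with $T_0$ being the convex hull facet count in $\Real^3$) equals the number of facets of the convex hull of the point set $S_{R'} = \bigcup_{i\in R'} S_i$. First I would recall Euler's formula for a $3$-polytope: if $S'$ is a set of $N$ points in convex and general position in $\Real^3$, then its convex hull is a simplicial polytope with $N$ vertices, $3N-6$ edges, and exactly $2N-4$ facets; in general (not necessarily convex position) the number of hull facets is at most $2N - 4$. Thus $e_0(S_{R'}) \leq 2|S_{R'}| - 4$, with equality precisely when every point of $S_{R'}$ is a hull vertex and the hull is simplicial — which is guaranteed when $S$ is in convex and general position, since then any subset is too.

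Next I would sum this inequality over all $\binom{m}{r}$ choices of $R' \subseteq K$ with $|R'| = r$:
\[
 \binom{m}{r}\E[e_0(S_R)] = \sum_{R'\subseteq K,\, |R'|=r} e_0(S_{R'})
 \leq \sum_{R'\subseteq K,\, |R'|=r} \bigl(2|S_{R'}| - 4\bigr)
 = 2\sum_{R'\subseteq K,\, |R'|=r} |S_{R'}| - 4\binom{m}{r}.
\]
For the remaining sum, write $|S_{R'}| = \sum_{i\in R'} |S_i|$ and apply Lemma~\ref{lem:sums} with $A = \{|S_i| \mid i\in K\}$ (or, more carefully, noting $|S_{R'}| = \sum_{i\in R'}|S_i|$ so the double sum $\sum_{R'} \sum_{i\in R'} |S_i|$ is exactly the situation of Lemma~\ref{lem:sums}), obtaining $\sum_{R'\subseteq K,\,|R'|=r} |S_{R'}| = \binom{m-1}{r-1} \sum_{i\in K} |S_i| = \binom{m-1}{r-1} n$. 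Substituting this back yields
\[
 \binom{m}{r}\E[e_0(S_R)] \leq 2\binom{m-1}{r-1} n - 4\binom{m}{r},
\]
which is the claimed bound.

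Finally, for the equality statement: if the points of $S$ are in convex and general position in $\Real^3$, then for every $R'$ the set $S_{R'}$ is also in convex and general position, so $e_0(S_{R'}) = 2|S_{R'}| - 4$ exactly, and every inequality above becomes an equality. The main obstacle — though a mild one — is just making sure the combinatorial identity $\sum_{R'} |S_{R'}| = \binom{m-1}{r-1} n$ is applied correctly: one must be slightly careful that a color $i$ with $|S_i| = 0$ contributes nothing and that the "finite set of real numbers" in Lemma~\ref{lem:sums} should really be read as a multiset indexed by $K$; either reformulate as the double sum $\sum_{R'\subseteq K,|R'|=r}\sum_{i\in R'}|S_i|$ and count, for each fixed $i\in K$, the $\binom{m-1}{r-1}$ subsets $R'$ containing $i$, which is exactly the proof of Lemma~\ref{lem:sums}. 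Everything else is the standard Euler-formula bound on simplicial $3$-polytopes plus linearity.
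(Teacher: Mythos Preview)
Your proposal is correct and follows essentially the same route as the paper: express $\binom{m}{r}\E[e_0(S_R)]$ as $\sum_{R'} e_0(S_{R'})$, bound each summand by $2|S_{R'}|-4$ via Euler's formula for simplicial $3$-polytopes, and then evaluate $\sum_{R'}|S_{R'}|$ using the double-counting identity of Lemma~\ref{lem:sums}. The one point the paper treats more carefully than you do is the degenerate range $|S_{R'}|\in\{2,3\}$, where the convex hull is not a $3$-polytope and Euler's formula does not literally apply; the paper checks directly that $e_0=0$ and $e_0=2$ in these cases (matching $2N-4$), and notes that $r\geq 2$ together with nonempty color classes guarantees $|S_{R'}|\geq 2$, so the inequality $e_0(S_{R'})\leq 2|S_{R'}|-4$ is valid throughout.
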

\begin{proof}
Recall that $e_0(S_R)$ counts the facets of the convex hull of~$S_R$.
By considering all possible $r$-subsets~$R' \subseteq  K$,
we have
\[
 \binom{m}{r}\E[e_0(S_R)] = \sum_{R'\subseteq K, |R'| = r} e_0(S_{R'}).
\]
Let $N_{R'} := |S_{R'}|$.
In~$\Real^3$, we have $e_0(R') \leq 2N_{R'} - 4$
if $N_{R'} \geq 4$,
and the equality holds if the points in $S_{R'}$
are in convex and general position~\cite[Theorem~6.11]{e-acg-87}.
The cases of $N_{R'} < 4$ are handled as follows.
If $N_{R'} = 3$, then we have $e_0(R') = 2$
since the only triangle defined by the three points determines
exactly two $0$-facets;
if $N_{R'} \leq 2$, then we have $e_0(R') = 0$.
Hence, for $r\geq 2$, we have $N_{R'} \geq 2$ for every $r$-subset $R'\subseteq K$ and
it thus holds that $e_0(S_{R'}) \leq 2N_{R'}-4$.
Thus, we have
\begin{align*}
\binom{m}{r}\E[e_0(S_R)] & \leq \sum_{R' \subseteq K, |R'| = r}
    \left(2 N_{R'} - 4\right) \\
   & = 2\sum_{R'} N_{R'} - 4\binom{m}{r} \\
   & = 2\binom{m-1}{r-1}n - 4\binom{m}{r}.
\end{align*}
The last derivation is due to Lemma~\ref{lem:sums}.
Moreover, the equality holds if the points in $S$ are in convex and general position,
as discussed above.
\end{proof}

Now, suppose that $S$ is in convex and general position.
Then, Lemmas~\ref{lem:CS_general_lb} and~\ref{lem:j-facets_sampling_ub}
provide two different ways of exactly counting $\binom{m}{r} \E[e_0(S_R)]$
for $2\leq r \leq m$, 
resulting in:
\begin{theorem} \label{thm:j-facets_convex_3d} 
 Let $S \subset \Real^3$ be a set of $n$ points in convex and general position,
 each of which is colored by one of $m$ colors.
 Then, it holds that for each $0\leq j \leq m-2$
  \[ e_{3,j}(S) + \sum_{i=0}^j e_{2,i}(S) + \sum_{i=0}^j (j-i+1) e_{1,i}(S) = 2(j+1)(n-j-2).\]
\end{theorem}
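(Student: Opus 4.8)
The plan is to combine the two exact evaluations of $\binom{m}{r}\E[e_0(S_R)]$ mentioned just before the statement into a single family of identities (one per sampling size $r$), lift it to a polynomial identity, and then read off the partial-sum relation by a change of variable. For the first part, fix $2\le r\le m$ and a uniform random $r$-subset $R\subseteq K$. Since $S$ is in convex and general position, Lemma~\ref{lem:j-facets_sampling_ub} gives $\binom{m}{r}\E[e_0(S_R)] = 2\binom{m-1}{r-1}n - 4\binom{m}{r}$ \emph{with equality}. On the other hand, $e_0(S_R)=\sum_{c=1}^{3}|\conf_{c,0}(S_R,\kappa_R)|$ by Lemma~\ref{lem:CS_general_ub} (an equality, since $\conf_0(S_R)$ is the disjoint union of the $\conf_{c,0}(S_R,\kappa_R)$ over $c$), and Lemma~\ref{lem:CS_general_lb} with $a=0$ applies with equality because every $j$-facet of points in general position is spanned by a unique triple, giving $\binom{m}{r}\E[|\conf_{c,0}(S_R,\kappa_R)|]=\sum_{j\ge 0}e_{c,j}(S)\binom{m-c-j}{r-c}$. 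Comparing the two, for every $2\le r\le m$,
\[ \sum_{c=1}^{3}\sum_{j\ge 0} e_{c,j}(S)\binom{m-c-j}{r-c} \;=\; 2n\binom{m-1}{r-1} - 4\binom{m}{r}. \]

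Next I would lift this to an algebraic identity. Set $P(x) := \sum_{c=1}^{3}\sum_{j\ge 0} e_{c,j}(S)\,x^{c}(1+x)^{m-c-j}$ and $Q(x) := 2n\,x(1+x)^{m-1} - 4(1+x)^{m}$; both are polynomials of degree at most $m$, and the displayed identity says precisely that $[x^{r}]P = [x^{r}]Q$ for $r=2,\dots,m$. Hence $P(x)-Q(x)=\alpha+\beta x$ for constants $\alpha,\beta$ read off from the $x^{0}$ and $x^{1}$ coefficients (one finds $\alpha=4$ and $\beta=\sum_{j}e_{1,j}(S)-2n+4m$, but the exact values will not matter). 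Dividing by $(1+x)^{m}$ and substituting $y := x/(1+x)$, so that $1-y = 1/(1+x)$, collapses each summand $x^{c}(1+x)^{m-c-j}/(1+x)^{m}$ to $y^{c}(1-y)^{j}$ and turns $Q(x)/(1+x)^{m}$ into $2ny-4$, giving
\[ \sum_{c=1}^{3}\sum_{j\ge 0} e_{c,j}(S)\,y^{c}(1-y)^{j} \;=\; 2ny - 4 + 4(1-y)^{m} + \beta\,y(1-y)^{m-1}. \]

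Finally I would divide both sides by $y^{3}$ and expand in powers of $t:=1-y$. On the left, $y^{c-3}=(1-t)^{c-3}$, and expanding $(1-t)^{-1}=\sum_{k\ge 0}t^{k}$ and $(1-t)^{-2}=\sum_{k\ge 0}(k+1)t^{k}$ shows that the coefficient of $t^{j}$ is exactly $e_{3,j}(S) + \sum_{i=0}^{j}e_{2,i}(S) + \sum_{i=0}^{j}(j-i+1)e_{1,i}(S)$: dividing by $y$ once turns the $2$-chromatic counts into their partial sums, and dividing by $y$ twice turns the $1$-chromatic counts into the $(j-i+1)$-weighted partial sums. On the right, $2ny-4+4(1-y)^{m}+\beta y(1-y)^{m-1} = (2n-4) - 2nt + \beta t^{m-1} + (4-\beta)t^{m}$, so multiplying by $(1-t)^{-3}=\sum_{k\ge 0}\binom{k+2}{2}t^{k}$ has $t^{j}$-coefficient $(2n-4)\binom{j+2}{2} - 2n\binom{j+1}{2} = 2(j+1)(n-j-2)$ for every $j\le m-2$, the two tail terms $\beta t^{m-1}$ and $(4-\beta)t^{m}$ not contributing. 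Equating coefficients of $t^{j}$ for $0\le j\le m-2$ yields the theorem.

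The one place needing care is this last step: lining up the substitution $y=x/(1+x)$ with the division by $y^{3}$ so that repeated division by $y$ reproduces the nested partial sums with the correct multiplicities, and tracking the low-degree error $\alpha+\beta x$ (equivalently the tail $4(1-y)^{m}+\beta y(1-y)^{m-1}$) to confirm that it only perturbs the coefficients of $t^{m-1}$ and $t^{m}$ — which is precisely why the statement is restricted to $j\le m-2$ and need not hold at $j=m-1$. One should also check that the two cited lemmas are applied with equality rather than inequality, which is where the convex-and-general-position hypothesis on $S$ is used.
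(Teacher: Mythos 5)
Your proof is correct, and it takes a genuinely different route from the paper. Both arguments start from the same $m-1$ identities
\[
\sum_{c=1}^{3}\sum_{j\ge 0} e_{c,j}(S)\binom{m-c-j}{r-c} = 2n\binom{m-1}{r-1} - 4\binom{m}{r}, \qquad 2\le r\le m,
\]
obtained by combining Lemmas~\ref{lem:CS_general_lb} and~\ref{lem:j-facets_sampling_ub} with equality under the convex-and-general-position hypothesis. From there the paper treats these as a triangular linear system in the unknowns $e_{3,0},\dots,e_{3,m-3}$, argues that the Pascal matrix has full rank so the solution is unique, and then \emph{verifies} the claimed closed form by a sequence of binomial-coefficient manipulations. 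You instead package the identities into a single polynomial identity $P(x)-Q(x)=\alpha+\beta x$ with $\deg\le m$, normalize by $(1+x)^m$ and substitute $y=x/(1+x)$ so that the generating kernel collapses to $y^c(1-y)^j$, and then read off the coefficient of $t^j$ (with $t=1-y$) after dividing by $y^3=(1-t)^3$. This has two advantages: the nested partial sums with multiplicities $1$ and $j-i+1$ emerge automatically from the $(1-t)^{-1}$ and $(1-t)^{-2}$ expansions rather than being guessed and checked, and the restriction $j\le m-2$ is explained transparently by the fact that the unconstrained residue $\alpha+\beta x$ only perturbs the top two coefficients $t^{m-1},t^m$. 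I checked the arithmetic: $\alpha=4$, $\beta=\sum_j e_{1,j}(S)-2n+4m$, and
\[
(2n-4)\binom{j+2}{2} - 2n\binom{j+1}{2} = 2(j+1)(n-j-2),
\]
so the coefficient extraction gives exactly the claimed right-hand side. The paper's route is more hands-on but requires knowing the answer in advance; yours derives it. Both are valid, and the generating-function argument is arguably cleaner.
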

\begin{proof}
Let $R$ be a random $r$-subset of $K$ with $2 \leq r \leq m$.
Throughout this proof, we write $e_{c,j} = e_{c,j}(S)$ for simplicity.
Since the points in~$S$ are in general position,
Lemma~\ref{lem:CS_general_lb} implies the following equation.
\[ \binom{m}{r} \E[e_0(S_R)] = \binom{m}{r} \sum_{c=1}^3 \E[e_{c,0}(S_R)] =
     \sum_{c=1}^{3} \sum_{j=0}^{m-c} \binom{m-c-j}{r-c} e_{c,j}.\]
Together with Lemma~\ref{lem:j-facets_sampling_ub},
we obtain the following $m-1$ equations:
\[ \sum_{c=1}^{3} \sum_{j=0}^{m-c} \binom{m-c-j}{r-c} e_{c,j}
     = 2\binom{m-1}{r-1} n - 4\binom{m}{r}\]
for $2\leq r \leq m$,
since the points in~$S$ are in convex and general position.

Then, the one for $r=2$ writes
\[ \sum_{i=0}^{m-2} e_{2,i} + \sum_{i=0}^{m-2} (m-1-i) e_{1,i}
   = 2\binom{m-1}{1}n - 4\binom{m}{2} = 2(m-1)(n-m),\]
which is the claimed equation for $j=m-2$, since $e_{3,m-2} = 0$.

Rearranging the other $m-2$ equations for $3\leq r \leq m$, we have
\[ \sum_{j=0}^{m-3} \binom{m-3-j}{r-3} e_{3,j} = 2\binom{m-1}{r-1} n - 4\binom{m}{r} -
      \sum_{j=0}^{m-2} \binom{m-2-j}{r-2} e_{2,j} - \sum_{j=0}^{m-1} \binom{m-1-j}{r-1} e_{1,j}.\]
Regard these equations as a system of linear equations for $m-2$ variables $e_{3,0}, \ldots, e_{3,m-3}$
with the $2(m-2)$ given (but unknown) values $e_{2,j}, e_{1,j}$ for $0\leq j \leq m-3$.
Then, the matrix $A$ associated with the system is triangular, forming the Pascal's triangle
by the binomial coefficients.
Hence, $A$ has full rank and the system of equations admits a unique solution.

The rest of the proof is done by verifying the solution:
\[ e_{3,j} = 2(j+1)(n-j-2) - \sum_{i=0}^j e_{2,i} - \sum_{i=0}^j (j-i+1) e_{1,i}\]
for $0\leq j \leq m-3$.
\begin{proof}[Verification of the solution]
First, observe that
\begin{align*}
 \sum_{j=0}^{m-3}\binom{m-3-j}{r-3}\cdot 2(j+1)(n-j-2)
   & = 2\sum_{j=0}^{m-3}\binom{m-3-j}{r-3}\binom{j+1}{1} n
       -4 \sum_{j=0}^{m-3}\binom{m-3-j}{r-3}\binom{j+2}{2} \\
   & = 2\binom{m-1}{r-1}n - 4\binom{m}{r}.
\end{align*}
The last step uses a well-known identity of binomial coefficients.
See~\cite[Table 169]{gkp-cm-89}.

Secondly, we verify that
\[ \sum_{j=0}^{m-3} \left(\binom{m-3-j}{r-3} \cdot \sum_{i=0}^j e_{2,i}\right) =
   \sum_{j=0}^{m-2} \binom{m-2-j}{r-2} e_{2,j}.\]
By exchanging variables, the left-hand side is equal to
\begin{align*}
 \sum_{j=0}^{m-3} \sum_{i=0}^j \binom{m-3-j}{r-3} e_{2,i} &=
    \sum_{i=0}^{m-3} \sum_{j=i}^{m-3} \binom{m-3-j}{r-3} e_{2,i} \\
   &= \sum_{i=0}^{m-3}\left(\sum_{j=0}^{m-3-i} \binom{j}{r-3}\right) e_{2,i} \\
   &= \sum_{i=0}^{m-3} \binom{m-2-i}{r-2} e_{2,i}
   = \sum_{j=0}^{m-2} \binom{m-2-j}{r-2} e_{2,j},
\end{align*}
since $\sum_{j=0}^{a} \binom{j}{b} = \binom{a+1}{b}$ for any integers $a$ and $b$,
and $\binom{0}{r-2} = 0$ for any $r \geq 3$.

Lastly, we verify that
\[ \sum_{j=0}^{m-3} \left( \binom{m-3-j}{r-3} \cdot \sum_{i=0}^j (j-i+1)e_{1,i} \right)
   = \sum_{j=0}^{m-1} \binom{m-1-j}{r-1} e_{1,j}.\]
Similarly, we can derive that the left-hand side is equal to
\begin{align*}
 & \sum_{j=0}^{m-3} \left( \binom{m-3-j}{r-3} \cdot \sum_{i=0}^j (j-i+1)e_{1,i} \right) \\
 = &  \sum_{j=0}^{m-3} \sum_{i=0}^j \binom{m-3-j}{r-3}(j-i+1) e_{1,i} \\
 = & \sum_{i=0}^{m-3} \left(\sum_{j=i}^{m-3} (j-i+1)\binom{m-3-j}{r-3}\right) e_{1,i} \\
 = & \sum_{i=0}^{m-3} \left(\sum_{j=0}^{m-3-i} (m-2-i-j)\binom{j}{r-3}\right) e_{1,i} \\
 = & \sum_{i=0}^{m-3} \left(\sum_{j=0}^{m-2-i} \binom{m-2-i-j}{1}\binom{j}{r-3}\right) e_{1,i} \\
 = & \sum_{i=0}^{m-3} \binom{m-1-i}{r-1} e_{1,i}
 =  \sum_{j=0}^{m-1} \binom{m-1-j}{r-1} e_{1,j},
\end{align*}
since $\binom{0}{r-1} = \binom{1}{r-1} = 0$, as $r \geq 3$.
\end{proof}
This completes the proof of the theorem.
\end{proof}

Note that Theorem~\ref{thm:j-facets_convex_3d} 
reveals an 
exact equation on $e_j(S) = e_{3,j}(S) + e_{2,j}(S) + e_{1,j}(S)$
for each~$0\leq j\leq m-2$.
If $m=n$, that is, $|S_i| = 1$ for all $i\in K$,
we have $e_{1,j}(S) = e_{2,j}(S) = 0$ for every $j$,
so the equality~$e_j(S) = 2(j+1)(n-j-2)$ holds.
This exact number for the case of~$m=n$ was proved earlier
by Clarkson and Shor~\cite[Theorem 3.5]{cs-arscgII-89}.

\subsection{Euclidean color Voronoi diagrams}
Suppose that $S$ consists of $n$~points in general position in~$\Plane$,
with a given color assignment~$\kappa \colon S \to K =\{1,\ldots, m\}$,
and $\delta_s(x) = \|x-s\|_2$ is the Euclidean distance for each $s\in S$ and any $x\in \Plane$.
Consider $\CVD_k(S)$ and $\mCVD_k(S)$ for $1\leq k\leq m-1$ in this setting.

\begin{figure}[tbh]
\begin{center}
\includegraphics[width=.84\textwidth]{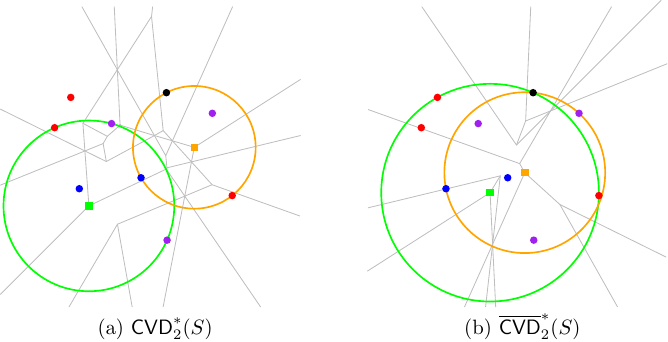}
\end{center}
\caption{Selected new vertices (small squares) in~$\CVD^*_2(S)$ and~$\mCVD^*_2(S)$
 and their corresponding circles.
 Green vertices are $2$-chromatic, while orange ones are $3$-chromatic.
 Observe that exactly one color is intersected by
 the interior~$\hat{C}$ of each circle~$C$ in (a)
 and the exterior~$\overline{C}$ of each circle~$C$ in (b).
 }
\label{fig:cvd_circles}
\end{figure}

We consider all circles through any three points in~$S$ with no regards of colors
and let $\conf(S)$ and $\overline{\conf}(S)$ be
the sets of the interiors and exteriors, respectively, of these circles.
Also, define two conflict relations $\chi \subseteq S \times \conf(S)$
and $\bar{\chi} \subseteq S \times \overline{\conf}(S)$
to be the inclusion relation.
We then consider colored configurations $\conf(S,\kappa)$ and $\overline{\conf}(S,\kappa)$
with respect to the given color assignment~$\kappa$.
Observe that
each colored configuration of weight~$j$ in~$\conf(S,\kappa)$ or in~$\overline{\conf}(S,\kappa)$
corresponds to a new vertex of~$\CVD^*_{j+1}(S)$ or of~$\mCVD^*_{j+1}(S)$, respectively,
by Lemma~\ref{lem:CVD_v} and the discussions in Section~\ref{sec:cvda}\changed{,
see} \figurename~\ref{fig:cvd_circles} illustrating the case of~$j=1$.
Therefore,
for each $1\leq c\leq 3$ and $0\leq j \leq m-c$,
we have $v_{c,j}(S) = |\conf_{c,j}(S,\kappa)|$
and $\bar{v}_{c,j}(S) = |\overline{\conf}_{c,j}(S,\kappa)|$.

Now, consider the well-known lifting that maps points $p=(x,y)$ in~$\Plane$ onto
the unit paraboloid $U = \{z = x^2 + y^2\}$ in~$\Real^{3}$:
$p=(x,y) \mapsto \lift{p} = (x, y, x^2 + y^2)$.
Let $\lift{S} = \{ \lift{s} \mid s\in S\}$ be the set of lifted colored points in~$\Real^3$.
(The horizontal plane $\{z = 0\}$ is identified \changed{as}
the original plane~$\Plane$.)
Consider colored $j$-facets in~$\lift{S}$ as in the previous section,
and recall that $e_{c,j}(\lift{S})$
denotes the number of $c$-chromatic $j$-facets in~$\lift{S}$.
We then observe the following.
\begin{lemma} \label{lem:v-e} 
 For $1\leq c\leq 3$ and $0\leq j\leq m-c$, we have
  $v_{c,j}(S) + \bar{v}_{c,j}(S) = e_{c,j}(\lift{S})$.
\end{lemma}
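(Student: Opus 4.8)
The plan is to establish a bijection, separately for the ``minimal'' and ``maximal'' sides, between the new $c$-chromatic vertices counted by $v_{c,j}(S)$ (resp.\ $\bar v_{c,j}(S)$) and a suitable subset of the $c$-chromatic $j$-facets of $\lift S$, and then observe that these two subsets partition the set of all $c$-chromatic $j$-facets of $\lift S$. Concretely, I will exploit the classical lifting correspondence: three points $s,s',s''\in S$ lie on a circle $C$ in $\Plane$ if and only if their lifts $\lift s,\lift{s'},\lift{s''}$ span a plane $h$ in $\Real^3$, and a fourth point $t\in S$ lies inside $C$ (resp.\ outside $C$) if and only if $\lift t$ lies below $h$ (resp.\ above $h$) with respect to the vertical direction. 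Thus the oriented simplex on $\{\lift s,\lift{s'},\lift{s''}\}$ whose positive side is the lower halfspace ``sees'' exactly the points whose originals are inside $C$, and the oppositely oriented simplex (positive side $=$ upper halfspace) sees exactly those outside $C$.

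First I would translate the weight condition into the facet condition. By the discussion in Section~\ref{sec:cvda} together with Lemma~\ref{lem:CVD_v}, a new $c$-chromatic vertex of $\CVD^*_{j+1}(S)$ corresponds to a weight-$j$ colored configuration in $\conf(S,\kappa)$, i.e.\ a circle $C$ through three points $s,s',s''$ whose interior $\hat C$ intersects exactly $j$ colors of $K$, none of them among $\{\kappa(s),\kappa(s'),\kappa(s'')\}$; this is exactly the already-established equality $v_{c,j}(S)=|\conf_{c,j}(S,\kappa)|$. Under the lifting, ``$\hat C$ intersects color $i$'' becomes ``the open lower halfspace bounded by $h$ contains some $\lift t$ with $\kappa(t)=i$'', which is precisely the statement that the downward-oriented simplex on $\{\lift s,\lift{s'},\lift{s''}\}$, as a $j$-facet of $\lift S$, has $j$ intersected colors none of which is a defining color --- that is, it is a $c$-chromatic colored $j$-facet of $\lift S$. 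Symmetrically, the maximal side: $\bar v_{c,j}(S)=|\overline\conf_{c,j}(S,\kappa)|$ corresponds under the lifting to the upward-oriented simplex on the same triple being a $c$-chromatic colored $j$-facet of $\lift S$.

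Next I would check that these two families exhaust, without overlap, all $c$-chromatic colored $j$-facets of $\lift S$. Since $S$ (hence $\lift S$) is in general position, every colored $j$-facet of $\lift S$ is spanned by three lifted points $\lift s,\lift{s'},\lift{s''}$, and the affine plane $h$ through them is non-vertical (as $U$ is strictly convex), so it has a well-defined ``below'' and ``above''. Exactly one of the two orientations of the simplex has its positive open halfspace equal to the lower halfspace and the other to the upper halfspace; the first is accounted for on the minimal side and the second on the maximal side, giving a partition. Summing, $e_{c,j}(\lift S)=v_{c,j}(S)+\bar v_{c,j}(S)$, as claimed.

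The main obstacle --- though a mild one --- will be the bookkeeping around orientation conventions and degeneracies at ``infinity'': one must verify that the circle-interior/exterior dichotomy matches the lower/upper-halfspace dichotomy on the nose (including that a point of $S$ not on $C$ is never ``on'' $h$, which is exactly the general-position assumption on $\Gamma$), and that the range $0\le j\le m-c$ lines up on both sides so that no $j$-facet is double-counted or omitted. Once the lifting dictionary is stated carefully, each of these checks is routine, and the identification $v_{c,j}=|\conf_{c,j}|$, $\bar v_{c,j}=|\overline\conf_{c,j}|$ already in hand does most of the work.
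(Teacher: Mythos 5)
Your proposal uses exactly the same lifting argument as the paper: downward-oriented $j$-facets of $\lift{S}$ (positive side $=$ lower half-space) correspond to circle interiors in $\conf_{c,j}(S,\kappa)$, upward-oriented ones to circle exteriors in $\overline{\conf}_{c,j}(S,\kappa)$, and these two families partition all $c$-chromatic $j$-facets, giving the sum. The only cosmetic difference is that you spell out the orientation/partition bookkeeping a bit more explicitly, but the substance is identical.
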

\begin{proof}
Regard the $z$-direction in~$\Real^3$ as the \emph{vertical} direction
and call each $j$-facet in~$\lift{S}$ \emph{downward} or \emph{upward}
according to the half-space it chooses.
Since points in~$S$ are in general position,
no three points in~$\lift{S}$ lie on a common vertical plane.

Consider any downward $c$-chromatic $j$-facet in~$\lift{S}$ and
its corresponding half-space $h^-$.
Then, the orthogonal projection of the intersection $h^- \cap U$ onto $\Real^2 = \{z=0\}$
is the interior of a circle $C$ such that
there are three points in $S$ from $c$ different colors lying on $C$
and the interior of $C$ intersects exactly $j$ colors.
That is, the interior of $C$ is a member of the set $\conf_{c,j}(S, \kappa)$.
Since the lifting is bijective,
we can establish a one-to-one correspondence between $\conf_{c,j}(S,\kappa)$
and the set of \emph{downward} $c$-chromatic $j$-facets in~$\lift{S}$.

Next, consider any upward $c$-chromatic $j$-facet in~$\lift{S}$ and
its corresponding half-space $h^+$.
Then, the orthogonal projection of the intersection $h^+\cap U$ onto $\Real^2 = \{z=0\}$
is the exterior of a circle $C$ such that
there are three points in $S$ from $c$ different colors lying on $C$
and the exterior of $C$ intersects exactly $j$ colors.
This way, there is a one-to-one correspondence between $\overline{\conf}_{c,j}(S,\kappa)$
and the set of \emph{upward} $c$-chromatic $j$-facets in~$\lift{S}$.
Hence, the lemma follows.
\end{proof}

Since $\lift{S}$ is in convex and general position,
Theorem~\ref{thm:j-facets_convex_3d},
together with Lemmas~\ref{lem:CVD_v} and~\ref{lem:v-e}, implies an exact equation
on the number of vertices in $\CVD_k(S)$ and $\mCVD_k(S)$.
\begin{theorem} \label{thm:kcvd_Euclidean} 
 Let $S$ be a set of $n$~points with $m$~colors in general position in the Euclidean plane,
 and $1\leq k \leq m-1$.
 The total number of vertices in $\CVD_k(S)$ and $\mCVD_k(S)$ is exactly
  \[ 4k(n-k)-2n - 2\sum_{i=0}^{k-2} e_{2,i}(\lift{S}) - \sum_{i=0}^{k-1}(2k-2i-1)e_{1,i}(\lift{S}).\]
\end{theorem}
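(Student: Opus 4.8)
The plan is to reduce the vertex count to a count of colored $j$-facets of the paraboloid lift $\lift{S}$ via the structural results already proved, and then read off the value from the exact three-dimensional identity of Theorem~\ref{thm:j-facets_convex_3d}.

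First I would combine Lemma~\ref{lem:CVD_v}(iii) with Lemma~\ref{lem:v-e}. By Lemma~\ref{lem:CVD_v}(iii), the total number of vertices of $\CVD_k(S)$ and $\mCVD_k(S)$ equals
\[ \bigl(v_{3,k-1}(S) + v_{3,k-2}(S) + v_{2,k-1}(S)\bigr) + \bigl(\bar{v}_{3,k-1}(S) + \bar{v}_{3,k-2}(S) + \bar{v}_{2,k-1}(S)\bigr), \]
under the convention $v_{c,j} = \bar{v}_{c,j} = 0$ for $j < 0$. Applying Lemma~\ref{lem:v-e} to each pair, $v_{c,j}(S) + \bar{v}_{c,j}(S) = e_{c,j}(\lift{S})$, this becomes $e_{3,k-1}(\lift{S}) + e_{3,k-2}(\lift{S}) + e_{2,k-1}(\lift{S})$. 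Here one should note the boundary behaviour: when $k = m-1$ the term $e_{3,k-1}(\lift{S}) = e_{3,m-2}(\lift{S})$ vanishes, since a $3$-chromatic $j$-facet has three distinct defining colors and hence $j \le m-3$, which matches $v_{3,m-2} = \bar{v}_{3,m-2} = 0$; and terms with a negative weight index are zero by convention. So for every $1 \le k \le m-1$ the total is exactly $e_{3,k-1}(\lift{S}) + e_{3,k-2}(\lift{S}) + e_{2,k-1}(\lift{S})$.

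Next I would invoke Theorem~\ref{thm:j-facets_convex_3d} for $\lift{S}$, which is in convex and general position (the paraboloid lift of a generic planar point set), writing $e_{c,i} := e_{c,i}(\lift{S})$. Instantiating the identity
\[ e_{3,j} + \sum_{i=0}^{j} e_{2,i} + \sum_{i=0}^{j}(j-i+1)\,e_{1,i} = 2(j+1)(n-j-2) \]
at $j = k-1$ and at $j = k-2$ (both are valid since $k-1 \le m-2$; for $k=1$ the $j=-1$ instance is simply the trivial identity $0 = 0$) and adding the two equations, I isolate $e_{3,k-1} + e_{3,k-2} + e_{2,k-1}$. In the resulting expression the two prefix sums of $e_{2,\cdot}$ collapse — after moving $e_{2,k-1}$ to the left, the remaining $e_2$-contribution is $-2\sum_{i=0}^{k-2} e_{2,i}$ — the coefficient of $e_{1,i}$ for $i \le k-2$ becomes $(k-i)+(k-1-i) = 2k-2i-1$ while the coefficient of $e_{1,k-1}$ is $k-(k-1)=1 = 2k-2(k-1)-1$, giving a total $e_1$-contribution of $\sum_{i=0}^{k-1}(2k-2i-1)\,e_{1,i}$; and the right-hand side $2k(n-k-1) + 2(k-1)(n-k)$ simplifies to $4k(n-k) - 2n$. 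Substituting these yields precisely the claimed formula.

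The argument is essentially bookkeeping layered on top of Theorem~\ref{thm:j-facets_convex_3d} and Lemmas~\ref{lem:CVD_v} and~\ref{lem:v-e}, so I do not anticipate a genuine obstacle. The points requiring care are the boundary behaviour of the weight indices (negative indices, and the index $m-2$ at $c=3$ when $k=m-1$), and staying consistent about whether an index runs over weight~$j$ or order $k = j+1$. I would verify the two telescoping simplifications and the identity $2k(n-k-1) + 2(k-1)(n-k) = 4k(n-k) - 2n$ explicitly to rule out an off-by-one slip.
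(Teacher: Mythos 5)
Your proposal is correct and takes essentially the same route as the paper's proof: reduce the total vertex count to $e_{3,k-1}(\lift{S}) + e_{3,k-2}(\lift{S}) + e_{2,k-1}(\lift{S})$ via Lemmas~\ref{lem:CVD_v}(iii) and~\ref{lem:v-e}, then instantiate Theorem~\ref{thm:j-facets_convex_3d} at $j=k-1$ and $j=k-2$ and add. The coefficient bookkeeping, the identity $2k(n-k-1)+2(k-1)(n-k)=4k(n-k)-2n$, and the boundary handling (vanishing terms for $j<0$, and $e_{3,m-2}=0$ when $k=m-1$) all match; the only cosmetic difference is that the paper treats $k=1$ as a separate base case while you fold it in by observing the $j=-1$ instance is vacuous.
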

\begin{proof}
For each $1\leq k \leq m-1$, let $v_k$ be the number of vertices in $\CVD_k(S)$
and $\bar{v}_k$ be the number of vertices in $\mCVD_k(S)$.
By Lemmas~\ref{lem:CVD_v} and~\ref{lem:v-e}, it holds that
\[ v_k + \bar{v}_k = e_{3,k-1}(\lift{S}) + e_{3,k-2}(\lift{S}) + e_{2,k-1}(\lift{S}), \]
where $e_{c,j} = 0$ for $j< 0$.

It is easy to see that
 \[ v_1 + \bar{v}_1 = e_{3,0}(\lift{S}) + e_{2,0}(\lift{S}) = 2n-4 - e_{1,0}(\lift{S}),\]
by Theorem~\ref{thm:j-facets_convex_3d}.

For each $2\leq k \leq m-1$,
Theorem~\ref{thm:j-facets_convex_3d} implies that
\begin{align*}
 v_k + \bar{v}_k
  &= 2k(n-k-1) + 2(k-1)(n-k) - 2\sum_{i=0}^{k-2} e_{2,i}(\lift{S}) - \sum_{i=0}^{k-1}(2k-2i-1)e_{1,i}(\lift{S}) \\
  &= 4k(n-k) - 2n - 2\sum_{i=0}^{k-2} e_{2,i}(\lift{S}) - \sum_{i=0}^{k-1}(2k-2i-1)e_{1,i}(\lift{S}).
\end{align*}
Hence, for every $1\leq k \leq m-1$,
the claimed equation holds.
\end{proof}

Theorem~\ref{thm:kcvd_Euclidean} implies
the $O(k(n-k))$ bound on the complexity of $\CVD^*_k(S)$ and $\mCVD^*_k(S)$
by Lemma~\ref{lem:CVD_v}.
An interesting special case of the above result is the following.
\begin{corollary}
 Given a set $S$ of $n$ colored points with $m$ colors in the Euclidean plane,
 the complexity of both~$\FCVD^*(S)$ and~$\HVD^*(S)$ is bounded by $O(m(n-m+1))$.
\end{corollary}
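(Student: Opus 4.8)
The plan is to derive the corollary from Theorem~\ref{thm:kcvd_Euclidean} together with the structural lemmas of Section~\ref{sec:cvda}. Since $\FCVD^*(S)=\CVD^*_m(S)$ and $\HVD^*(S)=\mCVD^*_m(S)$ are planar subdivisions, their total complexity is proportional to their number of vertices (up to an additive $O(1)$, which is absorbed since $m(n-m+1)\geq 1$), so it suffices to bound those vertex counts. First I would apply Lemma~\ref{lem:CVD_v}(ii) with $k=m$: the number of vertices of $\CVD^*_m(S)$ is $v_{3,m-1}+v_{3,m-2}+v_{3,m-3}+v_{2,m-1}+v_{2,m-2}+v_{1,m-1}$. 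A $c$-chromatic weight-$j$ colored configuration involves $c+j$ distinct colors (the $c$ defining colors and the $j$ conflicting colors are disjoint), so $c+j\leq m$ and hence $v_{3,m-1}=v_{3,m-2}=v_{2,m-1}=0$ (and the same for the barred quantities). What remains is $v_{3,m-3}+v_{2,m-2}+v_{1,m-1}$, and by Lemma~\ref{lem:CVD_v}(iii), again using $v_{3,m-2}=0$, the partial sum $v_{3,m-3}+v_{2,m-2}$ equals the number of vertices of $\CVD_{m-1}(S)$; the analogous identities hold for $\mCVD^*_m(S)$ and $\mCVD_{m-1}(S)$. Adding the two and applying Lemma~\ref{lem:v-e} with $c=1$, $j=m-1$, the total number of vertices of $\FCVD^*(S)$ and $\HVD^*(S)$ equals
\[
 \bigl(\text{total number of vertices of }\CVD_{m-1}(S)\text{ and }\mCVD_{m-1}(S)\bigr)\;+\;e_{1,m-1}(\lift S),
\]
where $\lift S\subset\Real^3$ denotes the paraboloid lift of $S$.

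Next I would bound the two summands. For the first, Theorem~\ref{thm:kcvd_Euclidean} with $k=m-1$ says it equals $4(m-1)(n-m+1)-2n$ minus two nonnegative correction terms, hence is at most $4(m-1)(n-m+1)-2n$. For the second, I would show $e_{1,m-1}(\lift S)\leq 2n$: a $1$-chromatic $(m-1)$-facet with defining color $i$ is an oriented triangle on three points of $\lift{S_i}$ whose open positive half-space meets no point of $\lift{S_i}$ (as color $i$ is a defining color), i.e.\ a $0$-facet of the point set $\lift{S_i}$; since $\lift{S_i}$ is in convex and general position it has at most $\max\{2|S_i|-4,0\}$ of these (the facets of a simplicial $3$-polytope; see~\cite[Theorem~6.11]{e-acg-87}), so $e_{1,m-1}(\lift S)\leq\sum_{i\in K}\max\{2|S_i|-4,0\}\leq 2n$. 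Adding the two bounds, the total number of vertices of $\FCVD^*(S)$ and $\HVD^*(S)$ is at most $4(m-1)(n-m+1)-2n+2n=4(m-1)(n-m+1)=O(m(n-m+1))$, which proves the corollary (the case $m=1$ being immediate).

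The only nonroutine point, and the step I expect to be the main obstacle, is the estimate on $e_{1,m-1}(\lift S)$. By Lemmas~\ref{lem:CVD_v} and~\ref{lem:v-e} the $\CVD_{m-1}(S)$/$\mCVD_{m-1}(S)$ part above is really $e_{3,m-3}(\lift S)+e_{2,m-2}(\lift S)$, and these two are covered directly by Theorem~\ref{thm:j-facets_convex_3d} with $j=m-3$ and $j=m-2$ (its hypotheses hold, as $\lift S$ is in convex and general position), so one could bound them that way instead; but that theorem is stated only for $j\leq m-2$, so $e_{1,m-1}(\lift S)$ lies outside its reach and needs the elementary convex-hull argument above. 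Everything else is bookkeeping with Lemmas~\ref{lem:CVD_v} and~\ref{lem:v-e} and the observation that configurations requiring more than $m$ colors cannot exist.
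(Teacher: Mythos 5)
Your proof is correct and follows essentially the same route the paper gestures at: apply Lemma~\ref{lem:CVD_v} to express the vertex count of the refined order-$m$ diagrams in terms of the quantities $v_{c,j}$ and $\bar{v}_{c,j}$, observe that the entries with $c+j>m$ vanish, identify the remaining $3$- and $2$-chromatic contributions with the vertex counts of $\CVD_{m-1}(S)$ and $\mCVD_{m-1}(S)$ bounded by Theorem~\ref{thm:kcvd_Euclidean} with $k=m-1$, and handle the leftover $1$-chromatic term via Lemma~\ref{lem:v-e}. The one point the paper glosses over and you make explicit is the estimate $e_{1,m-1}(\lift S)\leq\sum_{i}e_0(\lift{S_i})\leq 2n$: you correctly note that this term sits at weight $m-1$, outside the $j\leq m-2$ range of Theorem~\ref{thm:j-facets_convex_3d}, so it needs the separate convex-hull argument, and that argument is sound (each $1$-chromatic $(m-1)$-facet is a $0$-facet of the lift of a single color class, and $\lift{S_i}$ is in convex and general position, so its hull is simplicial with at most $\max\{2|S_i|-4,0\}$ facets). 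The conveniently cancelling $-2n$ from Theorem~\ref{thm:kcvd_Euclidean} and $+2n$ from the $1$-chromatic bound then deliver the clean $4(m-1)(n-m+1)=O(m(n-m+1))$ bound, and your handling of the $m=1$ base case is fine.
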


\section{Color Voronoi diagrams under general distance functions}
\label{sec:kcvd_general}

We extend our results for the Euclidean case to general distance functions.
We continue the discussions from Section~\ref{sec:cvda},
so $S$ is a set of $n$~sites, colored with $m$~colors from~$K$ by a color assignment~$\kappa$, 
and
the functions~$\delta_s \colon \Plane \to \Real$ for~$s\in S$
are in general position.

Notice that the vertices of the arrangements~$\arr_\Gamma$ and $\overline{\arr}_\Gamma$
form two set systems that fit in our colorful framework.
More precisely,
let $\conf(S)$ be the set of vertices of the arrangement of $n$~surfaces in~$\Gamma$,
and consider two conflict relations $\chi, \bar{\chi} \subseteq S\times \conf(S)$
such that $(s, v)\in \chi$ if $v \in \conf(S)$ lies above surface~$\gamma_s \in \Gamma$
and $(s, v) \in \bar{\chi}$ if $v$ lies below $\gamma_s$.
Based on two CS-structures $(S, \conf(S), \chi)$ and $(S, \conf(S), \bar{\chi})$,
we consider their colored configurations with respect to~$\kappa$,
denoted by $\conf(S, \kappa)$ and $\overline{\conf}(S, \kappa)$, respectively.
By this construction, we have 
$v_{c,j}(S) = |\conf_{c,j}(S, \kappa)|$
and $\bar{v}_{c,j}(S) = |\overline{\conf}_{c,j}(S, \kappa)|$,
counting $c$-chromatic weight-$j$ colored configurations in~$\conf(S,\kappa)$ and 
in~$\overline{\conf}(S,\kappa)$, respectively,
and, simultaneously, counting new $c$-chromatic vertices in~$\CVD^*_{j+1}(S)$
and in $\mCVD^*_{j+1}(S)$ by Lemma~\ref{lem:CVD_v}.

For each $S'\subseteq S$,
let $v_0(S')$ and $u_0(S')$ denote the numbers of vertices
and unbounded edges\footnote{
Hereafter, by counting unbounded edges, we mean counting vertices at infinity.
So, if an unbounded edge separates the plane~$\Plane$, then it is counted twice.
}, 
respectively, in~$\VD(S')$;
let $\bar{v}_0(S')$ and $\bar{u}_0(S')$ denote the numbers of vertices
and unbounded edges, respectively, in $\FVD(S')$.
\changed{We
consider} the following conditions.

\begin{itemize} 
 \item[\textbf{V1}] $v_0(S') = 2|S'| - 2 - u_0(S')$ for any~$S' \subseteq S$.
 \item[\textbf{V2}] $\bar{v}_0(S') = \bar{u}_0(S') - 2$ for any~$S' \subseteq S$.
\end{itemize}

Note that if
every region in $\VD(S')$ is nonempty and simply connected,
then Euler's formula and the general position assumption
imply condition~\textbf{V1}\changed{. If}
$\FVD(S')$ forms a tree, then every face of~$\FVD(S')$ is unbounded and thus
\changed{condition~\textbf{V2} holds} by Euler's formula.

In addition, for $c\in \{1,2\}$ and $j \geq 0$,
let $u_{c,j} = u_{c,j}(S)$
be the number of $c$-chromatic unbounded edges in $\mathcal{A}_\Gamma$
that lie \emph{above} exactly $j$~surfaces in $\{E_i\}_{i\in K}$, and
$\bar{u}_{c,j} = \bar{u}_{c,j}(S)$ be the number of $c$-chromatic unbounded edges in
$\overline{\mathcal{A}}_\Gamma$
that lie \emph{below} exactly $j$~surfaces in $\{\overline{E}_i\}_{i\in K}$.
From the
\changed{discussion} in Section~\ref{sec:cvda},
observe that $u_{c,j}$ and $\bar{u}_{c,j}$ are equal to
the numbers of new $c$-chromatic unbounded edges
in $\CVD^*_{j+1}(S)$ and in $\mCVD^*_{j+1}(S)$, respectively.
Further, as for~$v_{c,j}$ and $\bar{v}_{c,j}$, note that
$u_{c,j}$ and $\bar{u}_{c,j}$ indeed count $c$-chromatic weight-$j$ colored configurations  
based on two CS-structures for unbounded edges in
the arrangement of $n$~surfaces in~$\Gamma$.
Hence, assuming~\textbf{V1} and~\textbf{V2},   
Lemma~\ref{lem:CS_general_ub} implies:
for any subset $R \subseteq K$,
 \[ \sum_{c=1}^3 v_{c,0}(S_R) = 2|S_R| - 2 - \sum_{c=1}^2 u_{c,0}(S_R)
  \quad \text{ and } \quad
    \sum_{c=1}^3 \bar{v}_{c,0}(S_R) = \sum_{c=1}^2 \bar{u}_{c,0}(S_R) - 2,\] 
since $\CVD^*_1(S_R) = \VD(S_R)$ and $\mCVD^*_1(S_R) = \FVD(S_R)$.
Combining these equations and the others obtained by Lemma~\ref{lem:CS_general_lb},
we have two systems of linear equations that can be solved
in a similar way as done in~Theorem~\ref{thm:j-facets_convex_3d}.
For $0\leq j \leq m-1$, define
\begin{align*}
  V_j  := v_{3,j} + \sum_{i=0}^j (v_{2,i} + (j-i+1)v_{1,i}), \quad \text{\phantom{ and }}\quad
  & U_j := \sum_{i=0}^j (u_{2,i} + (j-i+1)u_{1,i}),\\
 \overline{V}_j  := \bar{v}_{3,j} + \sum_{i=0}^j (\bar{v}_{2,i} + (j-i+1)\bar{v}_{1,i}),
  \quad \text{ and }\quad &
 \overline{U}_j  := \sum_{i=0}^j (\bar{u}_{2,i} + (j-i+1)\bar{u}_{1,i}).
 \end{align*}

\begin{lemma} \label{lem:kcvd_general_uv} 
 With the above notations, let $0\leq j\leq m-2$.
 Condition~\textbf{V1} implies
 $V_j + U_j = (j+1)(2n-j-2)$;
 condition~\textbf{V2} implies
  $\overline{V}_j - \overline{U}_j = -(j+1)(j+2)$.
\end{lemma}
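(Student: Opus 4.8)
The plan is to run the same machinery as in Theorem~\ref{thm:j-facets_convex_3d}, now on the two CS-structures attached to the arrangement of~$\Gamma$: the one counting arrangement vertices and the one counting unbounded edges, each taken with both conflict relations $\chi$ and $\bar\chi$. Recall that $v_{c,j}=|\conf_{c,j}(S,\kappa)|$ and $\bar v_{c,j}=|\overline{\conf}_{c,j}(S,\kappa)|$, and that $u_{c,j}$ and $\bar u_{c,j}$ analogously count $c$-chromatic weight-$j$ colored configurations for unbounded edges; since each arrangement vertex (resp. unbounded edge) is defined by a unique triple (resp. pair) of sites under the general position assumption, the equality case of Lemma~\ref{lem:CS_general_lb} applies, and there are no $3$-chromatic unbounded edges. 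First I would fix an integer $2\le r\le m$ and a random $r$-subset $R\subseteq K$ and apply Lemma~\ref{lem:CS_general_lb} with $a=0$ to each of these structures, obtaining $\binom{m}{r}\E[v_{c,0}(S_R)]=\sum_{j}\binom{m-c-j}{r-c}v_{c,j}$ for $1\le c\le 3$, the analogous identities for $u_{c,0}$ (for $1\le c\le 2$), and the barred versions for $\overline{\arr}_\Gamma$.

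Next I would feed in the conditions. Since $\CVD^*_1(S_R)=\VD(S_R)$ and $\mCVD^*_1(S_R)=\FVD(S_R)$, we have $v_0(S_R)=\sum_{c=1}^3 v_{c,0}(S_R)$, $u_0(S_R)=\sum_{c=1}^2 u_{c,0}(S_R)$, and likewise for the barred counts. Condition~\textbf{V1} then gives $\sum_{c}v_{c,0}(S_R)+\sum_{c}u_{c,0}(S_R)=2|S_R|-2$, and condition~\textbf{V2} gives $\sum_{c}\bar v_{c,0}(S_R)-\sum_{c}\bar u_{c,0}(S_R)=-2$. Taking $\binom{m}{r}\E[\cdot]$ of the first and using Lemma~\ref{lem:sums} in the form $\sum_{R'\subseteq K,|R'|=r}|S_{R'}|=\binom{m-1}{r-1}n$ yields $\binom{m}{r}\E[2|S_R|-2]=2\binom{m-1}{r-1}n-2\binom{m}{r}$, while the barred identity has no $n$-term and yields $-2\binom{m}{r}$. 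Combining with the previous step, I get for each $2\le r\le m$ one linear equation among the $v_{c,j}$'s and $u_{c,j}$'s (resp. the barred ones); inspecting the $r=2$ instance, the $c=3$ terms vanish because $\binom{m-3-j}{-1}=0$, and what remains is exactly the claimed identity for $j=m-2$.

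For $3\le r\le m$ I would rearrange each equation to isolate $\sum_{j=0}^{m-3}\binom{m-3-j}{r-3}v_{3,j}$ on the left, treating $v_{1,\cdot},v_{2,\cdot},u_{1,\cdot},u_{2,\cdot}$ (resp. their barred counterparts) as known data. As in Theorem~\ref{thm:j-facets_convex_3d} the coefficient matrix is Pascal-triangular, hence invertible, so these $m-2$ equations have a unique solution in $v_{3,0},\dots,v_{3,m-3}$; it then suffices to verify that $v_{3,j}=(j+1)(2n-j-2)-\sum_{i=0}^{j}(v_{2,i}+u_{2,i})-\sum_{i=0}^{j}(j-i+1)(v_{1,i}+u_{1,i})$ is the solution, which is equivalent to $V_j+U_j=(j+1)(2n-j-2)$, and analogously that $\bar v_{3,j}=-(j+1)(j+2)-\sum_{i=0}^{j}(\bar v_{2,i}-\bar u_{2,i})-\sum_{i=0}^{j}(j-i+1)(\bar v_{1,i}-\bar u_{1,i})$, i.e.\ $\overline V_j-\overline U_j=-(j+1)(j+2)$. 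The verification reuses the binomial identities already invoked in Theorem~\ref{thm:j-facets_convex_3d}, notably $\sum_j\binom{m-3-j}{r-3}\binom{j+1}{1}=\binom{m-1}{r-1}$, $\sum_j\binom{m-3-j}{r-3}\binom{j+2}{2}=\binom{m}{r}$, and the double-sum exchange $\sum_{j=0}^{m-3}\binom{m-3-j}{r-3}\sum_{i=0}^{j}x_i=\sum_{i=0}^{m-2}\binom{m-2-i}{r-2}x_i$ together with its weighted variant for the $(j-i+1)$-terms.

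The argument is, by design, a close parallel of Theorem~\ref{thm:j-facets_convex_3d}, so the main obstacle is not any single hard step but the bookkeeping: (a) packaging the vertex and unbounded-edge counts so that $v_{c,j}+u_{c,j}$ (resp.\ $\bar v_{c,j}-\bar u_{c,j}$) play exactly the role that $e_{c,j}$ played there, with the convention that an unbounded edge is counted through its vertices at infinity (so a plane-separating edge contributes twice); and (b) keeping the weight ranges straight so that the base equation really is the $j=m-2$ case and the triangular system has exactly the $m-2$ unknowns $v_{3,0},\dots,v_{3,m-3}$. Once these are aligned, the only substantive difference from the $j$-facet computation is that the additive constant coming from the sampled identity is $-2$ (supplied by \textbf{V1}/\textbf{V2}) rather than $-4$, which is precisely what turns the right-hand sides into $(j+1)(2n-j-2)$ and $-(j+1)(j+2)$.
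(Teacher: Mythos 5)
Your proposal is correct and mirrors the paper's own proof step by step: apply Lemma~\ref{lem:CS_general_lb} with $a=0$ to the vertex and unbounded-edge CS-structures under both conflict relations, feed in conditions \textbf{V1}/\textbf{V2} via Lemma~\ref{lem:CS_general_ub} and Lemma~\ref{lem:sums} to get the right-hand sides $2\binom{m-1}{r-1}n-2\binom{m}{r}$ and $-2\binom{m}{r}$, read off the $j=m-2$ case from $r=2$, and solve the Pascal-triangular system for $3\le r\le m$ exactly as in Theorem~\ref{thm:j-facets_convex_3d}, with the packaged quantities $v_{c,j}+u_{c,j}$ (resp.\ $\bar v_{c,j}-\bar u_{c,j}$) playing the role of $e_{c,j}$. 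Your closing observation that the only substantive change from the $j$-facet computation is the $-2$ versus $-4$ constant, which produces $(j+1)(2n-j-2)$ and $-(j+1)(j+2)$, is also exactly right.
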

\begin{proof}
By the general position assumption on the functions $\delta_s$ for $s\in S$,
Lemma~\ref{lem:CS_general_lb} implies:
for $1\leq r\leq m$ and a random subset $R\subseteq K$ of $r$~colors,
 \[ \binom{m}{r} \E[v_{c,0}(S_R)] = \sum_{j=1}^{m-c} v_{c,j}\binom{m-c-j}{r-c}
  \text{ and }
    \binom{m}{r} \E[\bar{v}_{c,0}(S_R)] = \sum_{j=0}^{m-c} \bar{v}_{c,j}\binom{m-c-j}{r-c},
 \]
for each $c \in \{1,2,3\}$,
and
 \[ \binom{m}{r} \E[u_{c,0}(S_R)] = \sum_{j=1}^{m-c} u_{c,j}\binom{m-c-j}{r-c}
  \text{ and }
    \binom{m}{r} \E[\bar{u}_{c,0}(S_R)] = \sum_{j=0}^{m-c} \bar{u}_{c,j}\binom{m-c-j}{r-c},
 \]
for each $c\in\{1,2\}$.
Hence, on one hand, we have
 \[ \binom{m}{r}\left(\sum_{c=1}^3 \E[v_{c,0}(S_R)] + \sum_{c=1}^2 \E[u_{c,0}(S_R)]\right)
   = \sum_{c=1}^3 \sum_{j=0}^{m-c} (v_{c,j}+u_{c,j})\binom{m-c-j}{r-c}\]
and
 \[ \binom{m}{r}\left(\sum_{c=1}^3 \E[\bar{v}_{c,0}(S_R)] - \sum_{c=1}^2 \E[\bar{u}_{c,0}(S_R)]\right)
   = \sum_{c=1}^3 \sum_{j=0}^{m-c} (\bar{v}_{c,j}-\bar{u}_{c,j})\binom{m-c-j}{r-c},\]
where we define $u_{3,j} = \bar{u}_{3,j} = 0$ for all $j$.

On the other hand, Lemma~\ref{lem:CS_general_ub}, together with the above discussions, implies that
\begin{align*}
 \binom{m}{r}\left(\sum_{c=1}^3 \E[v_{c,0}(S_R)] + \sum_{c=1}^2 \E[u_{c,0}(S_R)]\right)
 &= \sum_{R'\subseteq K, |R'| = r} \sum_{c=1}^3 (v_{c,0}(S_{R'}) + u_{c,0}(S_{R'})) \\
 &= \sum_{R'} (2|S_{R'}| - 2) \\
 & = 2\binom{m-1}{r-1}n - 2\binom{m}{r},
\end{align*}
for any $2\leq r\leq m$, by Lemma~\ref{lem:sums},
if condition~\textbf{V1} is fulfilled.
Similarly, we also have
\[
 \binom{m}{r}\left(\sum_{c=1}^3 \E[\bar{v}_{c,0}(S_R)] - \sum_{c=1}^2 \E[\bar{u}_{c,0}(S_R)]\right)
 = - 2\binom{m}{r},
\]
for any $2\leq r\leq m$, if condition~\textbf{V2} is fulfilled.

Hence, assuming condition~\textbf{V1}, we have $m-1$~linear equations:
\[
 \sum_{c=1}^3 \sum_{j=0}^{m-c} \binom{m-c-j}{r-c} (v_{c,j} + u_{c,j})
   = 2\binom{m-1}{r-1}n - 2\binom{m}{r},
\]
for $2\leq r\leq m$.
The equation for $r=2$ is written as
\begin{align*}
 \sum_{c=1}^3 \sum_{i=0}^{m-c} \binom{m-c-i}{2-c} (v_{c,i} + u_{c,i})
  & = \sum_{i=0}^{m-2} (v_{2,i} + u_{2,i}) + \sum_{i=0}^{m-1}(m-1-i)(v_{1,i}+u_{1,i}) \\
  & = V_{m-2} + U_{m-2} = 2\binom{m-1}{1}n - 2\binom{m}{2}\\
  & = (m-1)(2n-m),
\end{align*}
as claimed, since $v_{3,m-2} = 0$.
Now, consider the other $m-2$ equations for $3\leq r\leq m$,
forming a system of linear equations with $m-2$ variables $v_{3,0}, \ldots, v_{3, m-3}$.
This system is associated with the same matrix~$A$ as in Theorem~\ref{thm:j-facets_convex_3d},
so it admits a unique solution:
\[
 v_{3,j} = (j+1)(2n-j-2) - \sum_{i=0}^j(v_{2,i} + (j-i+1)v_{1,i}) 
    - \sum_{i=0}^j(u_{2,i} + (j-i+1)u_{1,i}),
\]
for $0\leq j\leq m-3$,
which can be easily verified as done in the proof of Theorem~\ref{thm:j-facets_convex_3d}.
Thus, we conclude the claimed equations
\[
  V_j + U_j = (j+1)(2n-j-2),
\]
for $0\leq j\leq m-2$.

Analogously, assuming condition~\textbf{V2}, the above discussion results in
the following $m-1$ linear equations:
\[
 \sum_{c=1}^3 \sum_{j=0}^{m-c} \binom{m-c-j}{r-c} (\bar{v}_{c,j} - \bar{u}_{c,j})
   = - 2\binom{m}{r},
\]
for $2\leq r\leq m$.
The equation for $r=2$ is written as
\begin{align*}
 \sum_{c=1}^3 \sum_{i=0}^{m-c} \binom{m-c-i}{2-c} (\bar{v}_{c,i} - \bar{u}_{c,i})
  & = \sum_{i=0}^{m-2} (\bar{v}_{2,i} - \bar{u}_{2,i}) 
      + \sum_{i=0}^{m-1}(m-1-i)(\bar{v}_{1,i}+\bar{u}_{1,i}) \\
  & = \overline{V}_{m-2} - \overline{U}_{m-2} = - 2\binom{m}{2}\\
  & = (m-1)m,
\end{align*}
as claimed, since $\bar{v}_{3,m-2} = 0$.
The other equations for $3\leq r\leq m$ form
a system of $m-2$ linear equations with $m-2$ variables $\bar{v}_{3,0}, \ldots, \bar{v}_{3,m-2}$
whose coefficients are exactly the same as above.
Its unique solution is
\[
 \bar{v}_{3,j} = -(j+1)(j+2) - \sum_{i=0}^j(\bar{v}_{2,i} + (j-i+1)\bar{v}_{1,i}) 
    + \sum_{i=0}^j(\bar{u}_{2,i} + (j-i+1)\bar{u}_{1,i}),
\]
for $0\leq j\leq m-3$,
which can be written as
\[
  \overline{V}_j - \overline{U}_j = -(j+1)(j+2).
\]
Hence, we conclude the lemma.
\end{proof}

\begin{theorem} \label{thm:kcvd_general} 
  Given $S$ and $\{\delta_s\}_{s\in S}$ in general position as above, 
  let $1\leq k\leq m-1$ be an integer.
 If condition~\textbf{V1} is true,
 then the number of vertices in~$\CVD_k(S)$ is exactly
  \[ 2k(2n-k)-2n - 2\sum_{i=0}^{k-2} v_{2,i}(S) - \sum_{i=0}^{k-1} (2k-2i-1)v_{1,i}(S)
      - U_{k-1} - U_{k-2};\]
 if condition~\textbf{V2} is true,
 the number of vertices in~$\mCVD_k(S)$ is exactly
  \[ \overline{U}_{k-1} + \overline{U}_{k-2} - 2k^2
      - 2\sum_{i=0}^{k-2}\bar{v}_{2,i}(S)-\sum_{i=0}^{k-1}(2k-2i-1)\bar{v}_{1,i}(S).\]
  \end{theorem}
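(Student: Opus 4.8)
The plan is to mimic the proof of Theorem~\ref{thm:kcvd_Euclidean}, now using Lemma~\ref{lem:kcvd_general_uv} in place of Theorem~\ref{thm:j-facets_convex_3d}. Recall from Lemma~\ref{lem:CVD_v}(iii) that the number of vertices in $\CVD_k(S)$ is exactly $v_{3,k-1} + v_{3,k-2} + v_{2,k-1}$, and the number of vertices in $\mCVD_k(S)$ is $\bar{v}_{3,k-1} + \bar{v}_{3,k-2} + \bar{v}_{2,k-1}$. So it suffices to express $v_{3,k-1} + v_{3,k-2}$ (resp. $\bar{v}_{3,k-1}+\bar{v}_{3,k-2}$) in closed form using the identities from Lemma~\ref{lem:kcvd_general_uv}, and then add the single $2$-chromatic term.

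First I would extract from the proof of Lemma~\ref{lem:kcvd_general_uv} the explicit solution for $v_{3,j}$, namely $v_{3,j} = (j+1)(2n-j-2) - \sum_{i=0}^j (v_{2,i} + (j-i+1)v_{1,i}) - U_j$ for $0\leq j\leq m-3$ (with the understanding that $v_{3,m-2}=0$, handled separately by the $r=2$ equation). Then, for $2\leq k\leq m-1$, I would compute
\[
 v_{3,k-1} + v_{3,k-2} = k(2n-k-1) + (k-1)(2n-k) - \Big(\sum_{i=0}^{k-1} + \sum_{i=0}^{k-2}\Big)(v_{2,i}) - \Big(\sum_{i=0}^{k-1}(k-i) + \sum_{i=0}^{k-2}(k-1-i)\Big)v_{1,i} - U_{k-1} - U_{k-2}.
\]
The first two terms sum to $2k(2n-k) - 2n$. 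For the $v_{2,i}$ coefficients, combining the two ranges gives $2$ for $0\leq i\leq k-2$ and $1$ for $i=k-1$; adding the separate $v_{2,k-1}$ term from Lemma~\ref{lem:CVD_v}(iii) upgrades the $i=k-1$ coefficient to $2$, yielding exactly $2\sum_{i=0}^{k-1}v_{2,i}$. Hmm --- checking against the stated formula, which has $2\sum_{i=0}^{k-2}v_{2,i}$ and an implicit leftover $v_{2,k-1}$; I would reconcile the indexing so that $v_{3,k-1}+v_{3,k-2}+v_{2,k-1}$ matches the claimed $2k(2n-k)-2n - 2\sum_{i=0}^{k-2}v_{2,i} - \sum_{i=0}^{k-1}(2k-2i-1)v_{1,i} - U_{k-1}-U_{k-2}$, which forces the $v_{2,k-1}$ coefficient to be $1$, i.e. $2\sum_{i=0}^{k-2} + 1\cdot v_{2,k-1}$; this is just a matter of careful bookkeeping with the exact solution formula. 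For the $v_{1,i}$ coefficients, the sum $\sum_{i=0}^{k-1}(k-i) + \sum_{i=0}^{k-2}(k-1-i)$ should telescope to $\sum_{i=0}^{k-1}(2k-2i-1)$, which I would verify term-by-term. The boundary cases $k=1$ (using $v_{1}+\bar{v}_1$ handled via the $r=2$ equation, giving $v_1 = v_{2,0}+v_{3,0}$ with $v_{3,0} = 2n-2 - v_{2,0} - v_{1,0} - U_0$) and $k=m-1$ need the convention $v_{3,m-2}=0$ and $U_j$ defined accordingly.

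For $\mCVD_k(S)$ the argument is entirely parallel, using the second solution $\bar{v}_{3,j} = -(j+1)(j+2) - \sum_{i=0}^j(\bar{v}_{2,i}+(j-i+1)\bar{v}_{1,i}) + \overline{U}_j$, so that $\bar{v}_{3,k-1}+\bar{v}_{3,k-2}$ contributes $-k(k+1)-(k-1)k = -2k^2$, plus $\overline{U}_{k-1}+\overline{U}_{k-2}$, minus the $\bar{v}_{2,i}$ and $\bar{v}_{1,i}$ contributions with the same coefficient pattern, and then one adds $\bar{v}_{2,k-1}$ from Lemma~\ref{lem:CVD_v}(iii). The main obstacle --- really the only one --- is the coefficient bookkeeping: keeping the index ranges for the three families of sums ($v_{2,i}$, $v_{1,i}$, and the $U$-term) consistent with the closed-form solution carried over from Lemma~\ref{lem:kcvd_general_uv}, and making sure the $2$-chromatic term from Lemma~\ref{lem:CVD_v}(iii) lands in exactly the right slot. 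The underlying algebra (the binomial identities establishing that the matrix $A$ is invertible and that the proposed closed form solves the system) has already been done in the proof of Lemma~\ref{lem:kcvd_general_uv}, so I would simply cite that and verify the final arithmetic.
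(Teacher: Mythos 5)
Your plan is exactly the paper's proof: express the vertex count of $\CVD_k(S)$ as $v_{3,k-1}+v_{3,k-2}+v_{2,k-1}$ via Lemma~\ref{lem:CVD_v}(iii), substitute the closed-form $v_{3,j}$ from the proof of Lemma~\ref{lem:kcvd_general_uv}, and collect terms; same for the maximal side. The constant term $2k(2n-k)-2n$, the telescoping of the $v_{1,i}$ coefficients to $2k-2i-1$, and the $\mCVD$ constant $-2k^2$ are all worked out correctly.

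The one place you went astray is the $v_{2,k-1}$ bookkeeping, and your two attempted fixes both have the sign flipped. From $v_{3,k-1}$ you pick up $-v_{2,k-1}$; from $v_{3,k-2}$ the index $k-1$ does not appear; and Lemma~\ref{lem:CVD_v}(iii) contributes $+v_{2,k-1}$. These \emph{cancel}, so $v_{2,k-1}$ has net coefficient $0$ in the final expression, which is precisely why the stated formula's sum runs only up to $i=k-2$. Your first reading (``upgrades the coefficient to $2$'') treated the Lemma's $+v_{2,k-1}$ as if it reinforced the $-v_{2,k-1}$ rather than cancelling it, and your second reading (``forces the coefficient to $1$'') is also wrong; the coefficient is $0$. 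Once that cancellation is noted, the identity is immediate, and the same cancellation of $\bar v_{2,k-1}$ happens in the maximal case. The boundary cases $k=1$ and $k=m-1$ are handled by the conventions $v_{c,j}=\bar v_{c,j}=U_j=\overline{U}_j=0$ for $j<0$ and $v_{3,m-2}=\bar v_{3,m-2}=0$, as you note, and need no separate argument.
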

\begin{proof}
By Lemma~\ref{lem:CVD_v}, the number of vertices in $\CVD_k(S)$ is exactly
 $v_{3,k-1} + v_{3,k-2} + v_{2,k-1}$
and the number of vertices in $\mCVD_k(S)$ is exactly
 $\bar{v}_{3,k-1} + \bar{v}_{3,k-2} + \bar{v}_{2,k-1}$.
Plugging the equations shown in Lemma~\ref{lem:kcvd_general_uv}
results in the claimed exact quantities.
\end{proof}

Remark that condition~\textbf{V1} already implies
the asymptotic complexity~$O(k(n-k))$ of~$\CVD_k(S)$ for~$k \leq \frac{n}{2}$
as $2n-k \leq 3(n-k)$,
while we have $O(kn)$ for~$k > \frac{n}{2}$.
Further, to show the $O(k(n-k))$ bound
for any value of $k$ for both $\CVD_k(S)$ and $\mCVD_k(S)$,
it suffices to show that $U_{j} \geq (j+1)(j+2) - o(j^2)$
and $\overline{U}_{j} \leq (j+1)(2n - j -2) + o(j^2)$.
In this way, Theorem~\ref{thm:kcvd_general} reduces the problem of bounding the complexity of
higher-order color Voronoi diagrams to that of bounding the number of their unbounded edges.
Also, note that Lemma~\ref{lem:kcvd_general_uv} implies
$ U_j \leq (j+1)(2n-j-2) \quad \text{and} \quad \overline{U}_j \geq (j+1)(j+2)$,
if conditions~\textbf{V1} and~\textbf{V2} hold.

Remark also that if $\VD(S')$ and $\FVD(S')$ \changed{fall under the
umbrella  of abstract Voronoi diagrams,}
then conditions~\textbf{V1} and~\textbf{V2}
\changed{hold}~\cite{k-cavd-89,mmr-fsavd-01},
so Theorem~\ref{thm:kcvd_general} implies:
\begin{corollary} \label{coro:kcvd_AVD}
 Suppose that $S$ and $\{\delta_s\}_{s\in S}$ 
 imply a bisector system that satisfies
 the conditions of \changed{abstract Voronoi diagrams~\cite{k-cavd-89}.}
 Then, the complexity of~$\CVD_k(S)$ and~$\CVD^*_k(S)$ is
 $O(k(n-k))$ for $1\leq k\leq \lfloor \frac{n}{2} \rfloor$ 
 and $O(kn)$ for $\lfloor \frac{n}{2} \rfloor + 1 \leq k \leq m$.
\end{corollary}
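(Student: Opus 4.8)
The plan is to read the bound directly off Theorem~\ref{thm:kcvd_general}, after first checking that its hypotheses hold throughout. The first step is to verify that conditions~\textbf{V1} and~\textbf{V2} hold not only for $S$ but for \emph{every} subset $S'\subseteq S$. This is where the abstract Voronoi diagram assumption enters: a bisector system that is admissible in the sense of~\cite{k-cavd-89} stays admissible when restricted to any subfamily of its sites, so for each $S'\subseteq S$ the diagram $\VD(S')$ has $|S'|$ nonempty and simply connected regions and $\FVD(S')$ is a tree~\cite{k-cavd-89,mmr-fsavd-01}. Together with the general position of $\{\delta_s\}$, Euler's formula then gives $v_0(S')=2|S'|-2-u_0(S')$ and $\bar{v}_0(S')=\bar{u}_0(S')-2$ for every such $S'$, exactly as remarked right after the statements of~\textbf{V1} and~\textbf{V2}. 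In particular condition~\textbf{V1} is available, so Theorem~\ref{thm:kcvd_general} applies for every $1\leq k\leq m-1$.

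The second step bounds the vertices of $\CVD_k(S)$. In the exact expression from Theorem~\ref{thm:kcvd_general}, every quantity being subtracted -- the $v_{2,i}(S)$, the $v_{1,i}(S)$, and $U_{k-1}$, $U_{k-2}$ -- is nonnegative, so the number of vertices of $\CVD_k(S)$ is at most $2k(2n-k)-2n$. For $1\leq k\leq\lfloor n/2\rfloor$ one has $2n-k\leq 3(n-k)$, which turns this into $O(k(n-k))$; for $\lfloor n/2\rfloor+1\leq k\leq m-1$ the crude bound $2k(2n-k)\leq 4kn$ gives $O(kn)$, while $\CVD_m(S)$ is a single face. To also control edges and faces I would bound the unbounded edges: the unbounded edges of $\CVD_k(S)$ are precisely its new $2$-chromatic ones, counted by $u_{2,k-1}\leq U_{k-1}$, which by Lemma~\ref{lem:kcvd_general_uv} is at most $k(2n-k-1)$ and hence again $O(k(n-k))$ or $O(kn)$. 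Since $\CVD_k(S)$ is a planar subdivision whose bounded vertices have degree three (general position), Euler's formula then bounds its edges and faces by the same order, settling the claim for $\CVD_k(S)$.

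The third step handles the refined diagram $\CVD^*_k(S)$. By Lemma~\ref{lem:CVD_v}(ii) its vertex count equals $(v_{3,k-1}+v_{3,k-2}+v_{2,k-1})+(v_{3,k-3}+v_{2,k-2})+v_{1,k-1}$; the first parenthesis is exactly $|\CVD_k(S)|$ and the second is at most $|\CVD_{k-1}(S)|$ by Lemma~\ref{lem:CVD_v}(iii), so both are already bounded by the second step. For the remaining term I would observe that a $1$-chromatic vertex of $\arr_\Gamma$ is a vertex of a single lower envelope $E_i$ and, being $1$-chromatic, occurs in exactly one order, so $\sum_j v_{1,j}(S)=\sum_{i\in K} v_0(S_i)\leq\sum_{i\in K}(2|S_i|-2)<2n$ by~\textbf{V1}; in particular $v_{1,k-1}=O(n)$, which is absorbed since $k(n-k)\geq n-1$ for $k\geq 1$. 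The unbounded edges of $\CVD^*_k(S)$ number $u_{2,k-1}+u_{2,k-2}+u_{1,k-1}\leq U_{k-1}+U_{k-2}$, again within the stated order, so Euler's formula finishes this step too.

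The one place asking for care -- modest, as befits a corollary -- is the first step: one must explicitly invoke the hereditary property of admissible bisector systems to obtain~\textbf{V1} (and~\textbf{V2}) for \emph{all} subsets $S'$, not merely for $S$ itself, because Theorem~\ref{thm:kcvd_general}, via Lemma~\ref{lem:kcvd_general_uv} and the Clarkson--Shor sampling of Lemma~\ref{lem:CS_general_lb}, uses these identities over random subfamilies of the colors rather than over the full family.
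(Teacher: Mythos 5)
Your proof is correct and follows the same route the paper intends: under the abstract Voronoi diagram axioms conditions~\textbf{V1} and~\textbf{V2} hold (hereditarily, for every $S'\subseteq S$, as required by Lemma~\ref{lem:kcvd_general_uv}), so Theorem~\ref{thm:kcvd_general} bounds the vertex count of $\CVD_k(S)$ by $2k(2n-k)-2n$, which becomes $O(k(n-k))$ for $k\leq\lfloor n/2\rfloor$ via $2n-k\leq 3(n-k)$ and $O(kn)$ otherwise. The paper presents this only as a remark preceding the corollary and leaves the edge/face counts and the refined diagram implicit; your explicit treatment of $\CVD^*_k(S)$ via Lemma~\ref{lem:CVD_v}, the bound $\sum_j v_{1,j}(S)<2n$ from~\textbf{V1}, and the unbounded-edge bound through $U_{k-1},U_{k-2}$ correctly fills in those gaps along the same lines.
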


The quantities $U_j$ and $\overline{U}_j$, related to the number of unbounded edges,
often turn out to be equal;
the very typical example is the Euclidean case for point sites~$S$,
where the equality $u_{c,j}(S) = \bar{u}_{c,j}(S) = e_{c,j}(S)$ holds\changed{.}
This inspires us to consider the following third condition:

\begin{itemize}
 \item[\textbf{V3}] $u_0(S') = \bar{u}_0(S')$ for any~$S'\subseteq S$.
\end{itemize}

\begin{lemma} \label{lem:V3} 
 Condition~\textbf{V3} implies $U_j = \overline{U}_j$
 for any $0\leq j\leq m-1$.
\end{lemma}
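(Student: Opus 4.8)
The plan is to feed condition~\textbf{V3} into the colorful Clarkson--Shor framework applied to the two CS-structures whose colored configurations are counted by $u_{c,j}$ and $\bar{u}_{c,j}$ --- those for the unbounded edges of the arrangement of $\Gamma$ --- and then to read off $U_j=\overline{U}_j$ from the resulting system of linear equations, exactly in the style of Theorem~\ref{thm:j-facets_convex_3d} and Lemma~\ref{lem:kcvd_general_uv}. Since each unbounded edge of the arrangement of $\Gamma$ is determined by a unique pair of surfaces, the equality clause of Lemma~\ref{lem:CS_general_lb} applies, so for $1\leq r\leq m$, a random subset $R\subseteq K$ of $r$ colors, and $c\in\{1,2\}$,
\[ \binom{m}{r}\,\E[u_{c,0}(S_R)]=\sum_{j=0}^{m-c}u_{c,j}\binom{m-c-j}{r-c}, \]
and likewise with $u$ replaced by $\bar{u}$ throughout. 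Every unbounded edge of $\VD(S_R)=\CVD^*_1(S_R)$ is new and is $1$- or $2$-chromatic of weight $0$, so $u_0(S_R)=u_{1,0}(S_R)+u_{2,0}(S_R)$ and similarly $\bar{u}_0(S_R)=\bar{u}_{1,0}(S_R)+\bar{u}_{2,0}(S_R)$. Summing the identities above over $c\in\{1,2\}$ and using $\binom{m}{r}\E[u_0(S_R)]=\sum_{R'\subseteq K, |R'|=r}u_0(S_{R'})$ and its analogue for $\bar{u}_0$, condition~\textbf{V3} makes $u_0(S_{R'})=\bar{u}_0(S_{R'})$ term by term, so the two left-hand sides agree; writing $w_{c,j}:=u_{c,j}-\bar{u}_{c,j}$, subtracting gives
\[ \sum_{c=1}^{2}\sum_{j=0}^{m-c}w_{c,j}\binom{m-c-j}{r-c}=0\qquad\text{for all }1\leq r\leq m. \]

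Next I would solve this system. Restricting to $2\leq r\leq m$ yields $m-1$ equations in the $m-1$ unknowns $w_{2,0},\dots,w_{2,m-2}$, treating the $w_{1,\cdot}$ as parameters on the right-hand side; the coefficient matrix has entries $\binom{m-2-j}{r-2}$ and is a triangular matrix of binomial coefficients, of the same Pascal's-triangle shape as in Theorem~\ref{thm:j-facets_convex_3d}, hence of full rank, so the system has a unique solution. I claim the solution is $w_{2,j}=-\sum_{i=0}^{j}w_{1,i}$ for $0\leq j\leq m-2$: substituting it in, the $c=2$ part becomes $-\sum_{i=0}^{m-2}w_{1,i}\sum_{j=i}^{m-2}\binom{m-2-j}{r-2}=-\sum_{i=0}^{m-2}w_{1,i}\binom{m-1-i}{r-1}$ by the identity $\sum_{t=0}^{n}\binom{t}{k}=\binom{n+1}{k+1}$, which cancels the $c=1$ part for $r\geq 2$ --- the same verification as for the explicit solution in Theorem~\ref{thm:j-facets_convex_3d}. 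The equation for $r=1$ (Lemma~\ref{lem:CS_general_lb} permits $r\geq 0$) reads $\sum_{i=0}^{m-1}w_{1,i}=0$; this also follows from condition~\textbf{V3} applied to the singleton color classes $S'=S_i$, since the total number of $1$-chromatic unbounded edges satisfies $\sum_{i\in K}u_{1,i}=\sum_{i\in K}u_0(S_i)=\sum_{i\in K}\bar{u}_0(S_i)=\sum_{i\in K}\bar{u}_{1,i}$.

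To conclude, I would substitute $w_{2,i}=-\sum_{l=0}^{i}w_{1,l}$ into the definitions of $U_j$ and $\overline{U}_j$ and use $\sum_{i=0}^{j}\sum_{l=0}^{i}w_{1,l}=\sum_{l=0}^{j}(j-l+1)w_{1,l}$ to get
\[ U_j-\overline{U}_j=\sum_{i=0}^{j}\bigl(w_{2,i}+(j-i+1)w_{1,i}\bigr)=0\qquad\text{for }0\leq j\leq m-2; \]
for $j=m-1$, using $w_{2,m-1}=0$ (a $2$-chromatic configuration has weight at most $m-2$) together with $\sum_{i=0}^{m-1}w_{1,i}=0$, the same expression again vanishes. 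Hence $U_j=\overline{U}_j$ for every $0\leq j\leq m-1$. The only place where real computation is needed is the verification that $w_{2,j}=-\sum_{i\leq j}w_{1,i}$ solves the $r\geq 2$ system --- a routine binomial-coefficient manipulation of the same flavor as in Theorem~\ref{thm:j-facets_convex_3d}; conceptually nothing is delicate here, the crux being simply that condition~\textbf{V3} plugs directly into the Clarkson--Shor sampling identity for unbounded edges.
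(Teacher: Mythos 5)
Your proof is correct and takes essentially the same route as the paper: feed condition~\textbf{V3} into the Clarkson--Shor sampling identity for the two CS-structures of unbounded edges, solve the resulting triangular linear system in the differences $w_{2,j}=u_{2,j}-\bar{u}_{2,j}$ to get $w_{2,j}=-\sum_{i\leq j}w_{1,i}$, handle $j=m-1$ via $w_{2,m-1}=0$ together with the $r=1$ equation, and sum to conclude $U_j-\overline{U}_j=0$. The only minor addition over the paper's argument is your side remark that $\sum_j w_{1,j}=0$ also follows directly from~\textbf{V3} on singleton color classes, which is a pleasant but nonessential observation.
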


\begin{proof}
Condition~\textbf{V3} implies that
 \[u_{2,0}(S_{R'}) + u_{1,0}(S_{R'}) = \bar{u}_{2,0}(S_{R'})+\bar{u}_{1,0}(S_{R'})\]
for any subset $R' \subseteq K$.
Now, let $r$ be any integer with $1\leq r\leq m$ and
$R \subseteq K$ be a random $r$-color subset.
Then, Lemma~\ref{lem:CS_general_lb} implies that
\begin{align*}
0&=\E[u_{2,0}(S_R) + u_{1,0}(S_R) - \bar{u}_{2,0}(S_R)-\bar{u}_{1,0}(S_R)] \\
    & = \E[u_{2,0}(S_R)] + \E[u_{1,0}(S_R)] - \E[\bar{u}_{2,0}(S_R)] - \E[\bar{u}_{1,0}(S_R)]\\
    & = \sum_{c=1}^2 \sum_{i=0}^{m-c} (u_{c,i} - \bar{u}_{c,i}) \binom{m-c-i}{r-c}.
\end{align*}
Letting $\zeta_{c,j} := u_{c,j} - \bar{u}_{c,j}$,
we have $m$ equations for $1\leq r\leq m$:
\[ \sum_{i=0}^{m-2} \binom{m-2-i}{r-2} \zeta_{2,i} + \sum_{i=0}^{m-1} \binom{m-1-i}{r-1} \zeta_{1,i}
  = 0.\]

Regarding $\zeta_{2,0}, \ldots, \zeta_{2,m-2}$ as $m-1$ variables,
consider the system formed by the $m-1$ linear equations for $2\leq r\leq m$.
As done in Theorem~\ref{thm:j-facets_convex_3d} and Lemma~\ref{lem:kcvd_general_uv},
this system admits a unique solution:
\[ \zeta_{2,i} + (\zeta_{1,0} + \cdots + \zeta_{1,i}) = 0,\]
for $0\leq i\leq m-2$.
Observe that this also holds for $i=m-1$ from the above equation for $r=1$, 
that is, $\zeta_{2,m-1} + \sum_{i=0}^{m-1} \zeta_{1,i}= 0$,
since $u_{2,m-1} = \bar{u}_{2,m-1} = 0$. 

To conclude the lemma, for each $0\leq j\leq m-1$, 
we sum up the solution over $0\leq i \leq j$,
which results in
\[ \sum_{i=0}^j (u_{2,i} - \bar{u}_{2,i}) + \sum_{i=0}^j (j-i+1)(u_{1,i} - \bar{u}_{1,i}) 
  = U_j - \overline{U}_j =  0.\]
This completes the proof.
\end{proof}

Assuming conditions~\textbf{V1}--\textbf{V3},
we obtain the same exact number as in Theorem~\ref{thm:kcvd_Euclidean}.
\begin{theorem} \label{thm:kcvd_general_2n-4} 
 Given $S$ and $\{\delta_s\}_{s\in S}$ in general position as above,
 if conditions~\textbf{V1}--\textbf{V3} hold,
 then the total number of vertices in~$\CVD_k(S)$ and $\mCVD_k(S)$
 for $1\leq k\leq m-1$
 is exactly
  \[ 4k(n-k)-2n - 2\sum_{i=0}^{k-2} (v_{2,i}(S)+\bar{v}_{2,i}(S))
     - \sum_{i=0}^{k-1} (2k-2i-1)(v_{1,i}(S) + \bar{v}_{1,i}(S)).\]
\end{theorem}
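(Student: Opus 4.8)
The plan is to add the two exact vertex counts provided by Theorem~\ref{thm:kcvd_general} and then use Lemma~\ref{lem:V3} to cancel the terms coming from unbounded edges. Since conditions~\textbf{V1}--\textbf{V3} are all assumed, every ingredient needed is available, and the argument reduces to a short algebraic manipulation.

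First I would invoke Theorem~\ref{thm:kcvd_general}. Under condition~\textbf{V1}, the number of vertices of~$\CVD_k(S)$ is exactly
\[ 2k(2n-k) - 2n - 2\sum_{i=0}^{k-2} v_{2,i}(S) - \sum_{i=0}^{k-1}(2k-2i-1)v_{1,i}(S) - U_{k-1} - U_{k-2},\]
and under condition~\textbf{V2}, the number of vertices of~$\mCVD_k(S)$ is exactly
\[ \overline{U}_{k-1} + \overline{U}_{k-2} - 2k^2 - 2\sum_{i=0}^{k-2} \bar{v}_{2,i}(S) - \sum_{i=0}^{k-1}(2k-2i-1)\bar{v}_{1,i}(S).\]
Next I would add these two expressions. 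The contribution of the unbounded-edge quantities is $-(U_{k-1}+U_{k-2}) + (\overline{U}_{k-1}+\overline{U}_{k-2})$; by Lemma~\ref{lem:V3}, condition~\textbf{V3} forces $U_j=\overline{U}_j$ for all $0\leq j\leq m-1$, so this contribution vanishes. What survives is
\[ 2k(2n-k) - 2n - 2k^2 - 2\sum_{i=0}^{k-2}\bigl(v_{2,i}(S)+\bar{v}_{2,i}(S)\bigr) - \sum_{i=0}^{k-1}(2k-2i-1)\bigl(v_{1,i}(S)+\bar{v}_{1,i}(S)\bigr).\]
Finally, the leading polynomial simplifies via $2k(2n-k)-2n-2k^2 = 4kn-4k^2-2n = 4k(n-k)-2n$, which is precisely the claimed equation.

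I do not expect any genuine obstacle: all the real work has already been done inside Theorem~\ref{thm:kcvd_general} (the Pascal-triangle linear-system solve) and inside Lemma~\ref{lem:V3} (the analogous solve showing the differences $\zeta_{c,j}=u_{c,j}-\bar{u}_{c,j}$ telescope to zero). The only point worth stating explicitly is that Lemma~\ref{lem:V3} is exactly what promotes the pointwise hypothesis~\textbf{V3} on $\VD(S')$ and $\FVD(S')$ into the order-by-order equality $U_j=\overline{U}_j$; without it the two formulas of Theorem~\ref{thm:kcvd_general} would not combine into a single clean count. Everything past that is the one-line cancellation and simplification above.
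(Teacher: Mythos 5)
Your proof is correct and takes essentially the same approach as the paper: the paper cancels $U_j-\overline{U}_j$ via Lemma~\ref{lem:V3} at the level of Lemma~\ref{lem:kcvd_general_uv} to recover $V_j+\overline{V}_j=2(j+1)(n-j-2)$ and then appeals to the Euclidean-case derivation, while you cancel the same unbounded-edge quantities after first passing through Theorem~\ref{thm:kcvd_general} (itself a thin wrapper around Lemma~\ref{lem:kcvd_general_uv} and Lemma~\ref{lem:CVD_v}) and then simplify the polynomial explicitly. Both routes are the same argument; yours is slightly more self-contained.
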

\begin{proof}
From conditions~\textbf{V1} and~\textbf{V2}, we have
 \[ V_j + \overline{V}_j + U_j - \overline{U}_j = 2(j+1)(n-j-2)\]
for any $0\leq j\leq m-2$, by Lemma~\ref{lem:kcvd_general_uv}.
Since $U_j = \overline{U}_j$ by Lemma~\ref{lem:V3} with condition~\textbf{V3}, 
we indeed have
 \[ V_j + \overline{V}_j = 2(j+1)(n-j-2),\]
which is exactly the same equation we obtain in the Euclidean case (Theorem~\ref{thm:kcvd_Euclidean}).
Hence, the claimed exact number follows.
\end{proof}

Below, we discuss \changed{some}
specific cases of functions~$\delta_s$
for a set~$S$ of points in the plane~$\Plane$,
in which new results are derived
by applying Theorems~\ref{thm:kcvd_general} and~\ref{thm:kcvd_general_2n-4}.

\subparagraph{Convex distance functions.}
From now on, suppose $S$ consists of $n$~colored points in~$\Plane$.
Let $B \subset \Plane$ be any convex and compact body 
whose interior contains the origin.
\changed{Define} $\delta_s(x)=\|x-s\|_B$ for point~$s\in S$ to be
the \emph{convex distance} from~$x\in\Plane$ to~$s$ based on~$B$~\cite{cd-vdcdf-85,m-bvdcdf-00}.
Since Voronoi diagrams of point sites under a convex distance function
\changed{fall under}
the model of abstract Voronoi diagrams~\cite{k-cavd-89,mmr-fsavd-01},
\changed{conditions~\textbf{V1} and~\textbf{V2} hold.} 

Condition~\textbf{V3}, however, is not guaranteed in general;
a popular example is the $L_1$ or $L_\infty$ metric,
under which $\VD(S')$ may have $\Theta(|S'|)$ parallel unbounded edges
while $\FVD(S')$ has at most four \changed{unbounded edges.}
In the following, we first assume \changed{that $B$ is \emph{smooth}}
that is, there is a unique line
tangent to~$B$ at each point on its boundary~\cite{m-bvdcdf-00}.
We then make the following observation, stronger than condition~\textbf{V3},
which
\changed{has been} known for the Euclidean metric
even in higher dimensions~\cite{BPS23}.

\begin{lemma} \label{lem:V3-convex_distance} 
 Given $S$ and $\delta_s$ for $s\in S$ as above,
 suppose $B$ is smooth.
 For $c\in\{1,2\}$ and $0\leq j\leq m-1$, 
 we have $u_{c,j}(S) = \bar{u}_{c,j}(S) = e_{c,j}(S)$,
 the number of $c$-chromatic $j$-facets in~$S$.
\end{lemma}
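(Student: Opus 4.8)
The plan is to relate unbounded edges of $\VD(S')$ and $\FVD(S')$, for any $S'\subseteq S$, to a common combinatorial gadget: the $j$-facets (here $0$-facets, i.e., convex-hull edges, after restricting to $S'$) governed by the shape~$B$. Because $B$ is smooth, the tangent direction at each boundary point of~$B$ is unique, which means that the ``direction at infinity'' along which a bisector of two points $s,s'$ escapes is uniquely determined by the pair, just as in the Euclidean case. Concretely, I would first recall the standard fact (see, e.g., \cite{cd-vdcdf-85,m-bvdcdf-00}) that an unbounded edge of $\VD(S')$ separating the regions of $s$ and $s'$ exists in direction~$\theta$ precisely when the two points $s,s'$ are the two extreme points of $S'$ in the direction of the outward normal of~$B$ dual to~$\theta$; equivalently, $\overline{ss'}$ is an edge of the convex hull of $S'$ whose outer side points ``toward~$\theta$''. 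For the farthest-site diagram $\FVD(S')$, an unbounded edge separating the regions of $s$ and $s'$ in direction~$\theta$ exists under the symmetric condition on the opposite side. Smoothness of~$B$ guarantees the bijection is clean (no intervals of directions collapsing to a single hull edge, no ambiguity), so the set of unbounded edges of $\VD(S')$, of $\FVD(S')$, and the set of hull edges of $S'$ are all in natural bijection — matching what \cite{BPS23} records for the Euclidean metric in all dimensions.

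Next I would push this local correspondence through the coloring. For a colored unbounded edge of $\CVD^*_{j+1}(S)$ (equivalently, a $c$-chromatic weight-$j$ colored configuration on the ``unbounded edge'' CS-structure of Section~\ref{sec:kcvd_general}), the defining data is an ordered pair of sites $s\in S_a$, $s'\in S_b$ that are extreme in a common direction among the color-representative points, together with the requirement that exactly $j$ colors ``lie beyond'' in that direction and none of $\{a,b\}$ does. Spelling out the geometry of the convex distance, ``color $i$ is in conflict with this unbounded edge'' translates to: some point of $S_i$ lies strictly on the far side of the line through $s,s'$ supporting the hull edge — which is exactly the conflict relation defining a colored $0$-facet of $S$ when one reads $S$ as a point set in $\Real^2$ and uses the direction at infinity. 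Hence $u_{c,j}(S)=e_{c,j}(S)$. The same computation on the farthest side gives $\bar u_{c,j}(S)=e_{c,j}(S)$, since the farthest-site escape condition is the antipodal supporting-line condition and the hull of $S_i$ has a supporting line in a given direction iff it has one in the opposite direction. Combining the two yields $u_{c,j}(S)=\bar u_{c,j}(S)=e_{c,j}(S)$, as claimed; in particular summing over $c$ recovers condition~\textbf{V3} for every $S'\subseteq S$.

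The main obstacle I expect is making the ``direction at infinity'' bookkeeping fully rigorous for a general smooth convex body rather than a disk: one must verify that the map from boundary points of $B$ to outward normal directions is a homeomorphism (this is exactly smoothness plus strict convexity, the latter of which also needs to be argued or assumed — a smooth $B$ need not be strictly convex, so I would either add strict convexity as part of ``smooth'' following \cite{m-bvdcdf-00}, or handle flat pieces by a limiting/perturbation argument), and that the bisector of two sites under $\|\cdot\|_B$ really does have exactly one unbounded branch in exactly the predicted direction. Once that structural lemma about convex-distance bisectors at infinity is in hand, the colored count is a direct rewriting of the conflict relations, and the identification with $e_{c,j}(S)$ is immediate from the definition of colored $j$-facets in Section~4.1. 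A secondary, minor point is the footnote convention on counting unbounded edges (an edge separating $\Plane$ counted twice): I would note that $0$-facets are oriented, so the two orientations of a hull edge correspond to the two ``ends'' of a separating unbounded edge, keeping the bijection exact.
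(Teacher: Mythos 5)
Your plan is correct and matches the paper's proof in all essentials: both establish a bijection between colored $j$-facets and unbounded edges via a direction-at-infinity argument, using smoothness of $B$ to ensure that the two ends of the bisector of the defining pair $(s_1,s_2)$ deliver the $j$-facet's open half-plane $\sigma^+$ as a limit (one end as the interior of a scaling copy $B(p)$ of $B$, giving $u_{c,j}$, the other as its exterior, giving $\bar{u}_{c,j}$). Your concern about strict convexity is fair but is, in the paper, implicitly absorbed by the general-position assumption on $\{\delta_s\}$ (a flat side of $B$ parallel to $\overline{s_1 s_2}$ would force a degenerate bisector, which is excluded), and the ``standard fact'' you would cite from \cite{cd-vdcdf-85,m-bvdcdf-00} is precisely what the paper proves from scratch with the explicit $B(p)$-limit construction rather than retrieving it from those references.
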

\begin{proof}
Pick any $c$-chromatic $j$-facet~$\sigma$ in~$S$.
Let $s_1,s_2\in S$ be two points that define~$\sigma$,
$\ell$ the line through $s_1$ and $s_2$,
and $\sigma^+$ the open half-plane bounded by~$\ell$ that is chosen by~$\sigma$.
Thus,
$\sigma^+$ intersects exactly $j$~colors from~$K\setminus \{\kappa(s_1), \kappa(s_2)\}$.

\begin{figure}[hb]
\begin{center}
\includegraphics[width=.9\textwidth]{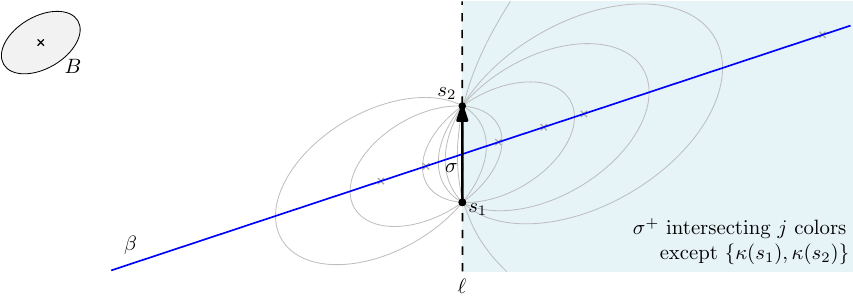}
\end{center}
\caption{
 Illustration to Lemma~\ref{lem:V3-convex_distance}.
 }
\label{fig:V3_convex_distance}
\end{figure}

Now, consider all scaled and translated copies of~$B$ that go through both~$s_1$ and~$s_2$.
The \emph{center} of such a copy~$B'$ of~$B$ is the origin translated 
by the same translation vector of~$B'$.
Let $\beta$ be the locus of the centers of all these copies,
which forms an unbounded curve \changed{splitting~$\Plane$,}
often called the \emph{bisector} between 
$s_1$ and $s_2$~\cite{cd-vdcdf-85,m-bvdcdf-00}. 
For $p \in \beta$, let $B(p)$ be the scaled and translated copy of~$B$
such that its center is~$p$ and $B(p)$ goes through $s_1$ and $s_2$.
Note that the boundary of~$B(p)$ tends to be~$\ell$
as $p$ goes to the point at infinity in either direction along~$\beta$,
since $B$ is convex and smooth.
Let $\hat{B}(p)$ and $\overline{B}(p)$ for $p\in\beta$ be the interior and the exterior of~$B(p)$,
respectively, excluding the boundary of~$B(p)$.
Then, observe that $\hat{B}(p)$ tends to be~$\sigma^+$ as $p$ goes in one direction along~$\beta$,
while, if $p$ goes in the other direction, $\overline{B}(p)$ tends to be~$\sigma^+$ as well,
since $B$ is convex and smooth. 
See \figurename~\ref{fig:V3_convex_distance} for an illustration,
in which $B$ is an ellipse (so, $\beta$ appears to be a line)
and $\sigma^+$ is shaded in lightblue.
From the discussions in Section~\ref{sec:cvda} 
and the analog of the Euclidean case in Section~\ref{sec:framework}.2, 
the endpoint of~$\beta$ (at infinity) in the first direction is a vertex at infinity of~$\CVD^*_{j+1}(S)$
incident to a new $c$-chromatic unbounded edge, which we denote by~$\eta_{c,j}(\sigma)$;
the other endpoint of~$\beta$ in the other direction is 
a vertex at infinity of~$\mCVD^*_{j+1}(S)$
incident to a new $c$-chromatic unbounded edge, which we denote by~$\bar{\eta}_{c,j}(\sigma)$.
Note that both $\eta_{c,j}(\sigma)$ and $\bar{\eta}_{c,j}(\sigma)$ are defined by
the same pair~$(s_1, s_2)$ of sites as $\sigma$.

Observe that the above argument also shows that both $\eta$ and $\bar{\eta}$ 
are bijective.
For any new $c$-chromatic unbounded edge~$\beta'$ of~$\CVD^*_{j+1}(S)$,
consider $\hat{B}(p)$ for $p\in \beta'$.
Then, the limit of~$\hat{B}(p)$, as $p$ goes to the vertex at infinity incident to~$\beta'$,
tends to be an open half-plane that intersects exactly $j$~colors,
which determines a $c$-chromatic $j$-facet in~$S$.
Hence, $\eta_{c,j}$ is a bijection.
Analogously, $\bar{\eta}_{c,j}$ is a bijection as well. 
This results in the equation $u_{c,j}(S) = \bar{u}_{c,j}(S) = e_{c,j}(S)$
for any $c$ and $j$.
\end{proof}
By Lemma~\ref{lem:V3-convex_distance},
Theorem~\ref{thm:kcvd_general_2n-4} implies 
the same upper bound~$4k(n-k)-2n$ on the total number of vertices
in \changed{both} $\CVD_k(S)$ and $\mCVD_k(S)$ under any smooth convex distance function.

We then relax the smoothness of~$B$ by a limit argument,
so let $B$ be any convex and compact body.
Consider a sequence of smooth and convex bodies~$B_0, B_1, \ldots$
that converges to~$B$.
Obviously, there
\changed{exists} $\hat{B} = B_i$ sufficiently close to~$B$ so
that\changed{: (a)}
the functions~$\hat{\delta}_s(x) = \|x-s\|_{\hat{B}}$~for~$s\in S$
 are in general position (see Section~\ref{sec:cvda}), and \changed{(b)}
 for any scaled and translated copy $B_{pqr}$ of~$B$
 having three points $p,q,r\in S$ on its boundary,
 there is also a scaled and translated copy $\hat{B}_{pqr}$
 of~$\hat{B}$,
 \changed{whose boundary goes through $p,q,r$}
 and the separation of~$S$ by~$\hat{B}_{pqr}$ is the same as that by~$B_{pqr}$.

This implies that the vertices in $\CVD_k(S)$ and in $\mCVD_k(S)$ under the convex distance function
based on~$B$ are preserved \changed{in their counterpart diagrams}
under the convex distance function based \changed{on~$\hat{B}$.}
Furthermore, the general position assumption can be relaxed,
as it does not decrease the number of vertices in the diagrams.
Hence, we conclude:
\begin{corollary} \label{coro:kcvd_convex_distance}
 Let $S$ be a set of $n$ colored points in~$\Plane$ with $m$ colors.
 For $1\leq k\leq m-1$,
 the total number of vertices in $\CVD_k(S)$ and $\mCVD_k(S)$ 
 under any $L_p$ metric for $1\leq p \leq \infty$ or any convex distance function
 is at most $4k(n-k)-2n$.
\end{corollary}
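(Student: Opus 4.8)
The plan is to establish the bound first for \emph{smooth} convex distance functions, where conditions \textbf{V1}--\textbf{V3} are all available, and then to transfer it to an arbitrary convex body---in particular to the $L_1$ and $L_\infty$ balls and to every $L_p$ ball---by a limiting perturbation argument.

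For the smooth case, I would use that Voronoi diagrams of point sites under a smooth convex distance function fall under the abstract Voronoi diagram model, so $\VD(S')$ and $\FVD(S')$ satisfy conditions \textbf{V1} and \textbf{V2} for every $S'\subseteq S$ (cf.\ Corollary~\ref{coro:kcvd_AVD}). Lemma~\ref{lem:V3-convex_distance} supplies the remaining ingredient: it gives $u_{c,j}(S)=\bar u_{c,j}(S)=e_{c,j}(S)$ for all $c,j$, which is strictly stronger than condition~\textbf{V3}. Theorem~\ref{thm:kcvd_general_2n-4} then applies verbatim and yields that the total number of vertices of $\CVD_k(S)$ and $\mCVD_k(S)$ equals $4k(n-k)-2n$ minus the two correction sums $2\sum_{i=0}^{k-2}(v_{2,i}(S)+\bar v_{2,i}(S))$ and $\sum_{i=0}^{k-1}(2k-2i-1)(v_{1,i}(S)+\bar v_{1,i}(S))$; since every term in those sums is nonnegative, the total is at most $4k(n-k)-2n$.

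To remove smoothness, let $B$ be an arbitrary convex, compact body with the origin in its interior, and first assume (harmlessly) that $S$ is in general position relative to $B$, so that no point of $S$ lies on the boundary of any scaled, translated copy of $B$ through three other points of $S$. Pick a sequence $B_0,B_1,\dots$ of smooth convex bodies converging to $B$ in Hausdorff distance. I would argue that for $i$ large enough the body $\hat B:=B_i$ has the two properties used just before the corollary: (a)~the functions $\hat\delta_s(x)=\|x-s\|_{\hat B}$, $s\in S$, are in general position, which fails only for finitely many $i$; and (b)~for every scaled, translated copy $B_{pqr}$ of $B$ through three sites $p,q,r\in S$ there is a corresponding copy $\hat B_{pqr}$ of $\hat B$ through $p,q,r$ inducing the \emph{same} partition of $S\setminus\{p,q,r\}$ into interior and exterior sites---this holds because there are only finitely many such triples, each $B_{pqr}$ keeps every remaining site strictly off its boundary, and a sufficiently close copy of $\hat B$ preserves each strict containment. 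By the correspondence of Section~\ref{sec:cvda} between vertices of $\CVD_k(S)$ (resp.\ $\mCVD_k(S)$) and copies of the unit ball through three sites whose interior (resp.\ exterior) meets exactly $k-1$ colors, none of them a defining color, property~(b) injects the vertex set of $\CVD_k(S)$ and $\mCVD_k(S)$ under $B$ into that under $\hat B$. Hence the count under $B$ is at most the count under the smooth body $\hat B$, which is $\le 4k(n-k)-2n$ by the first step. Finally I would drop the general-position assumption on $S$: a degenerate configuration is a limit of general-position ones, and perturbing $S$ into general position does not destroy vertices, so the bound persists. Since the $L_p$ unit ball for $1\le p\le\infty$ is a convex compact body containing the origin in its interior, the $L_p$ statement is the special case where $B$ is that ball.

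The main obstacle is the limiting argument, specifically verifying property~(b): one must be sure that \emph{every} combinatorial vertex of both families of diagrams is reproduced under the nearby smooth metric, with the separation data intact, and that the two successive general-position reductions (for the $\delta_s$ and for $S$ itself) can only increase, never decrease, the vertex count. The smooth case, by contrast, is an immediate consequence of the machinery already developed, once Lemma~\ref{lem:V3-convex_distance} is in hand.
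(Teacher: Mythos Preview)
Your proposal is correct and follows essentially the same route as the paper: first invoke conditions~\textbf{V1}--\textbf{V2} from the abstract Voronoi diagram model and condition~\textbf{V3} from Lemma~\ref{lem:V3-convex_distance} to apply Theorem~\ref{thm:kcvd_general_2n-4} in the smooth case, then pass to an arbitrary convex body by approximating with a nearby smooth body that preserves the combinatorial separation data of each defining triple, and finally relax general position. Your write-up is in fact more explicit than the paper's on why property~(b) holds and why the vertex count can only go up under the two perturbations, which is exactly the point you flag as the main obstacle.
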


It is worth noting that Lemmas~\ref{lem:kcvd_general_uv} and~\ref{lem:V3-convex_distance} 
imply bounds for colored $j$-facets in~$\Plane$.
\begin{corollary} \label{coro:j-facets_2d} 
 Let $S$ be a set of $n$ colored points in~$\Plane$ with $m$ colors.
 For $0\leq k \leq m-2$,
 \[ (k+1)(k+2) \leq \sum_{j=0}^k e_{2,j}(S) + \sum_{j=0}^k (k-j+1)e_{1,j}(S) \leq (k+1)(2n-k-2). \]
\end{corollary}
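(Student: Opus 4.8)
\textbf{Proof proposal for Corollary~\ref{coro:j-facets_2d}.}
The plan is to notice that the numbers $e_{c,j}(S)$ of colored $j$-facets are purely combinatorial quantities, depending only on $S$ and the coloring $\kappa$ and not on any choice of distance function, so we are free to instantiate the machinery of Section~\ref{sec:kcvd_general} with the most convenient $\delta_s$. I would take $\delta_s(x)=\|x-s\|_2$, the Euclidean distance, so that $B$ is the smooth unit disk. For points in general position this choice satisfies all three conditions: every region of $\VD(S')$ is nonempty and convex, so \textbf{V1} holds by Euler's formula; $\FVD(S')$ is a tree of unbounded edges, so \textbf{V2} holds; and the nearest- and farthest-site diagrams of $S'$ have the same number of unbounded edges (the number of hull edges of $S'$), so \textbf{V3} holds. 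The general position of $S$ itself is harmless here and can be assumed, consistently with the standing assumption on the functions $\delta_s$ in Section~\ref{sec:cvda} (a generic perturbation does not create new $j$-facets in the limit, and collinear points on a separating line move strictly to one side, so no count is lost or gained spuriously).

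Next I would apply Lemma~\ref{lem:V3-convex_distance} with this smooth $B$ to obtain $u_{c,j}(S)=\bar{u}_{c,j}(S)=e_{c,j}(S)$ for $c\in\{1,2\}$ and $0\leq j\leq m-1$; alternatively, $U_j=\overline{U}_j$ already follows from \textbf{V3} via Lemma~\ref{lem:V3}. Substituting $j=k$ into the definitions of $U_j$ and $\overline{U}_j$ from Section~\ref{sec:kcvd_general} then gives
\[ U_k = \overline{U}_k = \sum_{i=0}^k u_{2,i}(S) + \sum_{i=0}^k (k-i+1)u_{1,i}(S) = \sum_{j=0}^k e_{2,j}(S) + \sum_{j=0}^k (k-j+1)e_{1,j}(S), \]
which is exactly the middle quantity of the corollary.

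It then remains only to read off the two bounds from Lemma~\ref{lem:kcvd_general_uv}, exactly as recorded in the remark following that lemma. Condition~\textbf{V1} gives $V_k+U_k=(k+1)(2n-k-2)$, and since $V_k=v_{3,k}+\sum_{i=0}^k(v_{2,i}+(k-i+1)v_{1,i})$ is a sum of nonnegative counts, $U_k\leq (k+1)(2n-k-2)$, the upper bound. Condition~\textbf{V2} gives $\overline{V}_k-\overline{U}_k=-(k+1)(k+2)$, and since $\overline{V}_k\geq 0$ we get $\overline{U}_k\geq (k+1)(k+2)$; combined with $U_k=\overline{U}_k$ this is the lower bound. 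I expect the only genuinely delicate point to be the reduction to general position of $S$, i.e.\ checking that passing from a generic perturbation back to $S$ neither creates nor destroys colored $j$-facets in a way that would break either inequality; everything else is a direct substitution into results already established in Section~\ref{sec:kcvd_general}.
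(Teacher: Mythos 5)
Your proof is correct and follows essentially the same approach as the paper: pick a smooth convex distance (the paper uses a general smooth $B$, you use the Euclidean disk), invoke Lemma~\ref{lem:V3-convex_distance} so that $U_k=\overline{U}_k$ equals the middle quantity, then read off both inequalities from Lemma~\ref{lem:kcvd_general_uv} plus nonnegativity of the counting terms. Your derivation is in fact slightly cleaner than the paper's: you get the upper bound directly from $V_k\geq 0$ in the \textbf{V1} identity and the lower bound directly from $\overline{V}_k\geq 0$ in the \textbf{V2} identity, whereas the paper routes both through the intermediate identity $V_k+\overline{V}_k=2(k+1)(n-k-2)$, but the underlying idea is identical.
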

\begin{proof}
Let $B$ be any smooth and convex body in~$\Plane$,
and consider the color Voronoi diagrams $\CVD_k(S)$ and $\mCVD_k(S)$
under the convex distance function based on~$B$.
In this setting, we know
by Lemma~\ref{lem:V3-convex_distance} that conditions~\textbf{V1}--\textbf{V3} hold
and $u_{c,j} = \bar{u}_{c,j} = e_{c,j}(S)$.
So, we have
 \[ U_k = \overline{U}_k = \sum_{j=0}^k e_{2,j}(S) + \sum_{j=0}^k (k-j+1)e_{1,j}(S),\]
for $0\leq k\leq m-2$ by Lemma~\ref{lem:V3}.
From Lemma~\ref{lem:kcvd_general_uv},
we have $V_k \leq V_k + \overline{V}_k = 2(k+1)(n-k-2)$ and $V_k + U_k = (k+1)(2n-k-2)$,
which implies the claimed lower bound
 \[ U_k \geq (k+1)(k+2)\]
for any $0\leq k\leq m-2$.

For the upper bound,
Lemma~\ref{lem:kcvd_general_uv} implies 
$\overline{V}_k \leq V_k + \overline{V}_k = 2(k+1)(n-k-2)$
and thus
\[ \overline{U}_k = \overline{V}_k + (k+1)(k+2) \leq (k+1)(2n-k-2)\]
for any $0\leq k\leq m-2$.
\end{proof}

\subparagraph{More on polygonal convex distance functions.}
First, we consider the $L_\infty$ metric,
so $B$ is the unit square centered at the origin.
Liu, Papadopoulou, and Lee~\cite{lpl-knnvdr-15} proved 
an upper bound of $O((n-k)^2)$ for ordinary order-$k$ Voronoi diagrams
of $n$~points under the $L_\infty$ metric using the \emph{Hanan grid}.
An analogous argument can also be applied to our color diagrams.
\begin{lemma}\label{lem:kcvd_L1_grid} 
 Under the $L_\infty$ metric, the number of vertices in $\CVD_k(S)$ 
 is at most $4(n-k)^2$.
\end{lemma}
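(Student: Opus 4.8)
The plan is to adapt the Hanan-grid argument of Liu, Papadopoulou, and Lee~\cite{lpl-knnvdr-15} for ordinary order-$k$ $L_\infty$ Voronoi diagrams to the colored diagram $\CVD_k(S)$. Let $G$ be the Hanan grid of $S$, that is, the arrangement of the $n$ horizontal and the $n$ vertical lines through the sites of $S$; it has $O(n^2)$ cells. Inside any single cell of $G$ the signs of $x_1-s_1$ and of $x_2-s_2$ are fixed for every site $s$, so $\delta_s(x)=\|x-s\|_\infty$ agrees there with a piecewise-linear function of the form $\max(\pm(x_1-s_1),\pm(x_2-s_2))$; hence every $d_i$ is a minimum of such functions, and every edge of $\CVD_k(S)$ lying in that cell is either axis-parallel or of slope~$\pm1$. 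Thus $\CVD_k(S)$ restricted to one cell of $G$ is a rectilinear-plus-diagonal subdivision whose vertex count is controlled by the number of colors that can be among the $k$ nearest somewhere in that cell.

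The core step is to show that $\CVD_k(S)$ can carry vertices in only $O((n-k)^2)$ cell--direction incidences of $G$, with the constant coming out to $4(n-k)^2$. The mechanism mirrors the uncolored case. At any vertex~$v$ of $\CVD_k(S)$ let $r=d_{(k)}(v)$ be the distance to the $k$-th nearest color, and let $Q$ be the axis-aligned $L_\infty$-square of radius~$r$ centered at~$v$: exactly the $k$ nearest colors have a site in~$\overline{Q}$, so the other $m-k$ colors are \emph{far}, with all of their sites strictly outside~$Q$. The $L_\infty$-specific fact that does the work — the same one behind the constant-size $L_\infty$ farthest-point diagram, namely that the farthest point is always extreme in one of the four axis directions — is that a site which pins the axis-parallel part of an edge of $\CVD_k(S)$ through~$v$, or which defines the slope-$\pm1$ crease there, must be extreme (leftmost, rightmost, topmost, or bottommost) among the sites of its color on or just outside $\partial Q$, and its color must be one of the $k$ nearest. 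Tracking how these at-most-four-per-color extreme sites may vary as $v$ ranges over the whole diagram confines the supporting lines of the edges of $\CVD_k(S)$ to four families, each of size at most $2(n-k)$, and a pairwise count of the two transversal families gives at most $4(n-k)^2$ vertices. The slope-$\pm1$ edges are then handled by the routine observation that a maximal diagonal chain of $\CVD_k(S)$ is $x$-monotone with endpoints incident to axis-parallel edges or at infinity, so its vertices are absorbed into the above count.

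The step I expect to be the real obstacle is exactly this counting in the colored setting: for uncolored sites the ``far'' objects are literally the at most $n-k$ sites lying outside~$Q$, whereas here a far color is an entire cluster, and one must argue carefully that the sites which can actually supply a supporting line of $\CVD_k(S)$ still split into four families of size $2(n-k)$ rather than four families of size $2n$ — this is where the precise choice of "extreme site of its color" and the bookkeeping over all vertices must be done with care to keep the constant at~$4$. Everything else — the local $O(1)$ bound per cell of~$G$, the treatment of the diagonal chains, and the unbounded edges — is a direct transcription of Liu, Papadopoulou, and Lee~\cite{lpl-knnvdr-15}.
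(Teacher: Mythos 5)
Your high-level plan — use the Hanan grid, following Liu, Papadopoulou, and Lee — is the right starting point, but after that your argument departs from the paper's and leaves the central step unproved. You yourself flag ``the real obstacle'' as showing that the supporting lines of the edges confine to four families of size $2(n-k)$; you do not carry out that count, and the mechanism you invoke to motivate it (extremal sites in each of the four axis directions, as in the $L_\infty$ \emph{farthest}-point diagram) is the wrong one here: $\CVD_k(S)$ is built from the \emph{minimal} distances $d_i(\cdot)=\min_{s\in S_i}\delta_s(\cdot)$, so the site of color $i$ that matters near a vertex is the nearest one, not an extreme one of its color. The structural observations (edges inside a Hanan cell have slope $0$, $\infty$, or $\pm1$; diagonal chains are monotone) are true but never connected to a count of $4(n-k)^2$. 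And there is no treatment of the fact that a vertex of $\CVD_k(S)$ is simultaneously a vertex of $\CVD^*_k(S)$ and $\CVD^*_{k+1}(S)$, which is where the factor $4$ rather than $2$ actually arises.

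The paper's proof is shorter and does not go through supporting lines at all. A \emph{new} vertex of $\CVD_k(S)$ corresponds to an $L_\infty$-circle (axis-aligned square) $\square$ through three sites whose open interior meets exactly $k-1$ colors. By the general-position convention, two adjacent corners of $\square$ are Hanan grid points; thus exactly one of the top-left and bottom-right corners, say $y$, is a grid point, and each grid point serves as such a corner for at most one such square. Because the interior of $\square$ already contains at least $k-1$ sites and the boundary three more, the quadrant below-and-to-the-right of $y$ (or above-left, in the other case) contains at least $k+1$ sites; hence $y$ lies in an $(n-k-1)\times(n-k-1)$ corner subgrid. This yields $v_{3,k-1}+v_{2,k-1}+v_{1,k-1}\leq 2(n-k-1)^2$. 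Finally, by Lemma~\ref{lem:CVD_v} the vertex count of $\CVD_k(S)$ is $v_{3,k-1}+v_{3,k-2}+v_{2,k-1}$, so adding the analogous bound $2(n-k)^2$ for order $k-1$ gives at most $4(n-k)^2$. The colored twist that you worried about is handled in one sentence: $k-1$ colors in the interior already force at least $k-1$ sites there, and the rest is identical to the uncolored argument. You should rework your proof around this direct grid-corner count and the use of Lemma~\ref{lem:CVD_v} rather than the line-family sketch.
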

\begin{proof}
The \emph{Hanan grid} $G=G(S)$ of $S$ is a grid constructed by drawing two lines, vertical and horizontal,
through each point in $S$~\cite{lpl-knnvdr-15}.
Consider an $L_\infty$-circle~$\square$ corresponding to a \emph{new} vertex of $\CVD_k(S)$
under the $L_\infty$ metric.
Note that the interior of~$\square$ intersects exactly $k-1$ colors,
so $\square$ includes at least $k-1$~points of~$S$ in its interior.
Also, exactly three points of~$S$ lie on~$\square$ and
two adjacent corners of~$\square$ lie on grid points of $G$ by the general position assumption
(See Lemma~7 in~\cite{lpl-knnvdr-15}),
so either the top-left corner or the bottom-right corner of~$\square$ is a grid point,
but not both.

Assume that the top-left corner $y$ of~$\square$ is a grid point of $G$.
If $y$ is the intersection of the $a$-th vertical line of~$G$ from the left
and the $b$-th horizontal line from above,
then we have $a \leq n-k-1$ and $b \leq n-k-1$
since we need at least $k+1$~points below and to the right of~$y$
to form such a square~$\square$.
Since each grid point can serve at most once as top-left corner of such a square,
there are at most $(n-k-1)^2$ new vertices of~$\CVD_k(S)$ 
whose corresponding $L_\infty$-circles have their top-left corners lie at grid points.
The same argument also applies to those squares whose bottom-right corner lies at a grid point.
Therefore, we conclude that
\[ v_{3,k-1} + v_{2,k-1} + v_{1,k-1} \leq 2(n-k-1)^2.\]

By Lemma~\ref{lem:CVD_v}, the number of vertices of~$\CVD_k(S)$ is
\[ v_{3,k-1} + v_{3,k-2} + v_{2,k-1} \leq 2(n-k-1)^2 + 2(n-k)^2 \leq 4(n-k)^2,\]
as claimed.
\end{proof}

Hence, the complexity of~$\CVD_k(S)$ under the $L_\infty$ metric is 
$O(\min\{k(n-k), (n-k)^2\})$.
For the maximal counterpart $\mCVD_k(S)$,
we prove the following.
\begin{lemma} \label{lem:kcvd_L1_barU} 
 Under the $L_\infty$ metric, for any $0\leq j\leq m-2$, we have
  $\overline{U}_j \leq 2(j+1)(j+2)$.
 Therefore, the number of vertices of~$\mCVD_k(S)$ for $1\leq k\leq m-1$ is at most~$2k^2$.
\end{lemma}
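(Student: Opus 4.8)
The plan is to prove the inequality $\overline{U}_j\le 2(j+1)(j+2)$ for every $0\le j\le m-2$; the stated bound on $\mCVD_k(S)$ then follows at once. Under the $L_\infty$ metric condition~\textbf{V2} holds, since the farthest-site Voronoi diagram of point sites under a convex distance function is an abstract Voronoi diagram, so Theorem~\ref{thm:kcvd_general} tells us that the number of vertices of $\mCVD_k(S)$ equals $\overline{U}_{k-1}+\overline{U}_{k-2}-2k^2$ minus the nonnegative quantity $2\sum_{i=0}^{k-2}\bar{v}_{2,i}(S)+\sum_{i=0}^{k-1}(2k-2i-1)\bar{v}_{1,i}(S)$; hence it is at most $\overline{U}_{k-1}+\overline{U}_{k-2}-2k^2\le 2k(k+1)+2(k-1)k-2k^2=2k^2$.

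It remains to bound $\overline{U}_j=\sum_{i=0}^j\bigl(\bar{u}_{2,i}+(j-i+1)\bar{u}_{1,i}\bigr)$, a weighted count of the new $c$-chromatic unbounded edges appearing in $\mCVD^*_1(S),\dots,\mCVD^*_{j+1}(S)$. The first step is to observe that under the $L_\infty$ metric every unbounded edge of every $\mCVD^*_k(S)$ is a ray parallel to one of the four diagonal directions $(\pm 1,\pm 1)$: along a non-diagonal direction $\theta$ one coordinate strictly dominates, so for each color $i$ we have $\bar{d}_i(x)\to c\,\|x\|-\gamma_i$ as $x\to\infty$ in direction $\theta$, where $c=\max(|\cos\theta|,|\sin\theta|)$ and $\gamma_i$ is determined by an extreme point of $S_i$ in the dominant direction, and the resulting strict order of the colors is locally constant in $\theta$, so no unbounded edge can point there. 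Fixing one diagonal, say $(1,1)$, and parametrizing its lines by the offset $d=x_1-x_2$, one gets $\bar{d}_i(x)=t-f_i(d)$ as $t\to+\infty$ along the line of offset $d$, with $f_i(d)=\min(a_i,\,b_i+d)$, where $a_i$ and $b_i$ are the smallest $x$- and $y$-coordinates occurring in $S_i$. Hence the unbounded edges of all the $\mCVD^*_k(S)$ that point in direction $(1,1)$ correspond exactly to the vertices of the one-dimensional arrangement of the $m$ concave two-piece functions $\{f_i\}_i$ (the breakpoints of individual $f_i$, which are $1$-chromatic, and the pairwise crossings, which are $2$-chromatic), the order of such an unbounded edge being its level in that arrangement. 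The structural fact that makes everything work is that the lower envelope of any subfamily $\{f_i\}_{i\in R}$ consists of a single breakpoint, because $\min_{i\in R}f_i(d)=\min\bigl(\min_{i\in R}a_i,\;d+\min_{i\in R}b_i\bigr)$; in Clarkson--Shor terms $T_0\equiv 1$ for this one-dimensional structure.

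With this setup the plan is to bound the contribution of each of the four diagonals to $\overline{U}_j$ by $\tfrac12(j+1)(j+2)$. The $1$-chromatic part can be handled directly: the breakpoint of $f_\ell$ sits at offset $a_\ell-b_\ell$ and lies at level $1+|\{i':a_{i'}<a_\ell\text{ or }b_{i'}<b_\ell\}|$, which is at least $\max(\mathrm{rk}_a(\ell),\mathrm{rk}_b(\ell))$; since for each value $v$ at most two colors satisfy $\max(\mathrm{rk}_a(\ell),\mathrm{rk}_b(\ell))=v$, we get $\sum_{i=0}^j(j-i+1)\bar{u}_{1,i}^{(1,1)}=\sum_{\ell:\,\mathrm{ord}(\ell)\le j+1}(j+2-\mathrm{ord}(\ell))\le\tfrac12(j+1)(j+2)$. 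For the $2$-chromatic part one uses that $f_\alpha$ and $f_\beta$ cross at most once, at offset $a_\alpha-b_\beta$ and only when $\mathrm{rk}_a(\alpha)<\mathrm{rk}_a(\beta)$ and $\mathrm{rk}_b(\beta)\le\mathrm{rk}_b(\alpha)$, and that such a crossing lies at level bounded below by $\mathrm{rk}_a(\alpha)+\mathrm{rk}_b(\beta)$ minus the number of colors dominated in both coordinates; alternatively one invokes the $(\le k)$-level bound of Theorem~\ref{thm:CS_general} for $n=m$ together with $T_0\equiv 1$. The main obstacle is exactly this last step: one must combine the (weighted) $1$-chromatic breakpoints and the $2$-chromatic crossings so that their total weight over one diagonal is not merely of the correct order $O(j^2)$ but at most $\tfrac12(j+1)(j+2)$, which requires sharpening the naive estimate ``(number of colors at level $\le t$)$\,\le t$'' by exploiting the single-breakpoint lower-envelope property, or by charging the $2$-chromatic crossings against the intervals of constancy of the level functions $L_1,\dots,L_{j+1}$. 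Finally one has to deal with the degenerate cases ($|S_i|\le 2$, or coincident extreme points of some $S_i$), where certain diagonal unbounded edges simply do not exist, which only helps. Summing the four diagonals yields $\overline{U}_j\le 2(j+1)(j+2)$, completing the proof.
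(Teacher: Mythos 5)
Your first paragraph (deducing the $2k^2$ bound on $\mCVD_k(S)$ from $\overline{U}_{k-1}+\overline{U}_{k-2}-2k^2$ via Theorem~\ref{thm:kcvd_general}) is correct and matches the paper. The observation that all unbounded edges of the maximal diagrams are asymptotically parallel to the four diagonal directions is also sound — it is the dual of the paper's statement that each unbounded edge corresponds to an axis-aligned quadrant — and the translation into the one-dimensional arrangement of the $m$ concave two-piece functions $f_i(d)=\min(a_i,b_i+d)$ is a genuinely different and quite appealing reformulation: the paper instead works in the primal plane with a $2m$-line grid $\overline{G}$ (a horizontal line through the topmost point and a vertical line through the leftmost point of each color). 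Your auxiliary observations about $\{f_i\}$ (single-breakpoint lower envelope; at most one crossing per pair, at $a_\alpha-b_\beta$ when $a_\alpha<a_\beta$ and $b_\alpha>b_\beta$) are all correct.

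The gap is that the central inequality — that each diagonal contributes at most $\tfrac12(j+1)(j+2)$ to $\overline{U}_j$ — is never established, and you acknowledge this yourself. In fact, even the $1$-chromatic sub-claim is not supported by the argument you give: from $\mathrm{ord}(\ell)\geq\max(\mathrm{rk}_a(\ell),\mathrm{rk}_b(\ell))$ together with ``at most two colors per value of the $\max$'' one only gets $\sum_{v=1}^{j+1}2(j+2-v)=(j+1)(j+2)$, twice what you claim, so even the $1$-chromatic part needs something sharper. The $2$-chromatic part is only a sketch, and the crucial step of combining the two without losing the constant is exactly what you flag as ``the main obstacle.'' What the paper does instead is bound the $1$- and $2$-chromatic contributions \emph{together}, column by column in the grid $\overline{G}$: fixing a column $a\leq j$ and sweeping $b$ from $0$ downward, the set $H_a\cup V_b$ of colors that conflict with the quadrant apexed at $\overline{G}(a,b)$ grows by one each time the grid point makes a nonzero (weighted) contribution, which caps the column total at $j-a+1$; treating the unique monochromatic point $\overline{G}(a,b_a)$ of the column (which carries the extra $1$-chromatic weight $j-x_a+1$) within the same sweep is what makes the constant come out right. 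If you want to finish your 1D formulation you would need an analogous \emph{joint} charging argument — for example, sweeping $d$ and tracking the number of distinct colors strictly below the current level — rather than bounding breakpoints and crossings separately.
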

\begin{proof}
Recall that $\overline{U}_j = \sum_{i=0}^j (\bar{u}_{2,i}(S) + (j-i+1)\bar{u}_{1,i}(S))$
and that $\bar{u}_{c,j}(S)$ counts the number of $c$-chromatic unbounded edges in $\mCVD_{j+1}(S)$.
Each unbounded edge in $\mCVD_{j+1}$ corresponds to a quadrant $Q$.
Without loss of generality, we only consider those quadrants $Q$ whose bounding rays
are to the right and downwards, respectively.
Then, the following properties hold:
\begin{enumerate}[(i)] 
 \item the horizontal ray bounding $Q$ should contain the top-most point in~$S_i$
  for some color~$i\in K$ with~$S_i \subset Q$,
 \item the vertical ray bounding $Q$ should contain the left-most point in~$S_i$
  for some color~$i'\in K$ with~$S_{i'} \subset Q$,
 \item the exterior $\overline{Q} = \Plane\setminus Q$ of $Q$
 intersects exactly $j$ colors from $K\setminus \{i, i'\}$.
\end{enumerate}

To bound the number of those quadrants satisfying the above properties,
we consider the grid $\overline{G}$ obtained by drawing
a horizontal line through the top-most point from each color
and a vertical line through the left-most point from each color.
Let $\overline{G}(a,b)$ for $0\leq a,b\leq m-1$ be the grid point
that is the $(a+1)$-st from the left and the $(b+1)$-st from above.
Regard each of the $2m$~lines of $\overline{G}$ is given the same color as its original point.
Let $H_a \subseteq K$ be the set of $a$ colors of horizontal lines above the $(a+1)$-st horizontal line,
and $V_b \subseteq K$ be the set of $b$ colors of vertical lines to the left of the $(b+1)$-st vertical line.
Each grid point of $\overline{G}$ is called
\emph{monochromatic} if it is the intersection of the lines of a common color,
or \emph{bichromatic}, otherwise.
By construction,
each row or column of $\overline{G}$ has exactly one monochromatic point
and the others are bichromatic.
For each $a$, let $b_a$ be such that $\overline{G}(a, b_a)$ is monochromatic.

Fix $0\leq j\leq m-2$.
Define $w(a,b)$ to be the \emph{contribution} of $\overline{G}(a,b)$ to $\overline{U}_j$.
More precisely, we have $w(a,b) = 1$
if $\overline{G}(a,b)$ is the apex of
a quadrant $Q$ corresponding to a new $2$-chromatic unbounded edge in $\mCVD^*_{i+1}(S)$ for $i \leq j$,
$w(a,b) = j-i+1$
if that corresponds to a new $1$-chromatic unbounded edge in $\mCVD^*_{i+1}(S)$ for $i \leq j$,
or $w(a,b) = 0$, otherwise.

In the following, we want to find an upper bound on $w(a) = \sum_b w(a,b)$ for each $0\leq a\leq m-1$.
It is obvious that $w(a,b) = 0$ if $a > j$ or $b>j$,
so $w(a) = 0$ for $a>j$ and $w(a) = \sum_{b=0}^j w(a,b)$.
Also, we have $w(a,b) = 0$ for $b > b_a$ by property (ii).
Now, fix $a$ with $0\leq a\leq j$, and consider $b$ from $0$ to $j$ one by one.
There are two cases: either $b_a \geq j+1$ or $b_a < j+1$.

Suppose the former case where $b_a \geq j+1$.
Then, for $b$ with $b \leq j$,
$\overline{G}(a,b)$ is bichromatic and thus
we have $w(a,b) = 0$ if either $|H_a \cup V_b| > j$ or
the color of the $(b+1)$-st vertical line belongs to $H_a$;
or $w(a,b) \leq 1$, otherwise.
So, as $b$ increases,
the cardinality of $H_a \cup V_b$ increases by one
only when we encounter such $b$ that $w(a,b)=1$.
This implies that $w(a) = \sum_b w(a,b) \leq j-a+1$.

Next, consider the case of $b_a < j+1$.
Let $x_a := |H_a \cup V_{b_a}|$.
If $x_a > j$, then the above argument still holds to see that $w(a) \leq j-a+1$.
So, assume that $x_a \leq j$.
Then, we have $w(a, b_a) \leq j-x_a+1$ and $w(a,b) = 0$ for $b>b_a$.
On the other hand, for $0\leq b < b_a$,
we have $w(a,b) = 0$ if the color of the $(b+1)$-st vertical line is a member of $H_a$;
or $w(a,b) \leq 1$, otherwise.
A similar argument as above shows that
$w(a,0) + \cdots + w(a, b_a-1) \leq x_a - a$.
Hence, we conclude in this case that
\[ w(a) \leq (x_a - a) + (j - x_a +1) = j-a+1.\]

This implies that $\sum_{0\leq a \leq j} w(a) \leq \frac{1}{2}(j+1)(j+2)$,
and thus the claimed bound $\overline{U}_j \leq 2(j+1)(j+2)$ follows
because there are three more different directions of quadrants,
which can be handled analogously.
Combining this with the equality in Theorem~\ref{thm:kcvd_general},
we obtain
\[ \overline{V}_j = \overline{U}_j - (j+1)(j+2) \leq (j+1)(j+2),\]
for any $0\leq j \leq m-2$.
Hence, by Lemma~\ref{lem:CVD_v},
the number of vertices of~$\mCVD(S)$ is
\begin{align*}
 & \bar{v}_{3,k-1}(S) + \bar{v}_{3,k-2}(S) + \bar{v}_{2,k-1}(S)  \\
 & \leq
  (k-1)k + k(k+1) - 2 \sum_{i=0}^{k-2} \bar{v}_{2,i}(S) - \sum_{i=0}^{k-1}(2k-2i-1)\bar{v}_{1,i}(S)
  \leq 2k^2,
\end{align*}
as claimed.
\end{proof}

Summarizing, we obtain:
\begin{theorem} \label{thm:kcvd_L1}
 Let $S$ be a set of $n$ colored points with $m$ colors in the $L_\infty$ or $L_1$ plane.
 For $1\leq k\leq m-1$, the number of vertices in $\CVD_k(S)$ is at most
 $ \min\{4k(n-k)-2n, 4(n-k)^2\}$
 and the number of vertices in $\mCVD_k(S)$ is at most
 $\min\{4k(n-k)-2n, 2k^2\}$.
\end{theorem}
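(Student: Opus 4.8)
The plan is to assemble the claimed bounds from the three ingredients already established for polygonal convex distance functions, and then reduce the $L_1$ case to the $L_\infty$ case by an isometry of the plane.

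First I would dispose of the $L_\infty$ metric. Corollary~\ref{coro:kcvd_convex_distance} already bounds the \emph{total} number of vertices of $\CVD_k(S)$ and $\mCVD_k(S)$ by $4k(n-k)-2n$; since both counts are nonnegative, each of the two diagrams individually has at most $4k(n-k)-2n$ vertices. Lemma~\ref{lem:kcvd_L1_grid} bounds the number of vertices of $\CVD_k(S)$ by $4(n-k)^2$, and Lemma~\ref{lem:kcvd_L1_barU} bounds the number of vertices of $\mCVD_k(S)$ by $2k^2$. Taking, for each diagram, the minimum of the two available bounds yields exactly the stated quantities $\min\{4k(n-k)-2n,\,4(n-k)^2\}$ and $\min\{4k(n-k)-2n,\,2k^2\}$.

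Next I would transfer everything to the $L_1$ metric. Consider the linear map $T\colon\Plane\to\Plane$ given by $T(x,y)=(x+y,\,x-y)$. Using the elementary identity $\max\{|x+y|,\,|x-y|\}=|x|+|y|$, one checks that $\|T(p)-T(q)\|_\infty=\|p-q\|_1$ for all $p,q\in\Plane$, so $T$ is a bijective isometry from the $L_1$ plane onto the $L_\infty$ plane (concretely, a rotation by $45^\circ$ followed by a uniform scaling). Applying $T$ to the colored point set $S$, with the color assignment $\kappa$ left unchanged, carries the distance-to-site functions $\delta_s$ for the $L_1$ metric to the corresponding functions for the $L_\infty$ metric, and hence carries $\CVD_k(S)$ and $\mCVD_k(S)$ in the $L_1$ plane onto $\CVD_k(T(S))$ and $\mCVD_k(T(S))$ in the $L_\infty$ plane. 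As $T$ is a homeomorphism of $\Plane$, the combinatorial complexities of these diagrams—in particular the numbers of vertices—are preserved, so the $L_\infty$ bounds above apply verbatim to the $L_1$ case. As before (see the discussion preceding Corollary~\ref{coro:kcvd_convex_distance}), the general position assumption may be dropped here, since relaxing it does not increase the number of vertices.

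Since every step is an immediate consequence of results already in hand, there is no real obstacle; the only point requiring a line of verification is the metric identity underlying the $L_1$--$L_\infty$ isometry, together with the observation that an affine homeomorphism of the plane preserves the combinatorial type of $\CVD_k$ and $\mCVD_k$. The theorem is therefore essentially a summary of Corollary~\ref{coro:kcvd_convex_distance} and Lemmas~\ref{lem:kcvd_L1_grid} and~\ref{lem:kcvd_L1_barU}, extended from $L_\infty$ to $L_1$ by this reduction.
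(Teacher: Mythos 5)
Your proposal is correct and matches the paper's approach: the theorem is obtained by combining Corollary~\ref{coro:kcvd_convex_distance} with Lemmas~\ref{lem:kcvd_L1_grid} and~\ref{lem:kcvd_L1_barU}, and the $L_1$ case follows from the $L_\infty$ case via the standard linear isometry $T(x,y)=(x+y,x-y)$ that carries the $L_1$ metric to the $L_\infty$ metric while preserving the combinatorial structure of the diagrams. The paper simply states ``Summarizing, we obtain'' and leaves this assembly (and the $L_1$-to-$L_\infty$ reduction) implicit, so your write-up is just a more explicit version of the same argument.
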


The above approach also works for polygonal convex distances,
concluding the following.
\begin{corollary} \label{coro:kcvd_convex_polygonal} 
 Let $B$ be a convex $2b$-gon with $2b\geq 4$, centrally symmetric around the origin,
 and
 $S$ a set of $n$ colored points with $m$ colors.
 For $1\leq k\leq m-1$,
 $\CVD_k(S)$ and $\mCVD_k(S)$ under the convex distance function based on $B$
 consist of at most $\min\{4k(n-k)-2n, 2(b^2-b)(n-k)^2\}$
 and $\min\{4k(n-k)-2n, 2(b^2-b-1)k^2\}$ vertices, respectively.
\end{corollary}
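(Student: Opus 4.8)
The plan is to read each minimum in two halves. The first term $4k(n-k)-2n$ is immediate: a convex distance based on a polygon is in particular a convex distance, so Corollary~\ref{coro:kcvd_convex_distance} applies verbatim. The second term is obtained by generalizing the Hanan-grid arguments of Lemma~\ref{lem:kcvd_L1_grid} (for $\CVD_k$) and Lemma~\ref{lem:kcvd_L1_barU} (for $\mCVD_k$) from the unit square to the $2b$-gon~$B$. Throughout I may assume, strengthening the general position of~$S$, that no difference of two sites is parallel to an edge of~$B$; then on the boundary of any homothet of~$B$ through three sites those three sites lie on three \emph{distinct} edges.

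For $\CVD_k(S)$: let $G=G(S)$ be the arrangement of the $bn$ lines obtained by drawing, through each site, one line in each of the $b$ edge directions of~$B$; it has $O(b^2n^2)$ vertices. As in the Euclidean case treated in Section~\ref{sec:framework}, each new vertex of~$\CVD^*_k(S)$ corresponds to a homothet~$B'$ of~$B$ with exactly three sites on~$\bd B'$ and exactly $k-1$ colors met by~$\intr B'$, hence at least $k-1$ sites in~$\intr B'$. Among the three edges of~$B'$ carrying a site, pick two of different directions; their supporting lines lie in~$G$ and meet at a grid point~$q$. Recording as the \emph{type} of the vertex this pair of directions together with the orientations (which of the two parallel $B$-edges in each direction is meant), there are $O(b^2)$ types, specializing for $b=2$ to the two corner types (top-left / bottom-right) used in Lemma~\ref{lem:kcvd_L1_grid}. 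The key step is that the type, the point~$q$, and the value~$k$ determine~$B'$: homothets of~$B$ with the two prescribed edges on the two lines through~$q$ form a one-parameter family~$B'_\lambda$ (three degrees of freedom, two constraints) whose centre depends affinely on the scale~$\lambda$, so the family is nested about a fixed centre of expansion; along it the number of colors met by~$\intr B'_\lambda$ is monotone, so for fixed~$k$ only $O(1)$ values of~$\lambda$ give a new vertex of~$\CVD^*_k(S)$. Finally, each of the two grid lines has at least $k-1$ sites strictly on the $B'$-side (they lie in~$\intr B'$), so $q$ is the intersection of the $a$-th and $a'$-th grid lines in the two directions with $a,a'\le n-k-1$; summing over types, the number of new vertices of~$\CVD^*_k(S)$ is at most $(b^2-b)(n-k-1)^2$. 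By Lemma~\ref{lem:CVD_v} the number of vertices of~$\CVD_k(S)$ is $v_{3,k-1}+v_{3,k-2}+v_{2,k-1}$; applying the bound at orders $k$ and $k-1$ gives at most $(b^2-b)\big((n-k-1)^2+(n-k)^2\big)\le 2(b^2-b)(n-k)^2$.

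For $\mCVD_k(S)$: I mimic the proof of Lemma~\ref{lem:kcvd_L1_barU} with quadrants replaced by the $2b$ translates of the cones spanned by pairs of adjacent edge directions of~$B$ (the ``corners at infinity'' of~$B$), one such cone per unbounded edge of~$\mCVD_{j+1}(S)$. For each cone-type I build a grid~$\overline{G}$ from the $b$ directions using the extreme sites of each color, classify its vertices as monochromatic or bichromatic as in that proof, and charge to each vertex its contribution $w(a,a')$ to~$\overline{U}_j$; the same row-by-row argument bounds one cone-type's total contribution by $\tfrac12(j+1)(j+2)$, so $\overline{U}_j\le (b^2-b)(j+1)(j+2)$. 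Since a polygonal convex distance yields an abstract Voronoi diagram, condition~\textbf{V2} holds, so Lemma~\ref{lem:kcvd_general_uv} gives $\overline{V}_j=\overline{U}_j-(j+1)(j+2)\le (b^2-b-1)(j+1)(j+2)$; combined with Lemma~\ref{lem:CVD_v}, exactly as in Lemma~\ref{lem:kcvd_L1_barU}, $\bar v_{3,k-1}(S)+\bar v_{3,k-2}(S)+\bar v_{2,k-1}(S)\le (b^2-b-1)\big((k-1)k+k(k+1)\big)\le 2(b^2-b-1)k^2$.

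I expect the main obstacle to be the injectivity step for~$\CVD_k(S)$ when $b\ge 3$: unlike the square, where removing one of the four edges leaves two adjacent site-carrying edges and hence a grid-point \emph{corner} of~$B'$, for $b\ge 3$ the three defining sites may sit on pairwise non-adjacent edges, so the recorded grid point~$q$ need not be a vertex of~$B'$ and one cannot argue ``each grid point is used at most once as such a corner''. The substitute is the one-parameter family above, and the delicate point is to prove cleanly that the enclosure count is monotone along it (the affine dependence of the centre on the scale should make the family nested about a fixed point) and to carry the bookkeeping so that the hidden $O(1)$ collapses to the stated constants $b^2-b$ and $b^2-b-1$, consistent with their values $2$ and $1$ at $b=2$. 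For $\mCVD_k(S)$ the analogous care is in pinning down the right generalization of the monochromatic/bichromatic dichotomy to $b$ directions and checking that the weighting still telescopes row by row.
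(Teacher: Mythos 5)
Your reduction of the first term to Corollary~\ref{coro:kcvd_convex_distance} matches the paper. For the $\CVD_k(S)$ half of the second term, however, the step you yourself flag as delicate is in fact wrong. You charge each new vertex of $\CVD^*_k(S)$ to the grid point $q$ at the intersection of the supporting lines of two $B$-edges of the defining homothet, and argue that the one-parameter family of homothets constrained to keep those two edges on those two lines is ``nested about a fixed centre of expansion,'' so that enclosure is monotone in the scale $\lambda$. Writing the family out gives $B'_\lambda = q + \lambda(w+B)$, where $-w$ is the intersection point of the supporting lines of the two chosen $B$-edges (your ``affine dependence of the centre'' is correct). This family is nested precisely when $q\in B'_\lambda$, i.e.\ when $-w\in B$. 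That holds when the two chosen edges of $B$ are \emph{adjacent}, since $-w$ is then a vertex of $B$ --- the only configuration that occurs for the square. But when the two chosen edges are not adjacent, $-w$ is the apex of the wedge of their supporting half-planes and lies strictly outside $B$, so $q\notin B'_\lambda$ and the family is \emph{not} nested: as $\lambda$ grows, $B'_\lambda$ slides away from $q$, points near $q$ leave its interior, and the color count is not monotone. For $b\ge 3$ the three defining edges can be pairwise non-adjacent (alternating edges of a hexagon), so no choice of pair among the three rescues the argument. The inference from affine dependence to nesting is the gap.

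The paper sidesteps this with a small but essential change of object: for each pair of orientations $\{\theta_1,\theta_2\}$ it replaces the homothet of $B$ by the circumscribing \emph{parallelogram} $B'$ bounded by the four supporting lines of the $B$-edges in directions $\theta_1,\theta_2$, and runs the Hanan-grid argument of Lemma~\ref{lem:kcvd_L1_grid} on homothets of that parallelogram. In a parallelogram every pair of non-parallel sides is adjacent, so the recorded grid point is always a vertex of $B'$, nesting is restored, and the per-pair bound $2(n-k-1)^2$ --- hence $\binom{b}{2}\bigl(2(n-k-1)^2+2(n-k)^2\bigr)\le 2(b^2-b)(n-k)^2$ --- follows. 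The same parallelogram substitution is what the paper uses for $\mCVD_k(S)$, applying Lemma~\ref{lem:kcvd_L1_barU} once per pair to get $\overline{U}_j\le 2\binom{b}{2}(j+1)(j+2)=(b^2-b)(j+1)(j+2)$. Note that your own tally for this half ($2b$ cone types contributing $\tfrac12(j+1)(j+2)$ each) yields $b(j+1)(j+2)$, not the $(b^2-b)(j+1)(j+2)$ you wrote (they agree only at $b=2$), so that bookkeeping also needs to be revisited. The downstream steps --- subtracting $(j+1)(j+2)$ via condition~\textbf{V2} and bounding $\bar{v}_{3,k-1}+\bar{v}_{3,k-2}+\bar{v}_{2,k-1}\le \overline{V}_{k-1}+\overline{V}_{k-2}$ --- are fine.
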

\begin{proof}
The same bound~$4k(n-k)-2n$ of Corollary~\ref{coro:kcvd_convex_distance}
holds for this case.

Let $D$ be the set of $b$~orientations of the sides of~$B$.
Take any pair of two orientations $\theta_1, \theta_2\in D$,
and consider the quadrilateral~$B'$ formed by stretching the four sides of~$B$
whose orientations are either $\theta_1$ or $\theta_2$.
We build the grid~$G$ by drawing two lines parallel to $\theta_1$ and $\theta_2$
through each point in~$S$.
Then, the same argument as in the proof of Lemma~\ref{lem:kcvd_L1_grid}
concludes that the number of new vertices in~$\CVD_k(S)$ such that
their corresponding copies of~$B'$ have its corner lies on the grid points of~$G$
is at most $2(n-k-1)^2$.
Since there are $\binom{b}{2}$ such pairs of orientations,
the number of vertices of~$\CVD_k(S)$ is bounded by
\[ \binom{b}{2}(2(n-k-1)^2 + 2(n-k)^2) \leq 2(b^2 - b)(n-k)^2.\]

Similarly, we can show that
\[ \overline{U}_j \leq 2\binom{b}{2}(j+1)(j+2)\]
for any~$0\leq j\leq m-2$,
by considering each pair of orientations in~$D$ and
applying the same argument as in the proof of Lemma~\ref{lem:kcvd_L1_barU}.
Then, Theorem~\ref{thm:kcvd_general} implies that
\[ \overline{V}_j = \overline{U}_j - (j+1)(j+2) \leq (b^2-b-1)(j+1)(j+2),\]
and thus the claimed upper bound $2(b^2-b-1)k^2$
on the number of vertices of~$\mCVD_k(S)$ for $1\leq k\leq m-1$.
\end{proof}

Remark that 
a more careful analysis could reduce the factor depending on~$b$, and
relax the central symmetry of~$B$\changed{.}

\section{Iterative algorithms for color Voronoi diagrams}\label{sec:alg}

In this section, we present an iterative approach to compute the
order-$k$ color Voronoi diagrams
and 
their refined counterparts for an increasing order of $k$.
Recall that $S$ is a set of $n$~sites
associated with distance functions~$\delta_s$ for~$s\in S$.
\changed{We assume the general position assumption on~$\{\delta_s\}_{s\in S}$ 
given in Section~\ref{sec:cvda}.}
We first establish some key structural properties, which add the
concept of color
to well-known properties of order-$k$ Voronoi diagrams.

Consider a face $f$ of an order-$k$
Voronoi region $\VR_k(H; S)$ of $\CVD_k(S)$, or a region $\mVR_k(H; S)$ of
$\mCVD_k(S)$,  
where $H\subseteq K$ with $|H|=k$.
Recall that $S_H \subseteq S$ is
the set of sites whose colors are included in~$H$.
Let $S_f\subseteq S$ \changed{be}
the set of sites that, 
together with sites in $S_H$, define the edges
along the boundary of $f$.
The following properties are derived directly from the definitions. 

\begin{lemma}
  \label{obs:no-color-in-H}
  No site in $S_f$ has a color that is included in $H$.
\end{lemma}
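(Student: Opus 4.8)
The statement is a direct consequence of the level/arrangement description of $\CVD_k(S)$ recalled in Section~\ref{sec:cvda}, applied to each edge of $\bd f$ separately. The only property of $f$ I would use is the following: since $f$ is a face (connected component) of the open set $\VR_k(H;S)$, every $x\in f$ satisfies $d_i(x)<d_\ell(x)$ for all $i\in H$ and $\ell\in K\setminus H$; hence, by continuity of the functions $d_\ell$, every point $\bar x\in\bd f$ satisfies the weaker \emph{domination inequality} $d_i(\bar x)\le d_\ell(\bar x)$ for all $i\in H$ and $\ell\in K\setminus H$.

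Next, fix an edge $e\subseteq\bd f$, bounded or unbounded. As discussed in Section~\ref{sec:cvda}, $e$ is $2$-chromatic, is defined by exactly two sites $s,s'\in S$ of distinct colors $i:=\kappa(s)\ne\kappa(s')=:j$, and along $e$ we have $\delta_s\equiv d_i$, $\delta_{s'}\equiv d_j$, and $d_i\equiv d_j$. Moreover, since $e$ lies in the projection of $L_k\cap L_{k+1}$ and is new in $\CVD^*_k(S)$, at any point $\bar x$ in the relative interior of $e$ exactly $k-1$ colors $\ell$ satisfy $d_\ell(\bar x)<d_i(\bar x)$, and --- by the general-position assumption on $\{\delta_s\}_{s\in S}$ --- the colors $i$ and $j$ are the only ones with $d_\ell(\bar x)=d_i(\bar x)$.

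I would then pin down $H\cap\{i,j\}$ at such a point $\bar x$. First, every color $\ell$ with $d_\ell(\bar x)<d_i(\bar x)$ must lie in $H$: otherwise the domination inequality forces every color of $H$ to have distance at most $d_\ell(\bar x)<d_i(\bar x)$, so $H\cup\{\ell\}$ would consist of $k+1$ distinct colors all of distance strictly below $d_i(\bar x)$, contradicting that there are only $k-1$ such colors. Therefore $H$ equals these $k-1$ colors together with one further color $\ell_0$, which satisfies $d_{\ell_0}(\bar x)\ge d_i(\bar x)$. The strict case $d_{\ell_0}(\bar x)>d_i(\bar x)$ is impossible: it would imply $i\notin H$ (since $i$ is not among the $k-1$ colors below $d_i(\bar x)$, so $i\in H$ would force $i=\ell_0$), and then domination gives $d_{\ell_0}(\bar x)\le d_i(\bar x)$, a contradiction. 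Hence $d_{\ell_0}(\bar x)=d_i(\bar x)$, so $\ell_0\in\{i,j\}$; say $\ell_0=i\in H$ (the other case is symmetric), and then $j\notin H$ since $j\ne i$ and $j$ is not among the $k-1$ colors below $d_i(\bar x)$. Consequently each edge of $\bd f$ is defined by one site of a color in $H$ --- which lies in $S_H$ --- and one site of a color not in $H$. Since $S_f$ is, by definition, the set of sites that together with a site of $S_H$ define an edge along $\bd f$, every element of $S_f$ is a site of this second kind, i.e.\ of a color not in $H$, which is exactly the claim.

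I do not expect a genuine obstacle here; the one place that needs the general-position hypothesis of Section~\ref{sec:cvda} is the assertion that at a generic point of $e$ precisely the two colors $i,j$ attain the $k$-th smallest distance (no third color ties there), which underlies the arrangement description of $\CVD^*_k(S)$. Everything else is bookkeeping of the tie pattern at such a point.
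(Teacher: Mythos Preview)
Your proof is correct and follows essentially the same idea as the paper's: each edge of $\partial f$ is $2$-chromatic, and at a generic point of the edge exactly one of the two defining colors lies in $H$ while the other does not. The paper argues this more tersely---observing that if the non-$S_H$ site had a color in $H$ then the $k$ nearest colors at that boundary point would still be $H$, placing the point in the open region $\VR_k(H;S)$ rather than on its boundary---whereas you unpack the same contradiction via the domination inequality on $\overline{f}$ and the level description from Section~\ref{sec:cvda}; the content is the same.
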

\begin{proof}
  Consider an edge $e$ along
  the boundary of $f$. Assuming that $f$ is a face of  $\VR_k(H; S)$, let $\VR_k(H';
  S)$ be the region incident to $e$ on the other side of $f$.
  A point $x$ on $e$ is equidistant from a site $s_h\in S_H$ of color
  $c_h$ and a site $s_f\in S_f$ of color $c_f$, where $c_h\neq c_f$.
  But if $c_f\in H$,  then $x$ would lie in
  $\VR_k(H; S)$  as $H$ would still be the set of
  the $k$ nearest colors to $x$, deriving a contradiction; thus,
  $c_f\not\in H$ and $S_f\cap S_H=\emptyset$.
  The proof is analogous for a face  of $\mCVD_k(S)$.
\end{proof}

\begin{lemma}
  \label{lem:CVD-subdivisions} 
  Let $f\subseteq \VR_k(H;S)$ be a face of $\CVD_k(S)$ for $1\leq k\leq m-1$.
  It holds that:
  
  \begin{enumerate}[(i)]
  \item $\CVD_1(S_f) \cap f =  \CVD_{k+1}(S) \cap f$
    and $\VD(S_f) \cap f =  \CVD^*_{k+1}(S) \cap f$. 
  \item $\FCVD(S_H) \cap f = \CVD_{k-1}(S) \cap f$ and
    $\FCVD^*(S_H) \cap f = \CVD^*_{k}(S) \cap f$. 
  \end{enumerate}
\end{lemma}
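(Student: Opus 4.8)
The plan is to reduce both parts to a single feature of the face~$f$: since $f\subseteq \VR_k(H;S)$, for every $x$ in the open connected set~$f$ the $k$ nearest colors under $\{d_i\}_{i\in K}$ are exactly those in~$H$, and by the general position assumption the defining inequalities $d_h(x)<d_j(x)$ for $h\in H$, $j\in K\setminus H$, hold with strict sign throughout~$f$. Consequently, the combinatorial structure of the order-$(k+1)$, order-$k$, and order-$(k-1)$ diagrams and their refinements \emph{inside}~$f$ is decided only by how the colors of $K\setminus H$ compete for the $(k+1)$-st slot (for the ``$k+1$'' statements) and by how the colors of~$H$ compete for the $k$-th slot, i.e.\ which of the $k$ nearest colors is the \emph{farthest} (for the ``$k-1$'' and ``$\ast$'' statements). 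I would prove (i) and (ii) separately along this dichotomy, using the description of the levels $L_k$ from Section~\ref{sec:cvda}.

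For~(i): for $x\in f$ and $i\in K\setminus H$, the color $i$ is the $(k+1)$-st nearest at~$x$ precisely when $d_i(x)=\min_{j\in K\setminus H}d_j(x)$, with ties producing the edges and vertices. Hence, intersecting the region definitions with~$f$ and using $S_i\cap S_H=\emptyset$ for $i\notin H$ (so that $d_i$ restricted to colors of $S\setminus S_H$ is unchanged), one gets $\VR_{k+1}(H\cup\{i\};S)\cap f=\VR_1(\{i\};S\setminus S_H)\cap f$, and, recording which site realizes $d_i$, the same identification at the site level yields $\CVD^*_{k+1}(S)\cap f=\VD(S\setminus S_H)\cap f$ and $\CVD_{k+1}(S)\cap f=\CVD_1(S\setminus S_H)\cap f$. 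It then remains to replace $S\setminus S_H$ by~$S_f$. By Lemma~\ref{obs:no-color-in-H} we have $S_f\cap S_H=\emptyset$, so $S_f\subseteq S\setminus S_H$, and it suffices to show that every site of $S\setminus S_H$ whose region in $\VD(S\setminus S_H)$ meets~$f$ already belongs to~$S_f$. If such a region meets~$\partial f$, then along the corresponding edge of~$\partial f$ that site realizes the $(k+1)$-st nearest color and forms the edge together with a site of~$S_H$, hence it lies in~$S_f$ by definition; the only remaining possibility is a region of $\VD(S\setminus S_H)$ contained in the interior of~$f$, which I would rule out using the connectivity (and, in the concrete metric cases, star-shapedness) of nearest-neighbor regions, or equivalently by invoking the classical uncolored analogue of this restriction property. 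Combining these facts gives $\VD(S\setminus S_H)\cap f=\VD(S_f)\cap f$ and likewise for the unrefined diagrams, which is exactly~(i).

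For~(ii), the easier dual: for $x\in f$ the $k$-th nearest color is $h^\ast(x):=\arg\max_{h\in H}d_h(x)$, so the $(k-1)$ nearest colors are $H\setminus\{h^\ast(x)\}$. Moreover, for every $i\in H$ we have $S_i\subseteq S_H$, whence $d_i(x)=\min_{s\in S_i}\delta_s(x)=\min_{s\in S_i\cap S_H}\delta_s(x)$, so the functions $\{d_i\}_{i\in H}$ on~$f$ coincide with the minimal distance-to-color functions of the site set~$S_H$. Therefore the subdivision of~$f$ induced by $\arg\max_{h\in H}d_h$ is exactly $\FCVD(S_H)\cap f$, giving $\CVD_{k-1}(S)\cap f=\FCVD(S_H)\cap f$; recording in addition the site that realizes $d_{h^\ast}$ refines both sides in the same way and yields $\CVD^*_k(S)\cap f=\FCVD^*(S_H)\cap f$. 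No sites are discarded here, since all of~$S_H$ is used on both sides, so (ii) is a direct unwinding of the definitions together with the first paragraph's observation.

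The step I expect to be the main obstacle is the last one in~(i): showing that no Voronoi region of $\VD(S\setminus S_H)$ can be confined to the interior of~$f$, so that the ``boundary sites'' $S_f$ already account for every site relevant inside~$f$. Everything else is bookkeeping with the region definitions; this one point genuinely needs a structural property of the underlying (abstract) Voronoi diagram, and I would either prove it from connectivity/monotonicity of nearest-neighbor regions or reduce it to the known uncolored version of the iterative construction.
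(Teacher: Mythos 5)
Your proposal matches the paper's proof: both establish $\CVD^*_{k+1}(S)\cap f = \VD(S\setminus S_H)\cap f$ and then cut $S\setminus S_H$ down to $S_f$ for (i), and both unwind the definitions via $\FCVD(S_H)=\CVD_{k-1}(S_H)$ and $\FCVD^*(S_H)=\CVD^*_k(S_H)$ for (ii). The step you flag as the main obstacle --- that no region of $\VD(S\setminus S_H)$ can lie entirely inside $f$ --- is exactly the one the paper closes, by noting that no site of $S\setminus S_H$ can lie in $f$ (a site of color $j\notin H$ inside $f$ would force $d_j=0<d_h$ for $h\in H$, contradicting $f\subseteq\VR_k(H;S)$), which combined with the fact that each nearest-neighbor region contains its own site is precisely the connectivity/containment argument you sketch.
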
 
  
\begin{proof}
  By the definition of order-$k$ diagrams, it is clear that
  $\CVD^*_{k+1}(S) \cap f= \VD(S\setminus S_H)\cap f$; and  by
  Lemma~\ref{obs:no-color-in-H}, $S_f\subseteq S\setminus S_H$.
  In~$\VD(S\setminus S_H)$, no region of this diagram can be entirely 
  enclosed in~$f$, as no site of $S\setminus S_H$ can lie in~$f$;
  further, only regions of sites in~$S_f$ can
  intersect the boundary of~$f$. 
  Thus, $\VD(S_f) \cap f=\VD(S\setminus S_H) \cap f$ 
  and claim~(i) follows.

  By the definition of order-$k$ diagrams, 
  and the fact $f\subseteq \VR_k(H; S)$, the following hold:
  $\CVD_{k-1}(S)\cap f=\CVD_{k-1}(S_H)\cap f$ and $\CVD_{k}^*(S)\cap f=\CVD_{k}^*(S_H)\cap f$.
  Since $\FCVD(S_H) = \CVD_{k-1}(S_H)$ and $\FCVD^*(S_H) = \CVD^*_{k}(S_H)$,  
  claim~(ii) follows.
\end{proof} 

\changed{
 We use Lemma~\ref{lem:CVD-subdivisions}(i) to iteratively compute $\CVD_{k+1}^*(S)$, given $\CVD_{k}^*(S)$. 
 Lemma~\ref{lem:CVD-subdivisions}(ii) indicates that
 superimposing $\CVD_{k}(S)$ and $\CVD_{k-1}(S)$ results in $\CVD^*_{k}(S)$ 
 with its $1$-chromatic edges removed.
}

Analogous claims hold for the maximal diagrams, 
however, for an unbounded face~$f$ of $\mCVD_{k}(S)$,
the set~$S_f$ is no longer adequate to derive the
portion of $\mCVD_{k+1}(S)$ that lies within~$f$.
We need the set $\Sfplus\subseteq S\setminus S_H$, 
which defines the unbounded faces of $\mCVD_{k+1}(S)\cap f$.

\begin{figure}[ht]
\begin{center}
\includegraphics[width=.95\textwidth]{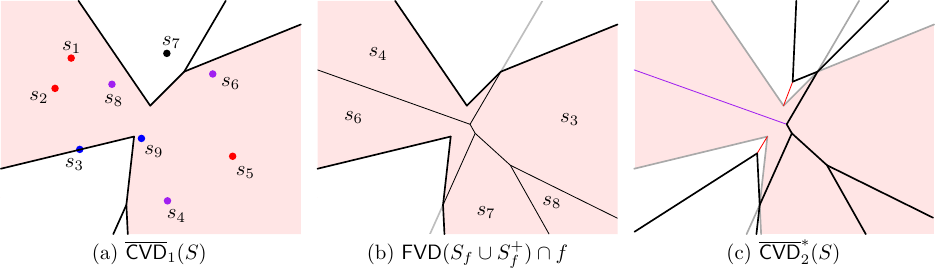}
\end{center}
\caption{
 Illustration to Lemma~\ref{lem:mCVD-subdivisions}
 for unbounded faces of $\mCVD_k(S)$
 with the same set~$S=\{s_1,\ldots, s_9\}$ of colored points as in
 \figurename s~\ref{fig:cvd} and~\ref{fig:mcvd} under the Euclidean metric.
 The shaded regions in (a)--(c) depict a face~$f\subseteq \VR_1(H; S)$ of~$\mCVD_1(S)$ 
 where $H$ consists of a single color (for the red points).
 In this case, $S_f = \{s_3, s_4, s_6, s_7\}$,
 while an additional site~$s_8$ defines an unbounded face in~$\mCVD^*_2(S) \cap f$.
 So, $\Sfplus \setminus S_f = \{s_8\}$ and 
 $\mCVD^*_2(S) \cap f = \FVD(S_f \cup \Sfplus) \cap f$.
 }
\label{fig:Sfplus}
\end{figure}

\begin{lemma}
  \label{lem:mCVD-subdivisions} 
  Let $f\subseteq \mVR_k(H;S)$ be a face of $\mCVD_k(S)$ for $1\leq
  k\leq m-1$.
  Let $\Sfplus\subseteq S\setminus S_H$ be the set of sites
  that define unbounded faces in  $\mCVD_{k+1}(S)\cap f$\changed{; if
    $f$ is bounded, $\Sfplus=\emptyset$.}
  The following hold:
  
  \begin{enumerate}[(i)] 
  \item
    $\mCVD_1(S_f\cup \Sfplus) \cap f =  \mCVD_{k+1}(S) \cap f$
    and $\FVD(S_f \cup \Sfplus) \cap f =  \mCVD^*_{k+1}(S) \cap f$.
  \item $\HVD(S_H) \cap f = \mCVD_{k-1}(S) \cap f$ and
    $\HVD^*(S_H) \cap f = \mCVD^*_{k}(S) \cap f$.
  \end{enumerate}
\end{lemma}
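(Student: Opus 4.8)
The plan is to mirror the proof of Lemma~\ref{lem:CVD-subdivisions}, adapting each step to the maximal setting and paying attention to the unbounded faces. For claim~(ii), the argument is essentially identical to the minimal case: by the definition of order-$k$ diagrams and the fact that $f\subseteq \mVR_k(H;S)$ — so $H$ is precisely the set of $k$ farthest colors at every point of $f$ — we have $\mCVD_{k-1}(S)\cap f=\mCVD_{k-1}(S_H)\cap f$ and $\mCVD^*_{k}(S)\cap f=\mCVD^*_{k}(S_H)\cap f$, because the $(k-1)$ farthest colors (resp.\ the $k$-th farthest color) at a point of $f$ must all lie in $H$. Then invoking the identities $\HVD(S_H)=\mCVD_{|H|-1}(S_H)=\mCVD_{k-1}(S_H)$ and $\HVD^*(S_H)=\mCVD^*_{k}(S_H)$ from Section~\ref{sec:cvda} finishes claim~(ii).

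For claim~(i), first I would observe that, by the definition of the order-$(k+1)$ diagram and the analog of Lemma~\ref{obs:no-color-in-H} for maximal faces, $\mCVD^*_{k+1}(S)\cap f=\FVD(S\setminus S_H)\cap f$: within $f$ the $k$ farthest colors are exactly $H$, so the $(k+1)$-st farthest color is the one realizing $\max_{i\in K\setminus H}\bar d_i$, which is the color-level picture of the farthest-site diagram of $S\setminus S_H$. The crux is then to show $\FVD(S\setminus S_H)\cap f=\FVD(S_f\cup\Sfplus)\cap f$, i.e.\ that the only sites of $S\setminus S_H$ whose farthest-site regions meet $f$ are those in $S_f\cup\Sfplus$. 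Here the argument diverges from the minimal case: in a farthest-site diagram every region is unbounded, so unlike $\VD$ we cannot argue "no region is enclosed in $f$" — instead, a region of $\FVD(S\setminus S_H)$ can intersect $f$ by crossing $\partial f$ (contributing a point of $S_f$) \emph{or} by entering $f$ through its unbounded part without ever crossing the bounded portion of $\partial f$; the latter sites are by definition exactly $\Sfplus\setminus S_f$. So any site of $S\setminus S_H$ whose $\FVD$-region meets $f$ either defines an edge of $\partial f$, hence lies in $S_f$, or defines an unbounded face of $\mCVD^*_{k+1}(S)\cap f=\FVD(S\setminus S_H)\cap f$, hence lies in $\Sfplus$. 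This gives $\FVD(S\setminus S_H)\cap f=\FVD(S_f\cup\Sfplus)\cap f$, and projecting down (equivalently, merging faces of a common color) yields $\mCVD_1(S_f\cup\Sfplus)\cap f=\mCVD_{k+1}(S)\cap f$, proving claim~(i).

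The main obstacle I anticipate is making rigorous the claim that $S_f\cup\Sfplus$ captures \emph{all} sites of $S\setminus S_H$ relevant inside $f$ — in particular ruling out a site $s\in S\setminus S_H$ whose $\FVD(S\setminus S_H)$-region intersects the interior of $f$ yet meets neither $\partial f$ nor the unbounded boundary of $f$. Since $\FVD(S\setminus S_H)\cap f$ is a subdivision of the connected set $f$ and $f$ is unbounded (the bounded case being trivial with $\Sfplus=\emptyset$), any region-piece inside $f$ must reach $\partial f$; if it reaches only the part of $\partial f$ at infinity it contributes to the unbounded faces, hence to $\Sfplus$, and if it reaches the finite part of $\partial f$ it contributes an edge and hence a site of $S_f$ — so the dichotomy is exhaustive, but this topological book-keeping, together with the observation that $\Sfplus\subseteq S\setminus S_H$ is forced by the maximal analog of Lemma~\ref{obs:no-color-in-H}, is the step that needs to be written carefully. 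Everything else is the routine translation of the level-of-arrangements dictionary from Section~\ref{sec:cvda} into region language.
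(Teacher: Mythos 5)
Your proof is correct and follows essentially the same route as the paper's: claim~(ii) is the verbatim analog of the minimal case, and for claim~(i) you first reduce $\mCVD^*_{k+1}(S)\cap f$ to $\FVD(S\setminus S_H)\cap f$ and then use the fact that all farthest-site regions are unbounded to argue that any region of $\FVD(S\setminus S_H)$ meeting $f$ must either cross the finite part of $\partial f$ (giving a site of $S_f$) or escape to infinity within $f$ (giving a site of $\Sfplus$), which is exactly the paper's argument. The dichotomy you flag as the delicate point is precisely the step the paper handles with the observation that a face of $\FVD(S\setminus S_H)$ enclosed in $f$ must be unbounded in the same directions as $f$.
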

\begin{proof}
  Analogously to Lemma~\ref{lem:CVD-subdivisions}, $S_f\subseteq S\setminus S_H$ and 
  $\mCVD^*_{k+1}(S) \cap f= \FVD(S\setminus S_H)\cap f$, where
  $f\subseteq \mVR_k(H;S)$ is a face of $\mCVD_k(S)$.
  \changed{
  Further, $\FVD(S\setminus S_H)$ has only unbounded regions, 
  thus a face of $\FVD(S\setminus S_H)$ may be
  enclosed in $f$ only if it is unbounded in the same
  directions as~$f$.}
  In addition, only sites in~$S_f$ can have a region \added{in $\FVD(S\setminus S_H)$} that
  intersects the boundary of~$f$.
  The unbounded
  edges of $\mCVD^*_{k+1}(S) \cap f$ are clearly all new, by the
  definition of an unbounded face~$f$.
  Thus, $\FVD(S\setminus S_H)\cap f=\FVD(S_f \cup \Sfplus) \cap f$, 
  where $\Sfplus=\emptyset$ if $f$ is
  bounded; hence claim~(i) follows.
  See \figurename~\ref{fig:Sfplus}.

  Claim~(ii) is analogous to Lemma~\ref{lem:CVD-subdivisions}(ii).
  Since $f\subseteq \mVR_k(H; S)$, 
  it holds that $\mCVD_{k-1}(S)\cap f=\mCVD_{k-1}(S_H)\cap f$ 
  and $\mCVD_{k}^*(S)\cap f=\mCVD_{k}^*(S_H)\cap f$.
  Since $\HVD(S_H) = \mCVD_{k-1}(S_H)$ 
  and $\HVD^*(S_H) = \mCVD_{k}^*(S_H)$, 
  the claim follows.
\end{proof}

To iteratively compute $\mCVD_{k+1}(S)$, \changed{given  $\mCVD_{k}(S)$,} we first need
to identify \changed{the}
sites that define the new unbounded edges of
$\mCVD_{k+1}(S)$.
This information, however,  is not encoded in $\mCVD_{k}(S)$, unlike \added{the
  minimal diagrams}.
We give a condition, related to condition~\textbf{V3}, 
under which we can use $\CVD^*_{k+1}(S)$ to derive the information missing
from  $\mCVD^*_{k+1}(S)$.
This condition is satisfied
if, for example, $S$ is a set of points and $\delta_s$ for $s\in S$
is \changed{a}
convex distance based on a smooth body.

\begin{itemize}
\item[\textbf{V3$^\prime$}]
The unbounded faces of $\VD(S')$ and $\FVD(S')$ are defined by the
same sequence  of sites, for any~$S'\subseteq S$.
\end{itemize}

\begin{lemma}
	\label{lem:V3-prime} 
	Condition \textbf{V3$^\prime$} implies that the unbounded faces of
	$\CVD^*_{k}(S)$ and of $\mCVD^*_{k}(S)$ are defined by the same
	sequence of sites for any $1\leq k\leq m$. (See \figurename~\ref{fig:condition_V3prime}.)
\end{lemma}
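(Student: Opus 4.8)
The plan is to reduce the statement about the higher-order diagrams $\CVD^*_k(S)$ and $\mCVD^*_k(S)$ to the order-$1$ case covered by condition~\textbf{V3$^\prime$}, using the structural results already established. Recall from Section~\ref{sec:cvda} that the unbounded edges of $\CVD^*_k(S)$ (respectively of $\mCVD^*_k(S)$) correspond to vertices at infinity, and each such edge is determined by exactly two sites $s_1,s_2\in S$; moreover, each \emph{new} $c$-chromatic unbounded edge of $\CVD^*_{j+1}(S)$ corresponds to a $c$-chromatic weight-$j$ colored configuration in the CS-structure for unbounded edges, i.e. to a pair $(s_1,s_2)$ such that the open half-plane $\sigma^+$ bounded by the line through $s_1,s_2$ and on the appropriate side intersects exactly $j$ colors, none of which is $\kappa(s_1)$ or $\kappa(s_2)$. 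The key point, already exploited in the proof of Lemma~\ref{lem:V3-convex_distance}, is that this combinatorial datum is \emph{symmetric} between the minimal and the maximal situation: the half-plane $\sigma^+$ that arises as a limit of the interior $\hat B(p)$ going in one direction along the bisector is exactly the half-plane that arises as a limit of the exterior $\overline B(p)$ going in the other direction. So the plan is to make this symmetry precise at the level of the ``sequence of sites'' around infinity.

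First I would fix a direction, i.e. a point at infinity, and walk around the circle at infinity; the unbounded edges of a diagram appear in this circular order, and each is labelled by its defining pair of sites. For $\CVD^*_k(S)$, an unbounded edge in direction $\theta$ survives to order $k$ iff the half-plane it faces intersects at most $k-1$ colors (counting with the bookkeeping of Lemma~\ref{lem:CVD_v}); the sequence of sites defining the unbounded faces of $\CVD^*_k(S)$ is then read off from the envelope $E_i$'s that are lowest at infinity. Now the behaviour of $E_i$ at infinity in direction $\theta$ is governed by the behaviour of the $\gamma_s$ for $s\in S_i$ at infinity, which for convex distances means: the site in $S_i$ whose ``supporting half-plane'' in direction $\theta$ is extreme. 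Condition~\textbf{V3$^\prime$} says precisely that, for each subset $S'$, the cyclic sequence of sites defining the unbounded faces of $\VD(S')$ equals that of $\FVD(S')$ — i.e. the order in which sites become extremal at infinity is the same whether we take minima or maxima of the $\delta_s$. I would then argue that this single statement, applied to the ground set $S$ and lifted through the envelope construction and Lemma~\ref{lem:CVD_v}, forces the cyclic sequence of sites defining the unbounded faces of $\CVD^*_k(S)$ to coincide with that of $\mCVD^*_k(S)$: in both cases the sequence is obtained by recording, as $\theta$ sweeps the circle at infinity, the $k$-th color in the ``extremal-at-infinity'' order together with the representative site of that color, and \textbf{V3$^\prime$} identifies the minimal and maximal versions of this order color-by-color and site-by-site. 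Equivalently, one can phrase this via the bijections $\eta_{c,j}$ and $\bar\eta_{c,j}$ of Lemma~\ref{lem:V3-convex_distance}: they are induced by the \emph{same} pair of sites, and \textbf{V3$^\prime$} guarantees that the two bijections respect the cyclic order at infinity in the same way.

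The cleanest way to organize the proof is therefore: (1) describe the unbounded part of $\CVD^*_k(S)$ and of $\mCVD^*_k(S)$ as a labelled cyclic sequence indexed by directions at infinity, where the label at direction $\theta$ is the defining site of the unbounded face there; (2) observe that for the minimal diagram this label is obtained from the $\delta_s$-minima and for the maximal one from the $\delta_s$-maxima, and that passing to the $k$-th nearest/farthest color is a purely order-theoretic operation on the per-color extremal-at-infinity rankings; (3) invoke \textbf{V3$^\prime$} on the relevant subsets $S'$ (in fact it suffices to use it on $S$ itself together with the colored envelope structure) to conclude that these two cyclic sequences of sites are identical; then one reads off the claim for all $1\le k\le m$, with the endpoints $k=1$ (which is exactly \textbf{V3$^\prime$} applied to $S$, since $\CVD^*_1(S)=\VD(S)$ and $\mCVD^*_1(S)=\FVD(S)$) and $k=m$ (both are a single face with no unbounded edges, so the statement is vacuous). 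The main obstacle I anticipate is step~(2): making rigorous the claim that ``$k$-th nearest color at infinity'' and ``$k$-th farthest color at infinity'' are computed by the \emph{same} selection rule once \textbf{V3$^\prime$} aligns the per-color orderings — in particular handling the case where several colors tie in their extremal-at-infinity behaviour in a given direction (a $1$- or $2$-chromatic vertex at infinity), where one must check that the tie is broken consistently on the minimal and maximal sides. This is where I expect to need the general position assumption on $\{\delta_s\}$ and the $c$-consecutive-levels observation from Section~\ref{sec:cvda}, exactly as in the proof of Lemma~\ref{lem:V3-convex_distance}.
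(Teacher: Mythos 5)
Your plan is genuinely different from the paper's, which proceeds by \emph{induction on the order $k$}: the base case $k=1$ is exactly condition~\textbf{V3$^\prime$} applied to $S$, and for the inductive step the paper localizes into an unbounded face $f\subseteq\VR_k(H;S)$ of $\CVD_k(S)$, uses Lemmas~\ref{lem:CVD-subdivisions}(i) and~\ref{lem:mCVD-subdivisions}(i) to identify $\CVD^*_{k+1}(S)\cap f$ with $\VD(S\setminus S_H)\cap f$ and $\mCVD^*_{k+1}(S)\cap f'$ with $\FVD(S\setminus S_H)\cap f'$, and then invokes \textbf{V3$^\prime$} on the subset $S\setminus S_H$. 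The induction hypothesis is deliberately strengthened to assert that the corresponding faces $f,f'$ belong to the \emph{same} color set ($H=H'$), which is what makes the subset $S\setminus S_H$ well-defined and shared by the two diagrams, and the inductive step then verifies that the strengthened hypothesis propagates. In contrast, you try to compare the two labelled cyclic sequences at infinity in a single pass.

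There are two concrete gaps. First, your parenthetical claim that ``it suffices to use it on $S$ itself'' is not justified and is almost certainly false: condition~\textbf{V3$^\prime$} applied only to $S'=S$ tells you nothing about the relative order of the per-color envelopes $E_i$ and $\overline E_i$ at infinity, which is what determines the $k$-th nearest and $k$-th farthest color. The paper needs \textbf{V3$^\prime$} on the varying subsets $S\setminus S_H$; if one wanted to reason directly at infinity as you propose, one would at the very least need to apply \textbf{V3$^\prime$} to unions like $S_i\cup S_j$ to align the pairwise order of the envelopes, and then argue this pairwise alignment determines the full $k$-th level --- none of which appears in your sketch. Second, and relatedly, your step~(2) (that ``passing to the $k$-th nearest/farthest color is a purely order-theoretic operation'' which \textbf{V3$^\prime$} ``identifies color-by-color and site-by-site'') is precisely the content of the lemma, not a reformulation of the hypothesis. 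Condition~\textbf{V3$^\prime$} speaks only about the order-$1$ minimization/maximization diagrams of arbitrary subsets; it does not by itself say that the $k$-th smallest among $\{E_i\}$ agrees with the $k$-th largest among $\{\overline E_i\}$ direction by direction, nor that the cyclic sequences are aligned rather than merely equal as unlabelled cyclic words (the latter distinction is exactly the tie/alignment issue you correctly flag as ``the main obstacle''). The paper's inductive mechanism resolves both issues simultaneously: $H=H'$ is carried along the induction, and the local comparison is always an order-$1$ comparison inside a matched pair of faces, which is where \textbf{V3$^\prime$} is actually applicable. As written, your proposal recognizes the right ingredients but leaves the load-bearing step as an acknowledged obstacle rather than proving it.
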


\begin{proof}
  We perform induction on the order~$k$.
  The base case for~$k=1$ holds by condition~\textbf{V3$^\prime$}.
 Suppose that the claim holds for a given~$k$ with~$1\leq k< m$. 
 Then for any two consecutive unbounded edges of $\CVD_{k}(S)$ 
 that delimit a face $f\subseteq \VR_k(H;S)$, $|H|=k$, 
 there is a corresponding face $f'\subseteq \mVR_k(H';S)$, $|H'|=k$, 
 in $\mCVD_{k}(S)$, delimited by unbounded
 edges defined by the same pairs of sites.
 To prove the lemma,
 we strengthen the induction hypothesis by further assuming that $H=H'$. 
 The extended hypothesis clearly holds for~$k=1$.

\begin{figure}[h]
\begin{center}
\includegraphics[width=\textwidth]{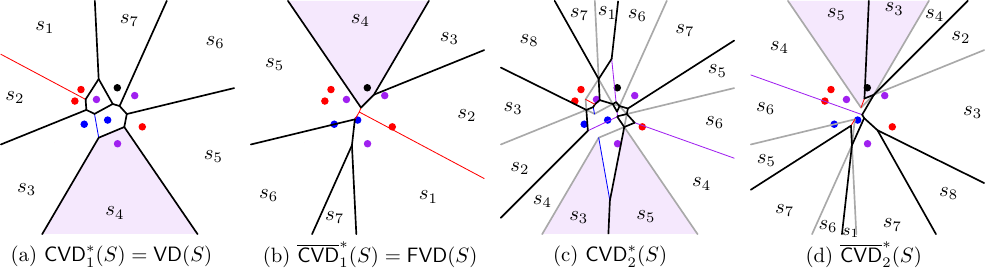}
\end{center}
\caption{
  Illustration \changed{of}
  condition~\textbf{V3$^\prime$} and Lemma~\ref{lem:V3-prime}
 with the same set~$S=\{s_1,\ldots, s_9\}$ as in~\figurename s~\ref{fig:cvd}--\ref{fig:mcvd} 
 under the Euclidean metric.
 In (a)(b), $\VD(S')$ and $\FVD(S')$
 have the same sequence of sites that define unbounded faces for any~$S'\subseteq S$,
 so condition~\textbf{V3$^\prime$} holds.
 The shaded region in~(a) is a face~$f$ of $\CVD_1(S)$
 corresponding to face~$f'$ of $\mCVD_1(S)$ shaded in~(b).
 In (c)(d), shaded regions show how the portions of $f$ and $f'$ at infinity
 are subdivided in $\CVD^*_2(S)$ and $\mCVD^*_2(S)$.
 }
\label{fig:condition_V3prime}
\end{figure}

By Lemma~\ref{lem:CVD-subdivisions}(i), we have
 $\CVD^*_{k+1}(S) \cap f=\VD(S\setminus S_H) \cap f=\VD(S_f) \cap f$.
 Analogously, by Lemma~\ref{lem:mCVD-subdivisions}(i),
 $\mCVD^*_{k+1}(S) \cap f'=\FVD(S\setminus S_H) \cap f'=\FVD(S_{f'}\cup\Sfplusm)\cap f'$.
 Since the unbounded edges delimiting $f$ and $f'$ are corresponding,
 the inductive hypothesis and condition~\textbf{V3$^\prime$} imply that 
 the sequence of sites  that define the
 unbounded faces of $\VD(S\setminus S_H)\cap f$ and 
 of $\FVD(S\setminus S_H) \cap f'$ coincide.
 Thus the claim follows for $\CVD^*_{k+1}(S) \cap f$ and
 $\mCVD^*_{k+1}(S) \cap f'$.
 Since our choice of~$f$ is arbitrary, 
 the claim holds for $\CVD^*_{k+1}(S)$ and $\mCVD^*_{k+1}(S)$.
 See \figurename~\ref{fig:condition_V3prime}.
 
 It remains to show that the extended hypothesis continues to hold. 
 Consider an unbounded face~$f_1$ of $\VD(S\setminus S_H)\cap f$ 
 that is incident to a new unbounded edge~$e$ of $\CVD^*_{k+1}(S) \cap f$. 
 Then $f_1\subseteq \VR_{k+1}(H_1;S)$, 
 where $H_1$ consists of those colors in~$H$ and 
 the color of the site that defines~$e$ from the side of~$f_1$.
 Face~$f_1$ corresponds to a face~$f_1'$
 of $\FVD(S\setminus S_H)\cap f'$, 
 which is incident to the unbounded edge~$e'$ corresponding to~$e$.
 Thus, $f_2\subseteq \mVR_{k+1}(H_2;S)$, 
 where $H_2$ consists of those colors in~$H$
 and the color of the site that defines~$e'$ on the side of~$f_2$,
 which is the same as the site that defines~$e$. 
 Thus, we have $H_2=H_1$ and
 the extended hypothesis holds for order~$k+1$ as well.
\end{proof}

Now, we assume that \changed{the sites~$S$ and their distance functions~$\delta_s$}
fall under the model
of \emph{abstract Voronoi diagrams}~\cite{k-cavd-89}.
Specifically, together with the general position assumption described
in Section~\ref{sec:cvda},
we also assume the following, for every subset $S' \subseteq S$:
\begin{itemize}
\item The regions of~$\VD(S')$ are nonempty and connected.
\item The bisector of any two sites is an unbounded simple curve homeomorphic to a
  line.
\end{itemize}
Furthermore, we assume that \changed{any distance or bisector}
can be computed in $O(1)$ time.
We then conclude the following.

\begin{theorem} \label{thm:alg} 
 Let $S$ and $\{\delta_s\}_{s\in S}$ 
 be a
 set of $n$~colored sites with $m$~colors and distance functions
 that satisfy the conditions of abstract Voronoi diagrams.
 Then, for~$1\leq k \leq m$,
 in $O(k^2n + n\log n)$ expected time or in $O(k^2n \log n)$ worst-case time,
 we can compute
 $\CVD^*_1(S), \ldots, \CVD^*_k(S)$.
 If in addition condition~\textbf{V3$^\prime$} holds, then 
 we can also compute
 $\mCVD^*_1(S), \ldots, \mCVD^*_k(S)$
 in the same time bound.
 If $S$ consists of points and $\delta_s(x)$ is the Euclidean distance to~$s\in S$, 
 the time bound is reduced to $O(k^2n + n\log n)$ in the worst case.
\end{theorem}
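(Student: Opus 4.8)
The plan is to realize the iterative scheme suggested by Lemmas~\ref{lem:CVD-subdivisions} and~\ref{lem:mCVD-subdivisions}: having computed $\CVD^*_i(S)$, we compute $\CVD^*_{i+1}(S)$ face by face. By Lemma~\ref{lem:CVD-subdivisions}(i), for each face $f$ of $\CVD_i(S)$ contained in $\VR_i(H;S)$ we have $\CVD^*_{i+1}(S)\cap f = \VD(S_f)\cap f$, where $S_f$ is the (easily read off) set of sites defining edges on $\partial f$ together with sites in $S_H$; note $|S_f|$ is proportional to the boundary complexity of $f$. So the procedure is: (a) from $\CVD^*_i(S)$ extract each order-$i$ region face $f$ and its associated set $S_f$; (b) compute the abstract Voronoi diagram $\VD(S_f)$ using the randomized incremental construction of Klein--Mehlhorn--Meiser / Klein--Langetepe--Nilforoushan~\cite{k-cavd-89}, which runs in $O(|S_f|\log|S_f|)$ expected time (or $O(|S_f|^2)$, resp. $O(|S_f|\log|S_f|)$ with the deterministic variant, giving the worst-case bound); (c) clip $\VD(S_f)$ to $f$ and glue the pieces together to form $\CVD^*_{i+1}(S)$, and superimpose with $\CVD_i(S)$ to read off the regions of $\CVD_{i+1}(S)$ via Lemma~\ref{lem:CVD-subdivisions}(ii) (superimposing $\CVD_i$ and $\CVD_{i-1}$ gives $\CVD^*_i$ up to $1$-chromatic edges, equivalently the region structure of order $i$). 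The key accounting fact is that $\sum_f |S_f| = O(\text{complexity of }\CVD^*_{i+1}(S)) = O(i(n-i)) = O(in)$ by Corollary~\ref{coro:kcvd_AVD}; since there are $k$ iterations this sums to $O(k^2 n)$ plus the $\log$ factors, and the $O(n\log n)$ term covers the initial $\VD(S)=\CVD^*_1(S)$.

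For the maximal diagrams the obstruction, already flagged in the text preceding the theorem, is that an unbounded face $f$ of $\mCVD_i(S)$ does not encode the set $\Sfplus$ of sites that define the \emph{new} unbounded edges inside $f$ at the next order, so Lemma~\ref{lem:mCVD-subdivisions}(i) cannot be applied directly. Here I would invoke condition~\textbf{V3$^\prime$} and Lemma~\ref{lem:V3-prime}: run the minimal and maximal iterations in lockstep, maintaining the correspondence between unbounded faces of $\CVD^*_i(S)$ and of $\mCVD^*_i(S)$ together with the invariant that corresponding faces carry the same label set $H$. Then the sequence of sites bounding each unbounded face of $\mCVD^*_{i+1}(S)\cap f'$ is read off from the corresponding unbounded face of $\CVD^*_{i+1}(S)\cap f$ (which we have just computed), supplying exactly the sites in $\Sfplusm$ that were missing; with $S_{f'}\cup\Sfplusm$ in hand we compute $\FVD(S_{f'}\cup\Sfplusm)$, again by the abstract farthest-site Voronoi diagram construction~\cite{mmr-fsavd-01}, clip to $f'$, and assemble. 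The running time analysis is identical, since $\sum_{f'}|S_{f'}\cup\Sfplusm| = O(\text{complexity of }\mCVD^*_{i+1}(S))$, which is $O(i(n-i))$ as well. Finally, in the Euclidean point case the per-level diagrams $\VD(S_f)$ and $\FVD(S_f\cup\Sfplus)$ are ordinary Euclidean (farthest-site) Voronoi diagrams, each computable in $O(|S_f|\log|S_f|)$ \emph{worst-case} time, or indeed in $O(|S_f|)$ time once the points are pre-sorted; summing gives $O(k^2n + n\log n)$ worst-case.

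The main obstacle I expect is not the asymptotics but the \emph{correctness of the gluing step} — verifying that clipping each locally computed diagram $\VD(S_f)$ (resp. $\FVD(S_{f'}\cup\Sfplusm)$) to $f$ and taking the union over all faces $f$ really reproduces $\CVD^*_{i+1}(S)$ globally, with consistent combinatorial structure along the shared boundaries $\partial f$. This requires arguing that the portion of $\CVD^*_{i+1}(S)$ inside $f$ depends only on $S_f$ (no region of $\VD(S\setminus S_H)$ is enclosed in $f$, and only sites in $S_f$ touch $\partial f$ — exactly the content of Lemmas~\ref{lem:CVD-subdivisions} and~\ref{lem:mCVD-subdivisions}), and that adjacent faces agree on the structure along their common boundary because an edge of $\CVD_i(S)$ between $f$ and a neighbour $g$ is $2$-chromatic and the sites defining it belong to both $S_f$ and $S_g$. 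A secondary technical point is bounding the combinatorial complexity of $f$ itself by the number of its boundary edges and showing the total over all faces at level $i$ telescopes to $O(i(n-i))$, which follows from Corollary~\ref{coro:kcvd_AVD} applied to $\CVD^*_{i+1}(S)$; and for the maximal diagrams, carefully maintaining the $H=H'$ invariant of Lemma~\ref{lem:V3-prime} across unbounded faces so that $\Sfplusm$ is correctly transferred at every level.
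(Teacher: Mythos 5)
Your overall iterative scheme — computing $\CVD^*_{i+1}(S)$ from $\CVD^*_i(S)$ face by face via Lemma~\ref{lem:CVD-subdivisions}(i), accounting $\sum_f|S_f|=O(in)$, and handling the maximal side by transferring the unbounded-face information from the minimal diagram under condition~\textbf{V3$^\prime$} via Lemma~\ref{lem:V3-prime} — matches the paper's approach. But there is a concrete gap in your time analysis that prevents the proposal from reaching the claimed $O(k^2n+n\log n)$ \emph{expected} bound.

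You propose to compute each $\VD(S_f)$ by a generic abstract-Voronoi-diagram algorithm in $O(|S_f|\log|S_f|)$ expected time. Summed over all faces of one level this gives $O(in\log n)$, and over $k$ levels $O(k^2 n\log n)$; nothing in your argument recovers the extra $\log$ factor. The paper gets $O(k^2n+n\log n)$ expected precisely because it does \emph{not} recompute each $\VD(S_f)$ from scratch: it computes $\CVD^*_{i+1}(S)\cap f$ in expected $O(|S_f|)$ time using the randomized incremental construction of~\cite{P23}, which exploits the fact that the boundary $\partial f$ of the face is already known together with the cyclic sequence of sites along it. The same tool (together with~\cite{JP23}) is what the paper uses for $\FVD(S_{f}\cup\Sfplus)\cap f$ on the maximal side. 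Without this linear expected-time per-face construction, your argument only establishes the $O(k^2n\log n)$ worst-case bound, not the expected one — and that is a genuine hole, not a presentation issue. (Your Euclidean remark — that a face can be processed in $O(|S_f|)$ worst-case time — is morally on the right track, but the justification is that the sites appear in a known cyclic order along $\partial f$, enabling the Aggarwal--Guibas--Saxe--Shor construction~\cite{agss-ltacvdcp-89}, not ``pre-sorting.'')

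Two minor additional points. First, for the worst-case $O(k^2n\log n)$ bound your reasoning is fine and matches the paper (standard AVD construction per face). Second, on the maximal side your ``lockstep'' description is close in spirit, but be careful: you also need the strengthened invariant that corresponding faces of $\CVD_i(S)$ and $\mCVD_i(S)$ carry the \emph{same} color set $H$, which is exactly what the paper's induction in the proof of Lemma~\ref{lem:V3-prime} maintains; you state this but should make clear that this invariant is what legitimizes reading off $\Sfplusm$ from the unbounded faces of $\CVD^*_{i+1}(S)$ — otherwise the transferred site sequence might not be the one relevant to $f'$.
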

\begin{proof}
  Let $i$ be an integer with $1\leq i\leq k-1$.
  Consider a face~$f$ of $\CVD_i(S)$ that belongs to an order-$i$
  Voronoi region $\VR_i(H; S)$, for a set~$H$ of $i$~colors.
  Recall the sets~$S_H$ and~$S_f$ as defined above, where $S_f\cap S_H = \emptyset$
  \added{by Lemma~\ref{obs:no-color-in-H}.}

  By Lemma~\ref{lem:CVD-subdivisions}, the Voronoi diagram $\VD(S_f)$,
  truncated within $f$, reveals exactly the order-$(i+1)$ subdivision
  within the face~$f$, $\CVD^*_{i+1}(S)\cap f$.
  Since $\VD(S_f)$ is an instance of abstract Voronoi diagrams, we can
  compute  $\CVD^*_{i+1}(S)\cap f$ in expected $O(|S_f|)$~time by
  the randomized incremental technique of~\cite{P23},
  or in worst-case $O(|S_f|\log |S_f|)$ time by standard means, see e.g., \cite{alk-vddt-13}. 
  If $S$ consists of points in~$\Plane$ and $\delta_s(x) = \|x-s\|_2$ 
  is the Euclidean distance to each $s\in S$,
  then
  $\CVD^*_{i+1}(S)\cap f$ can be computed in $O(|S_f|)$ worst-case time~\cite{agss-ltacvdcp-89}.

  Then $\CVD_{i+1}(S)\cap f$ can be derived in two steps.
  First delete any $1$-chromatic edges of $\VD(S_f)$ and 
  unify the faces incident to the deleted edges. 
  This yields the overlay of $\CVD_{i}(S)$ and $\CVD_{i+1}(S)\cap f$, 
  which is  $\CVD^*_{i+1}(S)\cap f$ with its $1$-chromatic edges removed.
  Then, remove the edges along the  boundary of~$f$, 
  while unifying their incident faces, 
  which belong to the same set of $i+1$~colors.
  Note that for each edge~$e$ removed, the two incident faces get unified 
  into a new face~$f'(e)$ of $\CVD_{i+1}(S)$, 
  which belongs to a set~$H'$ of $i+1$~colors;
  the removed edge~$e$ belongs to $\FCVD(S_{H'})\cap f'(e)$.

  To obtain the entire $\CVD^*_{i+1}(S)$, we repeat the
  process for every face~$f$ of~$\CVD_{i}(S)$.
  \changed{
  As discussed in Corollary~\ref{coro:kcvd_AVD},
  conditions~\textbf{V1} and~\textbf{V2} hold, thus,  $\sum_{f} |S_f|
  = O(in)$.
  Hence, for computing~$\CVD^*_{i+1}(S)$ given $\CVD^*_i(S)$,
  we spend $O(in)$ expected or $O(in \log n)$ worst-case time,
}
 plus time proportional to the combinatorial complexity of $\CVD^*_{i+1}(S)$,
  which is $O((i+1)n)$ by Theorem~\ref{thm:kcvd_general} and Corollary~\ref{coro:kcvd_AVD}.
  Therefore, the total time complexity of our algorithm is bounded as claimed.

  Assuming condition~\textbf{V3}$^\prime$, 
  we can compute $\mCVD^*_{i+1}(S)$ and $\mCVD_{i+1}(S)$ analogously,
  however, given both $\mCVD_{i}(S)$ and $\CVD_{i}(S)$.
  We first compute $\CVD^*_{i+1}(S)$ from $\CVD_{i}(S)$ 
  in order to extract the sequence of sites 
  that define the unbounded faces of $\CVD^*_{i+1}(S)$, 
  which by Lemma~\ref{lem:V3-prime} coincides with 
  the sequence of sites that define the unbounded faces of  $\mCVD^*_{i+1}(S)$.
  In particular, for each pair of unbounded edges 
  that delimit an unbounded face~$f$ in $\mCVD_{i}(S)$,
  we identify the corresponding pair of unbounded edges in $\CVD_{i}(S)$; 
  then we traverse the unbounded edges of $\CVD^*_{i+1}(S)$ 
  that lie between them, and transform them to the sequence $\Jfplus$ of sites 
  that define the unbounded edges of $\mCVD_{i}(S)\cap f$.
  Once $\Jfplus$ and $\Sfplus$ are identified, 
  for every unbounded face of $\mCVD_{i}(S)$, 
  the computation of $\mCVD^*_{i+1}(S)$ and $\mCVD_{i+1}(S)$ is analogous: 
  for each face~$f$, we compute the farthest-site Voronoi diagram 
  $\FVD(S_f\cup S_{f}^+)\cap f$, 
  where $S_{f}^+ =\emptyset$ if $f$ is bounded.
  Given the boundary of~$f$ and the ordering of~$\Jfplus$, 
  this can be done in expected linear time by applying
  the randomized incremental construction of~\cite{JP23,P23}.
  (Note that we can easily compute $\FVD(S_f\cup \Sfplus)\cap \partial f$,  
  from $\Jfplus$, in linear time, 
  where  $\partial f$ denotes the boundary of~$f$, 
  as obtained by superimposing~$f$ and a large-enough bounding circle; 
  we can then apply~\cite{JP23,P23}).
  For points in the Euclidean metric,  
  $\FVD(S_f\cup S_{f}^+)\cap f$ can be computed in deterministic
  linear time~\cite{agss-ltacvdcp-89}, 
  given the sequence of sites that appear along the boundary of~$f$.
  Alternatively, we can also compute $\FVD(S_f\cup S_{f}^+)$ 
  in $O(|S_f \cup S_{f}^+| \log |S_f \cup S_{f}^+|)$ time 
  by standard techniques~\cite{alk-vddt-13}.  
  Since condition~\textbf{V3}$^\prime$ implies condition~\textbf{V3},
  the complexity of~$\mCVD^*_i(S)$ is also bounded by $O(i(n-i))$
  by Theorem~\ref{thm:kcvd_general_2n-4} and Lemma~\ref{lem:CVD_v}.
  The claimed time bounds are thus derived.
\end{proof}

The convex distance functions satisfy the conditions of Theorem~\ref{thm:alg},
hence we have:
\begin{corollary} \label{cor:alg} 
Let $B$ be a convex and compact body in~$\Plane$ of a constant complexity
that contains the origin in its interior.
Given a set~$S$ of $n$~colored points in~$\Plane$ with $m$~colors and
an integer $1\leq k\leq m$,
we can compute
$\CVD^*_1(S), \ldots, \CVD^*_k(S)$
in $O(k^2n + n\log n)$ expected time or in $O(k^2n \log n)$ worst-case time.
If $B$ is smooth, then we can also compute 
$\mCVD^*_1(S), \ldots, \mCVD^*_k(S)$ in the same time bound.
\end{corollary}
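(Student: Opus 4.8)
The plan is to check that point sites~$S$ under a convex distance function based on~$B$ satisfy every hypothesis of Theorem~\ref{thm:alg}, and then invoke that theorem. First I would recall that nearest-site Voronoi diagrams of point sites under a convex distance function belong to the model of abstract Voronoi diagrams~\cite{cd-vdcdf-85,m-bvdcdf-00,k-cavd-89,mmr-fsavd-01}, so that for every subset~$S'\subseteq S$ the regions of~$\VD(S')$ are nonempty and connected and the bisector of any two distinct sites is an unbounded simple curve homeomorphic to a line. Since~$B$ has constant complexity, the convex distance~$\|x-s\|_B$ and the bisector curve of any two sites can be evaluated in~$O(1)$ time. The general position of the graphs~$\{\gamma_s\}_{s\in S}$ required in Section~\ref{sec:cvda} may be assumed, as usual, possibly after an infinitesimal (or symbolic) perturbation of the input, as in the limit argument preceding Corollary~\ref{coro:kcvd_convex_distance}, since such a perturbation does not destroy any vertex or face of the diagrams we output. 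With these facts in place, the first half of Theorem~\ref{thm:alg} immediately yields~$\CVD^*_1(S),\ldots,\CVD^*_k(S)$ within the claimed time bounds.

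For the maximal family I would additionally verify condition~\textbf{V3$^\prime$} under the smoothness of~$B$, reusing the construction from the proof of Lemma~\ref{lem:V3-convex_distance}. There, for any pair of sites~$s_1,s_2$ with~$\ell$ the line through them, the bisector~$\beta$ of~$s_1,s_2$ is an unbounded curve along which the scaled translate~$B(p)$ through~$s_1,s_2$ converges to~$\ell$ as~$p\to\infty$, with its interior converging to one of the two open half-planes bounded by~$\ell$ at one end of~$\beta$ and its exterior converging to that same half-plane at the other end. Instantiating this with an arbitrary subset~$S'\subseteq S$, the ends of bisectors toward which the interior of~$B(p)$ becomes a half-plane disjoint from~$S'$ are exactly the vertices at infinity of~$\VD(S')$, while the ends toward which the exterior of~$B(p)$ becomes a half-plane disjoint from~$S'$ are exactly those of~$\FVD(S')$. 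Traversing a bisector in its two directions then pairs each unbounded edge of one diagram with a unique unbounded edge of the other \emph{defined by the same pair of sites}, and the cyclic order of these edges around infinity is dictated by the orientation of the limiting half-plane, hence is identical in~$\VD(S')$ and~$\FVD(S')$. This is precisely condition~\textbf{V3$^\prime$}, so the second half of Theorem~\ref{thm:alg} produces~$\mCVD^*_1(S),\ldots,\mCVD^*_k(S)$ in the same time.

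I expect the main obstacle to be this last step: one must argue not only that the same \emph{set} of site pairs delimits the unbounded faces of~$\VD(S')$ and of~$\FVD(S')$, but that the full \emph{cyclic sequence} of these pairs agrees in the two diagrams. This amounts to showing that, for a smooth convex~$B$, the limiting half-plane attached to an end of a bisector varies continuously and monotonically with the direction at infinity, compatibly between the nearest and the farthest diagram --- a consequence of the convexity and smoothness of~$B$, but one that should be spelled out carefully. The remaining verifications are routine comparisons against the definitions in Section~\ref{sec:cvda} and the hypotheses of Theorem~\ref{thm:alg}.
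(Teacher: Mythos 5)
Your proposal follows the paper's proof almost verbatim: both verify that convex distance functions for point sites fall under the abstract Voronoi diagram model and then invoke Theorem~\ref{thm:alg}, and both obtain condition~\textbf{V3$^\prime$} for smooth~$B$ from the geometric construction inside Lemma~\ref{lem:V3-convex_distance}. One small observation worth making: the \emph{statement} of Lemma~\ref{lem:V3-convex_distance} only records the counting identity $u_{c,j}(S)=\bar u_{c,j}(S)=e_{c,j}(S)$, i.e.\ condition~\textbf{V3}, not the stronger sequence condition~\textbf{V3$^\prime$}; the paper deliberately cites ``Lemma~\ref{lem:V3-convex_distance} \emph{and its proof}'' because the bijection $\sigma\mapsto(\eta_{c,j}(\sigma),\bar\eta_{c,j}(\sigma))$ constructed there, which sends the two ends of each bisector $\beta$ of a pair $s_1,s_2$ to corresponding unbounded edges of the nearest and farthest diagrams defined by the same pair, is what actually yields the matching cyclic sequences around infinity (since the limiting copy $B(p)$ approaches the line $\ell$ through $s_1,s_2$ in both directions, the directions at infinity of the two corresponding edges are antipodal, so the circular order is preserved). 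You identified exactly this point as the delicate step and sketched the right reason it works; your write-up is thus a slightly more explicit rendering of what the paper folds into the phrase ``and its proof,'' and there is no gap.
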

\begin{proof}
As discussed in Section~\ref{sec:kcvd_general} this case
falls under the \changed{umbrella} of abstract Voronoi diagrams; 
furthermore, if $B$ is smooth,
condition~\textbf{V3$^\prime$} holds, 
as shown in Lemma~\ref{lem:V3-convex_distance} and its proof.
Therefore, Theorem~\ref{thm:alg} applies and the corollary follows.
\end{proof}

\begin{corollary}\label{coro:kcvd_not_smooth}
Let $B$ be a convex $2b$-gon, centrally symmetric around the origin,
where $b\geq 2$ is a constant.
Given a set~$S$ of $n$~colored points with $m$~colors
in~$\Plane$,  and an integer $1\leq k\leq m$, we can
compute $\CVD^*_1(S), \ldots, \CVD^*_k(S)$ in 
$O(k^2 (n-k) + n\log n)$
expected or 
$O(k^2 (n-k) \log n + n \log n)$
worst-case time.
We can then compute $\mCVD^*_1(S), \ldots, \mCVD^*_k(S)$ in additional
$O(k^3 + n)$ worst-case time.
\end{corollary}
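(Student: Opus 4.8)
The plan is to run the iterative construction of Theorem~\ref{thm:alg}, fed with the sharper complexity bounds of Corollary~\ref{coro:kcvd_convex_polygonal}, and to treat the two families of diagrams separately, since condition~\textbf{V3$^\prime$} fails for a polygonal body~$B$.

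\emph{Minimal diagrams.} As recalled in the proof of Corollary~\ref{cor:alg}, a convex distance based on a polygon of constant complexity meets the requirements of abstract Voronoi diagrams, so the computation of $\CVD^*_1(S),\dots,\CVD^*_k(S)$ proceeds exactly as in Theorem~\ref{thm:alg}: starting from $\CVD^*_1(S)=\VD(S)$, for $i=1,\dots,k-1$ and every face $f\subseteq\VR_i(H;S)$ of $\CVD_i(S)$ we compute $\CVD^*_{i+1}(S)\cap f=\VD(S_f)\cap f$ by Lemma~\ref{lem:CVD-subdivisions}(i), in expected $O(|S_f|)$ time with the randomized incremental construction of~\cite{P23} (or worst-case $O(|S_f|\log|S_f|)$ by standard means~\cite{alk-vddt-13}), and then delete the $1$-chromatic edges and the edges along $\bd f$ to obtain $\CVD_{i+1}(S)$. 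The only change is in the accounting: $\sum_f|S_f|$ equals, up to constants, the complexity of $\CVD^*_{i+1}(S)$, which by Corollary~\ref{coro:kcvd_convex_polygonal} is $O(\min\{(i+1)(n-i-1),(n-i-1)^2\})$; summing over $i$ gives $O(k^2(n-k))$ for all refinement steps, plus $O(n\log n)$ for $\CVD^*_1(S)$, with the usual extra $\log n$ factor in the worst case.

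\emph{Maximal diagrams.} The obstacle here is that, unlike in the smooth case, the ordered set~$\Sfplus$ of sites defining the new unbounded faces of $\mCVD_{i+1}(S)\cap f$ cannot be recovered from $\CVD^*_{i+1}(S)$, because for a polygonal~$B$ the diagrams $\VD(S')$ and $\FVD(S')$ need not share their sequences of unbounded faces (the very feature that makes $\VD(S')$ have $\Theta(|S'|)$ parallel unbounded edges while $\FVD(S')$ has only $O(1)$). The remedy is to compute the behaviour at infinity of the maximal diagrams directly. For a far point $p$ heading to infinity in a direction strictly inside one of the $O(b)$ edge-normal cones of~$B$, the value $\delta_s(p)$ equals, up to a term tending to zero, $\|p\|_B$ minus a fixed linear functional of~$s$ that depends only on that cone; the $O(b)$ vertex-normal directions of~$B$ are the only directions where this functional changes. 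Hence the unbounded part of each $\mCVD^*_j(S)$ is governed by a one-dimensional order-$j$ color Voronoi problem along the circle at infinity, with transitions only near the $O(b)$ vertex-normal directions, and it --- together with the sequences $\Jfplus$ and the sets $\Sfplus$ for all orders $j\le k$ --- can be built within the $O(n\log n)$ preprocessing (sorting $S$ once by each of the $O(1)$ relevant functionals) plus a linear-time scan. With the sets $\Sfplus$ available, the iterative step of Lemma~\ref{lem:mCVD-subdivisions}(i) is carried out as before: for every unbounded face $f$ of $\mCVD_i(S)$ we compute $\FVD(S_f\cup\Sfplus)\cap f=\mCVD^*_{i+1}(S)\cap f$ in expected $O(|S_f\cup\Sfplus|)$ time by the randomized incremental construction of~\cite{JP23,P23}, or in deterministic linear time given the boundary order of $S_f$ and the at-infinity ordering of $\Sfplus$ (the analogue of~\cite{agss-ltacvdcp-89} for a polygonal metric), and then delete the $1$-chromatic and boundary edges to recover $\mCVD_{i+1}(S)$. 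Since $\mCVD^*_{i+1}(S)$ has complexity $O(\min\{(i+1)(n-i-1),(i+1)^2\})$ by Corollary~\ref{coro:kcvd_convex_polygonal}, the refinement steps contribute $\sum_{i<k}O((i+1)^2)=O(k^3)$, so together with the linear-time scans this yields the claimed additional $O(k^3+n)$ time.

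The step I expect to be the main obstacle is the analysis of the structure at infinity for the maximal diagrams: one must argue rigorously that it is genuinely one-dimensional and controlled by the $O(1)$ edge- and vertex-normal directions of~$B$, bound the complexity of the resulting order-$j$ one-dimensional diagram so that precomputing it and advancing it with $j$ stays within budget, and verify that substituting $\FVD(S_f\cup\Sfplus)\cap f$ for the unbounded tail of~$f$ reproduces $\mCVD^*_{i+1}(S)\cap f$ exactly --- i.e.\ that every unbounded edge of $\mCVD^*_{i+1}(S)$ inside~$f$ is new and is captured by $\Jfplus$. A secondary technical point is to confirm that the per-face farthest-site construction runs in the stated (expected or worst-case) time when handed only the cyclic order of $S_f$ along $\bd f$ together with the ordering of $\Sfplus$ at infinity.
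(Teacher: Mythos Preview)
Your treatment of the minimal diagrams matches the paper's: run the iterative scheme of Theorem~\ref{thm:alg} and replace the $O(in)$ per-step bound by the sharper $O(\min\{i(n-i),(n-i)^2\})$ from Corollary~\ref{coro:kcvd_convex_polygonal}. One point you omit is that polygonal bisectors can contain two-dimensional regions when two sites are aligned with a side of~$B$; the paper handles this by adopting a consistent tie-breaking convention so that the bisector system still meets the abstract-Voronoi-diagram axioms.

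For the maximal diagrams you take a genuinely different route. You propose to compute the structure at infinity of $\mCVD^*_{i+1}(S)$ from scratch, by analysing the asymptotic linear functionals in each edge-normal cone of~$B$ and solving an auxiliary one-dimensional colored problem. The paper instead proves a direct correspondence (Lemma~\ref{lem:extremal-points-Linf}): the face of $\CVD^*_k(S)$ unbounded in direction $\vec{v}\in\mathcal{N}(B)$ and the face of $\mCVD^*_k(S)$ unbounded in direction $-\vec{v}$ are associated with the \emph{same} site. Since $\CVD^*_{i+1}(S)$ is already computed, each $\Sfplus$ is read off in $O(1)$ time by locating the appropriate face. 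The second simplification you miss is that under a polygonal metric $\FVD(S_f\cup\Sfplus)$ has constant complexity $O(b)$, so it is computed in deterministic $O(|S_f|)$ worst-case time by extracting the $O(b)$ extremal points---no randomized incremental construction or appeal to an $L_p$-analogue of~\cite{agss-ltacvdcp-89} is needed. Your approach could be made to work (the asymptotic analysis you sketch is essentially correct), but it carries exactly the technical burden you flag in your last paragraph, whereas the paper's correspondence lemma sidesteps it entirely and yields the deterministic $O(k^3+n)$ bound cleanly.
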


\begin{proof}
  By Corollary~\ref{coro:kcvd_convex_polygonal}, the complexity of
  $\CVD_i$ under this metric is $ O(\min\{i(n-i),
  (n-i)^2\})$, where $1\leq i\leq k$. Following the proof of
  Theorem~\ref{thm:alg} and summing up over $ 1\leq i \leq k$, the 
  the claimed bounds regarding the minimal diagrams can be derived.
  However, some care is required with respect to the general position
  assumption and the specifics of the $2b$-gon metric.
  
  Let $\mathcal{N}(B)$ be the set of directions normal
  to the sides of $B$ pointing outwards; and let $\mathcal{D}(B)$ be the
  set of directions along the diagonals of $B$, see
  Figure~\ref{fig:directions}.
  The $2b$-gon ($b=2$) bisector of two points in general position is illustrated in
  Figure~\ref{fig:subdivision}.
  If the points are collinear along  a line normal to a direction in $\mathcal{N}(B)$, then their bisector contains 
  2-dimensional regions, which are not allowed by the general position
  assumption, see 
  Figure~\ref{fig:subdivision}(b).
  Following standard conventions, we can transform any such bisector 
  to a simple curve, by assigning an equidistant area to only one of
  the sites and keeping only the boundary curve as the bisector, see
  Figure~\ref{fig:subdivision}(c).
  Following this convention consistently, e.g., always choosing the
  ``clockwise most'' boundary of the equidistant area to be part of the bisector,
  the bisector system
  complies with the assumptions and the algorithms in the proof
  of Theorem~\ref{thm:alg} can be used.
  We adopt this convention in the rest of this proof.

  \begin{figure}[ht]
    \centering
    \includegraphics[height=3cm]{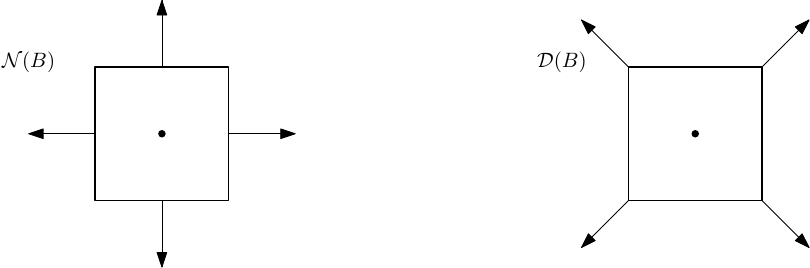}
    \caption{The directions in $\mathcal{N}(B)$ and $\mathcal{D}(B)$ for the $L_\infty$ metric ($b = 2$).}
    \label{fig:directions}
  \end{figure}

  Consider the diagrams $\VD(S)$ and  $\FVD(S)$ under the $2b$-gon
  metric; they both have unbounded faces in each
  of the directions of 
  $\mathcal{N}(B)$ defined by the minimum enclosing $2b$-gon of $S$.
  Furthermore, assuming that we consistently follow the same
  tie-breaking convention,
  the point
  that defines the face of $\VD(S)$ unbounded in direction $\vec{v}\in
  \mathcal{N}(B)$, and the point that defines the face of $\FVD(S)$ unbounded in direction
   $-\vec{v}$, coincide.
  The $\VD(S)$ can have multiple faces unbounded in the directions of
  $\mathcal{D}(B)$, whereas the complexity of $\FVD(S)$ is constant $O(b)$.
  We can use the directions of $\mathcal{N}(B)$ to further refine the faces of $\VD(S)$ and
  $\FVD(S)$,  see Figure~\ref{fig:subdivision}, and therefore also the faces of  $\CVD^*_k(S)$ and
  $\mCVD^*_k(S)$.
  The following property holds.

\begin{figure}[ht]
  \centering
  \includegraphics[height=3.5cm]{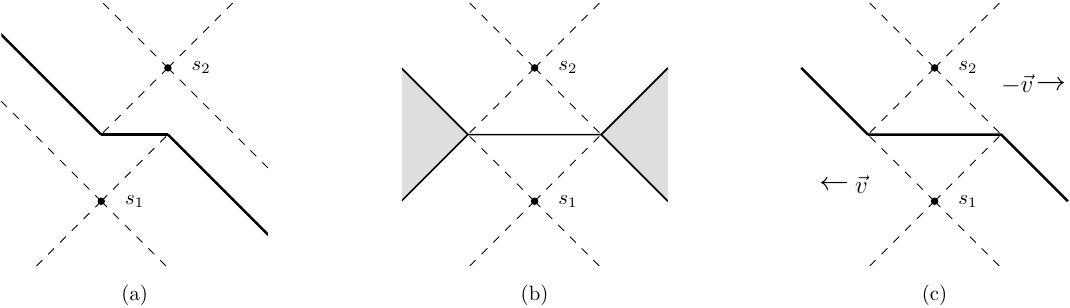}
  \caption{The $L_\infty$ Voronoi diagram of two points; their bisector
    is indicated in  solid lines;  (a) points in general position;
    (b) vertically collinear points, the shaded areas are equidistant from both points;
    (c) vertically collinear points under the adopted convention.
    }
  \label{fig:subdivision}
\end{figure}

\begin{lemma} \label{lem:extremal-points-Linf}
  The face of $\CVD^*_k(S)$ unbounded in direction
  $\vec{v}\in \mathcal{N}(B)$ and the face of $\mCVD^*_k (S)$
  unbounded in  direction $-\vec{v}$
  are associated with the same site $p\in S$ assuming the same tie-breaking conventions in both diagrams.
\end{lemma}

\begin{proof}
  Let $f\subseteq\VR_k(H;S)$ be the face of $\CVD^*_k(S)$ unbounded in 
  direction $\vec{v} \in \mathcal{N}(B)$ and let $p\in S_c$, where
  $c\in H$, be the
  point associated with $f$.
  That is, color $c \in H$ is the $k$-th
  nearest color from all points in  $f$,
  and $p$ is the nearest point of $S_c$ to all points in $f$,
  since $\CVD^*_{k}(S) \cap f=\FCVD^*(S_H) \cap f$, by Lemma~\ref{lem:CVD-subdivisions}(ii),
  and $f$ is already a fine face of $\CVD^*_{k}(S)$.
  The line $\ell$ through $p$ orthogonal to $\vec{v}$
  defines two open half-planes $h^+$ and $h^-$, unbounded in directions
  $\vec{v}$ and $-\vec{v}$, respectively, such that $h^+$
  contains at least one point of
  each color in $H\setminus\{c\}$  and no point of any other color,
  and the closure of $h^-$ entirely contains
  $S_c$ and $S\setminus S_H$.
  In case multiple points in $(S\setminus S_H)\cup S_c$ are collinear along
  $\ell$, the adopted tie-breaking convention indicates that $p$
  is the bottommost such point along $\ell$, where $\ell$ is oriented so
  that $h^+$ lies to its left.

  Consider the face $f'$ of  $\mCVD^*_k(S)$,  $f'\subseteq\mVR_k(H';S)$, unbounded in 
  direction $-\vec{v}$, and let $q$ be the point associated with $f'$.
  The line through $q$ orthogonal to $-\vec{v}$ defines two 
  half-planes that have exactly the same properties as $h^-$ and
  $h^+$.
  Thus, $H'=H$, and $p,q$  both lie on  $\ell$. 
  By the adopted tie breaking convention, $q$ must be extreme
  along $\ell$,  similarly to $p$, therefore $p$ and $q$ must coincide.

  \begin{figure}[ht]
    \centering
    \includegraphics[height=4cm]{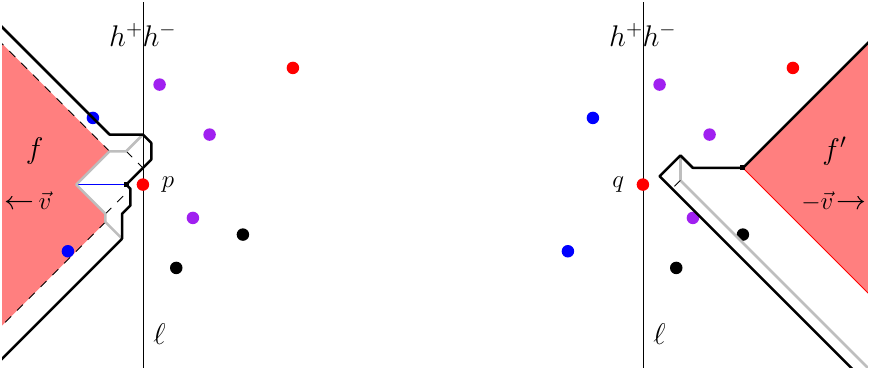}
    \caption{Illustration to Lemma~\ref{lem:extremal-points-Linf} for
      the $L_\infty$ metric. The faces $f\subseteq\VR_k(H;S)$ and 
      $f'\subseteq\mVR_k(H';S)$ are shown in red, region
      boundaries in solid black. In this case $k = 2$, the colors in $H$ are red and blue, the $k$-th color $c$ is red, and $p = q$.}
    \label{fig:extremal-points}
  \end{figure}  
\end{proof}

Let $f$ be a face of $\mCVD^*_{i}(S)$ unbounded in direction
$\vec{v}$.  By Lemma~\ref{lem:extremal-points-Linf}, we can compute
$\Sfplus$ by locating  in $\CVD^*_{i+1}(S)$ the face  $f'$ that is
unbounded in direction $-\vec{v}$, and assigning its associated
point as $\Sfplus$. Then $\FVD(S_f \cup \Sfplus) \cap f$ 
can be computed in $O(|S_f|)$ time, as the complexity of $\FVD(S_f\cup \Sfplus)$ is  constant.
Thus, we can derive $\mCVD_{i+1}(S)$, given $\mCVD_{i}(S)$ and $\CVD^*_{i+1}(S)$,  in time proportional to the
complexity of $\mCVD_{i}(S)$, which is $O(\min\{i(n-i), i^2\})$ by
Corollary~\ref{coro:kcvd_convex_polygonal}.
Summing up for $i=1$ to $k$,  we derive $O\left( \sum_{i=1}^k \min\{i(n-i),
i^2\} \right)= O(k^3)$
plus $O(n)$ time to compute $\mCVD^*_1(S)$.
This is in addition to the  time required  to compute
$\CVD^*_{i+1}(S)$, which has already been stated above.
\end{proof}

\section{More Applications of the Colorful Clarkson--Shor Framework} \label{sec:appl_CS}
The colorful Clarkson--Shor framework, as described in Section~\ref{sec:framework},
provides a general scheme to transform any set system that fits in
the original framework to its colored variant.
Once one has an upper bound on the number of (uncolored) configurations,
Theorems~\ref{thm:CS_general} and~\ref{thm:CS_general_uniform}
automatically imply general upper bounds on
the number of colored configurations of weight at most $k$.
In this section,
we demonstrate selected applications of the colorful Clarkson--Shor framework,
which result in new or, sometimes, known bounds 
on levels of arrangements of various objects of non-constant complexity.

\subsection{Envelopes of hyperplanes} \label{subsec:envelopes_hyperplanes}
We start with the arrangement of envelopes of hyperplanes in~$\Real^d$
for a constant $d\geq 2$.
Specifically,
let $S$ be a set of $n$~non-vertical hyperplanes in~$\Real^d$
and $\kappa \colon S \to K =\{1,\ldots, m\}$ be any color assignment.
For $i\in K$,
let $E_i$ be the lower envelope of \changed{the hyperplanes in~$S_i$}
and $\overline{E}_i$ \changed{be} their upper envelope.
We consider the arrangement $\arr = \arr(\{E_1, \ldots, E_m\})$ 
of $m$ lower envelopes
and the arrangement~$\overline{\arr} = \arr(\{\overline{E}_1, \ldots, \overline{E}_m\})$
of $m$ upper envelopes.
Our question is:
how many vertices are there in the arrangements~$\arr$ and $\overline{\arr}$ or 
in their levels?

We interpret this as an instance of the colorful Clarkson--Shor framework.
Let $\conf(S)$ be
the set of vertices of the arrangement $\arr(S)$ of \changed{the} $n$ hyperplanes in~$S$.
Let $\chi \subseteq S \times \conf(S)$
be a conflict relation such that $(s, v) \in \chi$ if and only if
$v\in \conf(S)$ lies \emph{above}~$s\in S$.
We also consider another relation $\bar{\chi} \subseteq S \times \conf(S)$
such that $(s, v) \in \chi$ if and only if
$v\in \conf(S)$ lies \emph{below}~$s\in S$.
This describes two symmetric (uncolored) CS-structures
$(S, \conf(S), \chi)$ and $(S, \conf(S), \bar{\chi})$.
Now, consider the colored configurations with respect to~$\kappa$
induced from $(S, \conf(S), \chi)$ and $(S, \conf(S), \bar{\chi})$,
denoted by $\conf(S,\kappa)$ and $\overline{\conf}(S,\kappa)$, respectively.
It then turns out that $\conf(S,\kappa)$
consists of the vertices of the arrangement~$\arr$ of $m$ lower envelopes,
while $\overline{\conf}(S,\kappa)$ consists of
the vertices of the arrangement~$\overline{\arr}$ of $m$ upper envelopes.
More precisely, for $1\leq c\leq d$ and $0\leq j\leq m-1$,
the set~$\conf_{c,j}(S,\kappa)$
consists of $c$-chromatic vertices of~$\arr$
below which there are exactly $j$~surfaces from~$\{E_i\}_{i\in K}$,
while $\overline{\conf}_{c,j}(S,\kappa)$ consists of
$c$-chromatic vertices of~$\overline{\arr}$
above which there are exactly $j$~surfaces from~$\{\overline{E}_i\}_{i\in K}$.

We then consider the standard point-to-hyperplane duality transformation~\cite{e-acg-87}
such that each point~$p = (a_1, a_2, \ldots, a_d) \in \Real^d$
is mapped to a non-vertical \changed{hyperplane~$\dual{p} \colon \{x_d = a_1 x_1 + \cdots + a_{d-1} x_{d-1} - a_d\}$},
and vice versa.
Letting $\dual{S}$ be the set of $n$~points in~$\Real^d$
that are dual to hyperplanes in~$S$,
a $c$-chromatic vertex of weight~$j$ of $\arr$ or of $\overline{\arr}$
(which belongs to $\conf_{c,j}(S)$ or $\overline{\conf}_{c,j}(S)$, respectively)
corresponds to a $c$-chromatic $j$-facet in~$\dual{S}$.
More precisely,
by the duality transformation,
there is a one-to-one correspondence between
$\conf_{c,j}(S)$ and the set of $c$-chromatic $j$-facets in~$\dual{S}$ that are \emph{upward},
(that is, those $j$-facets whose corresponding half-spaces are unbounded 
in the positive $x_d$-direction);
analogously,
there is a one-to-one correspondence between
$\overline{\conf}_{c,j}(S)$ and the of $c$-chromatic $j$-facets in~$\dual{S}$ that are \emph{downward}.
Hence, we have:
\begin{lemma} \label{lem:v-e_2}
For each $1\leq c\leq d$ and $0\leq j\leq m-c$, it holds that
\[ |\conf_{c,j}(S,\kappa)| + |\overline{\conf}_{c,j}(S)| = e_{c,j}(\dual{S}).\]
\end{lemma}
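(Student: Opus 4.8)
The plan is to mimic the proof of Lemma~\ref{lem:v-e}, with the point-to-hyperplane duality~$p\mapsto\dual{p}$ playing the role of the paraboloid lifting. I would use that this duality is an involution preserving the vertical order: a point~$a$ lies strictly above (resp.\ below, on) the hyperplane~$\dual{b}$ if and only if~$b$ lies strictly above (resp.\ below, on)~$\dual{a}$, because the defining inequality $a_d+b_d>\sum_{i<d}a_ib_i$ is symmetric in~$a$ and~$b$. Combined with the identification established just above --- $\conf_{c,j}(S,\kappa)$ is the set of $c$-chromatic vertices of~$\arr$ below which exactly $j$ of the envelopes~$E_i$ lie, and $\overline{\conf}_{c,j}(S,\kappa)$ the $c$-chromatic vertices of~$\overline{\arr}$ above which exactly $j$ of the~$\overline{E}_i$ lie --- this turns each such vertex into an oriented colored $j$-facet of~$\dual{S}$.

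Concretely, I would fix $v\in\conf_{c,j}(S,\kappa)$, let $D_v=\{s_1,\dots,s_d\}\subseteq S$ be the $d$ hyperplanes meeting at~$v$ (with $|\kappa(D_v)|=c$), and dualize. As $v$ lies on every~$s_t$, each~$\dual{s_t}$ lies on the non-vertical hyperplane~$\dual{v}$, so $\dual{s_1},\dots,\dual{s_d}$ span~$\dual{v}$. A hyperplane~$s\in S$ lies strictly below~$v$ exactly when the point~$\dual{s}$ lies strictly above~$\dual{v}$; hence the statement ``exactly $j$ colors have some hyperplane strictly below~$v$, and no defining color does'' becomes ``the open upper half-space of~$\dual{v}$ intersects exactly $j$ colors of~$\dual{S}$, and no defining color''. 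Orienting the simplex $\conv\{\dual{s_1},\dots,\dual{s_d}\}$ so that its positive side is this open upper half-space yields an \emph{upward} $c$-chromatic $j$-facet of~$\dual{S}$; this is a bijection onto the upward $c$-chromatic $j$-facets, the inverse sending a facet to the unique~$v$ whose dual hyperplane is the affine hull of its~$d$ vertices. Symmetrically, starting from $v\in\overline{\conf}_{c,j}(S,\kappa)$ in~$\overline{\arr}$ and using ``$s$ strictly above~$v$'' $\Leftrightarrow$ ``$\dual{s}$ strictly below~$\dual{v}$'' gives a bijection onto the \emph{downward} $c$-chromatic $j$-facets of~$\dual{S}$.

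To conclude, I would note that by the general position of~$S$ (equivalently, of~$\dual{S}$) no~$d$ points of~$\dual{S}$ lie on a common vertical hyperplane, so the affine hull of the vertices of any $c$-chromatic $j$-facet is non-vertical; such a facet, being an oriented simplex, thus has its positive side equal to the open upper or the open lower half-space of that hull, hence it is either upward or downward and not both. Adding the two bijections over these two disjoint, exhaustive classes gives $e_{c,j}(\dual{S})=|\conf_{c,j}(S,\kappa)|+|\overline{\conf}_{c,j}(S,\kappa)|$.

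The calculations are routine; the only delicate point --- and what I would re-check --- is the bookkeeping carrying the conflict/weight structure through the duality, namely that ``color~$i$ conflicts with the configuration at~$v$'' (some~$E_i$ passes strictly below~$v$) matches exactly ``some point of~$\dual{S_i}$ lies strictly above~$\dual{v}$'', so that the weight~$j$ becomes the number of colors met by the positive side, and that the upward/downward dichotomy corresponds precisely to working with~$\arr$ (lower envelopes) versus~$\overline{\arr}$ (upper envelopes).
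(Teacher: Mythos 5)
Your proposal is correct and follows essentially the same route as the paper, which establishes the two bijections between $\conf_{c,j}(S,\kappa)$ and upward $c$-chromatic $j$-facets of $\dual{S}$, and between $\overline{\conf}_{c,j}(S,\kappa)$ and downward ones, via the point-to-hyperplane duality, then adds them up. You merely spell out the order-preservation and weight bookkeeping that the paper leaves implicit, in direct analogy with the explicit argument given for Lemma~\ref{lem:v-e}.
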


By Lemma~\ref{lem:v-e_2}, Corollary~\ref{coro:j-facets_general} implies:
\begin{corollary} \label{coro:levels_hyperplanes}
 The number of vertices in the $(\leq k)$-level of the arrangement of
 $m$ convex polyhedral hypersurfaces in~$\Real^d$ with a total of $n$ facets,
 each of which is the lower envelope of non-vertical hyperplanes, is
 $O(m^{\lfloor d/2 \rfloor -1} k^{\lceil d/2 \rceil} n^{\lfloor d/2 \rfloor})$
 in general
 and
 $O(k^{\lceil d/2 \rceil} n^{\lfloor d/2 \rfloor})$
 if the numbers of facets in
 each of the $m$ convex hypersurface is at most $\rho\cdot \frac{n}{m}$
 for a constant~$\rho\geq 1$.
\end{corollary}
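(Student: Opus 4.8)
The plan is to read the arrangement $\arr$ of the $m$ lower envelopes as an instance of the colorful Clarkson--Shor framework and then to push the bound on colored $(\le k)$-facets from Corollary~\ref{coro:j-facets_general} through the point-to-hyperplane duality, exactly as prepared by Lemma~\ref{lem:v-e_2}. First I would pass to a clean setting: for each colour $i\in K$ discard every hyperplane that contributes no facet to the lower envelope $E_i$. Since the lower envelope of a family of hyperplanes is a convex polyhedral surface in which each hyperplane contributes at most one (convex) facet, after this reduction the set $S_i$ of surviving hyperplanes of colour $i$ satisfies $|S_i|=(\text{number of facets of }E_i)$, hence $|S|=n$. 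This leaves $\arr$ and all of its levels unchanged, and it makes the balancedness hypothesis ``each hypersurface has at most $\rho\cdot\frac{n}{m}$ facets'' coincide with the hypothesis $|S_i|\le\rho\cdot\frac{n}{m}$ of Corollary~\ref{coro:j-facets_general}.

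Next I would invoke the CS-structure set up just before Lemma~\ref{lem:v-e_2}: taking $\conf(S)$ to be the vertex set of the hyperplane arrangement $\arr(S)$ and $\chi$ the ``lies above'' relation, the colored configurations $\conf_{c,j}(S,\kappa)$ are precisely the $c$-chromatic vertices of $\arr$ below which exactly $j$ of $E_1,\dots,E_m$ pass. As noted in Section~\ref{sec:cvda}, such a vertex lies on $c$ consecutive levels of $\arr$ (it is the common intersection of $c$ of the $E_i$), so it belongs to the $(\le k)$-level only if $j\le k$, and conversely every $(\le k)$-level vertex arises this way with some $j\le k$ and $c\le d$. Since $d$ is a constant, the number of vertices of the $(\le k)$-level of $\arr$ is $O\!\big(\sum_{c=1}^{d}\sum_{j=0}^{k}|\conf_{c,j}(S,\kappa)|\big)$. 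By Lemma~\ref{lem:v-e_2}, $|\conf_{c,j}(S,\kappa)|\le|\conf_{c,j}(S,\kappa)|+|\overline{\conf}_{c,j}(S)|=e_{c,j}(\dual{S})$, where $\dual{S}$ is the set of $n$ colored points in $\Real^d$ dual to $S$, with the same colour classes and class sizes as $S$; summing over $c$ and $j$ yields the bound $\sum_{j=0}^{k}e_j(\dual{S})$ on the size of the $(\le k)$-level of $\arr$.

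Finally I would feed this into Corollary~\ref{coro:j-facets_general}. For $0\le k\le\lfloor m/d\rfloor-1$ it gives directly the claimed bound $O(m^{\lfloor d/2\rfloor-1}k^{\lceil d/2\rceil}n^{\lfloor d/2\rfloor})$ in general, and $O(k^{\lceil d/2\rceil}n^{\lfloor d/2\rfloor})$ under the balancedness hypothesis. For $k\ge\lfloor m/d\rfloor$ the number of $(\le k)$-facets of $\dual{S}$ is at most the total number of colored configurations, which Theorems~\ref{thm:CS_general} and~\ref{thm:CS_general_uniform} (with $T_0(n)=C_d n^{\lfloor d/2\rfloor}$) bound by $O(m^{d-1}n^{\lfloor d/2\rfloor})$ and $O(m^{\lceil d/2\rceil}n^{\lfloor d/2\rfloor})$ respectively; since $m=O(k)$ in this range and $d$ is constant, these are subsumed by the stated expressions, so the bounds hold for all $k$. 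The only delicate point is the bookkeeping between ``$(\le k)$-level'' and ``weight $\le k$'' configurations, together with the reduction to non-redundant hyperplanes; both are routine once the correspondence of Lemma~\ref{lem:v-e_2} is in hand, and I expect no real obstacle beyond this.
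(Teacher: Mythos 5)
Your proposal is correct and follows essentially the same route as the paper: pass to the dual point set, identify the $(\le k)$-level vertices of $\arr$ with colored configurations via the CS-structure of Section~\ref{subsec:envelopes_hyperplanes}, bound them by colored $(\le k)$-facets through Lemma~\ref{lem:v-e_2}, and invoke Corollary~\ref{coro:j-facets_general}. You are in fact slightly more careful than the paper's one-line proof in two places — the preprocessing that removes hyperplanes contributing no facet (so that $|S_i|$ matches the facet count and the balancedness hypothesis transfers), and the explicit treatment of $k\ge\lfloor m/d\rfloor$ via the total-count bounds from Theorems~\ref{thm:CS_general}--\ref{thm:CS_general_uniform} together with $m=O(k)$ — both of which the paper only handles implicitly or in a remark following the corollary.
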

Corollary~\ref{coro:levels_hyperplanes} is in fact
the dual version of Corollary~\ref{coro:j-facets_general}.
Note that for large $k$ with $k \geq \lfloor \frac{m}{d} \rfloor$,
the bounds in both corollaries becomes asymptotically the same as
the total number of $(\leq c)$-chromatic configurations,
$O(m^{d-1} n^{\lfloor d/2 \rfloor})$ and $O(n^{\lfloor d/2 \rfloor})$, respectively.
(see Theorems~\ref{thm:CS_general} and~\ref{thm:CS_general_uniform}).
Remark that the bounds in both corollaries for $d\leq 3$
match the original Clarkson--Shor bound
$O(k^{\lceil d/2 \rceil} n^{\lfloor d/2 \rfloor})$~\cite{cs-arscgII-89},
while the extra factor~$m^{\lfloor d/2 \rfloor -1}$ in higher dimensions~$d\geq 4$
is a bit disappointing.
Indeed, Aronov, Bern, and Eppstein have proved
that the total complexity of~$\arr$ is bounded by
$O(m^{\lceil d/2 \rceil} n^{\lfloor d/2 \rfloor})$,
but their unpublished manuscript~\cite{abe-apa-95} currently seems to be lost~\cite{a-pc-24}.
Katoh and Tokuyama~\cite{kt-klcs-02} have proved the bound of $O(k^{2/3}n^2)$ 
on the single $k$-level in~$\Real^3$.

The set~$S$ of hyperplanes in~$\Real^d$ is called in \emph{convex position}
if the set~$\dual{S}$ of dual points is in convex position.
By the duality and Lemma~\ref{lem:v-e_2},
when $S$ consists of planes in~$\Real^3$ in convex and general position,
Theorem~\ref{thm:j-facets_convex_3d} implies an exact upper bound
on the total number of vertices of the $k$-level of~$\arr$ from below
and of the $k$-level of~$\overline{\arr}$ from above.

\begin{corollary} \label{coro:levels_hyperplanes_convex}
 With the notations declared above for~$d=3$,
 suppose $S\subset \Real^3$ is in convex and general position.
 Then, for each~$1\leq k \leq m$,
 the total number of vertices in the $k$-level of~$\arr$ from below and
 in the $k$-level of~$\overline{\arr}$ from above is at most
 \[ \begin{cases}
      2n-4 			& k=1 \\
      6(k-1)(n-k)-4	& 2\leq k \leq m-1 \\
      4(m-1)(n-m+1)-2n	& k=m
    \end{cases}.
  \]
The exact numbers are achieved when $m=n$, that is, 
each $S_i$ consists of a single hyperplane.
\end{corollary}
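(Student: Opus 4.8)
The plan is to translate the count into one about colored $j$-facets of the dual point set and then apply Theorem~\ref{thm:j-facets_convex_3d}. Write $\dual S$ for the $n$ colored points dual to~$S$ and, for brevity, $e_{c,j}:=e_{c,j}(\dual S)$, with the convention $e_{c,j}=0$ for $j<0$. First I recall the level‑incidence principle underlying the proof of Lemma~\ref{lem:CVD_v}(ii): a $c$-chromatic vertex of~$\arr$ of weight~$j$ lies on exactly the $c$ consecutive levels $j+1,\dots,j+c$ of~$\arr$ counted from below, and symmetrically a $c$-chromatic weight-$j$ vertex of~$\overline{\arr}$ lies on the levels $j+1,\dots,j+c$ counted from above. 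Summing over both arrangements and invoking Lemma~\ref{lem:v-e_2}, the quantity to be bounded for a fixed~$k$ is
\[
  G_k \;=\; \sum_{c=1}^{3}\ \sum_{t=1}^{c}\Bigl(|\conf_{c,\,k-t}(S,\kappa)|+|\overline{\conf}_{c,\,k-t}(S,\kappa)|\Bigr)
      \;=\; \sum_{c=1}^{3}\ \sum_{t=1}^{c} e_{c,\,k-t}.
\]
Since $\dual S$ is in convex and general position, I would evaluate $G_k$ through Theorem~\ref{thm:j-facets_convex_3d}, treating the three cases of the statement separately.

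For $k=1$, only the weight-$0$ terms survive, so $G_1=e_{3,0}+e_{2,0}+e_{1,0}=e_0(\dual S)$, which is the number of facets of the convex hull of~$\dual S$; in convex and general position this equals $2n-4$, giving the first case. For $2\le k\le m-1$, the indices $k-1,k-2,k-3$ all lie in the range $[0,m-2]$ on which Theorem~\ref{thm:j-facets_convex_3d} holds, so I would substitute $e_{3,j}=2(j+1)(n-j-2)-\sum_{i\le j}e_{2,i}-\sum_{i\le j}(j-i+1)e_{1,i}$ for $j\in\{k-1,k-2,k-3\}$, write each $e_{2,j}$ and $e_{1,j}$ as a first or second difference of the partial sums $\sum_{i\le j}e_{2,i}$ and $\sum_{i\le j}(j-i+1)e_{1,i}$, and collect terms. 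The partial‑sum contributions telescope, leaving only non‑negative partial sums of the $e_{1,i}$'s and $e_{2,i}$'s, each with a minus sign, while the polynomial part collapses to $2k(n-k-1)+2(k-1)(n-k)+2(k-2)(n-k+1)=6(k-1)(n-k)-4$. Hence $G_k\le 6(k-1)(n-k)-4$, with equality exactly when all those partial sums vanish; in particular when $m=n$, where every colored facet is $3$-chromatic and $\dual S$ is in convex position, which is the ``exact'' clause.

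The main obstacle is the endpoint $k=m$. Here $k-1=m-1$ falls outside the range of Theorem~\ref{thm:j-facets_convex_3d}, so the Pascal‑triangle inversion used in its proof cannot be carried up to $e_{3,m-1}$; on the other hand $e_{3,m-1}=e_{3,m-2}=e_{2,m-1}=0$ for chromatic reasons, so the surviving terms reduce to $e_{3,m-3}+e_{2,m-2}$ together with the top‑level $1$-chromatic term $e_{1,m-1}$. I would expand $e_{3,m-3}$ via Theorem~\ref{thm:j-facets_convex_3d} at $j=m-3$, and use the $j=m-2$ instance (where $e_{3,m-2}=0$) to eliminate $\sum_{i\le m-2}e_{2,i}$; the polynomial part then becomes $2(m-2)(n-m+1)+2(m-1)(n-m)=4(m-1)(n-m+1)-2n$, with a non‑positive residual, and equality once more when $m=n$. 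The delicate part is controlling the $1$-chromatic contribution $e_{1,m-1}(\dual S)$ at this top level: each such facet is a face of $\mathrm{conv}(S_i)$ for some color~$i$ whose outward half‑space meets each of the remaining $m-1$ colors, and I expect bounding their number to require a separate convex‑position argument (or the convention that the proper vertices of the individual envelopes are not counted among the vertices of the arrangement) — this is where I anticipate the real work lies.
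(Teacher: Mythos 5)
Your approach is the same as the paper's: translate the level-count into $e_{3,k-1}+e_{3,k-2}+e_{3,k-3}+e_{2,k-1}+e_{2,k-2}+e_{1,k-1}$ in the dual point set via Lemma~\ref{lem:v-e_2}, then invert with Theorem~\ref{thm:j-facets_convex_3d}. Your computations for $k=1$ and for $2\le k\le m-1$ are correct and reproduce what the paper's one-sentence proof intends: the polynomial parts collapse to $2n-4$ and $6(k-1)(n-k)-4$ respectively, and the remaining partial sums all carry minus signs.

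The obstruction you flag at $k=m$ is a genuine gap, and the paper's terse proof (``Theorem~\ref{thm:j-facets_convex_3d} directly implies the claimed exact bounds'') does not resolve it either. After using $e_{3,m-1}=e_{3,m-2}=e_{2,m-1}=0$ and the $j=m-2$, $j=m-3$ identities of Theorem~\ref{thm:j-facets_convex_3d}, the quantity to bound becomes
\[
 e_{3,m-3}+e_{2,m-2}+e_{1,m-1}
 \;=\; 4(m-1)(n-m+1)-2n \;+\; e_{1,m-1} \;-\; \Bigl(2\sum_{i=0}^{m-3}e_{2,i}+\sum_{i=0}^{m-2}(m-1-i)e_{1,i}+\sum_{i=0}^{m-3}(m-2-i)e_{1,i}\Bigr),
\]
so the claimed bound requires $e_{1,m-1}$ to be dominated by the non-negative bracket, and $e_{1,m-1}$ is precisely the quantity on which Theorem~\ref{thm:j-facets_convex_3d} is silent, since its range ends at $j=m-2$. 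This is not cosmetic: take $m=2$, $n=5$, with three color-$1$ points whose lifted affine plane separates the two lifted color-$2$ points (equivalently, one color-$2$ point inside and one outside the circle through the three color-$1$ points). Then all six convex-hull facets of the five lifted points are bichromatic, giving $e_{2,0}=6$, and both orientations of the monochromatic triangle hit exactly one color, giving $e_{1,1}=2$; the count is $8$, exceeding the claimed bound $4\cdot1\cdot4-2\cdot5=6$. So the ``separate convex-position argument'' you anticipate for $e_{1,m-1}$ is indeed needed, and as far as I can tell the paper does not supply it.
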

\begin{proof}
By Lemma~\ref{lem:v-e_2}, the total number of vertices we are interested in is exactly
\[ e_{3,k-1}(\dual{S}) + e_{3,k-2}(\dual{S}) + e_{3,k-3}(\dual{S})
  + e_{2,k-1}(\dual{S}) + e_{2,k-2}(\dual{S}) + e_{1,k-1}(\dual{S})\]
from which Theorem~\ref{thm:j-facets_convex_3d} directly implies
the claimed exact bounds.
\end{proof}

Notice that the $k$-level of the two arrangements~$\arr$ and~$\overline{\arr}$
described in Corollary~\ref{coro:levels_hyperplanes_convex}
corresponds to the refined color Voronoi diagram~$\CVD^*_k(S')$ or $\mCVD^*_k(S')$
under the Euclidean metric of any set $S'$ of colored points
in~$\Plane$ \changed{such that  $\dual{S} = \lift{(S')}$ is 
the set of points in~$\Real^3$ lifted onto the unit parabola.}
(Recall the discussions above and in Section~\ref{sec:framework}.2.)
Thus, the total number of vertices in $\CVD^*_k(S')$ and $\mCVD^*_k(S')$
is bounded by
the exact numbers \changed{given}
in Corollary~\ref{coro:levels_hyperplanes_convex}.

\subsection{Triangles, simplices, and piecewise linear functions}

Let $T = \{\triangle_1, \ldots, \triangle_n\}$ be a given set of $n$~triangles in~$\Real^3$.
For $1\leq k\leq n$,
the \emph{$k$-level} of the arrangement~$\arr(T)$ of triangles in~$T$
is defined to be the closure of the set of all points~$p$ on triangles in~$T$
such that the downward vertical ray from~$p$ meets exactly $k-1$ triangles.
Agarwal et al.~\cite{aacm-lalspt-98} proved that
the complexity of the $k$-level of~$\arr(T)$ is $O(k^{7/9}n^2\alpha(n/k))$
and Katoh and Tokuyama~\cite{kt-klcs-02} improved it to $O(k^{2/3} n^2)$.

To make this fit in our framework, for each $1\leq i\leq n$,
let $S_i$ be the set of four planes in~$\Real^3$
consisting of the plane containing~$\triangle_i\in T$
and three more planes through each side of~$\triangle_i$
that are almost vertical and go below~$\triangle_i$.
Regard each $1\leq i \leq n$ as a color from $K:=\{1,\ldots, n\}$,
and let $S:=\bigcup_{i\in K} S_i$ and $\kappa \colon S \to K$
such that $\kappa(s) = i$ if $s\in S_i$.
As above, let $\overline{E}_i$ be the upper envelope of planes in~$S_i$.
Observe then that the $k$-th level from \emph{below} of the arrangement 
$\overline{\arr} = \arr(\{\overline{E}_1, \ldots, \overline{E}_n\})$
coincides with the $k$-level of $\arr(T)$.
From the definition of~$\overline{\conf}(S, \kappa)$ as declared above,
notice that
the weight of each $c$-chromatic vertex~$v$ of~$\overline{\arr}$ is 
indeed $n-c-k$
if the downward vertical ray from~$v$ intersects exactly $k-1$ triangles in~$T$.
So, the weights and the levels are somehow in the reversed order in this case.

Hence, applying Corollary~\ref{coro:levels_hyperplanes},
we obtain the $O(k^2 n)$ bound for the $(\leq k)$-level from \emph{above} 
or, equivalently, for the $(\geq n-k)$-level from \emph{below} of~$\overline{\arr}$.
On the other hand, we can also obtain an upper bound on the $(\leq k)$-level of
the arrangement~$\arr(T)$ of triangles
by considering those vertices of~$\overline{\arr}$ in~$\overline{\conf}(S,\kappa)$ 
whose weights are at least $n-k$.
Furthermore, the same arguments are applied to
$(d-1)$-simplices in~$\Real^d$ for any constant $d\geq 2$ as follows.
\begin{theorem} \label{thm:levels_triangles}
 Let $T$ be a set of $n$ $(d-1)$-simplices in~$\Real^d$ for constant~$d\geq 2$,
 and $\arr(T)$ be their arrangement.
 For $1\leq k\leq n$,
 the number of vertices in the $(\leq k)$-level of~$\arr(T)$
 is $O(k n^{d-1} \alpha(n/k))$;
 the number of vertices in the $(\geq k)$-level of~$\arr(T)$
 is $O((n-k)^{\lceil d/2 \rceil} n^{\lfloor d/2 \rfloor})$.
\end{theorem}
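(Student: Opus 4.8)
The plan is to realize the $k$-level of $\arr(T)$ as a level in the arrangement of the $n$ convex piecewise-linear surfaces $\overline{E}_1,\dots,\overline{E}_n$ associated with $T$ exactly as in the discussion preceding the statement: $\overline{E}_i$ is the upper envelope of the hyperplane supporting $\triangle_i$ together with $d$ near-vertical hyperplanes through the facets of $\triangle_i$, oriented so that $\overline{E}_i$ coincides with the supporting hyperplane of $\triangle_i$ over the $xy$-projection of $\triangle_i$, and chosen near-vertical enough that at every vertex $v$ of $\arr(T)$ whose projection lies outside that of $\triangle_i$ the surface $\overline{E}_i$ passes above $v$. Then, for every such vertex $v$, ``$\overline{E}_i$ lies strictly below $v$'' holds if and only if the downward vertical ray from $v$ crosses $\triangle_i$; consequently a vertex of $\arr(\{\overline{E}_i\})$ lies on the $(\leq k)$-level of $\arr(T)$ (resp.\ on the $(\geq k)$-level) exactly when at most $k-1$ (resp.\ at least $k-1$) of the surfaces $\overline{E}_i$ lie strictly below it. Each $\overline{E}_i$ is convex with $d+1$ facets, so after triangulating it into $O(1)$ simplices I treat the surfaces $\overline{E}_i$ as $n$ objects of a Clarkson--Shor structure with parameter $d$, in which a configuration is a vertex of the arrangement together with the at most $d$ surfaces through it; the $O(n)$ internal vertices of the individual surfaces are absorbed throughout, and the two halves of the theorem differ only in the conflict relation.

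For the $(\geq k)$-level I use the conflict relation ``$v$ lies strictly below $\overline{E}_i$'', so the weight of a configuration is the number of surfaces strictly above it and a weight-$0$ configuration is a vertex of the upper envelope of the sampled surfaces. Since the upper envelope of any $r$ of the $\overline{E}_i$ is a pointwise maximum of $r$ convex functions, hence convex, it has $O(r^{\lfloor d/2\rfloor})$ vertices by the Upper Bound Theorem, and I take $T_0(r)=C_d\,r^{\lfloor d/2\rfloor}$. A vertex on the $(\geq k)$-level has at least $k-1$ surfaces below it and at most $d$ surfaces through it, hence weight at most $k':=n-d-k+1$. If $k'\leq\lfloor n/d\rfloor-1$, the Clarkson--Shor bound~\cite[Theorem~3.1]{cs-arscgII-89} (equivalently the $n=m$ instance of Theorem~\ref{thm:CS_general}) makes the number of configurations of weight at most $k'$ equal to $O((k'+1)^d\,T_0(n/(k'+1)))=O((n-k)^{d}(n/(n-k))^{\lfloor d/2\rfloor})=O((n-k)^{\lceil d/2\rceil}n^{\lfloor d/2\rfloor})$; otherwise $n-k=\Omega(n)$, the claimed bound is $\Omega(n^d)$, and $O(n^d)$ already bounds the total number of vertices of $\arr(\{\overline{E}_i\})$ (at most that of the arrangement of the $(d+1)n$ hyperplanes). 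This is precisely the uniform case of Corollary~\ref{coro:levels_hyperplanes} applied to the $n$ balanced convex envelopes $\overline{E}_i$, which could also be cited directly.

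For the $(\leq k)$-level I use the opposite conflict relation ``$v$ lies strictly above $\overline{E}_i$'', so now the weight of a configuration is the number of surfaces strictly below it and a weight-$0$ configuration is a vertex of the lower envelope of the sampled surfaces. The lower envelope of any $r$ of the $\overline{E}_i$ is a pointwise minimum of $r$ convex piecewise-linear functions, hence a piecewise-linear function with $O(r)$ linear pieces in total, whose complexity is $O(r^{d-1}\alpha(r))$ by the known bound on lower envelopes of piecewise-linear functions~\cite{sa-dsstga-95}; so I take $T_0(r)=C_d\,r^{d-1}\alpha(r)$, a nondecreasing function with $T_0(0)=0$. A vertex on the $(\leq k)$-level has at most $k-1$ surfaces strictly below it, i.e.\ weight at most $k-1$, so by the same Clarkson--Shor bound the number of configurations of weight at most $k-1$ is $O(k^{d}\,T_0(n/k))=O(k^{d}(n/k)^{d-1}\alpha(n/k))=O(k\,n^{d-1}\alpha(n/k))$ when $k\leq\lfloor n/d\rfloor-1$; for larger $k$ this expression is $\Omega(n^d)$ and again dominates the total vertex count. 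Since the vertices of the $(\leq k)$-level of $\arr(T)$ form a subset of those configurations, the bound follows.

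The Clarkson--Shor accounting above is routine; the delicate part, which I expect to be the main obstacle, is the reduction itself. One must check that the near-vertical hyperplanes can be oriented and made steep enough (relative to the finitely many vertices of $\arr(T)$) so that ``$\overline{E}_i$ strictly below $v$'' is equivalent to ``$\triangle_i$ strictly below $v$'' at every vertex $v$, and that the $k$-level of $\arr(T)$ then coincides combinatorially with the stated level of $\arr(\{\overline{E}_i\})$ (the near-vertical hyperplanes are also what handle vertices lying on lower-dimensional faces of the simplices). One must also track the $O(1)$ slack between ``level'' and ``weight'' caused by the chromaticity ($\leq d$) of a vertex together with the range restriction $k'\leq\lfloor n/d\rfloor-1$ of the Clarkson--Shor bound, which is what forces the fallback to the trivial $O(n^d)$ bound for the remaining values of $k$; and one must confirm that the lower envelope of the $\overline{E}_i$ genuinely falls within the cited piecewise-linear envelope bound, so that the inverse-Ackermann factor is $\alpha(r)$ and not a larger function.
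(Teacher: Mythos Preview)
Your proposal is correct and follows essentially the same route as the paper: reduce the levels of $\arr(T)$ to levels of the arrangement of the $n$ convex envelopes $\overline{E}_i$, then apply Clarkson--Shor with the lower-envelope bound $O(r^{d-1}\alpha(r))$ for the $(\leq k)$-level and the convex upper-envelope bound $O(r^{\lfloor d/2\rfloor})$ (equivalently Corollary~\ref{coro:levels_hyperplanes}) for the $(\geq k)$-level. The only cosmetic difference is that for the $(\leq k)$-level you flip the conflict relation so that the relevant vertices have small weight and invoke the standard weight-$(\leq k)$ bound, whereas the paper keeps the original relation $\bar{\chi}$ and instead applies Lemma~\ref{lem:CS_general_lb} with $a=r-c$ to extract the high-weight configurations directly; the two are equivalent.
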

\begin{proof}
Let us call a vertex of $\arr(T)$ \emph{$c$-chromatic}
if it appears as a $c$-chromatic vertex in~$\overline{\arr}$,
that is, the intersection of three planes from $c$ different sets~$S_i$.
Recall that $\overline{\conf}_{c,j}(S, \kappa)$ be
the set of $c$-chromatic weight-$j$ vertices of $\overline{\arr}$;
we have $v \in \overline{\conf}_{c,j}(S, \kappa)$
if and only if $v$ is a $c$-chromatic vertex of $\arr(T)$ such that
the downward vertical ray emanating from~$v$ intersects exactly $n-c-j$ triangles in~$T$.
Hence, for each $1\leq k\leq n$,
the $c$-chromatic vertices in the $(\leq k)$-level of $\arr(T)$
correspond to the $c$-chromatic colored configurations of weight at least $n-c-k+1$
in this setting.

Now, let $r$ be an integer parameter with $1\leq r\leq n$
and $R\subseteq K=\{1,\ldots, n\}$ be a random subset of $r$~colors.
For each $c \in \{1,2,3\}$,
we have 
 \[ \E[|\overline{\conf}_{c,r-c}(S_R, \kappa_R)|] 
     \geq \sum_{j=0}^{n-c}|\overline{\conf}_{c,j}(S,\kappa)| \binom{j}{r-c} \Big/ \binom{n}{r} \]
by Lemma~\ref{lem:CS_general_lb} (with $a=r-c$), on one hand.
On the other hand,
observe that $\bigcup_{c} \overline{\conf}_{c,r-c}(S_R, \kappa_R)$
consists of all vertices on the \emph{lower envelope} of $r$ upper envelopes
$\{\overline{E}_i\}_{i\in R}$ or, equivalently,
all vertices on the lower envelope of $r$~triangles in~$\{\triangle_i\}_{i\in R}$.
Since the complexity of the lower envelope of $r$~triangles in~$\Real^3$
is known as $O(r^2 \alpha(r))$~\cite{ps-ueplf-89,t-ntasapls-96},
we have 
\[ \E[|\overline{\conf}_{c,r-c}(S_R, \kappa_R)|] \leq 
    \sum_{b=1}^3 \E[|\overline{\conf}_{b,r-b}(S_R, \kappa_R)|] 
     = O(r^2 \alpha(r)),
\]
as $|S_i| = 4$ for every $i\in K$.

Fix $c\in \{2,3\}$ and set $r = \lfloor \frac{n}{k} \rfloor$.
From the above lower bound, 
we then obtain
\begin{align*}
 \E[|\overline{\conf}_{c,r-c}(S_R, \kappa_R)|]
  & \geq \sum_{j=n-c-k+1}^{n-c}|\overline{\conf}_{c,j}(S,\kappa)|\cdot\binom{j}{r-c} \Big/ \binom{n}{r}\\
  & = \sum_{i=0}^{k-1}|\overline{\conf}_{c,n-c-i}(S,\kappa)|\cdot \binom{n-c-i}{r-c} \Big/ \binom{n}{r}\\
  & \geq \left( \sum_{i=0}^{k-1}|\overline{\conf}_{c,n-c-i}(S,\kappa)|\right)
     \cdot \frac{r(r-1)\cdots(r-c+1)}{n(n-1)\cdots(n-c+1)}
     \cdot \left(\frac{c-1}{c}\right)^c
\end{align*}
if $k \leq \lfloor \frac{n}{c} \rfloor$
by the same derivation as in the proof of Theorem~\ref{thm:CS_general}.
Combining this with the above upper bound,
we get
\[ 
 \sum_{j=n-c-k+1}^{n-c}|\overline{\conf}_{c,j}(S,\kappa)| 
   = O\left(k^{c-2}\cdot n^2\cdot \alpha\left(\frac{n}{k}\right)\right)
\]
for $1\leq k \leq \lfloor \frac{m}{c} \rfloor$.
Note that the number of $1$-chromatic vertices in~$\overline{\arr}$ is $O(n)$ in total.
Therefore, the number of vertices in the $(\leq k)$-level of~$\arr(T)$ is bounded by
$O(k n^2 \alpha(n/k))$.
Finally, if $k > \lfloor \frac{n}{c} \rfloor$,
then we verify that $O(k n^2 \alpha(n/k)) = O(n^3)$,
which is asymptotically the same as the maximum possible number of 
vertices in $\overline{\arr}$ and in $\arr(T)$.
Hence, the claimed bound holds for any $1\leq k\leq n$.

The same approach can also be applied to the arrangement of
$(d-1)$-simplices in~$\Real^d$ for any constant $d\geq 2$.
It is known that the complexity of the upper envelope of $r$~simplices in~$\Real^d$
is bounded by $O(r^{d-1}\alpha(r))$~\cite{sa-dsstga-95,e-ueplf:tbnf-89,t-ntasapls-96}.
Hence, the first bound follows.

The second bound is implied by Corollary~\ref{coro:levels_hyperplanes}
since,
as discussed above,
the $(\geq k)$-level of~$\arr(T)$
corresponds to $(\leq n-k)$-level of $\overline{\arr}$ from above.
\end{proof}
Remark that, for $d=2$,
Theorem~\ref{thm:levels_triangles} implies the $O(kn\alpha(n/k))$ bound,
which is asymptotically the same as the known bound by Sharir~\cite[Theorem 1.2]{s-ksacs-91}
for line segments in~$\Plane$.

An analogous argument can also be applied to piecewise linear functions.
Let $F = \{f_1, \ldots, f_m\}$ be a collection of $(d-1)$-variate piecewise linear functions
that are fully or partially defined on a subset $D_i\subseteq \Real^{d-1}$,
consisting of one or more connected components bounded by linear faces.
Suppose that the domains $D_i$ are triangulated into $(d-1)$-simplices
and let $n$ denote the total number of those simplices.
Consider the arrangement $\arr(F)$ of the graphs of $m$ functions in~$F$,
and its $k$-level is defined analogously as above for the arrangement of triangles.
Observe that the vertices of $\arr(F)$ are colored configurations
by an analogous construction as done above.
Theorems~\ref{thm:CS_general} and~\ref{thm:CS_general_uniform} imply the following.
\begin{corollary}\label{coro:piecewise_linear}
 Given a set $F$ of $m$ $(d-1)$-variate piecewise linear functions 
 with a total of $n$ linear pieces as above,
 for $1\leq k \leq m$,
 the number of vertices in the $(\leq k)$-level of the arrangement of
 the graphs of those functions in~$F$ is
 $O(k m^{d-2} n^{d-1}\alpha(n/k))$; or 
 $O(k n^{d-1}\alpha(n/k))$ if the number of pieces of 
 each function in~$F$ is bounded by $\rho\cdot \frac{n}{m}$ for a constant~$\rho$.
 If $d=2$, then the bound is reduced to $O(k n \alpha(m/k))$.
\end{corollary}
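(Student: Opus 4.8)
\emph{Overview and set-up.}
The plan is to follow the proof of Theorem~\ref{thm:levels_triangles} almost verbatim, the only genuinely new feature being that the $m$ functions of $F$ now carry unequal numbers of pieces; this is absorbed by the convexity/averaging device used inside the proof of Theorem~\ref{thm:CS_general} (giving the general bound) and by the uniform refinement of Theorem~\ref{thm:CS_general_uniform} (giving the bound without the extra factor $m^{d-2}$ when $|S_i|\le\rho n/m$). As in Theorem~\ref{thm:levels_triangles}, we work throughout in the ``reversed'', high-weight regime. Concretely, treat each $f_i\in F$ as a color $i\in K=\{1,\dots,m\}$, and for each linear piece of $f_i$ over a $(d-1)$-simplex $\tau$ put into $S_i$ the hyperplane supporting that piece together with $d$ almost-vertical hyperplanes, one through each facet of $\tau$, all running below the piece; thus $|S_i|=(d+1)n_i$, where $n_i$ is the number of pieces of $f_i$ and $\sum_{i\in K}n_i=n$. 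Let $\overline{E}_i$ be the upper envelope of $S_i$: it is a piecewise-linear function defined on all of $\Real^{d-1}$, of complexity $O(n_i)$, agreeing with the graph of $f_i$ over $D_i$ and rising arbitrarily steeply outside $D_i$. Taking the conflict relation $(s,v)\in\bar\chi$ to mean ``$v$ lies below $s$'', Section~\ref{subsec:envelopes_hyperplanes} shows that the vertices of $\overline{\arr}=\arr(\{\overline{E}_1,\dots,\overline{E}_m\})$ are precisely the colored configurations in $\overline{\conf}(S,\kappa)$, and a $c$-chromatic vertex of weight $j$ lies on $c$ of the $\overline{E}_i$ and has exactly $m-c-j$ of the remaining ones strictly below it. Consequently the vertices of the $(\le k)$-level of $\arr(F)$ are among those with $m-c-j\le k-1$, that is, among $\bigcup_{c}\bigcup_{j\ge m-c-k+1}\overline{\conf}_{c,j}(S,\kappa)$, which is all we need for an upper bound.

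\emph{The reversed Clarkson--Shor count.}
Fix $2\le c\le d$, take a random $r$-subset $R\subseteq K$, and recall, as in the proof of Theorem~\ref{thm:levels_triangles}, that $\bigcup_{c}\overline{\conf}_{c,r-c}(S_R,\kappa_R)$ is exactly the vertex set of the \emph{lower} envelope of the $r$ functions $\{\overline{E}_i\}_{i\in R}$. Since that envelope is assembled from $N_R:=\sum_{i\in R}n_i$ linear pieces, it has $O(N_R^{d-1}\alpha(N_R))$ vertices by the classical envelope bounds for piecewise-linear functions in $\Real^d$~\cite{sa-dsstga-95,e-ueplf:tbnf-89,t-ntasapls-96}, and for $d=2$ only $O(N_R\,\alpha(r))$ by the refinement of Har-Peled~\cite{h-mcl-99} for univariate piecewise-linear functions. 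On the other hand, Lemma~\ref{lem:CS_general_lb} with $a=r-c$ gives
\[
 \E\bigl[|\overline{\conf}_{c,r-c}(S_R,\kappa_R)|\bigr] \ge \sum_{j\ge m-c-k+1}|\overline{\conf}_{c,j}(S,\kappa)|\cdot\binom{j}{r-c}\Big/\binom{m}{r},
\]
and, choosing $r=\lfloor m/k\rfloor$ and lower-bounding the ratio $\binom{j}{r-c}/\binom{m}{r}$ for $j\ge m-c-k+1$ by a constant multiple of $(r/m)^c=k^{-c}$ -- precisely the estimate carried out in the proof of Theorem~\ref{thm:CS_general} -- we obtain, for $k\le\lfloor m/c\rfloor$,
\[
 \sum_{j\ge m-c-k+1}|\overline{\conf}_{c,j}(S,\kappa)| = O\bigl(k^c\cdot\E[N_R^{d-1}\alpha(N_R)]\bigr),
\]
with $N_R^{d-1}\alpha(N_R)$ replaced by $N_R\,\alpha(r)$ when $d=2$.

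\emph{The two regimes and the loose ends.}
It remains to bound the expectation. In general, applying Jensen's inequality to a convex majorant of $N\mapsto N^{d-1}\alpha(N)$ together with Lemma~\ref{lem:sums} -- the exact step performed in the proof of Theorem~\ref{thm:CS_general} -- gives
\[
 \E[N_R^{d-1}\alpha(N_R)] \le \frac{1}{m}\sum_{i\in K}O\bigl((r n_i)^{d-1}\alpha(r n_i)\bigr) = O\bigl(\tfrac{1}{m}(m/k)^{d-1}n^{d-1}\alpha(n/k)\bigr),
\]
where we used $\sum_i n_i^{d-1}\le(\sum_i n_i)^{d-1}=n^{d-1}$ and the standard fact that an inverse-Ackermann-type function absorbs a polynomial change of its argument; the dominant case $c=d$ then yields the general bound $O(km^{d-2}n^{d-1}\alpha(n/k))$. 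If instead $n_i\le\rho\,n/m$ for a constant $\rho$, then $N_R\le r\rho\,n/m$ and the plain averaging of Theorem~\ref{thm:CS_general_uniform} gives $\E[N_R^{d-1}\alpha(N_R)]=O\bigl((n/k)^{d-1}\alpha(n/k)\bigr)$, trimming the bound to $O(kn^{d-1}\alpha(n/k))$; for $d=2$ the $N_R\,\alpha(r)$ estimate propagates through the same computation to the claimed $O(kn\,\alpha(m/k))$. Finally, the $1$-chromatic configurations are vertices of single envelopes $\overline{E}_i$ and hence number $\sum_i O(n_i)=O(n)$ altogether, and for $k>\lfloor m/d\rfloor$ (where $r=\lfloor m/k\rfloor<d$ makes the choice of $r$ illegitimate for $c=d$) all the stated bounds already exceed the total vertex count of $\overline{\arr}$ by Theorems~\ref{thm:CS_general} and~\ref{thm:CS_general_uniform}, so they hold there for free.

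\emph{The main obstacle.}
The one step that calls for actual care, rather than the bookkeeping already done in Theorems~\ref{thm:CS_general}, \ref{thm:CS_general_uniform} and~\ref{thm:levels_triangles}, is the claim in the set-up that every vertex of the $(\le k)$-level of $\arr(F)$ genuinely appears as a vertex of $\overline{\arr}$ of weight at least $m-c-k+1$. This holds once the almost-vertical hyperplanes are chosen steep enough -- a choice that may depend on $F$, namely on the finitely many vertices of $\arr(F)$ and on how close their projections come to the domain boundaries -- so that at the projection of any vertex $v$ of $\arr(F)$ every envelope $\overline{E}_i$ whose function $f_i$ is undefined there already lies above $v$; the number of $\overline{E}_i$ strictly below $v$ then equals the level of $v$ in $\arr(F)$, and the weight bound follows. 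Equivalently, one may phrase this as a limiting argument over ever-steeper skirts, as is standard for such ``trick'' constructions.
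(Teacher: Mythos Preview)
Your construction has a genuine gap: the upper envelope $\overline{E}_i$ of the $(d+1)n_i$ hyperplanes does \emph{not} agree with the graph of $f_i$ over $D_i$ unless $f_i$ happens to be convex.  An upper envelope of hyperplanes is always convex, so as soon as $f_i$ has a non-convex bend the supporting hyperplane of one piece will sit strictly above $f_i$ over a neighbouring piece.  Concretely, take $d=2$ and $f_i(x)=x$ on $[0,1]$, $f_i(x)=2-x$ on $[1,2]$; at $x=0.5$ the supporting line of the right piece has value $1.5$, so $\overline{E}_i(0.5)\ge 1.5>0.5=f_i(0.5)$.  Consequently a vertex $v$ of $\arr(F)$ that lies on the graph of $f_i$ need not lie on $\overline{E}_i$ at all, and hence need not appear as a vertex of $\overline{\arr}$; your injection of $(\le k)$-level vertices of $\arr(F)$ into high-weight colored configurations breaks down.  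Your ``main obstacle'' paragraph only handles the case where $f_i$ is \emph{undefined} at the projection of $v$, not the case where $f_i$ is defined but $\overline{E}_i$ overshoots it.

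The paper sidesteps this entirely by a more direct set-up: take the $n$ linear pieces themselves as the objects, color each by the function it belongs to, and declare a conflict $(s,v)\in\chi$ when $v$ lies \emph{above} $s$.  Because each $f_i$ is single-valued, no piece of a defining color can lie below $v$, so every vertex of $\arr(F)$ is already a colored configuration, and its weight is exactly one less than its level.  Then Theorems~\ref{thm:CS_general} and~\ref{thm:CS_general_uniform} apply \emph{directly} with $T_0(n')=O((n')^{d-1}\alpha(n'))$, the lower-envelope bound for simplices, and no reversed count is needed.  The reversed argument in Theorem~\ref{thm:levels_triangles} was tailored to triangles, where each color contributes a \emph{single} simplex and the upper-envelope trick is harmless; extending it to multi-piece functions, as you attempt, forces you through the flawed envelope construction.
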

\begin{proof}
Using the known upper bound on the lower envelope of 
simplices~\cite{sa-dsstga-95,e-ueplf:tbnf-89,t-ntasapls-96},
we apply Theorems~\ref{thm:CS_general} and~\ref{thm:CS_general_uniform}
with $T_0(n) = O(n^{d-1}\alpha(n))$.
The number of vertices in the $(\leq k)$-level for $1\leq k\leq m$ is thus 
bounded by
\[ 
 O\left(\frac{k^d}{m}\cdot \left(\frac{mn}{k}\right)^{d-1}\cdot \alpha\left(\frac{mn}{k}\right)\right)
 = O\left(k m^{d-2} n^{d-1} \alpha\left(\frac{n}{k}\right)\right).
\]
in general.
If the number of pieces of any two functions in~$F$ differ by a constant,
then the corresponding color assignment is almost uniform,
so we have the bound $O(k n^{d-1}\alpha(n/k))$. 

In case of $d=2$,
we have a better bound $O(n \alpha(r))$
on the complexity of the lower envelope of
any subset of $r$~functions in~$F$
by Har-Peled~\cite{h-mcl-99}.
So, the claimed bound follows from Theorem~\ref{thm:CS_general}.
\end{proof}

\subsection{Piecewise algebraic functions}

The results of Corollary~\ref{coro:piecewise_linear} 
are again extended to piecewise Jordan arcs 
and to piecewise algebraic functions.
In particular in~$\Real^2$,
Har-Peled~\cite{h-mcl-99} considered
the \changed{overlay}
of arrangements of Jordan arcs
and proved a general upper bound on
a single cell and many cells.
Theorem~\ref{thm:CS_general}, 
together with the results of~\cite{h-mcl-99}, 
we obtain the following,
extending the uncolored analog by Sharir~\cite[Theorem 1.3]{s-ksacs-91}
(see also Sharir and Agarwal~\cite[Corollary 5.18]{sa-dsstga-95}).

\begin{corollary} \label{coro:levels_Jordan}
 Let $S$ be a collection of $n$ $x$-monotone Jordan arcs,
 possibly being unbounded curves,
 such that any two of them intersect at most $t$ times,
 and $(S_1, S_2, \ldots, S_m)$ be a partition of~$S$ 
 into $m$~nonempty subsets.
 Let $E_i$ be the lower envelope of those in~$S_i$
 for $1\leq i\leq m$, and $\arr = \arr(\{E_1, \ldots, E_m\})$ be their arrangement.
 For $1\leq k\leq m$, let $C_{\leq k}$ be the number of vertices in the $(\leq k)$-level of $\arr$.
 \begin{itemize}
  \item In general, $C_{\leq k} = O(kn \cdot \beta_{t+2}(\frac{n}{k}))$,
  where $\beta_{t'}(n') := \lambda_{t'}(n') / n'$ and $\lambda_{t'}(n')$ denotes
  the maximum length of Davenport--Schinzel sequences of order~${t'}$ with $n'$ symbols.
  \item If $S$ consists of unbounded Jordan curves,
  then $C_{\leq k} = O(kn \cdot \beta_t(\frac{n}{k}))$.
  \item If Jordan arcs in~$S_i$ are disjoint for every~$i$, 
  so $E_i = \bigcup_{s\in S_i} s$,
  then $C_{\leq k} = O(kn \cdot \beta_{t+2}(\frac{m}{k}))$.
  \item If Jordan arcs in~$S_i$ are disjoint 
  and every vertical line intersects $E_i$ for every~$i$,
  that is, $E_i$ is the graph of a fully-defined function over~$\Real$,
 then  
 $C_{\leq k} = O(kn \cdot \beta_{t}(\frac{m}{k}))$.
 \end{itemize}  
\end{corollary}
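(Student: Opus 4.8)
The plan is to express $C_{\leq k}$ as a count of low-weight colored configurations and then feed the classical lower-envelope bounds into Theorem~\ref{thm:CS_general} (resp.\ its balanced variant Theorem~\ref{thm:CS_general_uniform}). As the CS-structure I take the $n$ arcs of $S$ as the objects with parameter $d=2$; the configurations $\conf(S)$ are the vertices of the arrangement $\arr(S)$ of the $n$ arcs (each such vertex is the crossing of exactly two arcs, hence defined by exactly two objects), and the conflict relation is ``passing below'': $(s,v)\in\chi$ iff $s$ is defined at the abscissa of $v$ and lies strictly below $v$. I endow $S$ with the color assignment $\kappa$ given by the partition $(S_1,\dots,S_m)$. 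Arguing exactly as in Section~\ref{sec:cvda} and in the discussion preceding Lemma~\ref{lem:v-e_2}, one checks that $\conf(S,\kappa)$ is in bijection with the vertex set of $\arr=\arr(\{E_1,\dots,E_m\})$ (a $1$-chromatic colored configuration being a breakpoint of a single $E_i$, a $2$-chromatic one a crossing of two distinct $E_i$'s), and that the weight of a colored configuration equals, up to an additive constant at most its chromaticity, the number of envelopes passing below it, i.e.\ its level in $\arr$. Hence the vertices of the $(\leq k)$-level are precisely the colored configurations of weight at most $k-1$, so $C_{\leq k}=O(\sum_{c\in\{1,2\}}\sum_{j=0}^{k-1}|\conf_{c,j}(S,\kappa)|)$.

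For the general case, the weight-$0$ (uncolored) configurations of a subset $S'\subseteq S$ are exactly the vertices of the lower envelope of $S'$, of which there are $O(\lambda_{t+2}(|S'|))$; I therefore take $T_0(n')=O(\lambda_{t+2}(n'))=O(n'\beta_{t+2}(n'))$, which is (essentially) convex. Applying the second bound of Theorem~\ref{thm:CS_general} with $c=2$ bounds the $2$-chromatic part by $O(\frac{k^2}{m}\,T_0(\frac{mn}{k}))=O(kn\,\beta_{t+2}(\frac{mn}{k}))=O(kn\,\beta_{t+2}(\frac nk))$, the last step using that $\beta_{t+2}$ is insensitive to polynomial rescaling of its argument (and that the whole arrangement already fits this bound when $n/k=O(1)$). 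The $1$-chromatic part is dominated: the first bound of Theorem~\ref{thm:CS_general} with $c=1$ gives $\sum_j|\conf_{1,j}(S,\kappa)|\leq\sum_i T_0(|S_i|)=O(\lambda_{t+2}(n))=O(kn\,\beta_{t+2}(\frac nk))$. This yields the first item, and replacing the lower-envelope bound by $O(\lambda_t(|S'|))$ when the arcs of $S$ are unbounded curves gives the second.

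For the last two items the arcs inside each $S_i$ are pairwise disjoint, so there are no $1$-chromatic configurations and each $E_i=\bigcup_{s\in S_i}s$ has no self-crossing. The key point is that the weight-$0$ configurations of $S_R=\bigcup_{i\in R}S_i$, for a set $R$ of $r$ whole color classes, are the vertices of the lower envelope of $r$ such curves with $|S_R|$ pieces in total, which by the colored Davenport--Schinzel bounds of~\cite{sa-dsstga-95,h-mcl-99} is only $O(|S_R|\,\beta_{t+2}(r))$ in general, and $O(|S_R|\,\beta_t(r))$ when every vertical line meets every $E_i$. I would rerun the proof of Theorem~\ref{thm:CS_general}: pairing the lower bound of Lemma~\ref{lem:CS_general_lb} (with $a=0$) against $\sum_c\E[|\conf_{c,0}(S_R,\kappa_R)|]=O(\beta_{t+2}(r))\cdot\E[|S_R|]=O(\frac{rn}{m}\,\beta_{t+2}(r))$ and optimising with $r=\lfloor m/(k+1)\rfloor$ produces $C_{\leq k}=O(kn\,\beta_{t+2}(\frac mk))$, and likewise $O(kn\,\beta_t(\frac mk))$ in the fully-defined case.

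The main obstacle is precisely this last step: the sharp bound on weight-$0$ configurations in the disjoint-within-color setting is genuinely two-parameter --- the number of pieces times a $\beta$ of the number of color classes --- so it does not substitute verbatim into Theorem~\ref{thm:CS_general}, and one must redo the Clarkson--Shor optimisation by hand; this is where the argument $m/k$ (rather than $n/k$) appears. Everything else --- the bijection between colored configurations and arrangement vertices, the bookkeeping that a $c$-chromatic weight-$j$ configuration occupies $c$ consecutive levels, and the standard convexity and rescaling manipulations of $\lambda_{t+2}$ and $\beta_s$ --- is routine and mirrors the uncolored treatment of Sharir~\cite{s-ksacs-91} and Sharir--Agarwal~\cite{sa-dsstga-95}.
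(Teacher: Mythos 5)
Your proposal is correct and follows essentially the same route as the paper's proof. Both arguments reduce $C_{\leq k}$ to a count of low-weight $2$-chromatic (plus negligibly many $1$-chromatic) colored configurations, feed the classical lower-envelope bound $O(\lambda_{t+2}(\cdot))$ (resp.\ $O(\lambda_t(\cdot))$ for unbounded curves) into Theorem~\ref{thm:CS_general} for the first two items, and for the last two items invoke Har-Peled's multicolor combination bounds~\cite[Thm.~2.1, Lem.~2.3]{h-mcl-99} on the single cell of the overlay, which is exactly the $O(|S_R|\beta_{t+2}(r))$ (resp.\ $O(|S_R|\beta_t(r))$) estimate you use for the weight-$0$ configurations of a sample of $r$ color classes.

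The one place where you go into slightly more detail than the paper --- and correctly so --- is your observation that the Har-Peled estimate is a two-parameter function of the number of pieces and the number of color classes, so it cannot be plugged into Theorem~\ref{thm:CS_general} literally (which assumes a single-variable $T_0$): one must rerun the random-sampling optimisation pairing Lemma~\ref{lem:CS_general_lb} against the sampled bound and optimising $r=\lfloor m/(k+1)\rfloor$. The paper's proof states ``With this bound, Theorem~\ref{thm:CS_general} implies the third bound,'' which implicitly relies on that same reargument; your explicit acknowledgment of it is a fair reading of what the proof actually requires, and it is precisely how the factor $\beta(m/k)$ rather than $\beta(n/k)$ emerges in the disjoint-within-color cases. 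So this is not a genuinely different route, just a more explicit account of the paper's own derivation.
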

\begin{proof}
The first two claims follow from Theorem~\ref{thm:CS_general}
with the upper bound on the lower envelope of $x$-monotone Jordan arcs 
or unbounded Jordan curves~\cite{sa-dsstga-95}.

We exploit the multicolor combination lemma by Har-Peled~\cite[Theorem 2.1]{h-mcl-99}.
Since each~$S_i$ is chosen such that $E_i$ is $x$-monotone,
each face of the arrangement~$\arr(S_i)$ of those Jordan arcs in~$S_i$
is linear to $|S_i|$.
Also, the arrangement $\arr = \arr(\{E_1, \ldots, E_m\})$ is
the overlay of $\arr(S_1), \ldots, \arr(S_m)$.
Hence, Theorem~2.1 of~\cite{h-mcl-99} implies
that the complexity of any single cell in the overlay arrangement~$\arr$
is $O(n\cdot \frac{\lambda_{t+2}(m)}{m}) = O(n \cdot \beta_{t+2}(m))$.
With this bound, Theorem~\ref{thm:CS_general} implies the third bound.

When $S$ consists of $x$-monotone unbounded Jordan curves
and $E_i$ is taken as the lower envelope of those in~$S_i$,
the complexity of a single cell in~$\arr$ is reduced 
to $O(n\cdot \frac{\lambda_{t}(m)}{m}) = O(n \cdot \beta_{t}(m))$~\cite[Lemma~2.3]{h-mcl-99},
so the fourth claim follows.
\end{proof}

Next, we consider
a collection of colored surface patches in~$\Real^d$ for constant $d\geq 2$.
Specifically,
let $S$ be a collection of $n$ algebraic surface patches in~$\Real^d$
that are graphs of a partially-defined $(d-1)$-variate algebraic functions.
It is known that the complexity of the lower envelope of $F$
is $O(n^{d-1+\epsilon})$ for any positive real~$\epsilon>0$
under some assumptions~\cite{hs-nblwtd-94,s-atublehd-94}.
(See also the book by Sharir and Agarwal~\cite[Chapter 7]{sa-dsstga-95}.)
Suppose $S$ is partitioned into $S_1, \ldots, S_m$
by any color assignment~$\kappa$ to $m$~colors.
Let $E_i$ for $1\leq i \leq m$ be the lower envelope of surface patches in~$S_i$,
and $\arr = \arr(\{E_1,\ldots, E_m\})$ be their arrangement.
For $1\leq k\leq m$, 
the $k$-level of $\arr$ is defined as above to be 
the closure of the set of points~$x$ on the surfaces~$E_i$
such that the downward vertical ray from~$x$ crosses exactly $k-1$ those surfaces.
Then, Theorems~\ref{thm:CS_general}--\ref{thm:CS_general_uniform}
and a derivation analogous to the proof of Theorem~\ref{thm:levels_triangles}
imply upper bounds on the $(\leq k)$-level of~$\arr$.
Hence, we conclude:
\begin{corollary} \label{coro:levels_patches}
 Let $F$ be a set of $m$ $(d-1)$-variate piecewise algebraic functions
 of maximum constant degrees with a total of $n$ algebraic pieces.
 Then, the number of vertices in the $(\leq k)$-level of the 
 arrangement $\arr(F)$ of the graphs of the functions in~$F$
 is bounded by $O(k^{1-\epsilon} m^{d-2} n^{d-1+\epsilon})$
 for any $\epsilon>0$.
 If the number of algebraic pieces in each function in~$F$ 
 is bounded by $\rho\cdot \frac{n}{m}$ for a constant~$\rho$,
 then the bound is reduced to $O(k^{1-\epsilon} n^{d-1+\epsilon})$.
\end{corollary}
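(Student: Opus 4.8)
The plan is to recast the arrangement~$\arr(F)$ inside the colorful Clarkson--Shor framework, exactly in the spirit of the reduction that precedes Theorem~\ref{thm:levels_triangles} and of the proof of Corollary~\ref{coro:piecewise_linear}, and then to substitute the classical bound on lower envelopes of algebraic surface patches for the Davenport--Schinzel bound used there. First I would let $S$ be the set of all $n$~algebraic surface patches that occur as pieces of the functions in~$F$, coloured by the function they belong to, so that $\kappa\colon S\to K=\{1,\dots,m\}$ has $|S|=n$ and, for the colour class~$S_i$ consisting of the pieces of~$f_i$, the lower envelope~$E_i$ of~$S_i$ is the graph of~$f_i$ over its domain (here no auxiliary ``wall'' patches are needed, unlike the four-plane gadget used for triangles, because the pieces are already surface patches). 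Taking $\conf(S)$ to be the vertex set of the arrangement of these $n$~patches and $\chi$ the ``lies above'' conflict relation yields a CS-structure with parameter~$d$, since the general-position assumption forces every vertex to be defined by exactly $d$~patches. The bookkeeping observation, analogous to the one in Section~\ref{subsec:envelopes_hyperplanes}, is that a $c$-chromatic vertex of~$\arr(\{E_1,\dots,E_m\})$ from which the downward vertical ray crosses exactly $j$~of the surfaces~$E_\ell$ (with~$\ell$ outside the $c$~colours meeting there) is precisely an element of~$\conf_{c,j}(S,\kappa)$; hence---in contrast with the reversed correspondence encountered for triangles---the vertices of the $(\leq k)$-level of~$\arr(F)$ are exactly the colored configurations of weight at most~$k-1$, i.e.\ we are in the small-weight regime where Theorem~\ref{thm:CS_general} is strongest.

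Then I would take $T_0(x)=C_{d,\epsilon_0}\,x^{\,d-1+\epsilon_0}$, the bound on the complexity of the lower envelope of $x$~bounded-degree algebraic surface patches under the standard ``reasonable conditions''~\cite{sa-dsstga-95,hs-nblwtd-94,s-atublehd-94}; for $d\geq 2$ this is convex, non-decreasing, and vanishes at~$0$, so both Theorem~\ref{thm:CS_general} and Theorem~\ref{thm:CS_general_uniform} apply. The second bound of Theorem~\ref{thm:CS_general} with $c=d$ gives, for $1\leq k\leq\lfloor m/d\rfloor$,
\[
 \sum_{j=0}^{k-1}|\conf_{d,j}(S,\kappa)|
 = O\!\left(\frac{k^{d}}{m}\,T_0\!\left(\frac{mn}{k}\right)\right)
 = O\!\left(k^{\,1-\epsilon_0}\,m^{\,d-2+\epsilon_0}\,n^{\,d-1+\epsilon_0}\right),
\]
and, since $m\leq n$, bounding $m^{\epsilon_0}\leq n^{\epsilon_0}$ and choosing~$\epsilon_0$ small enough in terms of the target~$\epsilon$ yields the claimed $O(k^{1-\epsilon}m^{d-2}n^{d-1+\epsilon})$. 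The $c$-chromatic contributions with $2\leq c<d$ are handled by the same bound and are of strictly smaller order in~$k$, while the $1$-chromatic vertices lie on a single~$E_i$ and total at most $\sum_i T_0(|S_i|)=O(n^{d-1+\epsilon})$, again dominated; and for $k$ beyond $\lfloor m/d\rfloor$ the stated quantity already matches, up to the slack in~$\epsilon$, the total count of $(\leq c)$-chromatic colored configurations furnished by the first bound of Theorem~\ref{thm:CS_general}, which upper bounds the $(\leq k)$-level, so the estimate holds for the full range $1\leq k\leq m$. In the balanced case $|S_i|\leq\rho\cdot\frac{n}{m}$ I would instead use the second bound of Theorem~\ref{thm:CS_general_uniform}, getting $\sum_{j=0}^{k-1}|\conf_{d,j}(S,\kappa)|=O\big(k^{d}\,T_0(\rho n/k)\big)=O(k^{1-\epsilon}n^{d-1+\epsilon})$.

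The main obstacle is not the Clarkson--Shor arithmetic but the two soft points of the reduction. First, one must check that the ``reasonable conditions'' guaranteeing the $O(x^{d-1+\epsilon})$ lower-envelope bound are inherited not only by~$S$ but by every sub-collection~$S_R$ coming from a random colour sample, since Lemma~\ref{lem:CS_general_ub} is applied through~$T_0(|S_R|)$. Second, one must verify that each~$E_i$ genuinely equals the graph of the partially defined~$f_i$ and that no spurious lower-envelope features or extra patches sneak in, so that $|S|=n$ and the vertices of~$\arr(\{E_1,\dots,E_m\})$ correspond bijectively to configurations of~$(S,\conf(S),\chi)$ in the intended manner. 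A minor but error-prone point is the $\epsilon$-bookkeeping: tracking how the fixed internal~$\epsilon_0$ of the envelope bound is turned into the arbitrary external~$\epsilon$ of the statement, in particular absorbing the stray $m^{\epsilon_0}$ factor via $m\leq n$.
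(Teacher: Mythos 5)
Your reduction, your choice of $T_0$, and your application of Theorems~\ref{thm:CS_general}--\ref{thm:CS_general_uniform} are exactly what the paper's very terse proof invokes, and your observation that the weight--level correspondence here is direct --- unlike the reversed one for triangles --- is correct: with the ``lies above'' conflict relation against the lower envelopes~$E_i$, a $c$-chromatic vertex at level $j+1$ is a weight-$j$ colored configuration, just as in the paper's own proof of Corollary~\ref{coro:piecewise_linear}. Two small points deserve a fix. First, general position does not make every arrangement vertex be defined by exactly $d$ patches; vertices lying on a patch boundary or on a domain boundary of some $E_i$ are defined by fewer and must be handled by padding with dummy objects, as the framework explicitly allows in Section~\ref{sec:framework}. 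Second, the $\epsilon$-bookkeeping cannot be closed the way you propose. The computation yields $O(k^{1-\epsilon_0}m^{d-2+\epsilon_0}n^{d-1+\epsilon_0})$, and since $k\leq m\leq n$ the stray $m^{\epsilon_0}$ cannot be absorbed into $n^{\epsilon_0}$ while keeping the exponent on $k$ equal to $1-\epsilon$: for any $0<\epsilon_0<\epsilon$ the ratio of the derived bound to the claimed one is $k^{\epsilon-\epsilon_0}m^{\epsilon_0}n^{\epsilon_0-\epsilon}$, which can be as large as $n^{\epsilon_0}$. The honest output of the derivation is $O(k^{1-\epsilon}m^{d-2+\epsilon}n^{d-1+\epsilon})$; the paper's $m^{d-2}$ is the same kind of informality it permits in Corollary~\ref{coro:piecewise_linear}, where $\alpha(mn/k)$ is silently replaced by $\alpha(n/k)$, so your argument is as rigorous as the paper's own, but ``choosing $\epsilon_0$ small enough in terms of the target $\epsilon$'' is not an actual repair of that step.
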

This extends the known bounds for uncolored cases;
see Corollaries~7.8 and~7.18 in~\cite{sa-dsstga-95}.

\subsection{Convex polyhedra}

Another interesting structure that fits in the colorful Clarkson--Shor framework
is the arrangement of convex polyhedra.
Let $P_1, \ldots, P_m$ be $m$ given convex polyhedra, bounded or unbounded, in~$\Real^d$
with a total of $n$~facets, and $\arr = \arr(\{P_1,\ldots,P_m\})$ be their arrangement.
In this case, the \emph{depth} of any point \changed{$x\in\Real^d$} is often defined to be
the number of polyhedra~$P_i$ that contain $x$ in its interior.
We are interested in the number of vertices in~$\mathcal{A}$
whose depth is at most $k$.

We interpret this in our framework as follows.
For each facet $f$ of $P_i$, consider the open half-space, bounded by
the hyperplane spanning $f$, that \emph{avoids}~$P_i$.
Let $S$ be the set of all those $n$ open half-spaces
and $\conf(S)$ be the set of all vertices in the arrangement of these
bounding hyperplanes.
We say that a half-space $s \in S$ is in conflict with a vertex $v\in \conf(S)$ 
if $v$ is contained in~$s$.
Now, we consider the color assignment $\kappa \colon S \to K =\{1,\ldots, m\}$,
according to its original polyhedron~$P_i$ for~$i\in K$.
Observe that
the set $\conf(S,\kappa)$ of colored configurations, induced by~$\conf(S)$ with respect to~$\kappa$,
consists of all vertices in $\mathcal{A}$,
and $\conf_{c,j}(S,\kappa)$ is the set of all $c$-chromatic vertices of depth~$m-c-j$.
That is, in this case, the depths are ordered in the reversed way to the weights.

For $d=2$, Aronov and Sharir~\cite{as-cecpp-97} proved that the complexity
of the common exterior of polygons $P_1, \ldots, P_m$
or, equivalently, 
the number of vertices of depth~$0$ in~$\mathcal{A}$ is bounded by $O(n\alpha(m) + m^2)$
in general, and $O(n \alpha(m))$ if the common exterior is connected.
With these upper bounds, Lemma~\ref{lem:CS_general_lb} yields the following
through a similar derivation as done in the proofs of
Theorems~\ref{thm:CS_general} and~\ref{thm:levels_triangles}.

\begin{theorem} \label{thm:polygons_depth}
 Given $m$ convex polygons of a total of $n$ sides in~$\Plane$,
 let $\arr$ be their arrangement.
 For~$0\leq k\leq m-1$,
 the number of vertices of depth at most~$k$ in~$\arr$
 is $O((k+1)n\cdot \alpha(\frac{m}{k+1})+m^2)$.
 If the common exterior of any subset of the $m$ polygons is connected,
 then the bound is reduced to $O((k+1)n\cdot \alpha(\frac{m}{k+1}))$.
\end{theorem}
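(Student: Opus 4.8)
The plan is to carry out the colorful Clarkson--Shor argument directly, in the manner of the proof of Theorem~\ref{thm:levels_triangles}, rather than quoting Theorem~\ref{thm:CS_general}: the Aronov--Sharir bound $O(n\alpha(m)+m^2)$ on the common exterior depends on both the number of edges and the number of polygons, so it is not a clean function $T_0(n')$ of a single parameter. Recall the encoding set up just before the statement: $S$ is the set of the $n$ avoiding half-planes of the polygon edges, colored by polygon, $\mathcal{F}(S,\kappa)$ is the vertex set of $\arr$, and a $c$-chromatic vertex of weight $j$ has depth $m-c-j$. Hence the vertices of $\arr$ of depth at most $k$ are exactly the $c$-chromatic configurations of weight at least $m-c-k$, for $c\in\{1,2\}$, and it suffices to bound $\sum_{i=0}^{k}|\mathcal{F}_{c,m-c-i}(S,\kappa)|$ for $c=1$ and $c=2$ separately.

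The case $c=1$ is disposed of first by a direct geometric observation: a $1$-chromatic configuration is the intersection of the supporting lines of two edges of a single convex polygon $P_\ell$ that also lies in $\overline{P_\ell}$; since the supporting line of an edge meets $\overline{P_\ell}$ only in that edge, such a point must be a vertex of $P_\ell$. Therefore there are exactly $n$ $1$-chromatic configurations in all, so $\sum_{i=0}^{k}|\mathcal{F}_{1,m-1-i}(S,\kappa)|\leq n$, comfortably inside the claimed bound.

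For $c=2$, I would apply Lemma~\ref{lem:CS_general_lb} with $a=r-2$ to a uniformly random $r$-subset $R\subseteq K$; the binomial factor collapses to $\binom{j}{r-2}$ and, keeping only the terms with weight $j\geq m-2-k$, this gives
\[
 \binom{m}{r}\,\E\!\big[|\mathcal{F}_{2,r-2}(S_R,\kappa_R)|\big]\ \geq\ \sum_{i=0}^{k}|\mathcal{F}_{2,m-2-i}(S,\kappa)|\,\binom{m-2-i}{r-2}.
\]
For the left-hand side note that $\sum_{c}|\mathcal{F}_{c,r-c}(S_R,\kappa_R)|$ counts the depth-$0$ vertices of the subfamily $\{P_i\}_{i\in R}$, hence equals the complexity of its common exterior up to a constant; by Aronov and Sharir~\cite{as-cecpp-97} it is $O(|S_R|\,\alpha(r)+r^2)$, and $O(|S_R|\,\alpha(r))$ when the common exterior of every subfamily is connected. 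Averaging over the $\binom{m}{r}$ subsets and using $\sum_{R'}|S_{R'}|=\binom{m-1}{r-1}n$ (Lemma~\ref{lem:sums}) bounds the left-hand side by $\binom{m}{r}\cdot O\!\big(\tfrac{rn}{m}\alpha(r)+r^2\big)$. On the right-hand side I would take $r=\lfloor m/(k+1)\rfloor$ (so $r\geq 2$ once $k\leq m/2-1$) and reuse verbatim the estimate from the proof of Theorem~\ref{thm:CS_general} to obtain $\binom{m-2-i}{r-2}/\binom{m}{r}=\Omega(r^2/m^2)$ uniformly for $0\leq i\leq k$. Combining the two bounds yields $\sum_{i=0}^{k}|\mathcal{F}_{2,m-2-i}(S,\kappa)| = O\!\big(\tfrac{mn}{r}\alpha(r)+m^2\big)=O\!\big((k+1)n\,\alpha(\tfrac{m}{k+1})+m^2\big)$, with the $m^2$ term removed under the connectedness hypothesis. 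The leftover range $k> m/2-1$ is trivial: there $(k+1)n=\Omega(mn)$ while the total number of $2$-chromatic vertices of $\arr$ is $O(mn)$, since each edge of a polygon crosses any other polygon's boundary at most twice. Adding the two contributions proves the theorem.

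I expect the only real obstacle to be the \emph{reversed} weight--depth correspondence: small depth means weight close to the maximum $m-c$, so the usual $a=0$ instance of the Clarkson--Shor lemma is useless and one must work with $a=r-c$, then check that $\binom{m-2-i}{r-2}/\binom{m}{r}$ stays $\Omega((r/m)^2)$ for all $i\leq k$ at the chosen $r$ --- but this is precisely the calculation already performed for Theorem~\ref{thm:CS_general}, so it is a matter of recasting it in dual form. Minor care is needed to confirm that ``number of depth-$0$ vertices of a subfamily'' is the quantity governed by the Aronov--Sharir bound, and to handle the boundary values of $k$ and the unbounded face (an $O(m)$ lower-order term).
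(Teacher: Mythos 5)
Your proof is correct and follows essentially the same route as the paper's: the same CS-encoding, the same application of Lemma~\ref{lem:CS_general_lb} with $a=r-2$ and $r=\lfloor m/(k+1)\rfloor$ to handle the reversed weight--depth correspondence, and the same Aronov--Sharir input on the upper-bound side. Your treatment of the $1$-chromatic configurations (they are exactly polygon vertices, hence at most $n$ of them) and of the range $k\geq m/2$ (directly bounding the $2$-chromatic vertices of $\arr$ by $O(mn)$) is slightly more explicit than the paper's, which merely notes these are subsumed, but the substance is identical.
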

\begin{proof}
Recall that $c$-chromatic vertices in~$\mathcal{A}$ of depth~$j$
are those configurations of weight $m-c-j$ in~$\conf_{c,m-c-j}(S,\kappa)$.
Hence, the claimed bounds can be shown by a similar derivation
as in the proof of Theorems~\ref{thm:CS_general} and~\ref{thm:levels_triangles}.

Let $1\leq r\leq m$ be an integer parameter and $R\subseteq K$ be a random set of $r$~colors.
In general,  we have
\begin{align*}
 \E[|\conf_{c,r-c}(S_R, \kappa_R)|] &= O(\E[|S_R|]\cdot \alpha(r) + r^2)\\
   & = O\left(\left(\sum_{R'\subseteq K, |R'|=r} |S_{R'}| \Big/ \binom{m}{r} \right) 
     \cdot \alpha(r) + r^2\right) \\
   & = O\left(\left(\binom{m-1}{r-1} n \Big/ \binom{m}{r}\right) \cdot \alpha(r) + r^2 \right)\\
   & = O\left( \frac{r}{m} n \cdot \alpha(r) + r^2 \right)
\end{align*}
by Aronov and Sharir~\cite{as-cecpp-97} 
and Lemma~\ref{lem:sums},
on one hand.
On the other hand, setting $r = \lfloor \frac{m}{k+1} \rfloor$,
Lemma~\ref{lem:CS_general_lb} (with $c=2$ and $a=r-2$) implies
\begin{align*}
 \E[|\conf_{2,r-2}(S_R, \kappa_R)|] & \geq \sum_{j=0}^{m-2} |\conf_{2,j}(S,\kappa)|
       \binom{j}{r-2} \Big/ \binom{m}{r}\\
    & \geq \sum_{j=m-2-k}^{m-2} |\conf_{2,j}(S,\kappa)|
       \binom{j}{r-2} \Big/ \binom{m}{r}\\
    & = \sum_{i=0}^{k} |\conf_{2,m-2-i}(S,\kappa)|
       \binom{m-2-i}{r-2} \Big/ \binom{m}{r}\\
    & \geq \left(\sum_{i=0}^{k} |\conf_{2,m-2-i}(S,\kappa)|\right)
        \cdot \frac{r(r-1)}{m(m-1)} \cdot \left(\frac{1}{2}\right)^2
\end{align*}
if $k \leq \lfloor \frac{m}{2} \rfloor - 1$.

Combining the two inequalities results in the first bound
 \[
  \sum_{j=m-2-k}^{m-2} |\conf_{2,j}(S,\kappa)|
 = O\left( (k+1)n\alpha\left(\frac{m}{k+1}\right) + m^2 \right),
 \]
for $0\leq k\leq \lfloor \frac{m}{2} \rfloor - 1$,
and the number of $1$-chromatic vertices is subsumed by this bound.
One can easily check the same bound holds for $\lfloor \frac{m}{2} \rfloor \leq k \leq m-1$,
since the total number of vertices of the arrangement~$\arr$ is bounded by $O(mn + m^2)$.

The second one can also be derived in a similar way
with the upper bound
 \[ \sum_{c=1}^2 \E[|\conf_{c,r-c}(S_R, \kappa_R)|] = O(\E[|S_R|]\cdot \alpha(r))
   = O\left(\frac{r}{m} n \cdot \alpha(r)\right) \]
if the common exterior of any subset of the $m$ polygons is connected,
as shown by Aronov and Sharir~\cite{as-cecpp-97}.
\end{proof}
Remark that the second bound of Theorem~\ref{thm:polygons_depth} holds
even for simple polygons.
Har-Peled~\cite[Lemma 2.8]{h-mcl-99} proved an upper bound $O(n \alpha(m))$
on the complexity of a single cell in the arrangement of $m$ simple polygons
with $n$ total sides.
Thus, if the common exterior of any subset of the $m$ simple polygons is connected,
the number of vertices of depth~$0$ in their arrangement is $O(n \alpha(m))$,
hence the same upper bound $O((k+1)n\alpha(m/(k+1)))$ is derived for
the number of vertices of depth at most~$k$.

Similarly, for $d=3$, we conclude the following
based on the results of Aronov et al.~\cite{ast-ucptd-97} and Ezra and Sharir~\cite{es-scacp-07}.
\begin{corollary}
 Given $m$ convex polyhedra, bounded or unbounded, of a total of $n$ faces in~$\Real^3$
 and an integer $0\leq k \leq m-1$,
 the number of vertices in their arrangement of depth at most~$k$
 is $O((k+1)mn\log(\frac{m}{k+1})+m^3)$.
 If the common exterior of any subset of the $m$ polyhedra is connected,
 then the bound becomes $O((k+1)^{1-\epsilon}m^{1+\epsilon}n)$ for any $\epsilon>0$.
\end{corollary}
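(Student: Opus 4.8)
The plan is to mirror the proof of Theorem~\ref{thm:polygons_depth}, replacing the planar Aronov--Sharir bound by its three-dimensional analogues. Recall the CS-structure set up just above the corollary: $S$ is the set of the $n$ bounding open half-spaces (one per facet, each avoiding its polyhedron), $\conf(S)$ is the vertex set of the arrangement of the $n$ bounding planes, and $\kappa$ assigns to each half-space the index $i\in K=\{1,\dots,m\}$ of its polyhedron~$P_i$. Under this correspondence a $c$-chromatic vertex of depth~$j$ in~$\arr$ is exactly a colored configuration in $\conf_{c,\,m-c-j}(S,\kappa)$, so the quantity to bound is $\sum_{c=1}^{3}\sum_{i=0}^{k}|\conf_{c,\,m-c-i}(S,\kappa)|$, i.e.\ the number of configurations of weight at least $m-c-k$.

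Next I would supply the ``weight-$0$'' estimate. For a random $r$-subset $R\subseteq K$, the set $\bigcup_{c=1}^{3}\conf_{c,\,r-c}(S_R,\kappa_R)$ is precisely the set of depth-$0$ vertices of the arrangement of the $r$ polyhedra $\{P_i\}_{i\in R}$, i.e.\ the vertices of their common exterior; and $\E[|S_R|]=\tfrac{r}{m}n$ by Lemma~\ref{lem:sums}. Feeding in the bound of Aronov et al.~\cite{ast-ucptd-97} --- the common exterior of $r$ convex polyhedra with $N$ total faces has $O(rN\log r+r^{3})$ vertices --- gives $\sum_{c=1}^{3}\E[|\conf_{c,\,r-c}(S_R,\kappa_R)|]=O\!\big(\tfrac{r^{2}}{m}n\log r+r^{3}\big)$; in the case where the common exterior of every subset is connected, the Ezra--Sharir bound~\cite{es-scacp-07} of $O(r^{1+\epsilon}N)$ instead gives $O\!\big(\tfrac{r^{2+\epsilon}}{m}n\big)$.

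Then I would combine this with the lower bound of Lemma~\ref{lem:CS_general_lb} taken with $a=r-c$, exactly as in the proofs of Theorem~\ref{thm:levels_triangles} and Theorem~\ref{thm:polygons_depth}: for $c\in\{2,3\}$ and $k\le\lfloor m/c\rfloor-1$,
\[
 \E[|\conf_{c,\,r-c}(S_R,\kappa_R)|]\ \ge\ \Big(\sum_{i=0}^{k}|\conf_{c,\,m-c-i}(S,\kappa)|\Big)\cdot\frac{r(r-1)\cdots(r-c+1)}{m(m-1)\cdots(m-c+1)}\cdot\Big(\tfrac{c-1}{c}\Big)^{c},
\]
where the binomial estimate is the one already used in the proof of Theorem~\ref{thm:CS_general}. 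Setting $r=\lfloor m/(k+1)\rfloor$ and isolating the $c=3$ term (the $c=2$ term turns out to be subsumed, and there are $O(n)$ $1$-chromatic vertices in total) yields $\sum_{i=0}^{k}|\conf_{3,\,m-3-i}(S,\kappa)|=O\!\big(\tfrac{m^{2}}{r}n\log r+m^{3}\big)=O\!\big((k+1)mn\log\tfrac{m}{k+1}+m^{3}\big)$ in general, and $O\!\big((k+1)^{1-\epsilon}m^{1+\epsilon}n\big)$ in the connected case. For $k\ge\lfloor m/3\rfloor$ both target bounds dominate the trivial $O(mn+m^{3})$ bound on the total number of vertices of~$\arr$, so the statement holds for every $0\le k\le m-1$.

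The main obstacle is not the sampling calculation --- which is essentially word-for-word that of Theorem~\ref{thm:polygons_depth} once the dimension is bumped --- but pinning down the precise statements of the three-dimensional common-exterior results: one must verify that the Aronov et al.\ bound genuinely has the shape $O(rN\log r+r^{3})$ with the linear dependence on the total facet count~$N$, and likewise that the Ezra--Sharir bound for the connected case is of the form $O(r^{1+\epsilon}N)$. With those in hand, the remaining steps --- the Jensen/Lemma~\ref{lem:sums} averaging, the binomial lower bound, and checking that the lower-chromaticity terms are absorbed --- are routine.
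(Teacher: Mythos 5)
Your proposal is correct and follows exactly the approach the paper intends (the paper simply says ``Similarly, for $d=3$'' and defers to the proof of Theorem~\ref{thm:polygons_depth}). The setup of the CS-structure, the identification of depth-$j$ vertices with weight-$(m-c-j)$ colored configurations, the use of Lemma~\ref{lem:CS_general_lb} with $a=r-c$, the lower-bound estimate on the binomial factor with $r=\lfloor m/(k+1)\rfloor$, and the final isolation of the $c=3$ term with the $c=2$ and $c=1$ terms checked to be subsumed, all match. Your citations of the base bounds are the right ones: the union (equivalently, common-exterior) bound $O(rN\log r + r^3)$ of Aronov, Sharir, and Tagansky and the single-cell bound $O(r^{1+\epsilon}N)$ of Ezra and Sharir; plugging them into the averaging step recovers the two stated bounds.

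One small correction to the final paragraph: the trivial upper bound on the total number of vertices of $\arr$ for $m$ convex polyhedra with $n$ total facets in $\Real^3$ is not $O(mn+m^3)$ but $O(m^2 n)$: a triple $\{P_i,P_j,P_\ell\}$ contributes $O(|S_i|+|S_j|+|S_\ell|)$ $3$-chromatic vertices (the closed curve $\partial P_i\cap\partial P_j$ has $O(|S_i|+|S_j|)$ segments, each crossing the convex surface $\partial P_\ell$ at most twice), which sums to $O(m^2n)$; the $2$-chromatic and $1$-chromatic vertices contribute $O(mn)$ and $O(n)$ respectively. The conclusion you want for $k\geq\lfloor m/3\rfloor$ still holds, since under the standard convention that $\log(\cdot)$ stands for $\max\{1,\log(\cdot)\}$ the term $(k+1)mn\log(m/(k+1))$ is $\Omega(m^2n)$ in that range (for instance, at $k=\lfloor m/3\rfloor$ it is $\Theta(m^2n)$, and for $k\geq m/2$ the factor $(k+1)\log(m/(k+1))\geq(k+1)\geq m/2$), so both target bounds still dominate the actual total; but the justification should be stated against $O(m^2n)$, not $O(mn+m^3)$.
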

Note that the first bound has been mentioned by Aronov et al.~\cite[Theorem~1.7]{ast-ucptd-97}.

\section{Concluding Remarks} \label{sec:conclusion}
We finish the paper with some remarks and further questions.

The colorful Clarkson--Shor framework provides
a systematic scheme to handle families of configurations or geometric ranges
defined by objects
of
non-constant complexity.
We showed its \changed{application}
to color Voronoi diagrams,
colored $j$-facets, and arrangements of various curves and surfaces
of non-constant complexity.
Our general upper bounds, shown in \changed{Theorems~\ref{thm:CS_general}} and~\ref{thm:CS_general_uniform},
can be applied once
we have obtained any
function~$T_0$ that upper bounds
the number of weight-$0$ uncolored configurations.
While almost the same bounds as in
the uncolored case hold
when $T_0$ is near-linear,
it seems hard to avoid the extra term in the number~$m$ of colors
in general for any color assignment.
Is it possible to obtain the original Clarkson--Shor bound $O((k+1)^c \cdot T_0(n/(k+1)))$,
or similar bound,
on the number of $c$-chromatic weight-$(\leq k)$ colored configurations
under a reasonable requirement on~$T_0$?

In this paper,
we introduced the higher-order color Voronoi diagrams~$\CVD_k(S)$ and $\mCVD_k(S)$
with distance-to-site functions~$\delta_s$ for $s\in S$.
Our combinatorial results
rely on the general position \changed{of the}
functions~$\delta_s$
and conditions~\textbf{V1}--\textbf{V3} on numbers of vertices and unbounded edges
in ordinary nearest and farthest-site Voronoi diagrams.
Can we drop condition~\textbf{V3} to obtain
the same
upper bound~$4k(n-k)-2n$ on the total number of vertices
in $\CVD_k(S)$ and $\mCVD_k(S)$?
In Section~\ref{sec:kcvd_general},
we showed that this can be done for any convex distance function
by a limit argument.

We
presented an iterative approach to compute \changed{order-$k$} color Voronoi
diagrams \changed{under general distance functions that satisfy the
  conditions of 
  abstract Voronoi diagrams;
  and an additional
  condition~\textbf{V3$^\prime$} for the  maximal order-$k$ color counterpart.}
Can one achieve a faster algorithm that computes a \changed{specific}
order-$k$ color Voronoi diagram 
under the Euclidean metric?
Or, can the approach using nondeterminism by Chan et al.~\cite{ccz-oahovdp:usn-24}
be extended to color Voronoi diagrams?

\section*{Acknowledgment} 
The authors would like to thank Otfried Cheong, Christian Knauer, and Fabian Stehn 
for valuable discussions and comments,
in particular, at the beginning of this work.
The research by the first author has been done
partly during his visits to Universit\"{a}t Bayreuth, Bayreuth, Germany
and to Universit\`{a} della Svizzera italiana, Lugano, Switzerland.




\bibliography{kcvd_socg25_arxiv}

\end{document}